\documentclass{article}
\usepackage{graphicx} 
\usepackage{amsmath}
\usepackage{amsfonts}
\usepackage{amssymb}  
\usepackage{amsthm}
\usepackage{xcolor}
\usepackage{booktabs}
\usepackage{algorithm}
\usepackage{natbib}
\usepackage{algpseudocode}
\usepackage{longtable}
\usepackage{hyperref}
\usepackage[margin=1in]{geometry}

\newtheorem{theorem}{Theorem}
\newtheorem{proposition}[theorem]{Proposition}
\newtheorem{lemma}{Lemma}

\newtheorem{corollary}{Corollary}

\newtheorem*{assumption*}{Assumption}
\DeclareMathOperator{\Var}{Var}

\DeclareMathOperator{\Cov}{Cov}
\DeclareMathOperator{\Corr}{Corr}

\title{Multivariate incremental effects for continuous treatments: Studying the health effects of environmental mixtures}

\author{%
  Zhuochao Huang\thanks{Department of Statistics, University of Florida. 
    Corresponding author: Zhuochao Huang, \texttt{zhuochao.huang@ufl.edu}}%
  \and Kejin Dong\footnotemark[1]%
  \and Tuo Lin\thanks{Department of Biostatistics, University of Florida.}%
  \and Joseph Antonelli\footnotemark[1]%
}
\date{}
\begin{document}
\maketitle

\begin{abstract}
Evaluating the causal health effects of multivariate, continuous exposures, such as air pollution mixtures, is a critical public health challenge. A primary obstacle is the frequent violation of the positivity assumption, which renders the effects of standard deterministic interventions unidentified or heavily reliant on unreliable model extrapolation. In this paper, we develop a novel causal inference framework to address this challenge. We extend exponential tilting to multivariate exposures and address the critical question of how to compare different intervention directions fairly. This establishes a systematic framework for defining and evaluating various policy-relevant causal estimands, allowing researchers to address diverse scientific questions. We develop numerous methodological advancements, including efficient one-step estimation strategies, a Riemannian BFGS algorithm to solve a constrained manifold optimization problem, semiparametric efficiency bounds for causal estimands, minimax rates for estimators, and asymptotic normality results. We demonstrate our framework's utility by applying it to a nationwide environmental health dataset to identify the optimal strategy for reducing adverse health outcomes associated with a PM$_{2.5}$ chemical mixture.
\end{abstract}

\noindent%
{\it Keywords: Causal inference, Stochastic interventions, Positivity violations, Environmental mixtures, Multivariate exposures}

\section{Introduction}

Understanding the causal effects of complex air pollution mixtures is crucial for effective public health policy, but traditional methods face significant challenges. A key challenge is that individuals are exposed to a complex mixture of correlated substances. Single-pollutant analyses often fall short, as they fail to account for the confounding effect of pollutants and cannot capture interaction effects, leading to misleading policy recommendations \citep{bobb2015bayesian, antonelli2024causal}. A primary obstacle in evaluating the causal effects of multivariate continuous exposures is the positivity assumption. This assumption requires that for any given set of covariates, the exposure levels of interest have nonzero density. In environmental research, this assumption is frequently violated as certain combinations of pollutants may be physically implausible or absent from observational data \citep{zigler2021invited}. Consequently, estimating the effect of deterministic interventions, such as setting a pollutant mixture to a specific level, requires unreliable extrapolation \citep{antonelli2024causal, rudolph2025everything}. In the presence of positivity violations, there are (at least) two potential solutions. One approach is to modify the estimation method to be robust to extrapolation, such as ``extrapolation-aware'' methods that bound effects when moving outside the data support \citep{pfister2024extrapolation}. The more common approach, and the one we focus on throughout, is to modify the scientific question by redefining the estimand. 

One useful strategy for modifying the estimand with continuous treatments involves identifying and estimating the derivative of the exposure-response function, then integrating this derivative to recover the full curve. By focusing on local effects first and developing novel bias-corrected estimators for the derivative function, this ``differentiate-then-integrate'' approach can circumvent the need for a global positivity assumption and has been a subject of significant recent research \citep{zhang2024nonparametric, zhang2025doubly}. Another alternative is to leverage instrumental variables, where recent advances have extended these methods to continuous treatments without relying on positivity to identify local effects \citep{rakshit2024local, zeng2025nonparametric}.

While these methods are powerful, a distinct set of approaches that are flexible and policy-relevant rely on the use of stochastic interventions. These interventions generally define a policy that modifies the distribution of exposures rather than replacing it with a fixed value. Early examples of this included interventions that truncate the exposure distribution, such as enforcing pollution levels below a certain cutoff \citep{taubman2009intervening}. This idea was later formalized to provide a general framework for evaluating the population causal effect of shifting an entire exposure distribution \citep{munoz2012population}. One such approach is the use of shift interventions, which define the post-intervention exposure as the natural exposure value shifted by a certain amount \citep{haneuse2013estimation}. Related ideas, referred to as modified treatment policies (MTPs), define the counterfactual exposure as a function of the exposure that would have naturally occurred. This concept was pioneered in the context of dynamic treatment regimes \citep{ezzati2004comparative} and later formalized for single time points \citep{richardson2013single, young2014identification}, and longitudinal scenarios with continuous treatments \citep{diaz2023nonparametric}. There is ongoing research exploring the properties of generalized policies that depend on an individual's natural value of treatment, for example using optimal transport to derive tighter bounds under unmeasured confounding \citep{levis2024stochastic}. This class of estimands generally relies on weaker positivity conditions than most deterministic estimands; however, these estimands still rely on positivity holding for certain exposure values.

When positivity violations are a big concern, one alternative that inherently respects the data's support and thereby does not rely on positivity is the incremental intervention \citep{rothenhausler2019incremental}. This approach was initially developed for binary and longitudinal treatments through incremental shifts in propensity scores, and defines a new interventional distribution by tilting the observed exposure distribution in a specified direction \citep{kennedy2019nonparametric}. This concept has subsequently been generalized to univariate, continuous treatments as well \citep{diaz2020causal, schindl2024incremental}. Some recent work has criticized exponentially tilted estimands such as these \citep{schindl2025causal} due to their less intuitive parameterization, asymmetric reallocation of probability mass, and less favorable asymptotic properties compared to certain alternatives. Regardless of their form, stochastic interventions typically present their own set of challenges. One key concern is that estimands identified under positivity violations may correspond to interventions that are not directly implementable, revealing a fundamental ``interpretability-implementability tradeoff'' \citep{mcclean2025propensity}. 

Nearly all work to date has focused on univariate treatments, which is insufficient for addressing problems raised in the analysis of environmental mixtures. We build on the existing literature by using exponentially tilted estimands within the multivariate treatment context, which is a non-trivial extension that introduces several new, complex questions. First, the intervention parameter becomes a vector, creating an infinite space of possible intervention directions. This raises a variety of questions about how to shift all exposures at once and which direction is best. Second, a fair basis for comparing different shifts is needed; interventions must be constrained to be of the same size for meaningful comparisons. We propose solutions to these issues and explore a variety of estimands that target policy-relevant questions of interest in environmental epidemiology. We show that these different estimands vary in terms of how efficiently they can be estimated from the data, and we provide algorithms for finding optimal policies that lead to the biggest reduction in adverse health outcomes. We provide theoretical support in the form of minimax rates that show how well the estimands can be estimated and how the difficulty of estimation intrinsically depends on the conditional covariance matrix of the exposures. We also develop efficient influence function based estimators that can achieve root-$n$ convergence and asymptotic normality using complex, machine learning estimators for nuisance function estimation.

\section{Exposure Shifts under Exponential Tilts}
\label{sec:Estimands}

Let our observed data consist of $n$ independent and identically distributed samples $\{\boldsymbol{Z}_i\}_{i=1}^n$, drawn from some underlying distribution $P_0$. Each observation $\boldsymbol{Z}_i = (\boldsymbol{X}_i, \boldsymbol{W}_i, Y_i)$ is composed of a $p$-dimensional vector of covariates $\boldsymbol{X}_i \in \mathcal{X} \subseteq \mathbb{R}^p$, a $q$-dimensional vector representing the environmental exposures or treatments\footnote{Note that we use the terms treatment and exposure interchangeably throughout the manuscript. Also note that environmental mixture simply refers to a vector of environmental exposures.} $\boldsymbol{W}_i = (W_{i1}, \dots, W_{iq}) \in \mathcal{W} \subseteq \mathbb{R}^q$, and a scalar outcome of interest $Y_i \in \mathbb{R}$. We denote the conditional density of the exposure mixture given the covariates as $f(\boldsymbol{w} \mid \boldsymbol{x})$. To formally define causal effects, we operate within the potential outcomes framework. For each individual $i$ and any exposure vector $\boldsymbol{w}$, we let $Y_i(\boldsymbol{w})$ denote the potential outcome that would have been observed had individual $i$ received exposure level $\boldsymbol{w}$. This notation relies on the Stable Unit Treatment Value Assumption (SUTVA, \citep{rubin1980randomization}), which has two key principles. It first assumes the potential outcome for one individual is unaffected by the exposure assignments of other individuals (no interference), and also assumes that there are not multiple versions of treatment in the sense that there are not two distinct treatments that lead to the same value of $\boldsymbol{W}_i$. Importantly, this assumption ensures that an individual's observed outcome corresponds to their potential outcome under their observed exposure, meaning that $Y_i = Y_i(\boldsymbol{W}_i)$. 

\subsection{Estimands using Exponential Tilts}

In this work, we extend exponential tilting incremental causal effects to the multivariate treatment setting. This type of stochastic treatment was first proposed for binary treatments \citep{kennedy2019nonparametric} and later generalized to single continuous treatments \citep{diaz2020causal, schindl2024incremental}. We adapt this formulation to define interventions on the entire $q$-dimensional exposure vector, $\boldsymbol{W}$. Given the conditional density of the exposure mixture, $f(\boldsymbol{w} \mid \boldsymbol{x})$, we define an exponentially tilted interventional density, $g_{\boldsymbol{\delta}}(\boldsymbol{w} \mid \boldsymbol{x})$, indexed by a user-specified vector $\boldsymbol{\delta} \in \mathbb{R}^q$:
\begin{equation}
    g_{\boldsymbol{\delta}}(\boldsymbol{w} \mid \boldsymbol{x}) = \frac{\exp(\boldsymbol{\delta}^\top \boldsymbol{w}) f(\boldsymbol{w} \mid \boldsymbol{x})}{\int_{\mathcal{W}} \exp(\boldsymbol{\delta}^\top \boldsymbol{v}) f(\boldsymbol{v} \mid \boldsymbol{x}) \, d\boldsymbol{v}}
    \label{eq:tilt}
\end{equation}
Here, the denominator is a normalizing constant that ensures $g_{\boldsymbol{\delta}}$ integrates to one, and $\boldsymbol{\delta}$ is a vector determining both the direction and magnitude of how the natural density is shifted.

Our causal estimand of interest, the incremental effect, is the expected potential outcome under the stochastic intervention defined by the tilted density $g_{\boldsymbol{\delta}}$. We denote this estimand as $\psi(\boldsymbol{\delta})$:
$$
\psi(\boldsymbol{\delta}) = \mathbb{E}[Y^{g_{\boldsymbol{\delta}}}]
$$
This represents the population average outcome if, for covariates $\boldsymbol{X}=\boldsymbol{x}$, each individual's exposure were a random draw from the shifted distribution $g_{\boldsymbol{\delta}}(\boldsymbol{w} \mid \boldsymbol{x})$. To identify this quantity from the observed data, we assume no unmeasured confounding: for all $\boldsymbol{w}\in\mathcal{W}$,
$$
Y(\boldsymbol{w})\perp \boldsymbol{W}\mid \boldsymbol{X}.
$$
We develop sensitivity analysis approaches to assess violations of this assumption in Section~\ref{sec:SensitivityFinal}. Under SUTVA and no unmeasured confounding, this causal quantity is identified from the observed data distribution via:
\begin{equation}
    \psi(\boldsymbol{\delta}) = \int_{\mathcal{X}} \int_{\mathcal{W}} \mathbb{E}[Y \mid \boldsymbol{W}=\boldsymbol{w}, \boldsymbol{X}=\boldsymbol{x}] \, g_{\boldsymbol{\delta}}(\boldsymbol{w} \mid \boldsymbol{x}) \, d\boldsymbol{w} \, dP(\boldsymbol{x}),
    \label{eq:estimand}
\end{equation}
where $P(\boldsymbol{x})$ denotes the marginal probability measure of the covariates $\boldsymbol{X}$. Note that because the support of $g_{\boldsymbol{\delta}}$ is identical to the support of $f$, we do not have to invoke any positivity assumptions and the intervention does not require us to estimate outcomes for exposure and covariate combinations that are never observed in the data. The parameter $\boldsymbol{\delta}$ can be interpreted as the gradient of the log-likelihood ratio between the interventional and observational densities \citep{schindl2024incremental}:
$$
\boldsymbol{\delta} = \frac{\partial}{\partial \boldsymbol{w}} \log \left( \frac{g_{\boldsymbol{\delta}}(\boldsymbol{w} \mid \boldsymbol{x})}{f(\boldsymbol{w} \mid \boldsymbol{x})} \right).
$$
This means that each component, $\delta_j$, quantifies the change in this log density ratio for an infinitesimal increase in the $j$-th exposure component, $w_j$. Intuitively, setting $\boldsymbol{\delta} = (1, 0, \dots, 0)^\top$ defines an intervention that tilts the distribution to favor higher values of the first exposure, $W_1$.

\subsection{Different Exposure Shifts and Efficiency}
\label{subsec:exposure-shifts-efficiency}

Let $\mu(\boldsymbol{x}, \boldsymbol{w}) = \mathbb{E}[Y \mid \boldsymbol{X}=\boldsymbol{x}, \boldsymbol{W}=\boldsymbol{w}]$. Following the derivation in \citep{schindl2024incremental}, but generalized to the multivariate setting, we obtain the following efficient influence function.

\begin{proposition}
\label{prop:eif}
The efficient influence function of $\psi(\boldsymbol{\delta})$ under a nonparametric model is given by $\varphi(\boldsymbol{Z}; \boldsymbol{\delta}) = D_Y + D_{g,\mu} + D_{\psi},$ where
\begin{gather*}
    D_Y = \frac{g_{\boldsymbol{\delta}}(\boldsymbol{W} \mid \boldsymbol{X})}{f(\boldsymbol{W} \mid \boldsymbol{X})} \left( Y - \mu(\boldsymbol{X}, \boldsymbol{W}) \right) \\
    D_{g,\mu} = \frac{g_{\boldsymbol{\delta}}(\boldsymbol{W} \mid \boldsymbol{X})}{f(\boldsymbol{W} \mid \boldsymbol{X})} \left( \mu(\boldsymbol{X}, \boldsymbol{W}) - \mathbb{E}_{g_{\boldsymbol{\delta}}}[\mu(\boldsymbol{X}, \boldsymbol{W}) \mid \boldsymbol{X}] \right) \\
    D_{\psi} = \mathbb{E}_{g_{\boldsymbol{\delta}}}[\mu(\boldsymbol{X}, \boldsymbol{W}) \mid \boldsymbol{X}] - \psi
\end{gather*}
\end{proposition}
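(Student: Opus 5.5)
We derive $\varphi$ by computing the pathwise (Gateaux) derivative of $\psi(\boldsymbol{\delta})$ along a regular one-dimensional parametric submodel $\{P_t\}$ through $P_0$. We then identify the derivative as $\mathbb{E}[\varphi(\boldsymbol{Z};\boldsymbol{\delta}) S(\boldsymbol{Z})]$, where $S$ is the score of the submodel. Because the model is nonparametric, the tangent space is all of $L_2^0(P_0)$, so the unique mean-zero gradient is the efficient influence function.

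The density factorizes as $p(\boldsymbol{z}) = p(y\mid \boldsymbol{x},\boldsymbol{w}) f(\boldsymbol{w}\mid\boldsymbol{x}) p(\boldsymbol{x})$. Accordingly, the score splits into orthogonal pieces $S = S_Y + S_W + S_X$, with $\mathbb{E}[S_Y\mid \boldsymbol{X},\boldsymbol{W}]=0$, $\mathbb{E}[S_W\mid\boldsymbol{X}]=0$, and $\mathbb{E}[S_X]=0$. Write $\psi_t = \int\!\int \mu_t(\boldsymbol{x},\boldsymbol{w}) g_{\boldsymbol{\delta},t}(\boldsymbol{w}\mid\boldsymbol{x})\,d\boldsymbol{w}\,dP_t(\boldsymbol{x})$ and differentiate at $t=0$ by the product rule. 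This gives three terms, which we handle in order.

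\emph{Term 1 (outcome regression).} We have $\partial_t \mu_t = \mathbb{E}[(Y-\mu)S_Y\mid \boldsymbol{X},\boldsymbol{W}]$. After a change of measure with weight $g_{\boldsymbol{\delta}}/f$, this term equals $\mathbb{E}\big[\tfrac{g_{\boldsymbol{\delta}}}{f}(Y-\mu)S\big]$, which accounts for $D_Y$.

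\emph{Term 2 (marginal of $\boldsymbol{X}$).} Differentiating $dP_t$ gives $\mathbb{E}\big[\mathbb{E}_{g_{\boldsymbol{\delta}}}[\mu\mid\boldsymbol{X}] S_X\big]$. Subtracting $\psi$ does not change the value, since $\mathbb{E}[S_X]=0$, so this yields $D_\psi$.

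\emph{Term 3 (the tilted density).} This is the step needing the most care and the only genuinely multivariate-specific computation. Write $g_{\boldsymbol{\delta},t} = e^{\boldsymbol{\delta}^\top\boldsymbol{w}} f_t / C_t(\boldsymbol{x})$ with $C_t(\boldsymbol{x}) = \int e^{\boldsymbol{\delta}^\top\boldsymbol{v}} f_t(\boldsymbol{v}\mid\boldsymbol{x})\,d\boldsymbol{v}$. Then
\[
\partial_t \log g_{\boldsymbol{\delta},t} = S_W(\boldsymbol{w},\boldsymbol{x}) - \mathbb{E}_{g_{\boldsymbol{\delta}}}[S_W\mid \boldsymbol{x}],
\]
because $\partial_t C_t/C_t = \int e^{\boldsymbol{\delta}^\top\boldsymbol{v}} f S_W\,d\boldsymbol{v}/C = \mathbb{E}_{g_{\boldsymbol{\delta}}}[S_W\mid\boldsymbol{x}]$. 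Term 3 is therefore $\mathbb{E}\big[\int \mu\, g_{\boldsymbol{\delta}}\,(S_W - \mathbb{E}_{g_{\boldsymbol{\delta}}}[S_W\mid\boldsymbol{X}])\,d\boldsymbol{w}\big]$, which is the conditional $g_{\boldsymbol{\delta}}$-covariance of $\mu$ and $S_W$. This equals $\int g_{\boldsymbol{\delta}}(\mu - \mathbb{E}_{g_{\boldsymbol{\delta}}}[\mu\mid\boldsymbol{x}]) S_W\,d\boldsymbol{w}$. Converting back to an expectation under $f$ with weight $g_{\boldsymbol{\delta}}/f$ gives $\mathbb{E}\big[\tfrac{g_{\boldsymbol{\delta}}}{f}(\mu - \mathbb{E}_{g_{\boldsymbol{\delta}}}[\mu\mid\boldsymbol{X}]) S_W\big]$. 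Note that the vector $\boldsymbol{\delta}$ enters only through the scalar $\boldsymbol{\delta}^\top\boldsymbol{w}$, so the univariate argument of \citep{schindl2024incremental} carries over verbatim.

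It remains to replace each partial score ($S_Y$, $S_W$, $S_X$) by the full score $S$, using orthogonality in each term:
\begin{itemize}
\item $D_Y$ has conditional mean zero given $(\boldsymbol{X},\boldsymbol{W})$, so it is orthogonal to $S_W$ and $S_X$.
\item $D_{g,\mu}$ is a function of $(\boldsymbol{X},\boldsymbol{W})$, so it is orthogonal to $S_Y$. It also has conditional mean zero given $\boldsymbol{X}$, since $\mathbb{E}_f[\tfrac{g}{f}h\mid\boldsymbol{X}] = \mathbb{E}_g[h\mid\boldsymbol{X}]$, so it is orthogonal to $S_X$.
\item $D_\psi$ is a function of $\boldsymbol{X}$ alone, so it is orthogonal to $S_Y$ and $S_W$.
\end{itemize}
Summing the three terms gives $\partial_t\psi_t|_{t=0} = \mathbb{E}[\varphi S]$, and $\mathbb{E}[\varphi]=0$ by construction.

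Regularity conditions are needed to justify interchanging differentiation and integration: $C(\boldsymbol{x})<\infty$, and $\mu$ and $g_{\boldsymbol{\delta}}/f = e^{\boldsymbol{\delta}^\top\boldsymbol{w}}/C(\boldsymbol{x})$ square-integrable, e.g.\ under bounded $\mathcal{W}$. We will state these as mild conditions rather than the main content of the proof.
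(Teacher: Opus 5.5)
Your pathwise-derivative argument is correct and is essentially the paper's route. The paper gives no self-contained proof; it only invokes the univariate derivation of \citep{schindl2024incremental} ``generalized to the multivariate setting.'' Your score decomposition $S=S_Y+S_W+S_X$, the key computation $\partial_t\log g_{\boldsymbol{\delta},t}=S_W-\mathbb{E}_{g_{\boldsymbol{\delta}}}[S_W\mid\boldsymbol{x}]$, and the orthogonality checks are exactly what that generalization amounts to, since $\boldsymbol{\delta}$ enters only through $\boldsymbol{\delta}^\top\boldsymbol{w}$; the regularity conditions you flag (finite $\nu_{\boldsymbol{\delta}}(\boldsymbol{x})$ and square-integrable $r_{\boldsymbol{\delta}}$, e.g.\ bounded $\mathcal{W}$) are the right ones to guarantee $\varphi\in L_2^0(P_0)$ and justify the interchanges.
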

Note the subscript $g_{\boldsymbol{\delta}}$ denotes expectations with respect to the tilted exposure density. The asymptotic variance of any regular and asymptotically linear (RAL) estimator for $\psi(\boldsymbol{\delta})$ is equal to the variance of its EIF, $\mathrm{Var}(\varphi(\boldsymbol{Z}; \boldsymbol{\delta}))$. A critical observation from the structure of $\varphi$ is the repeated appearance of the density ratio, $g_{\boldsymbol{\delta}}(\boldsymbol{W} \mid \boldsymbol{X}) / f(\boldsymbol{W} \mid \boldsymbol{X})$. When this ratio exhibits high variability, it inflates the variance of the first two components of the EIF, leading to less precise estimates of the causal effect. Therefore, to improve statistical efficiency for a given intervention strength, we could select an intervention direction $\boldsymbol{\delta}$ that minimizes the variance of this density ratio, $\mathrm{Var}(g_{\boldsymbol{\delta}}/f)$. The third term, on the other hand, represents the deviation between the average post-intervention effect for a specific covariate group and the overall average post-intervention effect. It quantifies how much higher or lower the expected intervention effect for an individual with covariates $\boldsymbol{X}$ is compared to the overall population average. To gain analytical insights into these terms, we can let the exposures follow a multivariate normal distribution, $f(\boldsymbol{w} \mid \boldsymbol{x}) \sim \mathcal{N}(\boldsymbol{\mu}_x, \boldsymbol{\Sigma})$. Under this assumption, we see that the tilted distribution $g_{\boldsymbol{\delta}}(\boldsymbol{w} \mid \boldsymbol{x})$ is also normal, with a shifted mean:
$$
g_{\boldsymbol{\delta}}(\boldsymbol{w} \mid \boldsymbol{x}) \sim \mathcal{N}(\boldsymbol{\mu}_x + \boldsymbol{\Sigma}\boldsymbol{\delta}, \boldsymbol{\Sigma})
$$
Additionally, in this simplified setting, the variance of the density ratio has a simple, closed-form expression:
$$
\mathrm{Var}\left( \frac{g_{\boldsymbol{\delta}}(\boldsymbol{W} \mid \boldsymbol{X})}{f(\boldsymbol{W} \mid \boldsymbol{X})} \right) = \exp(\boldsymbol{\delta}^\top \boldsymbol{\Sigma} \boldsymbol{\delta}) - 1.
$$
Therefore, if we want to find a direction $\boldsymbol{\delta}$ that minimizes the variance of the density ratio, we can simply find one that minimizes $\boldsymbol{\delta}^\top \boldsymbol{\Sigma} \boldsymbol{\delta}$ subject to a constraint on the ``strength'' of the intervention. While various constraints are possible (e.g., fixing $\boldsymbol{\delta}^\top\boldsymbol{\delta}$), a particularly meaningful constraint is to fix the distance between the original and tilted distributions, for which the 2-Wasserstein distance provides a natural metric. We discuss the choice of this metric in the following section, but for the multivariate normal case with the same covariance matrix, this distance has a simple closed-form expression as $d_W^2(f, g_{\boldsymbol{\delta}}) = \|(\boldsymbol{\mu}_x + \boldsymbol{\Sigma}\boldsymbol{\delta}) - \boldsymbol{\mu}_x\|^2 = \|\boldsymbol{\Sigma}\boldsymbol{\delta}\|^2 = \boldsymbol{\delta}^\top \boldsymbol{\Sigma}^2 \boldsymbol{\delta}$.
It is straightforward to show in this setting that the variance of the density ratio is minimized when $\boldsymbol{\delta}$ is chosen to be proportional to the eigenvector of $\boldsymbol{\Sigma}$ corresponding to its largest eigenvalue, $\lambda_{\max}$. This shows that for a fixed intervention strength (as measured by the Wasserstein distance), the most statistically efficient causal effect to estimate, at least in terms of the density ratio, is the one that corresponds to shifting the exposure mixture along its primary axis of variation. Note that this result only holds exactly under a multivariate normal distribution, but we have seen empirically that this choice of $\boldsymbol{\delta}$ typically leads to efficient estimates of $\psi(\boldsymbol{\delta})$ under a variety of exposure distributions. 

\subsection{Fair Exposure Shifts under Fixed Gelbrich Distance}
\label{subsec:fixed-gelbrich-distance}
A primary motivation for developing our framework is to answer policy-relevant questions, such as identifying the optimal way to modify an environmental exposure mixture to achieve the greatest public health benefit. This naturally leads to an optimization problem: finding the intervention vector $\boldsymbol{\delta}$ that maximizes (or minimizes) the causal estimand $\psi(\boldsymbol{\delta})$. However, a naive comparison across all possible $\boldsymbol{\delta}$ vectors is misleading. For any intervention direction that yields a beneficial effect, one could simply increase the intervention's strength, for example, by scaling the magnitude of $\boldsymbol{\delta}$ to produce an arbitrarily larger (or smaller) value of $\psi(\boldsymbol{\delta})$. This would invariably lead to trivial solutions that correspond to extreme, unrealistic shifts in the exposure distribution. A more relevant policy question is ``for a fixed amount of interventional effort, what is the best direction to apply that effort?'' To formalize this, we must first establish a fair basis for comparison, constraining our search to a set of interventions that are of the same size, which captures the actual movement of the exposure values, not just the magnitude of the parameter $\boldsymbol{\delta}$. Note that the notion of fairness here is with respect to the size of the intervention being applied, which differs from other notions of fairness based on whether an estimand preserves the ordinal ranking of effects across all covariate subgroups \citep{mcclean2024fair}. 

In order to establish our notion of fairness between interventions, we can use the 2-Wasserstein distance between the observed distribution and the post-intervention distribution, which is widely studied and serves this purpose \citep{panaretos2019statistical}. Note that the exponential tilt is on the conditional exposure distribution and therefore there is a different Wasserstein distance for each covariate value. To address this issue, we use the distance between the observed marginal distribution of $\boldsymbol W$ and the marginal distribution induced by $g_{\boldsymbol{\delta}}(\cdot\mid\boldsymbol X)$. Intuitively, the Wasserstein distance measures the minimum cost of transporting the probability mass of one distribution to match another. By fixing the Wasserstein distance, we ensure that the total amount of shift between the old and new exposure distributions is fixed. The optimization problem thus becomes a meaningful search for the $\boldsymbol{\delta}$ that minimizes $\psi(\boldsymbol{\delta})$ among all possible intervention directions of a comparable magnitude. This distributional fairness notion is widely used in both the operations and statistics research literatures \citep{mohajerin2018data, blanchet2019quantifying, duchi2021learning, gao2023distributionally}.

For analytical tractability, we rely on a well-established result providing a formula for the squared 2-Wasserstein distance $d_W^2$ based on the means ($\boldsymbol{\mu}_1, \boldsymbol{\mu}_2$) and covariance matrices ($\boldsymbol{\Sigma}_1, \boldsymbol{\Sigma}_2$) of two distributions $(P_1, P_2)$, which is referred to as the Gelbrich formula \citep{gelbrich1990formula}. Crucially, this formula serves as a general lower bound for the true squared 2-Wasserstein distance between any two probability measures on $\mathbb{R}^q$ with finite second moments. Specifically, we have that
$$
d_W^2\left(P_1, P_2\right) \geq\left\|\boldsymbol{\mu}_1-\boldsymbol{\mu}_2\right\|^2+\operatorname{tr}\left(\boldsymbol{\Sigma}_1+\boldsymbol{\Sigma}_2-2\left(\boldsymbol{\Sigma}_1^{1 / 2} \boldsymbol{\Sigma}_2 \boldsymbol{\Sigma}_1^{1 / 2}\right)^{1 / 2}\right):=d_G^2\left(P_1, P_2\right)
$$
Furthermore, this bound becomes an exact equality for any two distributions belonging to the same family of elliptically contoured distributions, a class which notably includes all multivariate normal distributions as well as uniform distributions on ellipsoids. The tractability of the Gelbrich formula has led various researchers to use it as a surrogate for the 2-Wasserstein distance, a method frequently employed to overcome
the computational complexity associated with the Wasserstein
metric \citep{kuhn2019wasserstein, hakobyan2024wasserstein}, since the empirical 2-Wasserstein distance lacks a closed-form formula from data and can only be approached by numerical methods \citep{panaretos2020invitation}. Moreover, the lower bound it provides has been shown to be tight in a fairly general situation against an upper bound derived for the 2-Wasserstein distance \citep{biswas2024bounding, papp2025scalable}, and is therefore generally close to the 2-Wasserstein distance 
\citep{nguyen2021mean, ye2024distributionally}.

This provides a computationally tractable and well-justified measure of intervention size. 
Accordingly, we measure intervention size through the Gelbrich formula applied to the observed and tilted marginal distributions of $\boldsymbol{W}$, and write the resulting quantity as $G(\boldsymbol{\delta})$. Comparing interventions on the level set $G(\boldsymbol{\delta}) = c^2$ puts them on a common scale. Under a multivariate normal model this coincides with the squared 2-Wasserstein distance, while in more general settings it remains a rigorous lower bound. The optimization problem thus becomes a search for the $\boldsymbol{\delta}$ that minimizes $\psi(\boldsymbol{\delta})$ among all possible intervention directions of a comparable size, allowing one to explore which changes to an environmental exposure mixture are most beneficial or harmful.

\subsection{Choice of Estimand}

The framework of incremental effects, when extended to a multivariate setting, moves beyond simply estimating the effect of a single, pre-specified shift or simply examining how the causal effect depends on shift size. Instead, it allows us to explore the entire space of potential interventions to identify those that are most impactful. In this section, we propose a number of different estimands within this framework and discuss their policy-relevance.

\subsubsection{Optimal Shifts}

In environmental contexts, a key objective is to determine the most effective strategy for intervention given limited resources. For instance, how should regulators modify a complex mixture of air pollutants to achieve the greatest improvement in health outcomes? Our framework directly addresses this question by defining an estimand for the optimal policy shift. We define our primary estimand of interest, $\boldsymbol{\delta}^*_c$, as the intervention direction that minimizes the causal effect $\psi(\boldsymbol{\delta})$ over the set of all ``fair shifts'' of a given size $c$:
$$
\boldsymbol{\delta}^*_c = \arg\min_{\boldsymbol{\delta} \in \mathcal{M}_c} \psi(\boldsymbol{\delta}), \quad \text{where} \quad \mathcal{M}_c = \{ \boldsymbol{\delta} : G(\boldsymbol{\delta}) = c^2 \}.
$$
The corresponding value of this optimal policy is $\psi^*_c = \psi(\boldsymbol{\delta}^*_c)$. To provide some intuition for the optimal $\boldsymbol{\delta}$, we can explore it analytically in a simplified setting. If we assume the outcome model is linear ($\mu(\boldsymbol{x},\boldsymbol{w}) = \boldsymbol{\alpha}^\top \boldsymbol{x} + \boldsymbol{\beta}^\top \boldsymbol{w}$) and the exposure distribution is normal ($f \sim \mathcal{N}(\boldsymbol{\mu}_x, \boldsymbol{\Sigma})$), the causal effect is minimized when $\boldsymbol{\delta}$ is proportional to $-\boldsymbol{\Sigma}^{-1}\boldsymbol{\beta}$. This result provides some intuition: the optimal direction is a balance between the direct effect of each pollutant on the outcome (the vector $\boldsymbol{\beta}$) and the correlation structure of the mixture (represented by $\boldsymbol{\Sigma}^{-1}$). This highlights a key insight of our multivariate approach: the covariance of the exposures is critical for determining not only which shifts are easiest to estimate, but also which shifts are most impactful. While the optimal policy shift is one primary goal, our framework is flexible and allows for the definition of other estimands that can be useful in answering other scientific questions of interest in environmental epidemiology. We now detail these in the following sections.

\subsubsection{Single Exposure Shifts}

A common goal in the analysis of environmental mixtures is to identify the components of the mixture that are most harmful. This has led to a wide range of statistical approaches aimed at performing exposure selection \citep{bobb2015bayesian, antonelli2020estimating, ferrari2020identifying, wei2020sparse, samanta2022estimation}. These approaches have inherently disregarded the potential impacts of positivity violations when examining which exposures are most harmful, but we can adapt our approach here to target similar questions without the need for strong positivity assumptions. A natural choice is to let $\boldsymbol{\delta}$ be a vector of zeros with a single non-zero component, e.g., $\boldsymbol{\delta}_j = (0, \dots, t_j, \dots, 0)$. The magnitude $t_j$ would be chosen to satisfy the same fairness constraint, $G(\boldsymbol{\delta}_j) = c^2$. While this estimand would encourage the $j$-th component of the exposures to increase more than others, in the presence of correlated exposures, it will also shift the remaining exposures, and the extent to which this occurs is less clear. An alternative option is to find a $\boldsymbol{\delta}$ such that the means of the exposures in the tilted distribution are the same, except for the $j$-th exposure, thereby isolating the impact of that single exposure. If the exposures follow a multivariate normal distribution, this is straightforward since the mean shift is given by $\boldsymbol{\Sigma} \boldsymbol{\delta}$. If we want to ensure that only the $j$-th exposure's mean is shifted, then we could set $\boldsymbol{\delta}_j = \boldsymbol{\Sigma}^{-1} \boldsymbol{e}_j$ where $\boldsymbol{e}_j = (0, \dots, t_j, \dots, 0)$ and again $t_j$ is chosen to ensure a fair shift. Note that this analytical form only holds when the exposure follows a multivariate normal distribution, which will not hold in general. We can instead use this value of $\boldsymbol{\delta}_j$ as a starting point in a numerical algorithm that searches for the exponential tilt that only shifts the $j$-th component. Note that depending on the size of the shift and the correlation structure of the exposures, such a shift may not be possible as it would inherently lead to positivity violations. In such scenarios it may be more feasible to shift groups of correlated exposures, an approach we consider in Section~\ref{sec:application}.

Once these are obtained, we can calculate $\psi(\boldsymbol{\delta}_j)$ for all $j = 1, \dots, q$ to infer which exposures have the biggest effect on the outcome. Further note that this approach can be naturally extended to explore interactions or combined effects. For example, by setting two components of $\boldsymbol{\delta}$ to be non-zero, one could investigate whether simultaneously increasing two pollutants has a synergistic effect greater than the sum of their individual impacts. We point readers to recent work examining stochastic interventions to identify interactions \citep{mccoy2026semiparametric}, as similar ideas could be applied within our framework, though we focus here on single exposure effects for now.

\subsubsection{Efficient Exposure Shifts}

While the previous two estimands are arguably the most scientifically and policy-relevant in most applications, there are other considerations at play, such as how efficiently we can estimate the chosen estimand. Environmental applications in particular are known to have relatively small effect sizes, and therefore efficiency can be particularly important when sample sizes are not exceedingly large. As we established in Section~\ref{subsec:exposure-shifts-efficiency}, the statistical difficulty of estimating $\psi(\boldsymbol{\delta})$ is heavily driven by the variance of the density ratio, $\mathrm{Var}(g_{\boldsymbol{\delta}}/f)$. For a fixed intervention size, as defined by the Wasserstein distance, we showed that the most efficient intervention direction, in terms of minimizing this variance, is proportional to the first eigenvector of the exposure covariance matrix, $\boldsymbol{\Sigma}$. We call this direction $\boldsymbol{\delta}_{\text{eff}}$, and it is clear that this direction depends only on the correlation structure of the observed exposures. While this direction is only the most efficient one under normality of the exposures, we proceed with this choice in general, as we have found that it leads to efficient estimates in more general settings, and deriving the most efficient direction in general is a difficult task. 

However, the optimal policy direction, $\boldsymbol{\delta}^*_c$, which minimizes the causal effect $\psi(\boldsymbol{\delta})$, depends on both the exposure distribution and the exposure-outcome relationship, $\mu(\boldsymbol{x},\boldsymbol{w})$. In simplified linear models, the direction of steepest ascent for the causal effect is proportional to $\boldsymbol{\Sigma}^{-1}\boldsymbol{\beta}$. In general, there is no reason for the direction of maximal statistical efficiency (related to the eigenvectors of $\boldsymbol{\Sigma}$) to be the same as the direction of the optimal causal effect (related to $\boldsymbol{\Sigma}^{-1}\boldsymbol{\beta}$). To see this, consider an intuitive example with two highly and positively correlated pollutants, where the first principal component is approximately in the $(1, 1)$ direction, which corresponds to the direction that is ``easiest'' to estimate with the observed data. Now, suppose that only the second pollutant has a strong causal effect on the outcome (i.e., $\boldsymbol{\beta} \approx (0, \beta_2)$). The optimal policy would primarily involve shifting the second pollutant. Our framework reveals an inherent tension: the policy we most want to evaluate (shifting the second pollutant alone) is statistically difficult because it moves in a direction against the data's strong correlation structure, leading to increased uncertainty in our estimate of its effect. In general, this presents a trade-off between interpretability and statistical efficiency, and users can decide based on features of their observed data which estimand to target. 

\section{Estimation and Inference}
\label{sec:estimation}

For a fixed intervention vector $\boldsymbol{\delta}$, the target estimand is the expected outcome under the tilted exposure distribution:
$$
\psi(\boldsymbol{\delta}) = \mathbb{E}\left[\int_{\mathcal{W}} \mu(\boldsymbol{X}, \boldsymbol{w}) g_{\boldsymbol{\delta}}(\boldsymbol{w} \mid \boldsymbol{X}) \, d\boldsymbol{w}\right],
$$
where the outer expectation is over the marginal distribution of covariates $\boldsymbol{X}$. Therefore, a direct approach to estimating $\psi(\boldsymbol{\delta})$ is through a plug-in procedure. This method involves replacing each component in the expression above with a corresponding empirical estimate, which leads to the plug-in estimator:
$$
\hat{\psi}_{\text{plugin}}(\boldsymbol{\delta}) = \frac{1}{n} \sum_{i=1}^{n} \left[ \int_{\mathcal{W}} \hat{\mu}(\boldsymbol{X}_i, \boldsymbol{w}) \hat{g}_{\boldsymbol{\delta}}(\boldsymbol{w} \mid \boldsymbol{X}_i) \, d\boldsymbol{w} \right].
$$

However, this estimator faces certain practical and theoretical challenges. One issue is that the estimator's performance is highly dependent on an accurate estimate of the multivariate conditional density $\hat{f}(\boldsymbol{w}|\boldsymbol{x})$, which can be difficult to estimate for multivariate exposures, and the resulting estimator will likely converge more slowly. Furthermore, this approach does not leverage the structure of the efficient influence function and, as a result, is generally not statistically efficient. These limitations motivate alternative estimators with superior statistical properties.

\subsection{One-step Estimation and Cross-fitting}

To overcome the limitations of the plug-in estimator, we employ an approach rooted in semiparametric efficiency theory. This method uses the efficient influence function (EIF) to construct an estimator that is consistent under weaker conditions and achieves the optimal asymptotic variance. The EIF for the estimand $\psi(\boldsymbol{\delta})$ is given in Section~\ref{sec:Estimands} by $\varphi(\boldsymbol{Z}; \boldsymbol{\delta}) = D_Y + D_{g,\mu} + D_{\psi}$. In our notation, substituting the EIF components yields
$$
\varphi(\boldsymbol{Z}; \psi, \mu, f)
= r_{\boldsymbol{\delta}}(\boldsymbol{W},\boldsymbol{X})\Big\{Y-\mathbb{E}_{g_{\boldsymbol{\delta}}}[\mu\mid \boldsymbol{X}]\Big\}
+ \mathbb{E}_{g_{\boldsymbol{\delta}}}[\mu\mid \boldsymbol{X}] - \psi,
$$
where $r_{\boldsymbol{\delta}}(\boldsymbol{w},\boldsymbol{x}) = g_{\boldsymbol{\delta}}(\boldsymbol{w}\mid \boldsymbol{x})/f(\boldsymbol{w}\mid \boldsymbol{x})$ and expectations with subscript $g_{\boldsymbol{\delta}}$ are taken with respect to $g_{\boldsymbol{\delta}}(\cdot\mid \boldsymbol{X})$. The one-step estimation procedure utilizes this EIF to correct an initial parameter estimate by adding the empirical average of the EIF to the initial plug-in estimate, which serves as a bias-correction term:
$$
\hat{\psi}_{\text{onestep}}(\boldsymbol{\delta})
= \hat{\psi}_{\text{plugin}}(\boldsymbol{\delta})
+ \frac{1}{n} \sum_{i=1}^{n}
\varphi\!\big(\boldsymbol{Z}_i; \hat{\psi}_{\text{plugin}}, \hat{\mu}, \hat{f}\big).
$$
This can be re-written to show that the one-step estimator takes the following form:
$$
\begin{aligned}
\hat{\psi}_{\text{onestep}}(\boldsymbol{\delta})
= \frac{1}{n}\sum_{i=1}^n
\hat r(\boldsymbol{W}_i,\boldsymbol{X}_i)\big[Y_i-\mathbb{E}_{\hat g_{\boldsymbol{\delta}}}[\hat\mu\mid \boldsymbol{X}_i]\big]
+ \frac{1}{n}\sum_{i=1}^n \mathbb{E}_{\hat g_{\boldsymbol{\delta}}}[\hat\mu\mid \boldsymbol{X}_i].
\end{aligned}
$$
This estimator has a number of key features that we will describe in subsequent sections when we study the asymptotic properties of this estimator. To summarize, it allows the use of flexible machine learning methods for estimation of each of the nuisance functions, and it is asymptotically efficient given its construction based on the EIF. The practical performance of the estimator is highly dependent on an estimate of the conditional density of the exposures, given by $f$, which can be challenging with a moderate number of exposures. This density shows up in the expectations with respect to $g_{\boldsymbol{\delta}}$, but also in the ratio term, denoted by $r_{\boldsymbol{\delta}}(\boldsymbol{W}, \boldsymbol{X})$. First, we detail how to estimate this quantity using estimates of $f$ and $\mu$, though in Section~\ref{sssec:ClassificationDensity} we propose an approach to directly estimating this quantity using regression techniques that may not have the same theoretical properties, but can have good finite-sample performance when there are a moderate number of exposures. 

For improved asymptotic properties and better finite sample performance, all one-step estimators are implemented using $K$-fold cross-fitting. Specific implementation details and its role in the asymptotic analysis are given in Appendix~\ref{sec:appendix-convergence-normality}.

\subsection{Direct estimation using regression approaches}
\label{sssec:ClassificationDensity}

When there are a moderate number of exposures, estimation of $\psi(\boldsymbol{\delta})$ becomes increasingly challenging, even for the efficient one-step estimators described above due to the inherent difficulty of estimating a multivariate, conditional distribution $f(\boldsymbol{w} \mid \boldsymbol{x})$. For this reason, we also explore an approach first described in \citep{schindl2024incremental} that does not require estimation of the exposure density at all. This can be advantageous in finite samples, particularly when estimation of the density ratio $r_{\boldsymbol{\delta}}(\boldsymbol{W}, \boldsymbol{X})$ is unstable. The first key insight is that the density ratio can be written as
$$
r_{\boldsymbol{\delta}}(\boldsymbol{w}, \boldsymbol{x})
=
\frac{\exp(\boldsymbol{\delta}^\top \boldsymbol{w})}
{\int_{\mathcal{W}} \exp(\boldsymbol{\delta}^\top \boldsymbol{v}) f(\boldsymbol{v} \mid \boldsymbol{x}) \, d\boldsymbol{v}}
=
\frac{\exp(\boldsymbol{\delta}^\top \boldsymbol{w})}
{\nu_{\boldsymbol{\delta}}(\boldsymbol{x})},
$$
which shows the density ratio can be estimated by estimating the conditional expectation $\nu_{\boldsymbol{\delta}}(\boldsymbol{X})$. This does not require the conditional density of the exposures and can be carried out using flexible, univariate regression techniques. Further, the other component of our one-step estimator can be written as
$$
m_{\boldsymbol{\delta}}(\boldsymbol{X})
:=
\int_{\mathcal{W}} \mu(\boldsymbol{X}, \boldsymbol{w}) g_{\boldsymbol{\delta}}(\boldsymbol{w} \mid \boldsymbol{X}) \, d\boldsymbol{w}
=
\frac{\mathbb{E}[ \exp(\boldsymbol{\delta}^\top \boldsymbol{W}) \mu(\boldsymbol{X}, \boldsymbol{W}) \mid \boldsymbol{X}]}{\nu_{\boldsymbol{\delta}}(\boldsymbol{X})}
=
\frac{\eta_{\boldsymbol{\delta}}(\boldsymbol{X})}{\nu_{\boldsymbol{\delta}}(\boldsymbol{X})}.
$$
This shows that this quantity can be estimated by taking the ratio of two quantities, each of which can be estimated using flexible, univariate regression techniques. This estimator was introduced previously \citep{schindl2024incremental}, though it was not implemented as they found the ratio of the estimates for these two conditional expectations to be unstable. They were working, however, in the univariate exposure setting where estimating the conditional density of the exposures is more straightforward. In our setting, with multiple exposures, conditional density estimation can be very difficult, whereas this approach relies on univariate prediction models only. Additionally, note that one can take a third strategy, which is to still estimate the conditional density of the exposures and use it whenever calculating $\int_{\mathcal{W}} \mu(\boldsymbol{X}, \boldsymbol{w}) g_{\boldsymbol{\delta}}(\boldsymbol{w} \mid \boldsymbol{X}) \, d\boldsymbol{w}$, but then use the regression approach described above for estimating the density ratio to improve stability of our estimates. While potentially useful for estimation, these estimators that obviate the need to estimate the exposure density will not inherit the same theoretical properties as the one-step estimator that directly uses estimates of $f$ and $\mu$, which we study theoretically in Section~\ref{sec:Theory}. They may, however, produce better finite-sample performance, which we study in the simulation studies in Section~\ref{sec:Simulations} across a wide range of exposure distributions. 
\subsection{Optimizing over a Manifold}

Our primary estimand, the optimal policy shift $\boldsymbol{\delta}^*_c$, is defined as the solution to
$$
\min_{\boldsymbol{\delta}\in\mathcal{M}_c} \psi(\boldsymbol{\delta}),
$$
where $\mathcal{M}_c := \{\boldsymbol{\delta}\in\mathbb{R}^q : G(\boldsymbol{\delta}) = c^2\}$ and $G$ is the Gelbrich quantity defined in Section~\ref{subsec:fixed-gelbrich-distance}. We assume $c^2$ is a regular value of $G$ (equivalently, $\nabla G(\boldsymbol{\delta})\neq \boldsymbol{0}$ for all $\boldsymbol{\delta}\in\mathcal{M}_c$), so that $\mathcal{M}_c$ is a smooth embedded hypersurface. This regularity condition is verified for the Gelbrich constraint in Appendix~\ref{sec:bfgs-induced-gelbrich}. Standard Euclidean optimization algorithms such as gradient descent are not directly applicable as a gradient step taken from a point on the manifold will likely lead to a point outside of the feasible set. Instead, we optimize $\psi(\boldsymbol{\delta})$ over $\mathcal{M}_c$ with a Riemannian BFGS algorithm. While global convergence relies on strict geometric conditions, our practical implementation leverages computationally efficient projection-type retractions that reliably secure local stationary solutions. In the empirical BFGS search, we also use a small numerical stabilization technique that discourages directions which place excessive mass on a small number of units. This does not change the definition of the optimal-policy estimand and is described in Appendix~\ref{sec:bfgs-induced-gelbrich}. Appendix~\ref{sec:bfgs-induced-gelbrich} gives the algorithmic details and convergence proofs.

\section{Efficiency and Minimax Lower Bounds}
\label{sec:Theory}

This section studies the theoretical limits of estimating
$\psi(\boldsymbol{\delta})$ and
$\theta(\boldsymbol{\delta})
=
\psi(\boldsymbol{\delta})-\psi(\boldsymbol{0})$.
The results characterize how statistical difficulty depends not only on
the sample size $n$ and the intervention vector $\boldsymbol{\delta}$,
but also on the conditional covariance structure of the exposure mixture.
The key matrix is
$$
\boldsymbol{\Sigma}_{W\mid X}
:=
\mathbb{E}\{\operatorname{Var}(\boldsymbol{W}\mid \boldsymbol{X})\},
$$
which summarizes the residual variation and correlation of the exposures
after conditioning on covariates.

\subsection{Minimax Lower Bound}
\label{sec:minimax}

We first establish a nonparametric minimax lower bound for the incremental
effect $\theta(\boldsymbol{\delta})$. The lower bound identifies how the
intrinsic difficulty of estimation changes with the magnitude and direction
of the multivariate tilt.

\begin{theorem}[Minimax lower bound]\label{thm:minimax}
Under the conditions stated in Appendix~\ref{sec:appendix-efficiency-minimax} for the minimax lower bound, let
$\mathcal{P}$ denote the corresponding model class, and let
$\theta_P(\boldsymbol{\delta})
:=
\psi_P(\boldsymbol{\delta})-\psi_P(\boldsymbol{0})$
denote the incremental effect under $P$.
If $\boldsymbol{\Sigma}_{W\mid X}$ is positive definite, then there exists
a constant $C>0$, independent of $n$ and $\boldsymbol{\delta}$, such that
$$
\inf_{\widehat\theta}\ \sup_{P\in\mathcal{P}}
\mathbb{E}_P\!\left[
\{\widehat\theta-\theta_P(\boldsymbol{\delta})\}^2
\right]
\ge
C\,
\frac{
\boldsymbol{\delta}^{\top}
\boldsymbol{\Sigma}_{W\mid X}
\boldsymbol{\delta}
}{n}.
$$
\end{theorem}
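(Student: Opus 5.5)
We will use a two-point (Le Cam) argument built from a one-dimensional submodel that perturbs only the outcome regression. Fix a base law $P_0\in\mathcal{P}$, for instance with $\mu_0(\boldsymbol{x},\boldsymbol{w})=0$ and Gaussian (or bounded) noise of variance bounded below. Then define $P_h$ by keeping the law of $(\boldsymbol{X},\boldsymbol{W})$ fixed and setting
$$
\mu_h(\boldsymbol{x},\boldsymbol{w})=h\,\boldsymbol{\delta}^\top\{\boldsymbol{w}-\mathbb{E}(\boldsymbol{W}\mid\boldsymbol{x})\},\qquad h=\frac{\kappa}{\sqrt{n\,\boldsymbol{\delta}^\top\boldsymbol{\Sigma}_{W\mid X}\boldsymbol{\delta}}},
$$
with $\kappa>0$ small. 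Because $f$ is unchanged, both $\psi_{P_h}(\boldsymbol{0})$ and the tilt $g_{\boldsymbol{\delta}}$ are common to $P_0$ and $P_h$. Hence $\theta_{P_h}(\boldsymbol{\delta})-\theta_{P_0}(\boldsymbol{\delta}) = h\,\mathbb{E}\big[\boldsymbol{\delta}^\top\{\mathbb{E}_{g_{\boldsymbol{\delta}}}(\boldsymbol{W}\mid\boldsymbol{X})-\mathbb{E}(\boldsymbol{W}\mid\boldsymbol{X})\}\big]$.

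\textbf{Step 1 (separation).} Let $K_{\boldsymbol{x}}(\boldsymbol{\delta})=\log\int e^{\boldsymbol{\delta}^\top\boldsymbol{v}}f(\boldsymbol{v}\mid\boldsymbol{x})\,d\boldsymbol{v}$ be the conditional cumulant generating function. Then $\mathbb{E}_{g_{\boldsymbol{\delta}}}(\boldsymbol{W}\mid\boldsymbol{x})=\nabla K_{\boldsymbol{x}}(\boldsymbol{\delta})$. By the mean value theorem along $t\mapsto t\boldsymbol{\delta}$, the bracket equals $\int_0^1 \boldsymbol{\delta}^\top\operatorname{Var}_{g_{t\boldsymbol{\delta}}}(\boldsymbol{W}\mid\boldsymbol{x})\boldsymbol{\delta}\,dt$. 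We will use the appendix conditions on $\mathcal{P}$, namely bounded $\boldsymbol{\delta}$ (or a bounded exposure support) and tilted conditional variances comparable to the untilted ones. These give that this quantity is at least $c_1\,\boldsymbol{\delta}^\top\operatorname{Var}(\boldsymbol{W}\mid\boldsymbol{x})\boldsymbol{\delta}$. Integrating over $\boldsymbol{X}$, the separation is at least $c_1 h\,\boldsymbol{\delta}^\top\boldsymbol{\Sigma}_{W\mid X}\boldsymbol{\delta} = c_1\kappa\sqrt{\boldsymbol{\delta}^\top\boldsymbol{\Sigma}_{W\mid X}\boldsymbol{\delta}/n}$.

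\textbf{Step 2 (indistinguishability).} Only the conditional law of $Y$ changes, so with Gaussian noise
$$
\mathrm{KL}(P_h^n\|P_0^n)=\frac{n}{2\sigma^2}\,\mathbb{E}\,\mu_h^2=\frac{n h^2}{2\sigma^2}\,\boldsymbol{\delta}^\top\boldsymbol{\Sigma}_{W\mid X}\boldsymbol{\delta}=\frac{\kappa^2}{2\sigma^2}.
$$
This is a constant independent of $n$ and $\boldsymbol{\delta}$. Positive definiteness of $\boldsymbol{\Sigma}_{W\mid X}$ ensures that $h$ is well defined for $\boldsymbol{\delta}\neq\boldsymbol{0}$; the case $\boldsymbol{\delta}=\boldsymbol{0}$ is trivial. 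Le Cam's lemma, via Pinsker's inequality to bound the total variation, then gives
$$
\inf_{\widehat\theta}\max_{P\in\{P_0,P_h\}}\mathbb{E}_P(\widehat\theta-\theta_P)^2\ \ge\ \tfrac14\,(\text{separation}/2)^2\,\big(1-\kappa/(2\sigma)\big).
$$
This yields the claimed bound $C\,\boldsymbol{\delta}^\top\boldsymbol{\Sigma}_{W\mid X}\boldsymbol{\delta}/n$.

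\textbf{Step 3 (membership).} We must check that $P_h\in\mathcal{P}$, meaning the smoothness/boundedness of $\mu$ and the moment conditions hold. Since $\mu_h$ is linear in $\boldsymbol{w}$ with coefficient $h\boldsymbol{\delta}$, and $h\|\boldsymbol{\delta}\|\lesssim \kappa/\sqrt{n\lambda_{\min}}$, it lies in any Hölder or bounded ball once $\kappa$ is small. If the model class requires bounded $\mu$, we would instead use a bounded exposure support or truncate, at negligible cost.

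The main obstacle is Step 1: we need the lower bound on the tilted variance with a constant uniform in $\boldsymbol{\delta}$. For large $\|\boldsymbol{\delta}\|$ the tilted variance can collapse, for example on a compact support where the tilt concentrates mass near the boundary. I would first try to rely on the appendix's restriction to a bounded set of $\boldsymbol{\delta}$ together with a log-density bounded above and below, which makes $\operatorname{Var}_{g_{t\boldsymbol{\delta}}}\succeq c\operatorname{Var}_f$. If that fails, the fallback is to perturb $\mu$ in the direction $\boldsymbol{w}-\mathbb{E}_{g}(\boldsymbol{W}\mid\boldsymbol{x})$ and recompute both the separation and the KL divergence. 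These then involve the $g$-weighted and $f$-weighted second moments, which are comparable under the same density-ratio bounds.
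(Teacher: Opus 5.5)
Your route differs from the paper's, and its core is sound. The paper fixes an arbitrary $P_0\in\mathcal P$ and perturbs along the normalized noise part of the efficient influence function, $\phi_{\boldsymbol\delta}\propto D_{\boldsymbol\delta}/\sqrt{V_{\boldsymbol\delta}}$ with $D_{\boldsymbol\delta}=\{r_{\boldsymbol\delta}(\boldsymbol W,\boldsymbol X)-1\}\{Y-\mu(\boldsymbol X,\boldsymbol W)\}$, realized as $p_1=p_0(1+\varepsilon_n\phi_{\boldsymbol\delta})$ through (C5). This keeps the law of $(\boldsymbol X,\boldsymbol W)$ fixed and makes the separation proportional to $\varepsilon_n\sqrt{V_{\boldsymbol\delta}}$ with $\chi^2\propto\varepsilon_n^2$. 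Only then does it show $V_{\boldsymbol\delta}\ge\underline\sigma^2e^{-4C_\delta}Q_{\boldsymbol\delta}$, where $Q_{\boldsymbol\delta}:=\boldsymbol\delta^\top\boldsymbol\Sigma_{W\mid X}\boldsymbol\delta$, via $\Var(e^{T}\mid\boldsymbol X)/\nu_{\boldsymbol\delta}(\boldsymbol X)^2\ge e^{-4C_\delta}\Var(T\mid\boldsymbol X)$.

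You instead perturb $\mu$ along the linear, conditionally centered direction $\boldsymbol\delta^\top\{\boldsymbol w-\mathbb E(\boldsymbol W\mid\boldsymbol x)\}$. You get the separation from the cumulant identity $\int_0^1\Var_{g_{t\boldsymbol\delta}}(\boldsymbol\delta^\top\boldsymbol W\mid\boldsymbol X)\,dt$. The paper's direction is least favorable, so it certifies risk of order $V_{\boldsymbol\delta}/n$, the noise component of $\Var\{\varphi_{\theta(\boldsymbol\delta)}\}$ in Theorem~\ref{thm:var-iff}. Yours is more elementary, since the perturbation never involves $r_{\boldsymbol\delta}$, but it only reaches the $Q_{\boldsymbol\delta}/n$ scale.

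Your ``main obstacle'' disappears under (C3)--(C4). For $t\in[0,1]$ we have $|t\boldsymbol\delta^\top\boldsymbol W|\le C_\delta$, so $g_{t\boldsymbol\delta}/f\in[e^{-2C_\delta},e^{2C_\delta}]$. Hence $\Var_{g_{t\boldsymbol\delta}}(T\mid\boldsymbol X)\ge e^{-2C_\delta}\Var(T\mid\boldsymbol X)$, with no condition on $f$ itself. Also, $\psi(\boldsymbol 0)$ is unchanged because your perturbation is conditionally centered, not because $f$ is fixed.

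What does need repair is membership in Steps 2--3: as written, your two laws are not in $\mathcal P$.
\begin{itemize}
\item (C1) requires $|Y|\le M$, so Gaussian noise is excluded.
\item Your bounded-noise fallback fails as a location shift: shifting a compactly supported noise law can make the KL divergence infinite.
\item $\mathcal P$ is only known to be closed under the multiplicative perturbations of (C5), so you cannot pick a convenient $P_0$ with $\mu_0=0$.
\end{itemize}

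You do not need any of this. Since $\theta$ is linear in $\mu$ for fixed $f$, start from any $P_0\in\mathcal P$. Let $U=\boldsymbol\delta^\top\{\boldsymbol W-\mathbb E(\boldsymbol W\mid\boldsymbol X)\}$ and set $\phi=a\,U\{Y-\mu_0(\boldsymbol X,\boldsymbol W)\}/\{\Var(Y\mid\boldsymbol X,\boldsymbol W)\sqrt{Q_{\boldsymbol\delta}}\}$. Then:
\begin{itemize}
\item By (C1)--(C3) and $Q_{\boldsymbol\delta}\ge\lambda_{\min}(\boldsymbol\Sigma_{W\mid X})\|\boldsymbol\delta\|^2$, we get $\|\phi\|_\infty\le 4aMM_s/\{\underline\sigma^2\lambda_{\min}(\boldsymbol\Sigma_{W\mid X})^{1/2}\}$ uniformly in $\boldsymbol\delta$, so $a$ can be chosen to satisfy (C5).
\item $\mathbb E_{P_0}(\phi\mid\boldsymbol X,\boldsymbol W)=0$, so $f$ is unchanged.
\item $p_0(1+\varepsilon\phi)$ shifts $\mu$ by exactly $\varepsilon aU/\sqrt{Q_{\boldsymbol\delta}}$.
\item $\chi^2=\varepsilon^2a^2\,\mathbb E[U^2/\{Q_{\boldsymbol\delta}\Var(Y\mid\boldsymbol X,\boldsymbol W)\}]\le\varepsilon^2a^2/\underline\sigma^2$.
\end{itemize}
With $\varepsilon=\kappa_0/\sqrt n$, your Step~1 gives separation at least $e^{-2C_\delta}\kappa_0a\sqrt{Q_{\boldsymbol\delta}/n}$. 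Le Cam's lemma then finishes exactly as you planned.
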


Theorem~\ref{thm:minimax} shows that the best possible root mean-squared
error obeys
$$
\operatorname{RMSE}(\widehat\theta)
\gtrsim
\sqrt{
\frac{
\boldsymbol{\delta}^{\top}
\boldsymbol{\Sigma}_{W\mid X}
\boldsymbol{\delta}
}{n}
}.
$$
Since
$$
\boldsymbol{\delta}^{\top}
\boldsymbol{\Sigma}_{W\mid X}
\boldsymbol{\delta}
=
\mathbb{E}\{
\operatorname{Var}(\boldsymbol{\delta}^{\top}\boldsymbol{W}
\mid \boldsymbol{X})
\},
$$
the minimax lower bound is determined by the residual variation in the
tilted exposure contrast. Hence larger tilts, and directions with larger
conditional variance under $\boldsymbol{\Sigma}_{W\mid X}$, are intrinsically
harder to estimate, regardless of the estimation strategy. Conversely,
directions with smaller conditional variance permit sharper estimation. The proof is given in Appendix~\ref{sec:appendix-efficiency-minimax}. This result makes explicit why one must account for both the size and the direction of $\boldsymbol{\delta}$ when assessing estimation difficulty.

\subsection{Efficiency Bound}
\label{sec:eif-variance-bound}

Here we show that the same covariance matrix also controls the nonparametric efficiency
bound. Recall that the asymptotic variance of a regular and asymptotically linear (RAL) estimator is bounded below by the variance of the efficient influence function (EIF), $\operatorname{Var}\{\varphi_{\theta(\boldsymbol{\delta})}(\boldsymbol{Z})\}$. The exact variance of the EIF depends on a complex interplay between the conditional variance of the outcome, $\operatorname{Var}(Y \mid \boldsymbol{X}, \boldsymbol{W})$, and the variation in the regression surface, $\mu(\boldsymbol{X}, \boldsymbol{W})$. However, the following theorem demonstrates that a main driver of this asymptotic variance remains the conditional covariance of the exposures, $\boldsymbol{\Sigma}_{W\mid X}$.

\begin{theorem}[Variance bound of the efficient influence function]
\label{thm:var-iff}
Under \textnormal{(C1)}--\textnormal{(C4)} defined in Appendix~\ref{sec:appendix-efficiency-minimax}, there exist
constants $0<c_{\mathrm{low}}\le c_{\mathrm{up}}<\infty$, not depending
on $\boldsymbol{\delta}$, such that
$$
c_{\mathrm{low}}\,
\boldsymbol{\delta}^{\top}
\boldsymbol{\Sigma}_{W\mid X}
\boldsymbol{\delta}
\le
\operatorname{Var}\{
\varphi_{\theta(\boldsymbol{\delta})}(\boldsymbol{Z})
\}
\le
c_{\mathrm{up}}\,
\boldsymbol{\delta}^{\top}
\boldsymbol{\Sigma}_{W\mid X}
\boldsymbol{\delta}.
$$
\end{theorem}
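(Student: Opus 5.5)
The plan is to put the EIF of $\theta(\boldsymbol{\delta})=\psi(\boldsymbol{\delta})-\psi(\boldsymbol{0})$ into a form where the factor $r_{\boldsymbol{\delta}}-1$ appears explicitly, and then to bound $\mathbb{E}[(r_{\boldsymbol{\delta}}-1)^2]$ above and below by multiples of $\boldsymbol{\delta}^\top\boldsymbol{\Sigma}_{W\mid X}\boldsymbol{\delta}$. I will assume that (C1)--(C4) include the following: $\boldsymbol{W}$ has bounded support; $\boldsymbol{\delta}$ ranges over a compact set; $\mu$ is bounded; and $\sigma^2(\boldsymbol{x},\boldsymbol{w})=\operatorname{Var}(Y\mid \boldsymbol{X}=\boldsymbol{x},\boldsymbol{W}=\boldsymbol{w})$ is bounded above and below by positive constants.

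\textbf{Step 1 (algebraic form of the EIF).} By linearity, $\varphi_{\theta(\boldsymbol{\delta})}=\varphi(\cdot;\boldsymbol{\delta})-\varphi(\cdot;\boldsymbol{0})$. Write $m_{\boldsymbol{\delta}}(\boldsymbol{X})=\mathbb{E}_{g_{\boldsymbol{\delta}}}[\mu\mid\boldsymbol{X}]$, and note that $r_{\boldsymbol{0}}\equiv1$. Using Proposition~\ref{prop:eif}, and rewriting $r_{\boldsymbol{\delta}}(\mu-m_{\boldsymbol{\delta}})-(\mu-m_{\boldsymbol{0}})=(r_{\boldsymbol{\delta}}-1)(\mu-m_{\boldsymbol{\delta}})-(m_{\boldsymbol{\delta}}-m_{\boldsymbol{0}})$, I expect
$$
\varphi_{\theta(\boldsymbol{\delta})}(\boldsymbol{Z})=(r_{\boldsymbol{\delta}}-1)\{Y-\mu(\boldsymbol{X},\boldsymbol{W})\}+(r_{\boldsymbol{\delta}}-1)\{\mu(\boldsymbol{X},\boldsymbol{W})-m_{\boldsymbol{\delta}}(\boldsymbol{X})\}-\theta(\boldsymbol{\delta}).
$$
The first term has conditional mean zero given $(\boldsymbol{X},\boldsymbol{W})$, so it is uncorrelated with the second term. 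Hence
$$
\operatorname{Var}\{\varphi_{\theta(\boldsymbol{\delta})}\}=\mathbb{E}[(r_{\boldsymbol{\delta}}-1)^2\sigma^2]+\operatorname{Var}\{(r_{\boldsymbol{\delta}}-1)(\mu-m_{\boldsymbol{\delta}})\}.
$$

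\textbf{Step 2 (sandwich by $\mathbb{E}[(r_{\boldsymbol{\delta}}-1)^2]$).} For the lower bound, keep only the first term: $\operatorname{Var}\{\varphi_{\theta(\boldsymbol{\delta})}\}\ge\sigma_{\min}^2\,\mathbb{E}[(r_{\boldsymbol{\delta}}-1)^2]$. For the upper bound, use $|\mu-m_{\boldsymbol{\delta}}|\le2\|\mu\|_\infty$ to get $\operatorname{Var}\{\varphi_{\theta(\boldsymbol{\delta})}\}\le(\sigma_{\max}^2+4\|\mu\|_\infty^2)\,\mathbb{E}[(r_{\boldsymbol{\delta}}-1)^2]$. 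Since $\mathbb{E}[r_{\boldsymbol{\delta}}\mid\boldsymbol{X}]=1$, we have $\mathbb{E}[(r_{\boldsymbol{\delta}}-1)^2]=\mathbb{E}\{\operatorname{Var}(r_{\boldsymbol{\delta}}\mid\boldsymbol{X})\}$. It therefore suffices to show
$$
\kappa_1\operatorname{Var}(\boldsymbol{\delta}^\top\boldsymbol{W}\mid\boldsymbol{X})\le\operatorname{Var}(r_{\boldsymbol{\delta}}\mid\boldsymbol{X})\le\kappa_2\operatorname{Var}(\boldsymbol{\delta}^\top\boldsymbol{W}\mid\boldsymbol{X})
$$
almost surely, with constants not depending on $\boldsymbol{\delta}$. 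Taking expectations and using $\mathbb{E}\{\operatorname{Var}(\boldsymbol{\delta}^\top\boldsymbol{W}\mid\boldsymbol{X})\}=\boldsymbol{\delta}^\top\boldsymbol{\Sigma}_{W\mid X}\boldsymbol{\delta}$ then gives the theorem.

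\textbf{Step 3 (upper bound on $\operatorname{Var}(r_{\boldsymbol{\delta}}\mid\boldsymbol{X})$).} Recall $r_{\boldsymbol{\delta}}=e^{\boldsymbol{\delta}^\top\boldsymbol{W}}/\nu_{\boldsymbol{\delta}}(\boldsymbol{X})$. Because $\boldsymbol{W}$ and $\boldsymbol{\delta}$ are bounded, $\boldsymbol{\delta}^\top\boldsymbol{W}$ lies in a fixed compact interval $[-K,K]$. On that interval $u\mapsto e^u$ is Lipschitz with constant $e^K$, and $\nu_{\boldsymbol{\delta}}\ge e^{-K}$. For i.i.d.\ copies $U,U'$ we have $\operatorname{Var}(h(U))=\tfrac12\mathbb{E}(h(U)-h(U'))^2$, and this yields $\kappa_2=e^{4K}$.

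\textbf{Step 4 (lower bound on $\operatorname{Var}(r_{\boldsymbol{\delta}}\mid\boldsymbol{X})$); this is the main obstacle.} A Lipschitz argument cannot give a lower bound, so I will use Cauchy--Schwarz:
$$
\operatorname{Var}(r_{\boldsymbol{\delta}}\mid\boldsymbol{X})\ge\frac{\operatorname{Cov}(r_{\boldsymbol{\delta}},\boldsymbol{\delta}^\top\boldsymbol{W}\mid\boldsymbol{X})^2}{\operatorname{Var}(\boldsymbol{\delta}^\top\boldsymbol{W}\mid\boldsymbol{X})}.
$$
The covariance equals $\mathbb{E}_{g_{\boldsymbol{\delta}}}[\boldsymbol{\delta}^\top\boldsymbol{W}\mid\boldsymbol{X}]-\mathbb{E}_f[\boldsymbol{\delta}^\top\boldsymbol{W}\mid\boldsymbol{X}]$. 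Let $\Lambda(t)=\log\nu_{t\boldsymbol{\delta}}(\boldsymbol{X})$ be the cumulant generating function along $t\boldsymbol{\delta}$. Then this difference is $\Lambda'(1)-\Lambda'(0)=\int_0^1\operatorname{Var}_{g_{t\boldsymbol{\delta}}}(\boldsymbol{\delta}^\top\boldsymbol{W}\mid\boldsymbol{X})\,dt$. On the bounded support, $g_{t\boldsymbol{\delta}}/f\ge e^{-2K}$ for every $t\in[0,1]$. Combined with $\operatorname{Var}_g(U)=\min_a\mathbb{E}_g(U-a)^2\ge e^{-2K}\min_a\mathbb{E}_f(U-a)^2$, this gives $\operatorname{Var}_{g_{t\boldsymbol{\delta}}}\ge e^{-2K}\operatorname{Var}_f$. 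Hence the covariance is at least $e^{-2K}\operatorname{Var}(\boldsymbol{\delta}^\top\boldsymbol{W}\mid\boldsymbol{X})$, and so $\operatorname{Var}(r_{\boldsymbol{\delta}}\mid\boldsymbol{X})\ge e^{-4K}\operatorname{Var}(\boldsymbol{\delta}^\top\boldsymbol{W}\mid\boldsymbol{X})$. Where this conditional variance vanishes the inequality is trivial, so the division causes no problem. The delicate point is to check that the constants in (C1)--(C4) really give a uniform $K$, namely bounded exposures and compact $\boldsymbol{\delta}$. If the conditions instead give only moment bounds, I would replace the uniform density-ratio bounds with truncation and Hölder arguments.
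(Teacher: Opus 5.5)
Your proof is correct, and Steps 1--3 are exactly the paper's argument. The paper uses the same rewriting $\varphi_{\theta(\boldsymbol{\delta})}=(r_{\boldsymbol{\delta}}-1)\{Y-\mu\}+(r_{\boldsymbol{\delta}}-1)\{\mu-m_{\boldsymbol{\delta}}\}-\theta(\boldsymbol{\delta})$, the same orthogonality of the two random terms, and the same reduction to $\mathbb{E}\{\Var(r_{\boldsymbol{\delta}}\mid\boldsymbol{X})\}$. It also gets the upper bound $e^{4K}\Var(\boldsymbol{\delta}^\top\boldsymbol{W}\mid\boldsymbol{X})$ from $\Var(e^T\mid\boldsymbol{X})=\tfrac12\mathbb{E}[(e^T-e^{T'})^2\mid\boldsymbol{X}]$, as you do. The point you flag as delicate is not: (C3)--(C4) give $|\boldsymbol{\delta}^\top\boldsymbol{W}|\le\|\boldsymbol{\delta}\|M_s\le C_\delta$, so $K=C_\delta$ uniformly.

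You depart from the paper only in Step 4, and the premise there is mistaken. On $[-K,K]$ the exponential has derivative at least $e^{-K}$, so the mean value theorem gives $|e^T-e^{T'}|\ge e^{-K}|T-T'|$. The same symmetric-difference identity together with $\nu_{\boldsymbol{\delta}}\le e^{K}$ then yields $\Var(r_{\boldsymbol{\delta}}\mid\boldsymbol{X})\ge e^{-4K}\Var(\boldsymbol{\delta}^\top\boldsymbol{W}\mid\boldsymbol{X})$ in one line. This is what the paper does, so both bounds come from a single two-sided Lipschitz estimate.

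Your Cauchy--Schwarz/cumulant argument is valid and reaches the same constant $e^{-4K}$, so on the stated assumptions it only costs length. What it buys is generality on the lower side. The identity $\Cov(r_{\boldsymbol{\delta}},\boldsymbol{\delta}^\top\boldsymbol{W}\mid\boldsymbol{X})=\int_0^1\Var_{g_{t\boldsymbol{\delta}}}(\boldsymbol{\delta}^\top\boldsymbol{W}\mid\boldsymbol{X})\,dt$ needs only a finite conditional moment generating function, not bounded exposures. Your lower bound therefore survives whenever tilting shrinks the conditional variance of $\boldsymbol{\delta}^\top\boldsymbol{W}$ by at most a fixed factor. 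For Gaussian exposures it holds with constant $1$, whereas the mean-value lower bound genuinely needs uniform control of $\boldsymbol{\delta}^\top\boldsymbol{W}$. The upper bound still relies on that uniform control in either route.
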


The proof of Theorem~\ref{thm:var-iff} is given in Appendix~\ref{sec:appendix-efficiency-minimax}.
This result establishes that the nonparametric efficiency bound grows
quadratically with the magnitude of the intervention, with its
direction-specific scaling determined by
$\boldsymbol{\Sigma}_{W\mid X}$. It also builds on the discussion in
Section~\ref{subsec:exposure-shifts-efficiency} on efficient
$\boldsymbol{\delta}$ shifts to arbitrary exposure distributions: when
intervention size is fixed by the Gelbrich/Wasserstein constraint,
directions aligned with the leading eigenvectors of
$\boldsymbol{\Sigma}_{W\mid X}$ achieve the same size of exposure shift, but
with smaller
$\boldsymbol{\delta}^{\top}\boldsymbol{\Sigma}_{W\mid X}\boldsymbol{\delta}$,
and therefore a smaller efficiency bound. Together,
Theorems~\ref{thm:minimax} and~\ref{thm:var-iff} show that
$\boldsymbol{\Sigma}_{W\mid X}$ governs both the minimax lower bound and
the semiparametric efficiency bound. Combined with the convergence result below, this implies that the minimax lower bound is tight up to constants in the first-order sense.

\subsection{Convergence and Normality}

In this section, we establish asymptotic normality of our proposed one-step estimator under suitable conditions for the nuisance functions.

\begin{theorem}[Finite-$\boldsymbol{\delta}$ CLT]\label{thm:finite-delta-CLT-main}
Assume i.i.d.\ observations, bounded $Y$ and $\boldsymbol{W}$, $f(\boldsymbol{w}\mid\boldsymbol{x})$ bounded above and away from zero on its support, and $\widehat f$ nonnegative and bounded above with probability tending to one. Fix $\Delta \in (0,\infty)$ and any tilt $\boldsymbol{\delta}$ with $\|\boldsymbol{\delta}\| \le \Delta$. If the following rate condition holds,
$$
\|\widehat{f} - f\|_2
 \big(\|\widehat{\mu} - \mu\|_2 + \|\widehat{f} - f\|_2\big) = o_P(n^{-1/2})
$$
then
$$
\sqrt{n}\,\{\widehat{\psi}(\boldsymbol{\delta}) - \psi(\boldsymbol{\delta})\}\ \rightsquigarrow\ \mathcal{N}\!\big(0,\ \mathrm{Var}\{\varphi_{\psi(\boldsymbol{\delta})}(\boldsymbol{Z})\}\big),
$$
and
$$
\sqrt{n}\,\{\widehat{\theta}(\boldsymbol{\delta}) - \theta(\boldsymbol{\delta})\}\ \rightsquigarrow\ \mathcal{N}\!\big(0,\ \mathrm{Var}\{\varphi_{\theta(\boldsymbol{\delta})}(\boldsymbol{Z})\}\big),
\qquad
\varphi_{\theta(\boldsymbol{\delta})} := \varphi_{\psi(\boldsymbol{\delta})} - \varphi_{\psi(\boldsymbol{0})}.
$$
\end{theorem}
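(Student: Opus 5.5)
The plan is the standard von Mises (one-step) argument with cross-fitting. Write $\widehat\psi(\boldsymbol\delta)-\psi(\boldsymbol\delta) = (\mathbb{P}_n - P)\varphi(\cdot;\boldsymbol\delta) + (\mathbb{P}_n-P)\{\widehat\varphi-\varphi\} + R_2(\widehat P, P)$. Here $\widehat\varphi$ is the uncentered influence function evaluated at the fold-specific nuisance estimates $(\widehat\mu,\widehat f)$, and $R_2$ is the second-order remainder $P\widehat\varphi + \psi(\widehat P) - \psi(P)$, where $P\widehat\varphi$ integrates over the true distribution. The first term gives the CLT with variance $\mathrm{Var}\{\varphi_{\psi(\boldsymbol\delta)}(\boldsymbol Z)\}$, since $Y$ and $\boldsymbol W$ are bounded and $f$ is bounded away from zero. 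This makes $r_{\boldsymbol\delta}$ bounded uniformly over $\|\boldsymbol\delta\|\le\Delta$, so $\varphi$ has finite variance. The result for $\theta$ follows by applying the same decomposition at $\boldsymbol\delta$ and at $\boldsymbol 0$ and subtracting, using linearity.

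\textbf{Empirical process term.} With $K$-fold cross-fitting, conditional on the training folds, $\widehat\varphi-\varphi$ is a fixed function. Chebyshev's inequality then gives $(\mathbb{P}_n-P)(\widehat\varphi-\varphi)=O_P(n^{-1/2}\|\widehat\varphi-\varphi\|_2)$, so it suffices to show $\|\widehat\varphi-\varphi\|_2=o_P(1)$. This is Lipschitz bookkeeping.
\begin{itemize}
\item The normalizer $\widehat\nu_{\boldsymbol\delta}(\boldsymbol x)=\int e^{\boldsymbol\delta^\top\boldsymbol v}\widehat f(\boldsymbol v\mid\boldsymbol x)\,d\boldsymbol v$ differs from $\nu_{\boldsymbol\delta}$ by at most $e^{\Delta\sup\|\boldsymbol w\|}\int|\widehat f-f|$.
\item The true $\nu_{\boldsymbol\delta}$ is bounded below, and $\widehat\nu_{\boldsymbol\delta}$ is bounded below with probability tending to one once $\widehat f\to f$.
\item Consequently $\widehat r-r$ and $\mathbb{E}_{\widehat g}[\widehat\mu\mid\boldsymbol X]-\mathbb{E}_g[\mu\mid\boldsymbol X]$ are controlled in $L_2$ by $\|\widehat f-f\|_2+\|\widehat\mu-\mu\|_2$.
\end{itemize}
Consistency of both nuisance estimates is implied by the rate condition, at least in the form needed; I would state it explicitly as part of the assumptions if necessary.

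\textbf{Remainder term (main obstacle).} I would compute $R_2$ explicitly. Write $m(\boldsymbol x)=\int\mu\,g_{\boldsymbol\delta}$, and let $\widehat m$ be its counterpart under the estimates. Integrating $\widehat r\,(Y-\widehat m)+\widehat m-\psi(\widehat P)$ against $P$ and adding $\psi(\widehat P)-\psi(P)$ gives
$$R_2=\mathbb{E}\Big[\int \widehat r(\boldsymbol w,\boldsymbol X)\{\mu-\widehat m\}(\boldsymbol w)f(\boldsymbol w\mid\boldsymbol X)\,d\boldsymbol w+\widehat m(\boldsymbol X)-m(\boldsymbol X)\Big].$$
Using $\widehat r f=\widehat g\, f/\widehat f$, write $f/\widehat f = 1+(f-\widehat f)/\widehat f$. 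The leading piece is $\int\widehat g(\mu-\widehat m)+\widehat m-m=\int(\widehat g-g)\mu$. After subtracting $\widehat m$ this combines into $\int(\widehat g-g)(\mu-\widehat\mu)$ plus terms of the form $\int\widehat g(\mu-\widehat m)(f-\widehat f)/\widehat f$ minus its first-order counterpart.

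The delicate step is exhibiting the cancellation that makes every surviving term a product of two errors. The key identity is $\widehat g-g = g\{(\widehat f/f)(\nu/\widehat\nu)-1\}$, whose first-order part is the centered quantity $g\{(\widehat f-f)/f-\mathbb{E}_g[(\widehat f-f)/f]\}$. These first-order pieces cancel against the $f/\widehat f$ expansion. What remains is bounded via Cauchy--Schwarz, together with boundedness of $g/f$, $1/\widehat f$ on the support, and $1/\widehat\nu$, by a constant (depending on $\Delta$) times $\|\widehat f-f\|_2(\|\widehat\mu-\mu\|_2+\|\widehat f-f\|_2)$. The $\|\widehat f-f\|_2^2$ term arises from the nonlinearity of the normalizer and of the ratio $f/\widehat f$. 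By the rate assumption this is $o_P(n^{-1/2})$, and combining the three terms completes the proof.
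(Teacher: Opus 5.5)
Your route is the paper's: the von Mises decomposition, a conditional Chebyshev bound for the cross-fitted empirical-process term, a second-order remainder, and Lipschitz bounds that turn tilted-nuisance errors into $\|\widehat\mu-\mu\|_2$ and $\|\widehat f-f\|_2$. Two steps, however, fail as you justify them.

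\emph{Lower bound on $\widehat\nu_{\boldsymbol\delta}$.} You claim $\widehat\nu_{\boldsymbol\delta}$ is bounded below with probability tending to one ``once $\widehat f\to f$.'' But convergence is only in $L_2(P)$. That controls $\widehat\nu_{\boldsymbol\delta}-\nu_{\boldsymbol\delta}$ in $L_2(P_{\boldsymbol X})$ and gives no uniform lower bound. The hypotheses bound $\widehat f$ only from above, so $\widehat f(\cdot\mid\boldsymbol x)$ may be nearly zero for $\boldsymbol x$ in a set of small but positive mass. There $\widehat r_{\boldsymbol\delta}=e^{\boldsymbol\delta^\top\boldsymbol w}/\widehat\nu_{\boldsymbol\delta}$ is unbounded.

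Both of your key steps need this bound. The Chebyshev step needs $\|(1-\widehat r_{\boldsymbol\delta})(\widehat m_{\boldsymbol\delta}-m_{\boldsymbol\delta})\|_2\to 0$, and the Cauchy--Schwarz bound on $R_2$ needs $1/\widehat\nu_{\boldsymbol\delta}$ bounded.

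The paper builds the fix into the estimator by using the truncated normalizer $\widehat\nu^\dagger_{\boldsymbol\delta}=\min\{\max(\widehat\nu_{\boldsymbol\delta},e^{-\tau_\Delta}/2),\,2e^{\tau_\Delta}\}$. Since $\nu_{\boldsymbol\delta}\in[e^{-\tau_\Delta},e^{\tau_\Delta}]$ lies inside this window and truncation is $1$-Lipschitz, $\|\widehat\nu^\dagger_{\boldsymbol\delta}-\nu_{\boldsymbol\delta}\|_2\le\|\widehat\nu_{\boldsymbol\delta}-\nu_{\boldsymbol\delta}\|_2$, so no rate is lost. Alternatively, requiring each $\widehat f(\cdot\mid\boldsymbol x)$ to be a probability density on $\mathcal W$ gives $\widehat\nu_{\boldsymbol\delta}\ge e^{-\tau_\Delta}$ directly.

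\emph{Consistency of $\widehat\mu$.} This is not implied by the rate condition. The product only yields $\|\widehat f-f\|_2^2=o_P(n^{-1/2})$, i.e.\ $\|\widehat f-f\|_2=o_P(n^{-1/4})$. It holds trivially with $\widehat f=f$ and any inconsistent $\widehat\mu$.

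In that case $(P_n-P)\{(1-\widehat r_{\boldsymbol\delta})(\widehat m_{\boldsymbol\delta}-m_{\boldsymbol\delta})\}$ is in general of order $n^{-1/2}$, not $o_P(n^{-1/2})$. The limiting variance then generally differs from $\mathrm{Var}\{\varphi_{\psi(\boldsymbol\delta)}(\boldsymbol Z)\}$. So the assumption you hedge on is genuinely required. The paper's appendix version (Theorem~\ref{thm:finite-delta-CLT}) assumes $o_P(1)$ consistency of both tilted nuisances, and its reduction uses $n^{-1/4}$ rates for both $\widehat\mu$ and $\widehat f$.

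\emph{The remainder.} Your expansion through $f/\widehat f$ is algebraically correct but roundabout. It also leads you to invoke boundedness of $1/\widehat f$, which is not assumed and not needed.

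Since $\widehat g_{\boldsymbol\delta}/\widehat f=e^{\boldsymbol\delta^\top\boldsymbol w}/\widehat\nu_{\boldsymbol\delta}$ exactly, $\widehat r_{\boldsymbol\delta}-r_{\boldsymbol\delta}$ is $e^{\boldsymbol\delta^\top\boldsymbol w}$ times an $\boldsymbol X$-measurable factor. Hence
$\mathbb{E}[(\widehat r_{\boldsymbol\delta}-r_{\boldsymbol\delta})(\mu-m_{\boldsymbol\delta})\mid\boldsymbol X]=(1/\widehat\nu_{\boldsymbol\delta}-1/\nu_{\boldsymbol\delta})(\eta_{\boldsymbol\delta}-m_{\boldsymbol\delta}\nu_{\boldsymbol\delta})=0$.
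This gives the paper's exact identity $R_2=-\mathbb{E}[(\widehat r_{\boldsymbol\delta}-r_{\boldsymbol\delta})(\widehat m_{\boldsymbol\delta}-m_{\boldsymbol\delta})]$ (Lemma~\ref{lem:R2-finite}). Cauchy--Schwarz and the Lipschitz bounds of Lemma~\ref{lem:reduction-mu-pi} then finish, with no first-order cancellation bookkeeping.
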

A proof of this result can be found in Appendix~\ref{sec:appendix-convergence-normality}. Although asymptotic linearity appears to depend on the tilted nuisances $m_{\boldsymbol{\delta}}$ and $r_{\boldsymbol{\delta}}$ because of the EIF structure, this theorem reduces this requirement to $n^{-1/4}$ convergence rates for the outcome regression and exposure density, respectively. Additionally, this result allows us to construct Wald intervals for individual shifts and to use multivariate normal approximations for joint inference across multiple estimands simultaneously. Variance estimators and their theoretical justifications are given in Appendix~\ref{sec:appendix-convergence-normality}.

\section{Sensitivity analysis to unmeasured confounding}
\label{sec:SensitivityFinal}

In this section we develop a sensitivity analysis approach to assess the robustness of our causal estimates to the presence of unmeasured confounders. Throughout, we assume that there exist unmeasured confounders $\boldsymbol{U}$ such that
$$
Y(\boldsymbol{w}) \perp\!\!\!\!\perp \boldsymbol{W} \mid \boldsymbol{X}, \boldsymbol{U}, \quad \text{for all }\boldsymbol{w}.
$$
For notational clarity, let $\boldsymbol{V}:=(\boldsymbol{X},\boldsymbol{U})$ denote the full adjustment set, reserving $\boldsymbol{Z}:=(\boldsymbol{X},\boldsymbol{W},Y)$ for the observed data throughout. We still consider incremental policies defined by exponential tilting of the observed conditional exposure density $f(\boldsymbol{w}\mid \boldsymbol{X})$. The full-data causal estimand is given by
$$
\psi(\boldsymbol{\delta})
=\mathbb{E}\left[\int_{\mathcal{W}} \mu(\boldsymbol{V},\boldsymbol{w}) g_{\boldsymbol{\delta}}(\boldsymbol{w}\mid \boldsymbol{X}) \, d\boldsymbol{w}\right],
$$
where $\mu(\boldsymbol{V},\boldsymbol{w})=\mathbb{E}[Y\mid \boldsymbol{V},\boldsymbol{W}=\boldsymbol{w}]$ is referred to as the long outcome regression. We further denote the estimand obtained by applying the same identification formula while omitting $\boldsymbol{U}$ as
$$
\psi_s(\boldsymbol{\delta})
=\mathbb{E}\left[\int_{\mathcal{W}} \mu_s(\boldsymbol{X},\boldsymbol{w}) g_{\boldsymbol{\delta}}(\boldsymbol{w}\mid \boldsymbol{X}) \, d\boldsymbol{w}\right],
$$
where $\mu_s(\boldsymbol{X},\boldsymbol{w})=\mathbb{E}[Y\mid \boldsymbol{X},\boldsymbol{W}=\boldsymbol{w}]$ is referred to as the short outcome regression. 
When there is unmeasured confounding, the identified short estimand $\psi_s(\boldsymbol{\delta})$, which conditions only on observed covariates $\boldsymbol{X}$, diverges from the causal target $\psi(\boldsymbol{\delta})$ that is defined under conditional exchangeability given $\boldsymbol{V}=(\boldsymbol{X},\boldsymbol{U})$. We follow the approach of \citep{chernozhukov2021long} and leverage the geometry of Riesz representers to derive sharp bias bounds based on $L_2$ norms, yielding interpretable calibration of confounding strength without imposing restrictive structural assumptions on $\mu(\cdot,\cdot)$.

\subsection{Bounding the bias}

Let $f(\boldsymbol{w}\mid \boldsymbol{V})$ denote the true conditional exposure density given the full adjustment set. The parameter $\psi(\boldsymbol{\delta})$ can be written as a linear functional of the long regression:
$$
\psi(\boldsymbol{\delta})
=\mathbb{E}\left[\mu(\boldsymbol{V},\boldsymbol{W})\alpha_{\boldsymbol{\delta}}(\boldsymbol{V},\boldsymbol{W})\right],
\qquad
\alpha_{\boldsymbol{\delta}}(\boldsymbol{V},\boldsymbol{W})
=\frac{g_{\boldsymbol{\delta}}(\boldsymbol{W}\mid \boldsymbol{X})}{f(\boldsymbol{W}\mid \boldsymbol{V})}.
$$
Analogously, the short estimand admits the representation
$$
\psi_s(\boldsymbol{\delta})
=\mathbb{E}\left[\mu_s(\boldsymbol{X},\boldsymbol{W})\alpha_{s,\boldsymbol{\delta}}(\boldsymbol{X},\boldsymbol{W})\right],
\qquad
\alpha_{s,\boldsymbol{\delta}}(\boldsymbol{X},\boldsymbol{W})
=\frac{g_{\boldsymbol{\delta}}(\boldsymbol{W}\mid \boldsymbol{X})}{f(\boldsymbol{W}\mid \boldsymbol{X})}.
$$
Under mild regularity conditions ensuring square-integrability of the relevant Riesz representers, previous work shows that the bias admits the exact representation \citep{chernozhukov2021long}:
$$
\psi(\boldsymbol{\delta})-\psi_s(\boldsymbol{\delta})
=\mathbb{E}\left[\Delta_{\mu}(\boldsymbol{V},\boldsymbol{W})\Delta_{\alpha}(\boldsymbol{V},\boldsymbol{W})\right],
$$
where
$$
\Delta_{\mu}(\boldsymbol{V},\boldsymbol{W})
=\mu(\boldsymbol{V},\boldsymbol{W})-\mu_s(\boldsymbol{X},\boldsymbol{W}),
\qquad
\Delta_{\alpha}(\boldsymbol{V},\boldsymbol{W})
=\alpha_{\boldsymbol{\delta}}(\boldsymbol{V},\boldsymbol{W})-\alpha_{s,\boldsymbol{\delta}}(\boldsymbol{X},\boldsymbol{W}).
$$
This identity isolates two necessary sources of bias: the additional outcome variation explained by $\boldsymbol{U}$ beyond $(\boldsymbol{X},\boldsymbol{W})$ through $\Delta_{\mu}$, and the additional information about the exposure distribution provided by $\boldsymbol{U}$ beyond $\boldsymbol{X}$ through $\Delta_{\alpha}$. By Cauchy--Schwarz,
$$
|\psi_s(\boldsymbol{\delta})-\psi(\boldsymbol{\delta})|
\le \sqrt{\mathbb{E}[\Delta_{\mu}(\boldsymbol{V},\boldsymbol{W})^2]}\sqrt{\mathbb{E}[\Delta_{\alpha}(\boldsymbol{V},\boldsymbol{W})^2]}.
$$

\subsection{Sensitivity Parameters}

To express the bound in terms of identifiable scale components and partial $R^2$-type sensitivity parameters, define the following identifiable parameters
$$
\sigma_s^2
=\mathbb{E}\left[(Y-\mu_s(\boldsymbol{X},\boldsymbol{W}))^2\right],
\qquad
A_s^2(\boldsymbol{\delta})
=\mathbb{E}\left[\alpha_{s,\boldsymbol{\delta}}(\boldsymbol{X},\boldsymbol{W})^2\right],
\qquad
S(\boldsymbol{\delta})=\sigma_s A_s(\boldsymbol{\delta}).
$$
We parameterize the outcome component by the nonparametric partial $R^2$ of $\boldsymbol{U}$ with $Y$ given $(\boldsymbol{X},\boldsymbol{W})$,
$$
R^2_{Y\sim \boldsymbol{U}\mid \boldsymbol{X},\boldsymbol{W}}
=\frac{\mathbb{E}\left[\Delta_{\mu}(\boldsymbol{V},\boldsymbol{W})^2\right]}{\sigma_s^2},
\qquad
C_Y=\sqrt{R^2_{Y\sim \boldsymbol{U}\mid \boldsymbol{X},\boldsymbol{W}}}.
$$
For the treatment component, we use the relative increase in $L_2$ variation of the Riesz representer induced by conditioning on $\boldsymbol{U}$:
$$
C_D^2(\boldsymbol{\delta})
=\frac{\mathbb{E}\left[\Delta_{\alpha}(\boldsymbol{V},\boldsymbol{W})^2\right]}{A_s^2(\boldsymbol{\delta})}.
$$
Combining these definitions yields the sensitivity bound
$$
|\psi_s(\boldsymbol{\delta})-\psi(\boldsymbol{\delta})|
\le S(\boldsymbol{\delta})\cdot C_Y\cdot C_D(\boldsymbol{\delta}).
$$

For interpretability in the empirical analysis, we reparameterize the strength of unmeasured confounding as
$$
\eta_Y^2:=C_Y^2,
\qquad
\eta_\alpha^2(\boldsymbol{\delta}):=1-R_\alpha^2(\boldsymbol{\delta}),
\qquad
C_D^2(\boldsymbol{\delta})
=\frac{\eta_\alpha^2(\boldsymbol{\delta})}{1-\eta_\alpha^2(\boldsymbol{\delta})}.
$$

Here $R_\alpha^2(\boldsymbol{\delta})$ is the squared correlation between the long and short Riesz representer contrasts. The parameters $\eta_Y^2$ and $\eta_\alpha^2(\boldsymbol{\delta})$ quantify the proportions of residual outcome variation and Riesz representer variation, respectively, that are attributable to the unmeasured confounder $\boldsymbol{U}$. Together with the scale factor $S(\boldsymbol{\delta})$, these parameters define the maximal confounding bias, which allows us to assess sensitivity of results given a certain strength of unmeasured confounding. Note that the same bias formula applies to the incremental effect $\theta(\boldsymbol{\delta})=\psi(\boldsymbol{\delta})-\psi(\boldsymbol{0})$ after replacing the short Riesz representer with the contrast $\alpha_{s,\theta,\boldsymbol{\delta}}(\boldsymbol{X},\boldsymbol{W})=\alpha_{s,\boldsymbol{\delta}}(\boldsymbol{X},\boldsymbol{W})-\alpha_{s,\boldsymbol{0}}(\boldsymbol{X},\boldsymbol{W})$, and the remaining sensitivity analysis proceeds in the same manner. Note that one can use formal benchmarking with observed covariates to reason about plausible values of these sensitivity parameters as in \citep{cinelli2020making}. We implement this in Section~\ref{sec:application}, though full details are left to Appendix~\ref{sec:sensitivity}.

\section{Simulation studies}
\label{sec:Simulations}

The simulation study evaluates finite-sample behavior along two design axes: the shape of the conditional exposure distribution and the complexity of the outcome regression. We consider three exposure distributions: Gaussian, skewed, and a truncated mixture of Gaussians. We also consider two outcome regressions: a linear model and a nonlinear interaction model motivated by environmental mixture challenges including synergistic toxicity and effect modification. Throughout, we use $n=5,000$ observations, $p=10$ baseline covariates, $q=6$ exposures, and 500 Monte Carlo repetitions. Specific details of the data-generating processes, and additional information on the different estimation strategies can be found in Appendix~\ref{sec:appendix-simulation-details}. 

We compare seven procedures for estimating the tilted mean $\psi(\boldsymbol{\delta})$, though in all cases we estimate the outcome regression using gradient boosting. We do, however, vary the manner in which the exposure density is estimated. We model the exposure distribution using a semiparametric location-shift decomposition, $W_j = \mu_j(\boldsymbol{X}) + \sigma_j \varepsilon_j$, where the conditional mean $\mu_j(\boldsymbol{X})$ is estimated via gradient boosting, and the joint distribution of the exposures is obtained using a Gaussian copula. We consider three approaches for estimating the distribution of $\varepsilon_j$: 1) assuming it is Gaussian, 2) assuming it follows a $t$-distribution, and 3) using a smoothed version of the empirical distribution of the residuals. Within each of the three approaches to conditional density estimation, we explore 1) estimating the density ratio using the estimate of $f$, and 2) directly estimating the density ratio by estimating $\nu_{\boldsymbol{\delta}}(\boldsymbol{x})$. This leads to six estimators, though we consider a seventh estimator that involves directly estimating both $\nu_{\boldsymbol{\delta}}(\boldsymbol{x})$ and $\eta_{\boldsymbol{\delta}}(\boldsymbol{x})$ without ever needing to estimate the exposure density. Whenever we are estimating either $\nu_{\boldsymbol{\delta}}(\boldsymbol{x})$ or $\eta_{\boldsymbol{\delta}}(\boldsymbol{x})$ directly, we utilize SoftBART \citep{linero2018bayesian} for estimation.

\subsection{Simulation Results}
\label{subsec:sim_results}

Figure~\ref{fig:simulation_results} displays the estimated tilted mean, $\psi(\boldsymbol{\delta})$, across 500 Monte Carlo repetitions for the seven procedures under the six data-generating designs. The figure illustrates how nuisance estimation choices affect finite-sample performance, while corresponding numerical summaries of bias and root mean-squared error, aggregated across all six designs, are detailed in Appendix~\ref{sec:appendix-simulation-results-table}.

\begin{figure}[htbp]
    \centering
    \includegraphics[width=0.96\textwidth]{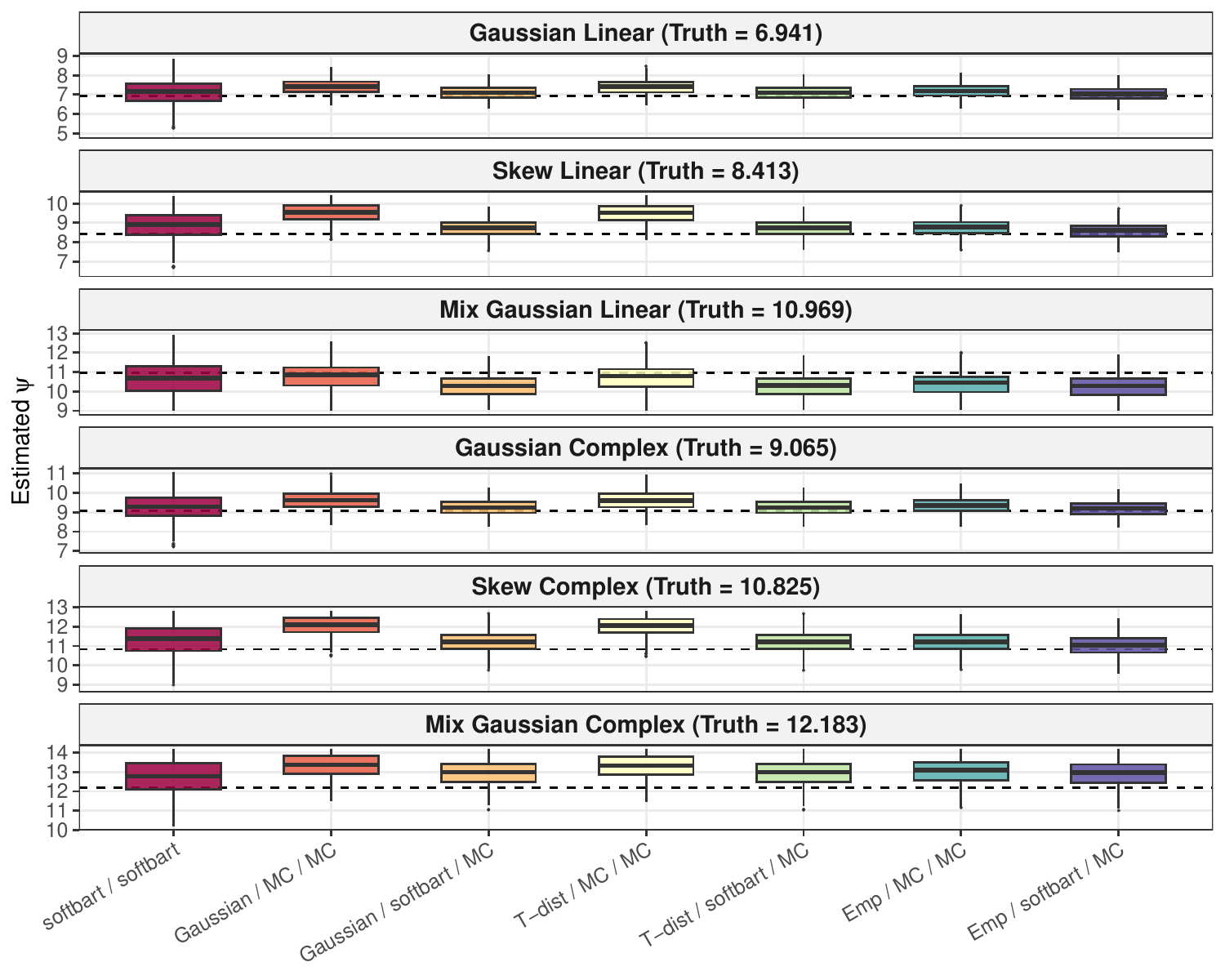}
    \caption{Boxplots of $\psi(\boldsymbol{\delta})$ across 500 repetitions under six data-generating designs. The dashed horizontal line marks the true value of the estimand. The labels on the horizontal axis identify the residual family, the method used for $r_{\boldsymbol{\delta}}(\boldsymbol{w}, \boldsymbol{x})$, and the method used for estimating $m_{\boldsymbol{\delta}}(\boldsymbol{x})$. Throughout, MC represents simply plugging in estimates of $f$ or $\mu$ directly, while SoftBART represents direct estimation as described in Section~\ref{sssec:ClassificationDensity}.}
    \label{fig:simulation_results}
\end{figure}

The main pattern is similar across the six designs. Within the semiparametric exposure models, directly estimating the density ratio $r_{\boldsymbol{\delta}}(\boldsymbol{w}, \boldsymbol{x})$ using regression reduces bias and variability relative to forming the density ratio solely by plugging in the estimated exposure density, regardless of the distributional family used. Among all the approaches considered, using the empirical distribution of the residuals combined with SoftBART estimation of $r_{\boldsymbol{\delta}}(\boldsymbol{w}, \boldsymbol{x})$ consistently performs well across all data-generating processes. The direct regression procedure that completely avoids density estimation is more variable, reflecting the difficulty of estimating both tilted nuisance functions without the structure imposed by an exposure model. These results also indicate that finite-sample performance of the one-step estimator is driven primarily by estimation of the density ratio $r_{\boldsymbol{\delta}}(\boldsymbol{w}, \boldsymbol{x})$. Once this component is estimated, the choice among the three residual families has a smaller impact on the resulting performance.

\section{Assessing Health Impacts of \texorpdfstring{PM$_{2.5}$}{PM2.5} Component Mixtures}
\label{sec:application}
We now evaluate an important public health question regarding the health 
impacts of long-term exposure to fine particulate matter (PM$_{2.5}$) and its complex 
chemical mixture. We constructed a county-level dataset across the 
United States for the year 2019 to estimate the exposure-response relationship 
between PM$_{2.5}$ constituents and age-adjusted hospitalization rates per 10,000 people for Chronic Obstructive Pulmonary Disease (COPD). We obtained health records from the CDC Environmental Public Health Tracking 
Network (EPHTN) and CDC WONDER. High-resolution estimates of 
PM$_{2.5}$ mass and its chemical constituents were obtained from the Atmospheric Composition Analysis Group 
\citep{van2019regional}. To adjust for potential confounding, we integrated a 
broad set of sociodemographic, behavioral, and clinical covariates compiled 
from the U.S. Census Bureau and CDC surveillance systems. Our analysis focuses on a $q=5$-dimensional PM$_{2.5}$ component mixture: black carbon (BC), nitrates (NO$_3$), organic matter (OM), sulfates (SO$_4$), and ammonium (NH$_4$). All reported curves use the cross-fitted one-step estimator from Section~\ref{sec:estimation} together with the empirical residual model shown to perform best in the simulation studies. Motivated by the simulation finding that finite-sample performance is driven primarily by estimation of the density ratio, we directly estimate $r_{\boldsymbol{\delta}}(\boldsymbol{w},\boldsymbol{x})$ using SoftBART and estimate $m_{\boldsymbol{\delta}}(\boldsymbol{x})$ by plugging in the fitted outcome regression for all non-BFGS estimands. For the Riemannian BFGS search, we instead use the estimates of $f$ and $\mu$ directly, which is required for providing a smooth objective for derivative-based optimization.

\begin{figure}[h]
\centering
\includegraphics[width=\textwidth]{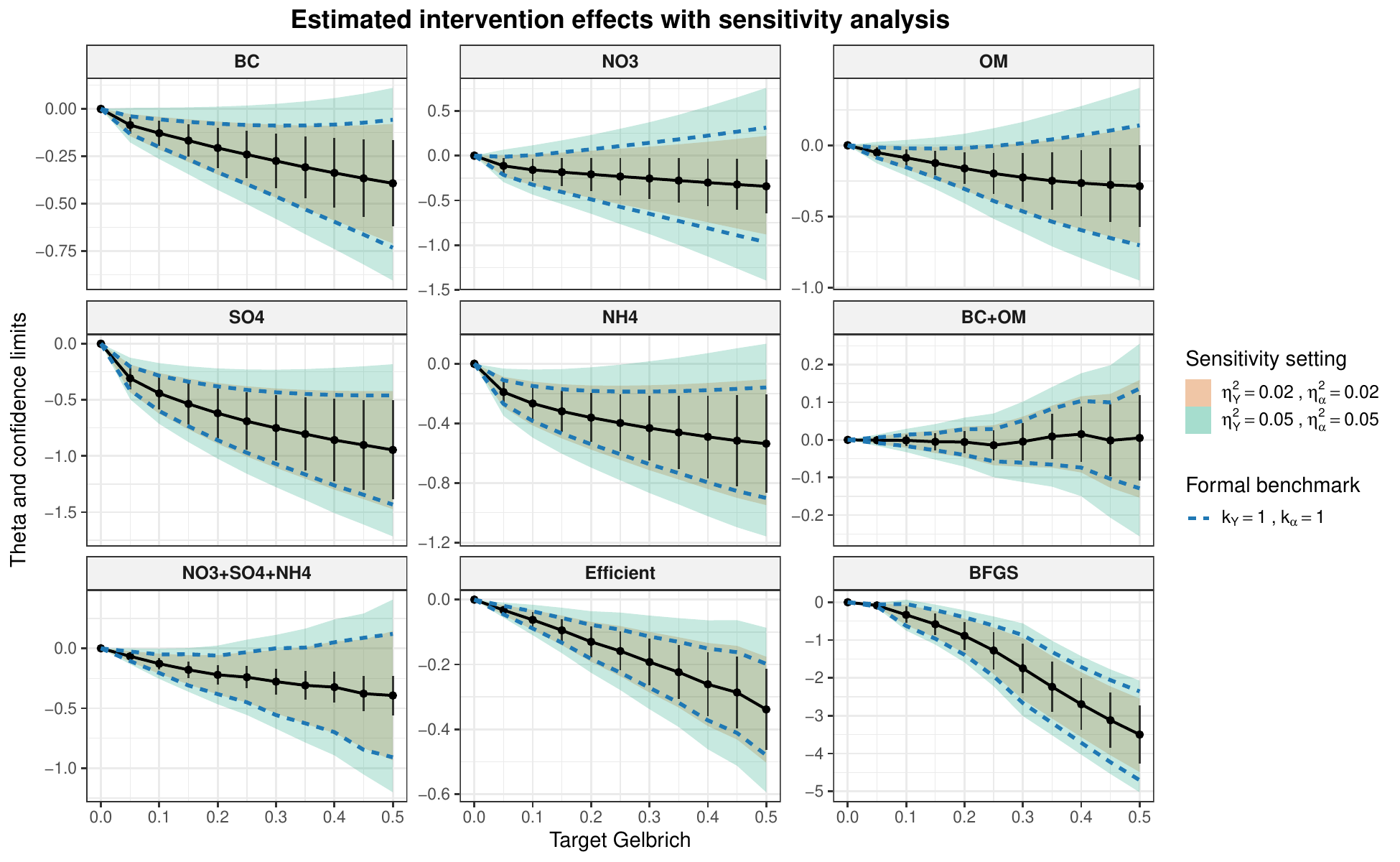}
\caption{Estimated incremental effects for all estimands considered. The black points and vertical bars show $\widehat{\theta}(\boldsymbol{\delta})$ and its $95\%$ confidence interval. Colored ribbons show pointwise Wald intervals after fixing two sensitivity-parameter settings. The dashed blue curves show sensitivity analysis using sensitivity parameters found from formal benchmarking.}
\label{fig:application_main}
\end{figure}

We study four types of intervention paths corresponding to single exposure shifts, shifting groups of exposures, the efficient shift, and finding the optimal shift in terms of reducing overall hospitalization rates. For the single exposure shifts, we utilize shifts of the form $\boldsymbol{\delta}_j = (0, \dots, t_j, \dots, 0)$ when studying exposure $j$. The magnitude $t_j$ is chosen to satisfy the corresponding Gelbrich constraint. These paths are useful for comparing feasible directions on the intervention manifold, though they do not necessarily isolate single-pollutant causal effects because the tilted distribution remains multivariate and continues to alter the entire mixture distribution, not just exposure $j$. We take a different approach for studying groups of exposures, where we consider two groups: 1) BC+OM, and 2) NO$_3$+SO$_4$+NH$_4$. We use a numerical algorithm to find the $\boldsymbol{\delta}$ value that shifts the means of each of the exposures within a group, while holding the means of the other exposures constant. We do not apply this idea to the single-pollutant shifts, because finding such a shift is not feasible in many cases due to the high correlations among the exposures. The efficient shift in this setting corresponds to $\boldsymbol{\delta}$ values that shift all exposures simultaneously, reflecting the strong dependence among all exposures in our data. Lastly, for the optimal policy we run the Riemannian BFGS algorithm, though we do so from 100 different starting values since the optimization problem is non-convex and global optimality is not guaranteed.

Figure~\ref{fig:application_main} shows the main empirical results. All five single-pollutant paths are negative, with the steepest declines for SO$_4$ and NH$_4$, followed by NO$_3$ and OM, and a smaller negative curve for BC. Because these interventions shift the entire mixture rather than isolating a single pollutant while holding the other components fixed, we interpret them as rankings of feasible intervention directions rather than single-pollutant dose-response effects. The grouped paths provide a complementary comparison: the BC+OM path remains close to zero, whereas the NO$_3$+SO$_4$+NH$_4$ path shows a significant negative effect, suggesting that the estimated harmful effects are concentrated primarily in the nitrate, sulfate, and ammonium components. This provides useful information for regulatory agencies trying to determine how best to reduce particulate air pollution, because organic matter and black carbon are primary, combustion-based pollutants, while the remaining three exposures are secondary pollutants formed in the atmosphere. The BFGS path yields the largest estimated reduction in hospitalizations at each Gelbrich level, corresponding to the estimated optimal policy shift. To provide more interpretation to the BFGS results, Figure~\ref{fig:bfgs_delta_paths} summarizes the $\boldsymbol{\delta}$ values found by the BFGS algorithm for each Gelbrich constraint across all 100 starting values. Although the nonconvex search produces some heterogeneity across starting values, the overall story remains consistent across starting values. The most negative values are concentrated on SO$_4$ and NO$_3$, showing that these are the most important components to reduce for lowering hospitalizations.

\begin{figure}[htbp]
\centering
\includegraphics[width=0.9\textwidth]{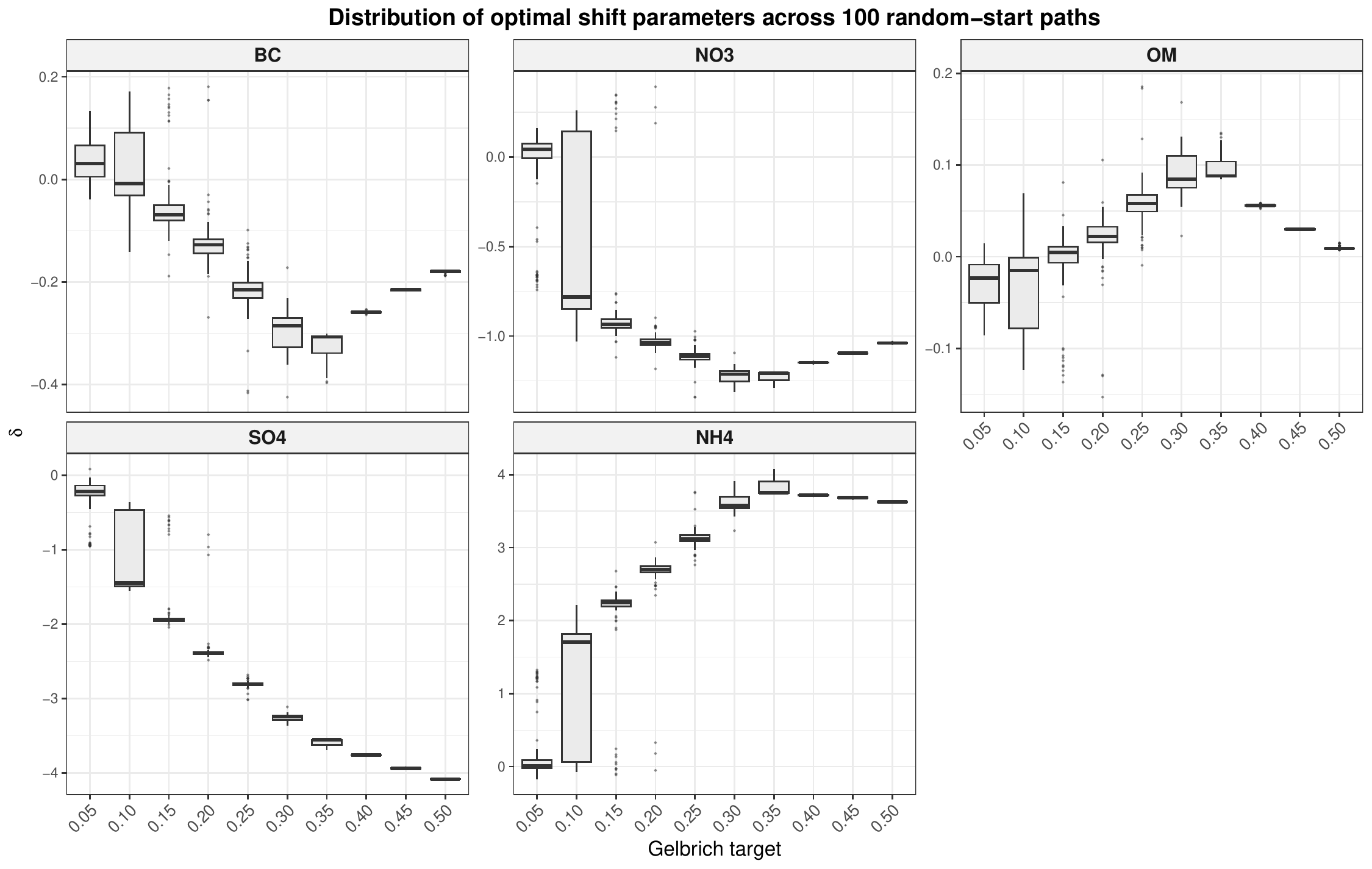}
\caption{Distribution of the BFGS $\boldsymbol{\delta}$ values for each exposure across 100 random starting values for each Gelbrich target.}
\label{fig:bfgs_delta_paths}
\end{figure}

Figure~\ref{fig:application_main} also summarizes sensitivity to unmeasured confounding. The shaded ribbons give two fixed sensitivity settings, and the dashed blue curves show results based on formal benchmarking in Appendix~\ref{sec:benchmark_formal}. The formal benchmarking seen here uses values of $k_Y$ and $k_D$ both set to 1, which means that we assume the unmeasured confounder is as strongly related to the exposures and outcome as the strongest of the observed covariates. Under these benchmark values, the BC, SO$_4$, NH$_4$, efficient, and BFGS curves remain below zero across the displayed Gelbrich range, whereas OM and the NO$_3$+SO$_4$+NH$_4$ path cross zero at larger Gelbrich targets and NO$_3$ remains below zero only at the smallest target. Notably, sensitivity analyses indicate that the estimated effects for OM, NO$_3$, and NO$_3$+SO$_4$+NH$_4$ are the most vulnerable to unmeasured confounding, whereas the SO$_4$, efficient, and BFGS intervention paths exhibit the greatest robustness. Supplementary diagnostics showing how large the sensitivity parameters must become to remove any effects seen here (Appendix~\ref{sec:appendix-application-results}) reinforce this qualitative ordering, confirming that the SO$_4$, efficient, and BFGS interventions require substantially larger sensitivity parameters to nullify the estimated health benefits.

\section{Discussion}

In this manuscript we developed methodology for estimating the health effects of multiple air pollutants simultaneously in a way that is robust to the presence of severe positivity violations. By examining stochastic interventions with tilted exposure distributions, we can study which exposures are most harmful without relying on model-based extrapolation. One critical issue in the multivariate setting is how to define a fair shift that corresponds to similar shifts in the exposure distribution, which we do via the 2-Wasserstein distance. We provide asymptotic theory and minimax estimation rates for our proposed estimands, and show in a national study of the health effects of air pollution that there are detrimental effects of the air pollution mixture, but that these are largely driven by nitrates and sulfates.

There are a number of directions for future work that could expand upon, and improve, the methodology seen here. For one, our estimators are applicable to any stochastic shift estimand, and future research could target different shifts other than the exponentially tilted ones seen here, which maintain public health relevance and could be potentially more interpretable for practitioners. Additionally, one could expand on the sensitivity analyses developed here by incorporating recent results on sensitivity analysis for multiple exposures \citep{zheng2021bayesian}. These incorporate moderate parametric assumptions in the multiple exposure setting and allow one to produce partial identification regions that could be tighter than those seen here, and allow one to incorporate additional assumptions or sources of information, such as negative control variables. Overall, we believe the proposed framework provides analysts, particularly those involved in the analysis of air pollution mixtures, robust approaches to estimating causal effects of multivariate, continuous exposures.

\bibliographystyle{apalike}
\bibliography{References} 

\appendix

\section{Neyman-orthogonality and robustness to misspecified outcome model}

\paragraph{Setup.}
Let $\mu(\boldsymbol{x},\boldsymbol{w}) = \mathbb{E}[Y \mid \boldsymbol{X}=\boldsymbol{x}, \boldsymbol{W}=\boldsymbol{w}]$ denote the outcome regression, let $f(\boldsymbol{w} \mid \boldsymbol{x})$ be the observed exposure density, and let $g(\boldsymbol{w} \mid \boldsymbol{x})$ be the tilted density with density ratio $r_{\boldsymbol{\delta}}(\boldsymbol{w},\boldsymbol{x}) = g(\boldsymbol{w} \mid \boldsymbol{x}) / f(\boldsymbol{w} \mid \boldsymbol{x})$. We write $\mathbb{E}_f[\cdot \mid \boldsymbol{X}]$ and $\mathbb{E}_g[\cdot \mid \boldsymbol{X}]$ for conditional expectations with respect to $f(\cdot \mid \boldsymbol{X})$ and $g(\cdot \mid \boldsymbol{X})$, respectively. Assume standard regularity conditions. Consider the efficient influence function for a scalar parameter $\psi$:
$$
\varphi(\boldsymbol{Z}; \psi, \mu, r)
:= r_{\boldsymbol{\delta}}(\boldsymbol{W},\boldsymbol{X})\Big\{Y - \mathbb{E}_g[\mu(\boldsymbol{X},\boldsymbol{W}) \mid \boldsymbol{X}]\Big\}
+ \mathbb{E}_g[\mu(\boldsymbol{X},\boldsymbol{W}) \mid \boldsymbol{X}] - \psi.
$$
We utilize the identities:
$$
\mathbb{E}[r_{\boldsymbol{\delta}}(\boldsymbol{W},\boldsymbol{X}) \mid \boldsymbol{X}] = 1,\qquad
\mathbb{E}[r_{\boldsymbol{\delta}}(\boldsymbol{W},\boldsymbol{X})\mu(\boldsymbol{X},\boldsymbol{W}) \mid \boldsymbol{X}] = \mathbb{E}_g[\mu(\boldsymbol{X},\boldsymbol{W}) \mid \boldsymbol{X}].
$$
\paragraph{(a) Orthogonality with respect to $\mu$.}
Fix $r$ and perturb $\mu$ along a path $\mu_\varepsilon = \mu + \varepsilon h$. Since $\mathbb{E}[rY]$ and $\psi$ do not depend on $\varepsilon$, we have:
\begin{align*}
\frac{d}{d\varepsilon}\,\mathbb{E}\{\varphi(\boldsymbol{Z}; \psi, \mu_\varepsilon, r)\}\Big|_{\varepsilon=0}
&= \mathbb{E}\!\left[-r_{\boldsymbol{\delta}}(\boldsymbol{W},\boldsymbol{X})\,\mathbb{E}_g[h(\boldsymbol{X},\boldsymbol{W}) \mid \boldsymbol{X}]
+ \mathbb{E}_g[h(\boldsymbol{X},\boldsymbol{W}) \mid \boldsymbol{X}]\right] \\
&= \mathbb{E}\!\left[\big\{1 - \mathbb{E}[r_{\boldsymbol{\delta}}(\boldsymbol{W},\boldsymbol{X}) \mid \boldsymbol{X}]\big\}\,
\mathbb{E}_g[h(\boldsymbol{X},\boldsymbol{W}) \mid \boldsymbol{X}]\right] \\
&= 0.
\end{align*}
Hence, the influence function is Neyman-orthogonal with respect to $\mu$.

\paragraph{(b) Sensitivity in $r$.}
Fix $\mu$ and perturb $r$ along a normalized path $r_\varepsilon = r(1 + \varepsilon v)$, where $v(\boldsymbol{w},\boldsymbol{x})$ is a measurable function satisfying the constraint $\mathbb{E}[r_\varepsilon(\boldsymbol{W},\boldsymbol{X}) \mid \boldsymbol{X}] = 1$ for all $\varepsilon$. Differentiating this constraint at $\varepsilon=0$ yields $\mathbb{E}_f[r_{\boldsymbol{\delta}}(\boldsymbol{W},\boldsymbol{X})v(\boldsymbol{W},\boldsymbol{X}) \mid \boldsymbol{X}] = 0$, which is equivalent to $\mathbb{E}_g[v(\boldsymbol{W},\boldsymbol{X}) \mid \boldsymbol{X}] = 0$.

Let $m_\varepsilon(\boldsymbol{X}) := \mathbb{E}_{g_\varepsilon}[\mu(\boldsymbol{X},\boldsymbol{W}) \mid \boldsymbol{X}]$, where $g_\varepsilon$ is the tilted density corresponding to $r_\varepsilon$. Since $g_\varepsilon(\boldsymbol{w} \mid \boldsymbol{x}) = r_\varepsilon(\boldsymbol{w},\boldsymbol{x})f(\boldsymbol{w} \mid \boldsymbol{x})$, we have:
$$
\frac{d}{d\varepsilon}m_\varepsilon(\boldsymbol{X})\Big|_{\varepsilon=0}
= \mathbb{E}_g[v(\boldsymbol{W},\boldsymbol{X})\mu(\boldsymbol{X},\boldsymbol{W}) \mid \boldsymbol{X}].
$$
Let $m'(\boldsymbol{X}) := \mathbb{E}_g[v\mu \mid \boldsymbol{X}]$. The influence function along the path is:
$$
\varphi(\boldsymbol{Z}; \psi, \mu, r_\varepsilon)
= r_\varepsilon(\boldsymbol{W},\boldsymbol{X})\{Y - m_\varepsilon(\boldsymbol{X})\} + m_\varepsilon(\boldsymbol{X}) - \psi.
$$
Differentiating the expected influence function yields:
\begin{align*}
\frac{d}{d\varepsilon}\,\mathbb{E}\{\varphi(\boldsymbol{Z}; \psi, \mu, r_\varepsilon)\}\Big|_{\varepsilon=0}
&= \mathbb{E}\big[r_{\boldsymbol{\delta}}(\boldsymbol{W},\boldsymbol{X})v(\boldsymbol{W},\boldsymbol{X})\{Y - m_0(\boldsymbol{X})\}\big]
+ \mathbb{E}\big[(1 - r_{\boldsymbol{\delta}}(\boldsymbol{W},\boldsymbol{X}))\,m'(\boldsymbol{X})\big] \\
&= \mathbb{E}\big[r_{\boldsymbol{\delta}}(\boldsymbol{W},\boldsymbol{X})v(\boldsymbol{W},\boldsymbol{X})\mu(\boldsymbol{X},\boldsymbol{W})\big] \\
&\quad - \mathbb{E}\big[m_0(\boldsymbol{X})\mathbb{E}[r_{\boldsymbol{\delta}}(\boldsymbol{W},\boldsymbol{X})v(\boldsymbol{W},\boldsymbol{X}) \mid \boldsymbol{X}]\big] + \mathbb{E}\big[(1 - r_{\boldsymbol{\delta}}(\boldsymbol{W},\boldsymbol{X}))\,m'(\boldsymbol{X})\big] \\
&= \mathbb{E}\big[r_{\boldsymbol{\delta}}(\boldsymbol{W},\boldsymbol{X})v(\boldsymbol{W},\boldsymbol{X})\mu(\boldsymbol{X},\boldsymbol{W})\big] - \underbrace{\mathbb{E}[m_0(\boldsymbol{X}) \cdot 0]}_{=\ 0} + \underbrace{\mathbb{E}\big[(1 - r_{\boldsymbol{\delta}}(\boldsymbol{W},\boldsymbol{X}))\,m'(\boldsymbol{X})\big]}_{=\ 0} \\
&= \mathbb{E}\{\mathbb{E}_g[v(\boldsymbol{W},\boldsymbol{X})\mu(\boldsymbol{X},\boldsymbol{W}) \mid \boldsymbol{X}]\}.
\end{align*}

In general, $\mathbb{E}\{\mathbb{E}_g[v\mu \mid \boldsymbol{X}]\} \neq 0$. Therefore, the derivative does not vanish, implying that the influence function is not Neyman-orthogonal in the $r$ direction unless additional restrictive constraints are imposed on the perturbation $v$. This non-orthogonality with respect to $r$ extends directly to the exposure density $f$. For a fixed target density $g$, any perturbation in $f$ induces a corresponding change in the density ratio $r = g/f$. Consider a perturbation path $f_\varepsilon = f(1 + \varepsilon s_f)$, where $s_f(\boldsymbol{w} \mid \boldsymbol{x})$ is a standard score function satisfying $\mathbb{E}_f[s_f(\boldsymbol{W} \mid \boldsymbol{X}) \mid \boldsymbol{X}] = 0$. Specifically:
$$
r_\varepsilon(\boldsymbol{w},\boldsymbol{x}) 
= \frac{g(\boldsymbol{w} \mid \boldsymbol{x})}{f_\varepsilon(\boldsymbol{w} \mid \boldsymbol{x})} 
= \frac{g(\boldsymbol{w} \mid \boldsymbol{x})}{f(\boldsymbol{w} \mid \boldsymbol{x})(1 + \varepsilon s_f(\boldsymbol{w} \mid \boldsymbol{x}))} 
= r_{\boldsymbol{\delta}}(\boldsymbol{w},\boldsymbol{x})(1 + \varepsilon s_f(\boldsymbol{w} \mid \boldsymbol{x}))^{-1}.
$$
Differentiating with respect to $\varepsilon$ at $\varepsilon = 0$:
$$
\frac{d}{d\varepsilon} r_\varepsilon \Big|_{\varepsilon=0} = -r_{\boldsymbol{\delta}}(\boldsymbol{w},\boldsymbol{x})s_f(\boldsymbol{w} \mid \boldsymbol{x}).
$$
This corresponds to the perturbation $v = -s_f$ in the previous derivation for $r$. Substituting this into the derivative obtained in part (b), we get:
\begin{align*}
\frac{d}{d\varepsilon}\,\mathbb{E}\{\varphi(\boldsymbol{Z}; \psi, \mu, r_\varepsilon)\}\Big|_{\varepsilon=0}
&= \mathbb{E}\big\{\mathbb{E}_g[-s_f(\boldsymbol{W} \mid \boldsymbol{X})\mu(\boldsymbol{X},\boldsymbol{W}) \mid \boldsymbol{X}]\big\} \\
&= -\mathbb{E}\big\{\mathbb{E}_g[s_f(\boldsymbol{W} \mid \boldsymbol{X})\mu(\boldsymbol{X},\boldsymbol{W}) \mid \boldsymbol{X}]\big\}.
\end{align*}
Since $\mathbb{E}_g[s_f\mu \mid \boldsymbol{X}]$ is generally non-zero (unless $\mu$ is constant or $s_f$ is orthogonal to $\mu$ under $g$), the derivative does not vanish. Thus, the influence function is not Neyman-orthogonal with respect to the exposure density $f$.

\paragraph{Robustness to misspecified outcome model.}
Let $\mu_0$ be the true outcome regression and $r$ be the true density ratio. Define the target parameter:
$$
\psi_0
:= \mathbb{E}\big[r_{\boldsymbol{\delta}}(\boldsymbol{W},\boldsymbol{X})\,\mu_0(\boldsymbol{X},\boldsymbol{W})\big]
= \mathbb{E}\Big[\mathbb{E}_f\{r_{\boldsymbol{\delta}}(\boldsymbol{W},\boldsymbol{X})\mu_0(\boldsymbol{X},\boldsymbol{W}) \mid \boldsymbol{X}\}\Big],
$$
which represents the average outcome under the tilted distribution $g$. We now show that at $(\psi_0, r)$, the influence function is globally robust to $\mu$. That is, for any measurable $\tilde{\mu}$:
$$
\mathbb{E}\big[\varphi(\boldsymbol{Z}; \psi_0, \tilde{\mu}, r)\big] = 0.
$$
Consequently, estimators based on the efficient influence function $\varphi$, such as the one-step estimator employed in this work, remain consistent for $\psi_0$ even under misspecification of the outcome regression model.

\paragraph{Proof.}
Fix any measurable function $\tilde{\mu}$ and define:
$$
\tilde{m}(\boldsymbol{X}) := \mathbb{E}_f\big[r_{\boldsymbol{\delta}}(\boldsymbol{W},\boldsymbol{X})\tilde{\mu}(\boldsymbol{X},\boldsymbol{W}) \mid \boldsymbol{X}\big]
= \mathbb{E}_g[\tilde{\mu}(\boldsymbol{X},\boldsymbol{W}) \mid \boldsymbol{X}].
$$
Since $\tilde{m}$ depends only on $\boldsymbol{X}$ and $\mathbb{E}[r_{\boldsymbol{\delta}}(\boldsymbol{W},\boldsymbol{X}) \mid \boldsymbol{X}] = 1$, we have:
$$
\mathbb{E}\big[r_{\boldsymbol{\delta}}(\boldsymbol{W},\boldsymbol{X})\,\tilde{m}(\boldsymbol{X})\big]
= \mathbb{E}\Big[\tilde{m}(\boldsymbol{X})\,\mathbb{E}[r_{\boldsymbol{\delta}}(\boldsymbol{W},\boldsymbol{X}) \mid \boldsymbol{X}]\Big]
= \mathbb{E}\big[\tilde{m}(\boldsymbol{X})\big].
$$
Thus, at $(\psi_0, r)$:
\begin{align*}
\mathbb{E}\big[\varphi(\boldsymbol{Z}; \psi_0, \tilde{\mu}, r)\big]
&= \mathbb{E}\big[r_{\boldsymbol{\delta}}(\boldsymbol{W},\boldsymbol{X})\{Y - \tilde{m}(\boldsymbol{X})\}\big] + \mathbb{E}[\tilde{m}(\boldsymbol{X})] - \psi_0 \\
&= \mathbb{E}[r_{\boldsymbol{\delta}}(\boldsymbol{W},\boldsymbol{X})Y] - \psi_0.
\end{align*}
By the law of iterated expectations:
$$
\mathbb{E}[r_{\boldsymbol{\delta}}(\boldsymbol{W},\boldsymbol{X})Y]
= \mathbb{E}\big[r_{\boldsymbol{\delta}}(\boldsymbol{W},\boldsymbol{X})\,\mu_0(\boldsymbol{X},\boldsymbol{W})\big]
= \psi_0.
$$
Therefore, $\mathbb{E}[\varphi(\boldsymbol{Z}; \psi_0, \tilde{\mu}, r)] = 0$ for all $\tilde{\mu}$. This confirms that consistent estimation of $\psi_0$ relies primarily on the consistency of $\widehat{r}_{\boldsymbol{\delta}}$, rendering the estimator robust to misspecification of $\mu$. \qed

\section{Convergence of Riemannian BFGS under the Gelbrich constraint}
\label{sec:bfgs-induced-gelbrich}

This section provides the algorithmic details and convergence proof for the manifold search used to solve the optimal policy shift problem. We use the conditional tilt $g_{\boldsymbol{\delta}}$, normalizing constant $\nu_{\boldsymbol{\delta}}$, density ratio $r_{\boldsymbol{\delta}}$, and objective $\psi(\boldsymbol{\delta})$ defined in Sections~\ref{sec:Estimands} and~\ref{sec:estimation}. All expectations are under the observed law $P_0$ unless otherwise indicated.

\subsection{Gelbrich Level-Set Search}

Optimizing the policy shift direction $\boldsymbol{\delta}$ subject to the Gelbrich constraint requires optimizing over the level set $\mathcal{M}_c=\{\boldsymbol{\delta}:G(\boldsymbol{\delta})=c^2\}$. We use a Riemannian BFGS method. 

\begin{enumerate}
    \item Tangent space. For any smooth curve $\gamma:(-\varepsilon,\varepsilon)\to\mathcal{M}_c$ satisfying $\gamma(0)=\boldsymbol{\delta}$, the velocity $\gamma'(0)$ lies in the tangent space $T_{\boldsymbol{\delta}}\mathcal{M}_c$. For the level set $\mathcal{M}_c=\{\boldsymbol{\delta}:G(\boldsymbol{\delta})=c^2\}$, this is equivalently
    $$
    T_{\boldsymbol{\delta}}\mathcal{M}_c=\operatorname{Null}\{\nabla G(\boldsymbol{\delta})^\top\}.
    $$
    \item Riemannian gradient. Let $\nabla G(\boldsymbol{\delta})$ denote the Euclidean normal to the level set, and let $\boldsymbol{n}(\boldsymbol{\delta})=\nabla G(\boldsymbol{\delta})/\|\nabla G(\boldsymbol{\delta})\|$. The orthogonal tangent projection is $\mathsf{P}_{\boldsymbol{\delta}}=I-\boldsymbol{n}(\boldsymbol{\delta})\boldsymbol{n}(\boldsymbol{\delta})^\top$, so the Riemannian gradient under the induced Euclidean metric is
    $$
    \operatorname{grad}\psi(\boldsymbol{\delta})=\mathsf{P}_{\boldsymbol{\delta}}\nabla\psi(\boldsymbol{\delta}).
    $$
    \item Retraction. Given a tangent increment $\boldsymbol{\xi}\in T_{\boldsymbol{\delta}}\mathcal{M}_c$, a retraction maps $\boldsymbol{\delta}+\boldsymbol{\xi}$ back to the level set. For $\mathcal{M}_c$, the exact projection retraction is $R_{\boldsymbol{\delta}}(\boldsymbol{\xi})=\Pi(\boldsymbol{\delta}+\boldsymbol{\xi})$, where $\Pi$ is the orthogonal projection onto $\mathcal{M}_c$. In numerical implementation, $\Pi$ is approximated by a single normal-direction correction, which preserves the local first-order accuracy required of a retraction while reducing computational cost.

    \item Line search and accepted tangent increment. At iteration $k$, the current inverse-Hessian approximation $\boldsymbol{B}_k$ is used with $\operatorname{grad}\psi(\boldsymbol{\delta}_k)$ to determine a descent direction in $T_{\boldsymbol{\delta}_k}\mathcal M_c$. A weak Wolfe line search is performed along the corresponding retraction curve, using only trial increments for which the retraction is defined. Let $\boldsymbol{\xi}_k\in T_{\boldsymbol{\delta}_k}\mathcal M_c$ denote the accepted tangent increment. Then
    $$
    \boldsymbol{\delta}_{k+1}
    =
    R_{\boldsymbol{\delta}_k}(\boldsymbol{\xi}_k).
    $$
    \item Vector transport. Tangent vectors at successive iterates are compared using the projection transport
    $$
    \widetilde{\mathcal{T}}_{\boldsymbol{\delta}\to \boldsymbol{\delta}_+}(\boldsymbol{\zeta}) := \mathsf{P}_{\boldsymbol{\delta}_+}\boldsymbol{\zeta}.
    $$
    This transport is generally not isometric, but it is computationally inexpensive and is used to move tangent-space quantities between iterates when forming the RBFGS update.

    \item RBFGS update. Using the accepted tangent increment $\boldsymbol{\xi}_k$ and the projection transport defined above, set
    $$
    \boldsymbol a_k
    =
    \widetilde{\mathcal T}_{\boldsymbol{\delta}_k\to\boldsymbol{\delta}_{k+1}}
    (\boldsymbol{\xi}_k),
    \qquad
    \boldsymbol b_k
    =
    \operatorname{grad}\psi(\boldsymbol{\delta}_{k+1})
    -
    \widetilde{\mathcal T}_{\boldsymbol{\delta}_k\to\boldsymbol{\delta}_{k+1}}
    \operatorname{grad}\psi(\boldsymbol{\delta}_k).
    $$
    When $\boldsymbol b_k^\top\boldsymbol a_k>0$, the inverse-BFGS matrix is updated by
    $$
    \boldsymbol B_{k+1}
    =
    (I-\rho_k\boldsymbol a_k\boldsymbol b_k^\top)
    \boldsymbol B_k
    (I-\rho_k\boldsymbol b_k\boldsymbol a_k^\top)
    +\rho_k\boldsymbol a_k\boldsymbol a_k^\top,
    \qquad
    \rho_k=(\boldsymbol b_k^\top\boldsymbol a_k)^{-1}.
    $$
\end{enumerate}

The resulting Riemannian BFGS iteration is summarized in Algorithm~\ref{alg:rbfgs-appendix}.

\begin{algorithm}[H]
\caption{Riemannian BFGS for Optimal Policy Shift}
\label{alg:rbfgs-appendix}
\begin{algorithmic}
\State Initialize $k=0$ and choose $\boldsymbol{\delta}_0$ on the manifold such that $G(\boldsymbol{\delta}_0)=c^2$.
\State Compute the Euclidean gradient $\nabla\psi(\boldsymbol{\delta}_0)$ and the Riemannian gradient $\operatorname{grad}\psi(\boldsymbol{\delta}_0)=\mathsf{P}_{\boldsymbol{\delta}_0}\nabla\psi(\boldsymbol{\delta}_0)$.
\State Initialize the inverse-Hessian approximation $\boldsymbol{B}_0 = \boldsymbol{I}$.
\While{$\|\operatorname{grad}\psi(\boldsymbol{\delta}_k)\| > \varepsilon_{\mathrm{tol}}$}
  \State Use $\boldsymbol{B}_k$ and $\operatorname{grad}\psi(\boldsymbol{\delta}_k)$ to determine a descent direction.
  \State Perform a weak Wolfe line search along the corresponding retraction curve, using only trial increments for which the retraction is defined; let $\boldsymbol{\xi}_k$ denote the accepted tangent increment.
  \State Update $\boldsymbol{\delta}_{k+1}=R_{\boldsymbol{\delta}_k}(\boldsymbol{\xi}_k)$.
  \State Compute the new Riemannian gradient $\operatorname{grad}\psi(\boldsymbol{\delta}_{k+1})$.
  \State Update $\boldsymbol{B}_{k+1}$ by the RBFGS update using $\boldsymbol{\xi}_k$ and transported gradients.
  \State $k \leftarrow k+1$.
\EndWhile
\State \Return{$\boldsymbol{\delta}_k$}
\end{algorithmic}
\end{algorithm}

In practice, including in the application to air pollution mixtures in the manuscript, we have found that at times this algorithm will find tilt directions $\boldsymbol{\delta}$ whose estimated density-ratio weights are driven by a very small number of observed units, which leads to numerical instability and high variance estimates. To avoid this issue, we propose adding a small penalty to the objective function that we are optimizing in order to avoid such numerical instability. For this discussion, let $\bar r_i(\boldsymbol{\delta})=r_{\boldsymbol{\delta}}(\boldsymbol{W}_i,\boldsymbol{X}_i)/\sum_{j=1}^n r_{\boldsymbol{\delta}}(\boldsymbol{W}_j,\boldsymbol{X}_j)$ and let
$$
S_{\rho_{\mathrm{top}}}(\boldsymbol{\delta})=\sum_{i=1}^{\lceil \rho_{\mathrm{top}} n\rceil}\bar r_{(i)}(\boldsymbol{\delta}),
$$
where $\bar r_{(1)}(\boldsymbol{\delta})\geq\cdots\geq \bar r_{(n)}(\boldsymbol{\delta})$ are the ordered normalized weights. We augment the objective function with
$$
\lambda\left\{\max\left(\frac{S_{\rho_{\mathrm{top}}}(\boldsymbol{\delta})}{\tau_{\mathrm{top}}}-1,0\right)\right\}^2,
$$
where throughout our air pollution application we set $\rho_{\mathrm{top}}=0.01$, $\tau_{\mathrm{top}}=0.15$, and $\lambda=2$. This term is zero unless the largest one percent of normalized weights carries more than 15\% of the total mass. This term penalizes, and therefore discourages, shifts in the exposure distribution that place too much weight on too few observations. We have found that in practice this can lead to much smoother and more stable curves for the BFGS estimand, while still finding directions that lead to nearly optimal policies. 

In the numerical implementation, the gradients used in the projected search, including those of $\widehat G$ and the augmented objective, are evaluated by centered finite differences.

The level-set search above is designed for efficient implementation in the application, using projected finite-difference gradients, projection-based vector transport, and the weight-concentration penalty. The next subsection keeps the same Gelbrich level-set formulation, but studies it with a geometrically exact retraction and vector transport, thereby establishing a rigorous Riemannian BFGS stationarity guarantee for the underlying constrained optimization problem.

\subsection{Geometry of the Gelbrich constraint}

Section~\ref{sec:estimation} formulates Riemannian BFGS iterations for the optimal policy shift over the manifold $\mathcal{M}_c$ defined by the Gelbrich constraint. The stationarity result in Theorem~\ref{thm:rbfgs-stationarity-induced} requires geometric and analytic regularity of the constraint. The following results verify smoothness and generic regularity of the Gelbrich level sets, formalize the projection retraction and vector transport, and establish stationarity of the RBFGS method. All expectations are under the observed law $P_0$ unless otherwise indicated.

\paragraph{Moments of the tilted exposure distribution.}
For a measurable set $A\subseteq\mathbb R^q$, define the post-intervention
marginal law of $\boldsymbol W$ by
\begin{equation}
P_{\boldsymbol{\delta},W}(A)
:=
\mathbb{E}\left[\int_A g_{\boldsymbol{\delta}}(\boldsymbol w\mid \boldsymbol X)\,d\boldsymbol w\right]
=
\mathbb{E}\left[
r_{\boldsymbol{\delta}}(\boldsymbol W,\boldsymbol X)\mathbf 1\{\boldsymbol W\in A\}
\right].
\label{eq:induced-marginal-law}
\end{equation}
Let the baseline marginal mean and covariance be
$$
\boldsymbol{\mu}_W:=\mathbb{E}[\boldsymbol W],
\qquad
\boldsymbol{\Sigma}_W:=\operatorname{Cov}(\boldsymbol W)\in\mathbb S_{++}^q.
$$
For each $\boldsymbol x$, define the conditional cumulant generating function and
its derivatives by
$$
\ell_{\boldsymbol{\delta}}(\boldsymbol x)
:=\log \nu_{\boldsymbol{\delta}}(\boldsymbol x)
=
\log \mathbb{E}\{\exp(\boldsymbol{\delta}^{\top}\boldsymbol W)\mid \boldsymbol X=\boldsymbol x\},
$$
and
\begin{equation}
\boldsymbol{\mu}^W_{\boldsymbol{\delta}}(\boldsymbol x)
:=\nabla_{\boldsymbol{\delta}}\ell_{\boldsymbol{\delta}}(\boldsymbol x),
\qquad
\boldsymbol{\Sigma}^W_{\boldsymbol{\delta}}(\boldsymbol x)
:=\nabla^2_{\boldsymbol{\delta}}\ell_{\boldsymbol{\delta}}(\boldsymbol x).
\label{eq:conditional-cumulants}
\end{equation}
Equivalently, under $g_{\boldsymbol{\delta}}(\cdot\mid \boldsymbol x)$,
$\boldsymbol{\mu}^W_{\boldsymbol{\delta}}(\boldsymbol x)$ and
$\boldsymbol{\Sigma}^W_{\boldsymbol{\delta}}(\boldsymbol x)$ are the conditional
mean and covariance of $\boldsymbol W$. Hence, by the laws of total expectation
and total covariance, the induced marginal moments are
\begin{equation}
\boldsymbol{\mu}_{\boldsymbol{\delta},W}
=
\mathbb{E}\{\boldsymbol{\mu}^W_{\boldsymbol{\delta}}(\boldsymbol X)\},
\qquad
\boldsymbol{\Sigma}_{\boldsymbol{\delta},W}
=
\mathbb{E}[\boldsymbol{\Sigma}^W_{\boldsymbol{\delta}}(\boldsymbol X)]
+\operatorname{Cov}(\boldsymbol{\mu}^W_{\boldsymbol{\delta}}(\boldsymbol X)),
\label{eq:induced-marginal-moments}
\end{equation}
where the centered conditional mean is
$$
\boldsymbol b_{\boldsymbol{\delta}}(\boldsymbol X)
:=
\boldsymbol{\mu}^W_{\boldsymbol{\delta}}(\boldsymbol X)
-\boldsymbol{\mu}_{\boldsymbol{\delta},W},
$$
such that $\operatorname{Cov}(\boldsymbol{\mu}^W_{\boldsymbol{\delta}}(\boldsymbol X)) = \mathbb{E}[\boldsymbol b_{\boldsymbol{\delta}}(\boldsymbol X)\boldsymbol b_{\boldsymbol{\delta}}(\boldsymbol X)^{\top}]$.

The squared Gelbrich constraint defining the constraint manifold is
\begin{equation}
G(\boldsymbol{\delta})
:=
\|\boldsymbol{\mu}_{\boldsymbol{\delta},W}-\boldsymbol{\mu}_W\|_2^2
+
\operatorname{tr}\left[
\boldsymbol{\Sigma}_W
+\boldsymbol{\Sigma}_{\boldsymbol{\delta},W}
-2\left(
\boldsymbol{\Sigma}_W^{1/2}
\boldsymbol{\Sigma}_{\boldsymbol{\delta},W}
\boldsymbol{\Sigma}_W^{1/2}
\right)^{1/2}
\right].
\label{eq:induced-gelbrich-G}
\end{equation}
The expression in \eqref{eq:induced-gelbrich-G} is the Gelbrich moment
functional applied to the baseline marginal law $P_{0,W}$ and the induced marginal
law $P_{\boldsymbol{\delta},W}$.

\paragraph{Regularity assumptions.}
Let $\mathcal D\subseteq\mathbb R^q$ be a nonempty open set containing
$\boldsymbol 0$. The analysis assumes the following conditions.
\begin{description}
\item[(B1)] For every compact $K\subset\mathcal D$ and every integer $r\ge0$,
$$
A_{K,r}(\boldsymbol X)
:=
\sup_{\boldsymbol{\delta}\in K}
\mathbb{E}\left[
\exp(\boldsymbol{\delta}^{\top}\boldsymbol W)(1+\|\boldsymbol W\|^r)
\mid \boldsymbol X
\right]
<\infty
\quad\text{a.s.},
\qquad
\mathbb{E}[A_{K,r}(\boldsymbol X)]<\infty.
$$
In addition, with
$$
D_K(\boldsymbol X):=\inf_{\boldsymbol{\delta}\in K}
\nu_{\boldsymbol{\delta}}(\boldsymbol X),
$$
assume $D_K(\boldsymbol X)>0$ almost surely and
$$
\mathbb{E}\left[
\left\{\frac{A_{K,r}(\boldsymbol X)}{D_K(\boldsymbol X)}\right\}^m
\right]<\infty
\qquad
\text{for all integers }r,m\ge1.
$$
The strengthened envelope condition holds whenever $\boldsymbol W$ has
bounded support and $K$ is compact.
\item[(B2)] The baseline marginal covariance satisfies
$\boldsymbol{\Sigma}_W\in\mathbb S_{++}^q$.
\item[(B3)] For the target $c>0$, the level set
$$
\mathcal M_c:=\{\boldsymbol{\delta}\in\mathcal D:\,G(\boldsymbol{\delta})=c^2\}
$$
is contained in a compact set $K_c\Subset\mathcal D$.
\item[(B4)] The target $c^2$ is a regular value of $G$, meaning that
$\nabla G(\boldsymbol{\delta})\neq \boldsymbol 0$ for all
$\boldsymbol{\delta}\in\mathcal M_c$.
\end{description}

\begin{lemma}[Induced marginal derivatives]
\label{lem:conditional-induced-derivatives}
Under Assumption {\rm(B1)}, there exists a set $\mathcal X_0$ with
$P_0(\boldsymbol X\in\mathcal X_0)=1$ such that, for every
$\boldsymbol x\in\mathcal X_0$, the map
$\boldsymbol{\delta}\mapsto \ell_{\boldsymbol{\delta}}(\boldsymbol x)$ is
real-analytic on $\mathcal D$. Moreover, the maps
$\boldsymbol{\delta}\mapsto \boldsymbol{\mu}_{\boldsymbol{\delta},W}$ and
$\boldsymbol{\delta}\mapsto \boldsymbol{\Sigma}_{\boldsymbol{\delta},W}$ are
$C^\infty$ on $\mathcal D$, with derivatives obtained by dominated
differentiation locally uniformly on every compact $K\Subset\mathcal D$.
For any direction
$\boldsymbol h\in\mathbb R^q$,
\begin{equation}
\mathrm D\boldsymbol{\mu}_{\boldsymbol{\delta},W}[\boldsymbol h]
=
\mathbb{E}\{\boldsymbol{\Sigma}^W_{\boldsymbol{\delta}}(\boldsymbol X)\boldsymbol h\},
\label{eq:D-mubar}
\end{equation}
and
\begin{equation}
\mathrm D\boldsymbol{\Sigma}_{\boldsymbol{\delta},W}[\boldsymbol h]
=
\mathbb{E}\left[
\boldsymbol K_{3,\boldsymbol{\delta}}(\boldsymbol X)[\boldsymbol h]
+\boldsymbol{\Sigma}^W_{\boldsymbol{\delta}}(\boldsymbol X)\boldsymbol h\,
\boldsymbol b_{\boldsymbol{\delta}}(\boldsymbol X)^{\top}
+\boldsymbol b_{\boldsymbol{\delta}}(\boldsymbol X)
\boldsymbol h^{\top}\boldsymbol{\Sigma}^W_{\boldsymbol{\delta}}(\boldsymbol X)
\right],
\label{eq:D-Sigmabar}
\end{equation}
where $\boldsymbol K_{3,\boldsymbol{\delta}}(\boldsymbol X)[\boldsymbol h]$ is defined by
\begin{equation}
\boldsymbol K_{3,\boldsymbol{\delta}}(\boldsymbol x)[\boldsymbol h]
:=
\mathbb{E}_{g_{\boldsymbol{\delta}}}\left[
\{\boldsymbol W-\boldsymbol{\mu}^W_{\boldsymbol{\delta}}(\boldsymbol x)\}
\{\boldsymbol W-\boldsymbol{\mu}^W_{\boldsymbol{\delta}}(\boldsymbol x)\}^{\top}
\boldsymbol h^{\top}
\{\boldsymbol W-\boldsymbol{\mu}^W_{\boldsymbol{\delta}}(\boldsymbol x)\}
\mid \boldsymbol X=\boldsymbol x
\right].
\label{eq:third-cumulant-contraction}
\end{equation}
This is the conditional third-cumulant contraction with respect to the tilted exposure distribution.
\end{lemma}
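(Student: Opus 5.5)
The plan has three stages: a pointwise (in $\boldsymbol x$) analyticity argument for the conditional moment generating function, a dominated-differentiation argument to pass derivatives through the outer expectation over $\boldsymbol X$, and an algebraic computation of the two first derivatives using cumulant identities.

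\emph{Step 1: pointwise analyticity.} By (B1) applied to a countable exhausting family of compacts $K_j\Subset\mathcal D$ with rational $r$, there is a full-measure set $\mathcal X_0$ on which $A_{K_j,r}(\boldsymbol x)<\infty$ and $D_{K_j}(\boldsymbol x)>0$ for all $j,r$. Fix $\boldsymbol x\in\mathcal X_0$.
\begin{itemize}
\item The Laplace transform $\nu_{\boldsymbol\delta}(\boldsymbol x)=\mathbb E[e^{\boldsymbol\delta^\top\boldsymbol W}\mid\boldsymbol X=\boldsymbol x]$ is finite on a neighborhood of every point of $\mathcal D$.
\item A finite multivariate Laplace transform is real-analytic in the interior of its domain of finiteness. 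One can see this by extending to complex $\boldsymbol\delta+i\boldsymbol t$, where $|e^{(\boldsymbol\delta+i\boldsymbol t)^\top\boldsymbol W}|=e^{\boldsymbol\delta^\top\boldsymbol W}$ gives domination, and applying Morera/Hartogs. Alternatively, expand $e^{\boldsymbol h^\top\boldsymbol W}$ in a power series and bound it by $e^{\|\boldsymbol h\|\|\boldsymbol W\|}$, dominated on a small ball by a finite sum of tilts $e^{\pm\epsilon w_k}$ around $\boldsymbol\delta$.
\item Since $\nu_{\boldsymbol\delta}(\boldsymbol x)>0$, the map $\ell_{\boldsymbol\delta}=\log\nu_{\boldsymbol\delta}$ is real-analytic.
\item Differentiating under the conditional integral identifies $\nabla\ell$ and $\nabla^2\ell$ as the tilted conditional mean and covariance, and the third derivative contracted with $\boldsymbol h$ as $\boldsymbol K_{3,\boldsymbol\delta}(\boldsymbol x)[\boldsymbol h]$, the third central tilted moment.
\end{itemize}

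\emph{Step 2: differentiating under the outer expectation.} Each $r$-th derivative of $\ell_{\boldsymbol\delta}(\boldsymbol x)$ is a polynomial in the ratios $\mathbb E[e^{\boldsymbol\delta^\top\boldsymbol W}W^{\alpha}\mid\boldsymbol x]/\nu_{\boldsymbol\delta}(\boldsymbol x)$ with $|\alpha|\le r$. On a compact $K$, each ratio is bounded by $A_{K,r}(\boldsymbol x)/D_K(\boldsymbol x)$. Hence every derivative of $\boldsymbol\mu^W_{\boldsymbol\delta}$ and $\boldsymbol\Sigma^W_{\boldsymbol\delta}$ of any order is bounded uniformly on $K$ by a polynomial in $A_{K,r}/D_K$. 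Products such as $\boldsymbol\mu\boldsymbol\mu^\top$, and their derivatives, are also covered, since (B1) makes all moments $\mathbb E[(A/D)^m]$ finite. Dominated convergence combined with the mean value theorem then lets every derivative pass under $\mathbb E$, locally uniformly on $K$, so $\boldsymbol\mu_{\boldsymbol\delta,W}$ and $\boldsymbol\Sigma_{\boldsymbol\delta,W}$ are $C^\infty$ by induction on the order.

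\emph{Step 3: the explicit formulas.} Differentiating $\boldsymbol\mu_{\boldsymbol\delta,W}=\mathbb E[\nabla\ell]$ gives $\mathbb E[\nabla^2\ell\,\boldsymbol h]$, which is \eqref{eq:D-mubar}. For \eqref{eq:D-Sigmabar}, write
\[
\boldsymbol\Sigma_{\boldsymbol\delta,W}=\mathbb E[\boldsymbol\Sigma^W_{\boldsymbol\delta}]+\mathbb E[\boldsymbol\mu^W_{\boldsymbol\delta}\boldsymbol\mu^{W\top}_{\boldsymbol\delta}]-\boldsymbol\mu_{\boldsymbol\delta,W}\boldsymbol\mu_{\boldsymbol\delta,W}^\top .
\]
The first term differentiates to $\mathbb E[\boldsymbol K_3[\boldsymbol h]]$. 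The second gives $\mathbb E[\boldsymbol\Sigma^W\boldsymbol h\,\boldsymbol\mu^{W\top}+\boldsymbol\mu^W\boldsymbol h^\top\boldsymbol\Sigma^W]$. The third gives $-\mathbb E[\boldsymbol\Sigma^W\boldsymbol h]\boldsymbol\mu_{\boldsymbol\delta,W}^\top-\boldsymbol\mu_{\boldsymbol\delta,W}\mathbb E[\boldsymbol h^\top\boldsymbol\Sigma^W]$. Combining the second and third terms replaces $\boldsymbol\mu^W$ by the centered $\boldsymbol b_{\boldsymbol\delta}$, which yields \eqref{eq:D-Sigmabar}.

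\emph{Main obstacle.} I expect the main difficulty to be the domination in Step 2. The derivatives involve ratios with the random normalizer $\nu_{\boldsymbol\delta}$ in the denominator, raised to powers, and the $\boldsymbol\delta$-uniform envelope $A_{K,r}/D_K$ with all moments finite is exactly what handles this. The other delicate point is choosing a single null set for all $\boldsymbol\delta$, which is why the countable exhaustion by compacts is used in Step 1.
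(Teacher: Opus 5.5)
Your proposal is correct and follows essentially the same route as the paper: a countable compact exhaustion to fix a single null set, a power-series expansion dominated by a finite sum of neighboring tilts to get real-analyticity of $\nu_{\boldsymbol\delta}(\boldsymbol x)$ and hence of $\ell_{\boldsymbol\delta}(\boldsymbol x)$, polynomial envelopes in $A_{K,r}/D_K$ to differentiate under the outer expectation, and a direct computation of the two derivatives. The differences are only cosmetic. You obtain $\boldsymbol K_{3,\boldsymbol\delta}$ as the third derivative of the cumulant generating function, and you expand $\mathbb E[\boldsymbol\mu^W_{\boldsymbol\delta}\boldsymbol\mu^{W\top}_{\boldsymbol\delta}]-\boldsymbol\mu_{\boldsymbol\delta,W}\boldsymbol\mu_{\boldsymbol\delta,W}^\top$ and then recombine. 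The paper instead uses the covariance identity for $\mathrm D\,\mathbb E_{g_{\boldsymbol\delta}}[\boldsymbol Q(\boldsymbol W)\mid\boldsymbol x]$ and works with $\mathbb E[\boldsymbol b_{\boldsymbol\delta}\boldsymbol b_{\boldsymbol\delta}^\top]$ together with $\mathbb E[\boldsymbol b_{\boldsymbol\delta}]=\boldsymbol 0$.
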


\begin{proof}
Let $\{K_j:j\ge1\}$ be a compact exhaustion of $\mathcal D$ such that
$K_j\Subset\mathcal D$, $K_j\subset K_{j+1}$,
$\bigcup_{j\ge1}K_j=\mathcal D$, and every compact subset of
$\mathcal D$ is contained in some $K_j$. Assumption (B1), applied over the
countable collection indexed by $(j,r,m)$, yields a set $\mathcal X_0$ with
$P_0(\boldsymbol X\in\mathcal X_0)=1$ such that all conditional moment and
denominator bounds in (B1) hold for every $j,r,m$ and every
$\boldsymbol x\in\mathcal X_0$.

Let $\boldsymbol x\in\mathcal X_0$ and
$\boldsymbol{\delta}_0\in\mathcal D$. There exists $\rho>0$ such that
$$
\boldsymbol{\delta}_0+\rho\boldsymbol s\in\mathcal D
\qquad\text{for every }\boldsymbol s\in\{-1,1\}^q.
$$
There exists a compact set $K_\rho\Subset\mathcal D$ containing the finite set
$\{\boldsymbol{\delta}_0+\rho\boldsymbol s:\boldsymbol s\in\{-1,1\}^q\}$.
For $\|\boldsymbol u\|<\rho$,
$$
\exp\{(\boldsymbol{\delta}_0+\boldsymbol u)^\top\boldsymbol W\}
=
\exp(\boldsymbol{\delta}_0^\top\boldsymbol W)
\sum_{k=0}^{\infty}\frac{(\boldsymbol u^\top\boldsymbol W)^k}{k!}.
$$
The absolute value of the series is bounded by
$$
\exp(\boldsymbol{\delta}_0^\top\boldsymbol W)\exp(\rho\|\boldsymbol W\|)
\le
\exp(\boldsymbol{\delta}_0^\top\boldsymbol W)\exp(\rho\|\boldsymbol W\|_1)
\le
\sum_{\boldsymbol s\in\{-1,1\}^q}
\exp\{(\boldsymbol{\delta}_0+\rho\boldsymbol s)^\top\boldsymbol W\}.
$$
The compact set $K_\rho$ is contained in some $K_j$. The conditional
expectation of the final finite sum, given $\boldsymbol X=\boldsymbol x$, is
finite by (B1) applied to that $K_j$ with $r=0$. Conditional dominated
convergence permits termwise conditional expectation in a
neighborhood of $\boldsymbol{\delta}_0$. Hence
$\nu_{\boldsymbol{\delta}}(\boldsymbol x)$ is real-analytic in a neighborhood
of $\boldsymbol{\delta}_0$. Since $\boldsymbol{\delta}_0\in\mathcal D$ was
arbitrary, $\nu_{\boldsymbol{\delta}}(\boldsymbol x)$ is real-analytic on
$\mathcal D$. It is strictly positive, and the logarithm is real-analytic on
$(0,\infty)$; consequently, $\ell_{\boldsymbol{\delta}}(\boldsymbol x)$ is
real-analytic on $\mathcal D$.

Differentiating the conditional moment generating function gives, for any
$\boldsymbol h,\boldsymbol h_1,\boldsymbol h_2\in\mathbb R^q$,
$$
\mathrm D\nu_{\boldsymbol{\delta}}(\boldsymbol x)[\boldsymbol h]
=
\mathbb{E}\left[
(\boldsymbol h^\top\boldsymbol W)
\exp(\boldsymbol{\delta}^{\top}\boldsymbol W)
\mid \boldsymbol X=\boldsymbol x
\right],
$$
and
$$
\mathrm D^2\nu_{\boldsymbol{\delta}}(\boldsymbol x)
[\boldsymbol h_1,\boldsymbol h_2]
=
\mathbb{E}\left[
(\boldsymbol h_1^\top\boldsymbol W)(\boldsymbol h_2^\top\boldsymbol W)
\exp(\boldsymbol{\delta}^{\top}\boldsymbol W)
\mid \boldsymbol X=\boldsymbol x
\right].
$$
The quotient rule for $\ell_{\boldsymbol{\delta}}=\log\nu_{\boldsymbol{\delta}}$
then yields
$$
\mathrm D\ell_{\boldsymbol{\delta}}(\boldsymbol x)[\boldsymbol h]
=
\boldsymbol h^\top
\mathbb{E}_{g_{\boldsymbol{\delta}}}[\boldsymbol W\mid \boldsymbol X=\boldsymbol x],
$$
and
$$
\mathrm D^2\ell_{\boldsymbol{\delta}}(\boldsymbol x)
[\boldsymbol h_1,\boldsymbol h_2]
=
\boldsymbol h_1^\top
\operatorname{Cov}_{g_{\boldsymbol{\delta}}}(\boldsymbol W\mid \boldsymbol X=\boldsymbol x)
\boldsymbol h_2.
$$
Thus
$$
\boldsymbol{\mu}^W_{\boldsymbol{\delta}}(\boldsymbol x)
=
\mathbb{E}_{g_{\boldsymbol{\delta}}}[\boldsymbol W\mid \boldsymbol X=\boldsymbol x],
\qquad
\boldsymbol{\Sigma}^W_{\boldsymbol{\delta}}(\boldsymbol x)
=
\operatorname{Cov}_{g_{\boldsymbol{\delta}}}(\boldsymbol W\mid \boldsymbol X=\boldsymbol x),
$$
and
$$
\mathrm D\boldsymbol{\mu}^W_{\boldsymbol{\delta}}(\boldsymbol x)[\boldsymbol h]
=
\boldsymbol{\Sigma}^W_{\boldsymbol{\delta}}(\boldsymbol x)\boldsymbol h.
$$
Let $K\Subset\mathcal D$ be compact.
The strengthened envelope condition in (B1) gives integrable envelopes for the conditional
quotients that define the tilted moments; each derivative is bounded by
a finite polynomial in $A_{K,r}(\boldsymbol X)/D_K(\boldsymbol X)$ for sufficiently
large $r$. The same dominated-convergence argument, after integration
over $\boldsymbol X$, justifies differentiating the marginal expectation and
proves \eqref{eq:D-mubar}.

For the derivative of the induced marginal covariance, let
$\boldsymbol Q(\boldsymbol W)$ be any integrable matrix-valued function. Then
$$
\mathrm D
\mathbb{E}_{g_{\boldsymbol{\delta}}}
[\boldsymbol Q(\boldsymbol W)\mid \boldsymbol X=\boldsymbol x][\boldsymbol h]
=
\operatorname{Cov}_{g_{\boldsymbol{\delta}}}
\{\boldsymbol Q(\boldsymbol W),\boldsymbol h^\top\boldsymbol W
\mid \boldsymbol X=\boldsymbol x\},
$$
where the covariance is applied componentwise. Taking
$\boldsymbol Q(\boldsymbol W)=\boldsymbol W\boldsymbol W^\top$ and subtracting
the derivative of
$\boldsymbol{\mu}^W_{\boldsymbol{\delta}}(\boldsymbol{\mu}^W_{\boldsymbol{\delta}})^\top$
gives
$$
\mathrm D\boldsymbol{\Sigma}^W_{\boldsymbol{\delta}}(\boldsymbol x)[\boldsymbol h]
=
\mathbb{E}_{g_{\boldsymbol{\delta}}}\left[
\{\boldsymbol W-\boldsymbol{\mu}^W_{\boldsymbol{\delta}}(\boldsymbol x)\}
\{\boldsymbol W-\boldsymbol{\mu}^W_{\boldsymbol{\delta}}(\boldsymbol x)\}^{\top}
\boldsymbol h^\top
\{\boldsymbol W-\boldsymbol{\mu}^W_{\boldsymbol{\delta}}(\boldsymbol x)\}
\mid \boldsymbol X=\boldsymbol x
\right]
=
\boldsymbol K_{3,\boldsymbol{\delta}}(\boldsymbol x)[\boldsymbol h].
$$
Since
$$
\boldsymbol{\Sigma}_{\boldsymbol{\delta},W}
=
\mathbb{E}[\boldsymbol{\Sigma}^W_{\boldsymbol{\delta}}(\boldsymbol X)]
+
\mathbb{E}[
\boldsymbol b_{\boldsymbol{\delta}}(\boldsymbol X)
\boldsymbol b_{\boldsymbol{\delta}}(\boldsymbol X)^\top],
$$
the derivative satisfies
$$
\mathrm D\boldsymbol b_{\boldsymbol{\delta}}(\boldsymbol X)[\boldsymbol h]
=
\boldsymbol{\Sigma}^W_{\boldsymbol{\delta}}(\boldsymbol X)\boldsymbol h
-
\mathrm D\boldsymbol{\mu}_{\boldsymbol{\delta},W}[\boldsymbol h].
$$
Therefore
$$
\begin{aligned}
\mathrm D\boldsymbol{\Sigma}_{\boldsymbol{\delta},W}[\boldsymbol h]
&=
\mathbb{E}[\boldsymbol K_{3,\boldsymbol{\delta}}(\boldsymbol X)[\boldsymbol h]]
+
\mathbb{E}[
\{\mathrm D\boldsymbol b_{\boldsymbol{\delta}}(\boldsymbol X)[\boldsymbol h]\}
\boldsymbol b_{\boldsymbol{\delta}}(\boldsymbol X)^\top
]\\
&\quad+
\mathbb{E}[
\boldsymbol b_{\boldsymbol{\delta}}(\boldsymbol X)
\{\mathrm D\boldsymbol b_{\boldsymbol{\delta}}(\boldsymbol X)[\boldsymbol h]\}^{\top}
].
\end{aligned}
$$
Because
$\mathbb{E}[\boldsymbol b_{\boldsymbol{\delta}}(\boldsymbol X)]=\boldsymbol0$,
the terms containing
$\mathrm D\boldsymbol{\mu}_{\boldsymbol{\delta},W}[\boldsymbol h]$ vanish
after expectation, yielding \eqref{eq:D-Sigmabar}. Higher derivatives are obtained
by repeated dominated differentiation. Since the compact set
$K\Subset\mathcal D$ was arbitrary, the induced marginal moment maps are
$C^\infty$ on $\mathcal D$.
\end{proof}

\begin{lemma}[Positive definiteness of the induced covariance]
\label{lem:induced-cov-spd}
Under Assumptions {\rm(B1)} and {\rm(B2)},
$\boldsymbol{\Sigma}_{\boldsymbol{\delta},W}\in\mathbb S_{++}^q$ for every
$\boldsymbol{\delta}\in\mathcal D$.
\end{lemma}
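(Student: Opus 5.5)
We will show directly that $\boldsymbol h^\top\boldsymbol{\Sigma}_{\boldsymbol{\delta},W}\boldsymbol h>0$ for every nonzero $\boldsymbol h\in\mathbb R^q$. The argument is a change of measure from the baseline law to the tilted marginal $P_{\boldsymbol{\delta},W}$ in \eqref{eq:induced-marginal-law}.

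First, by (B1) and Lemma~\ref{lem:conditional-induced-derivatives}, $P_{\boldsymbol{\delta},W}$ is a probability measure on $\mathbb R^q$ with finite second moments. Its mean is $\boldsymbol{\mu}_{\boldsymbol{\delta},W}$ and its covariance is $\boldsymbol{\Sigma}_{\boldsymbol{\delta},W}$, as given in \eqref{eq:induced-marginal-moments}. Hence, for any $\boldsymbol h$,
$$
\boldsymbol h^\top\boldsymbol{\Sigma}_{\boldsymbol{\delta},W}\boldsymbol h
=\mathbb{E}\big[r_{\boldsymbol{\delta}}(\boldsymbol W,\boldsymbol X)\{\boldsymbol h^\top(\boldsymbol W-\boldsymbol{\mu}_{\boldsymbol{\delta},W})\}^2\big]\ge 0 .
$$
Positive semidefiniteness is therefore immediate.

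For strictness, suppose the right-hand side vanishes for some $\boldsymbol h\neq\boldsymbol 0$. The weight satisfies $r_{\boldsymbol{\delta}}=\exp(\boldsymbol{\delta}^\top\boldsymbol W)/\nu_{\boldsymbol{\delta}}(\boldsymbol X)$. Both factors are finite and strictly positive almost surely: the numerator trivially, and $\nu_{\boldsymbol{\delta}}(\boldsymbol X)$ lies in $(0,\infty)$ a.s. by (B1), using $D_K>0$ and $A_{K,0}<\infty$ with $K=\{\boldsymbol{\delta}\}$. So $r_{\boldsymbol{\delta}}>0$ $P_0$-a.s., and vanishing of the integral forces $\boldsymbol h^\top\boldsymbol W=\boldsymbol h^\top\boldsymbol{\mu}_{\boldsymbol{\delta},W}$ $P_0$-a.s. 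In other words, $P_{\boldsymbol{\delta},W}$ and $P_{0,W}$ are mutually absolutely continuous, so degeneracy under one law transfers to the other. The a.s. constancy of $\boldsymbol h^\top\boldsymbol W$ under $P_0$ then gives $\operatorname{Var}_{P_0}(\boldsymbol h^\top\boldsymbol W)=\boldsymbol h^\top\boldsymbol{\Sigma}_W\boldsymbol h=0$. This contradicts (B2).

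An alternative route is through the total-covariance decomposition in \eqref{eq:induced-marginal-moments}. It would suffice to show $\mathbb{E}[\boldsymbol h^\top\boldsymbol{\Sigma}^W_{\boldsymbol{\delta}}(\boldsymbol X)\boldsymbol h]>0$ or that the between-covariates term is positive, but the direct change-of-measure argument handles both at once and is cleaner.

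The main technical point is justifying the identity $\mathbb{E}_{P_{\boldsymbol{\delta},W}}[\phi(\boldsymbol W)]=\mathbb{E}[r_{\boldsymbol{\delta}}\phi(\boldsymbol W)]$ for the quadratic $\phi$. This requires integrability of $r_{\boldsymbol{\delta}}\|\boldsymbol W\|^2$, which follows from (B1) with $r=2$, $m=1$, since $\mathbb{E}[r_{\boldsymbol{\delta}}\|\boldsymbol W\|^2\mid\boldsymbol X]\le A_{K,2}/D_K$. It also requires checking that $\boldsymbol{\Sigma}_{\boldsymbol{\delta},W}$ as defined in \eqref{eq:induced-marginal-moments} coincides with this covariance, which is just the law of total covariance under the tilted joint law.
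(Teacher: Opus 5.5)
Your proof is correct and follows essentially the same route as the paper: strict positivity of $r_{\boldsymbol{\delta}}$ makes $P_{\boldsymbol{\delta},W}$ and $P_{0,W}$ mutually absolutely continuous. Hence a direction $\boldsymbol h\neq\boldsymbol 0$ with $\boldsymbol h^\top\boldsymbol{\Sigma}_{\boldsymbol{\delta},W}\boldsymbol h=0$ would make $\boldsymbol h^\top\boldsymbol W$ degenerate under $P_0$, contradicting (B2). Your version works directly with the weighted quadratic form $\mathbb{E}[r_{\boldsymbol{\delta}}\{\boldsymbol h^\top(\boldsymbol W-\boldsymbol{\mu}_{\boldsymbol{\delta},W})\}^2]$ and spells out the integrability from (B1), which the paper leaves implicit.
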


\begin{proof}
For every $\boldsymbol{\delta}\in\mathcal D$,
$r_{\boldsymbol{\delta}}(\boldsymbol W,\boldsymbol X)>0$ almost surely and
$\mathbb{E}[r_{\boldsymbol{\delta}}(\boldsymbol W,\boldsymbol X)]=1$. Hence
$P_{\boldsymbol{\delta},W}$ and the baseline marginal law $P_{0,W}$ are equivalent:
if $P_{0,W}(A)=0$, then \eqref{eq:induced-marginal-law} gives
$P_{\boldsymbol{\delta},W}(A)=0$; conversely, if
$P_{\boldsymbol{\delta},W}(A)=0$, then
$\mathbb{E}[r_{\boldsymbol{\delta}}(\boldsymbol W,\boldsymbol X)\mathbf 1\{\boldsymbol W\in A\}]=0$,
and strict positivity of $r_{\boldsymbol{\delta}}$ implies $P_{0,W}(A)=0$.

Let $\boldsymbol u\neq\boldsymbol0$. If
$\boldsymbol u^\top\boldsymbol{\Sigma}_{\boldsymbol{\delta},W}\boldsymbol u=0$,
then $\boldsymbol u^\top\boldsymbol W$ is constant
$P_{\boldsymbol{\delta},W}$-almost surely. By equivalence this is also true
$P_{0,W}$-almost surely, which implies
$\boldsymbol u^\top\boldsymbol{\Sigma}_W\boldsymbol u=0$, contradicting
$\boldsymbol{\Sigma}_W\in\mathbb S_{++}^q$. Thus
$\boldsymbol{\Sigma}_{\boldsymbol{\delta},W}$ is positive definite.
\end{proof}

\begin{lemma}[Fr\'echet derivative of the induced Gelbrich constraint]
\label{lem:DG-induced-gelbrich}
Under Assumptions {\rm(B1)} and {\rm(B2)}, the map $G:\mathcal D\to\mathbb R$
defined in \eqref{eq:induced-gelbrich-G} is $C^\infty$. Its Fr\'echet derivative in
the direction $\boldsymbol h\in\mathbb R^q$ is
\begin{equation}
\mathrm DG(\boldsymbol{\delta})[\boldsymbol h]
=
2(\boldsymbol{\mu}_{\boldsymbol{\delta},W}-\boldsymbol{\mu}_W)^\top
\mathrm D\boldsymbol{\mu}_{\boldsymbol{\delta},W}[\boldsymbol h]
+
\left\langle
I-\boldsymbol L_{\boldsymbol{\delta}},
\mathrm D\boldsymbol{\Sigma}_{\boldsymbol{\delta},W}[\boldsymbol h]
\right\rangle_F,
\label{eq:DG-induced-gelbrich}
\end{equation}
where $\langle A,B\rangle_F=\operatorname{tr}(A^\top B)$ and
\begin{equation}
\boldsymbol L_{\boldsymbol{\delta}}
:=
\boldsymbol{\Sigma}_W^{1/2}
\left(
\boldsymbol{\Sigma}_W^{1/2}
\boldsymbol{\Sigma}_{\boldsymbol{\delta},W}
\boldsymbol{\Sigma}_W^{1/2}
\right)^{-1/2}
\boldsymbol{\Sigma}_W^{1/2}.
\label{eq:L-gelbrich-cov-gradient}
\end{equation}
\end{lemma}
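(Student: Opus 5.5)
The plan is to decompose $G$ into three pieces and differentiate each with the chain rule, using Lemma~\ref{lem:conditional-induced-derivatives} for the inner moment maps and Lemma~\ref{lem:induced-cov-spd} to keep the matrix square root smooth. We write
$$
G(\boldsymbol{\delta})=\|\boldsymbol{\mu}_{\boldsymbol{\delta},W}-\boldsymbol{\mu}_W\|_2^2+\operatorname{tr}(\boldsymbol{\Sigma}_W)+\operatorname{tr}(\boldsymbol{\Sigma}_{\boldsymbol{\delta},W})-2\operatorname{tr}\big[\Phi(\boldsymbol{\Sigma}_{\boldsymbol{\delta},W})\big],
\qquad
\Phi(\boldsymbol S):=\big(\boldsymbol{\Sigma}_W^{1/2}\boldsymbol S\boldsymbol{\Sigma}_W^{1/2}\big)^{1/2}.
$$
By Lemma~\ref{lem:conditional-induced-derivatives}, $\boldsymbol{\delta}\mapsto\boldsymbol{\mu}_{\boldsymbol{\delta},W}$ and $\boldsymbol{\delta}\mapsto\boldsymbol{\Sigma}_{\boldsymbol{\delta},W}$ are $C^\infty$ on $\mathcal D$. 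The first three pieces are therefore smooth, with derivatives $2(\boldsymbol{\mu}_{\boldsymbol{\delta},W}-\boldsymbol{\mu}_W)^\top\mathrm D\boldsymbol{\mu}_{\boldsymbol{\delta},W}[\boldsymbol h]$, $0$, and $\langle I,\mathrm D\boldsymbol{\Sigma}_{\boldsymbol{\delta},W}[\boldsymbol h]\rangle_F$.

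For the square-root term, Lemma~\ref{lem:induced-cov-spd} together with (B2) shows that $\boldsymbol M_{\boldsymbol{\delta}}:=\boldsymbol{\Sigma}_W^{1/2}\boldsymbol{\Sigma}_{\boldsymbol{\delta},W}\boldsymbol{\Sigma}_W^{1/2}$ lies in the open cone $\mathbb S_{++}^q$. On that cone the map $\boldsymbol M\mapsto\boldsymbol M^{1/2}$ is real-analytic. One way to see this is the holomorphic functional calculus on a contour in the right half-plane; another is the inverse function theorem applied to $\boldsymbol A\mapsto\boldsymbol A^2$, whose differential $\boldsymbol E\mapsto\boldsymbol A\boldsymbol E+\boldsymbol E\boldsymbol A$ is an invertible Sylvester operator when $\boldsymbol A\succ0$. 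Composing, $G$ is $C^\infty$ on $\mathcal D$.

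To obtain the derivative we only need the trace of the differential of the square root, and this avoids solving the Sylvester equation explicitly. Set $\boldsymbol A=\boldsymbol M^{1/2}$ and $\boldsymbol E=\mathrm D(\boldsymbol M^{1/2})[\dot{\boldsymbol M}]$. Differentiating $\boldsymbol A^2=\boldsymbol M$ gives $\boldsymbol A\boldsymbol E+\boldsymbol E\boldsymbol A=\dot{\boldsymbol M}$. Multiplying on the left by $\boldsymbol A^{-1}$ and taking traces, cyclicity gives $2\operatorname{tr}(\boldsymbol E)=\operatorname{tr}(\boldsymbol A^{-1}\dot{\boldsymbol M})$, so
$$
\mathrm D\operatorname{tr}(\boldsymbol M^{1/2})[\dot{\boldsymbol M}]=\tfrac12\operatorname{tr}(\boldsymbol M^{-1/2}\dot{\boldsymbol M}).
$$

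Substituting $\dot{\boldsymbol M}=\boldsymbol{\Sigma}_W^{1/2}\,\mathrm D\boldsymbol{\Sigma}_{\boldsymbol{\delta},W}[\boldsymbol h]\,\boldsymbol{\Sigma}_W^{1/2}$ and applying cyclicity once more gives
$$
-2\,\mathrm D\operatorname{tr}\Phi(\boldsymbol{\Sigma}_{\boldsymbol{\delta},W})[\boldsymbol h]=-\operatorname{tr}\big(\boldsymbol L_{\boldsymbol{\delta}}\,\mathrm D\boldsymbol{\Sigma}_{\boldsymbol{\delta},W}[\boldsymbol h]\big),
$$
with $\boldsymbol L_{\boldsymbol{\delta}}$ as in \eqref{eq:L-gelbrich-cov-gradient}. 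Since $\boldsymbol L_{\boldsymbol{\delta}}$ is symmetric, this equals $-\langle\boldsymbol L_{\boldsymbol{\delta}},\mathrm D\boldsymbol{\Sigma}_{\boldsymbol{\delta},W}[\boldsymbol h]\rangle_F$. Adding the covariance-trace term then yields $\langle I-\boldsymbol L_{\boldsymbol{\delta}},\mathrm D\boldsymbol{\Sigma}_{\boldsymbol{\delta},W}[\boldsymbol h]\rangle_F$, which is \eqref{eq:DG-induced-gelbrich}.

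The main obstacle is justifying smoothness of the matrix square root and the trace identity. The identity is legitimate only because Lemma~\ref{lem:induced-cov-spd} keeps $\boldsymbol M_{\boldsymbol{\delta}}$ strictly positive definite, which makes $\boldsymbol A$ invertible and the Sylvester operator nonsingular. We will state this step carefully.
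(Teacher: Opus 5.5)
Your proposal is correct and follows the paper's proof. Both split $G$ into the mean term, the linear trace term, and the square-root trace term, chain through Lemma~\ref{lem:conditional-induced-derivatives} and Lemma~\ref{lem:induced-cov-spd}, and use the identity $\mathrm D\operatorname{tr}(\boldsymbol M^{1/2})[\dot{\boldsymbol M}]=\tfrac12\operatorname{tr}(\boldsymbol M^{-1/2}\dot{\boldsymbol M})$; the only difference is that you derive this identity directly from the Sylvester equation via cyclicity of the trace, whereas the paper cites Malag\`o et al.\ for it.
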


\begin{proof}
By Lemmas~\ref{lem:conditional-induced-derivatives} and
\ref{lem:induced-cov-spd}, both
$\boldsymbol{\mu}_{\boldsymbol{\delta},W}$ and
$\boldsymbol{\Sigma}_{\boldsymbol{\delta},W}$ are $C^\infty$, and
$\boldsymbol{\Sigma}_{\boldsymbol{\delta},W}$ remains in the open cone
$\mathbb S_{++}^q$. It suffices to differentiate the covariance term
$$
\Phi(\boldsymbol\Theta)
:=
\operatorname{tr}\left[
\boldsymbol{\Sigma}_W
+\boldsymbol\Theta
-2(\boldsymbol{\Sigma}_W^{1/2}\boldsymbol\Theta\boldsymbol{\Sigma}_W^{1/2})^{1/2}
\right],
\qquad
\boldsymbol\Theta\in\mathbb S_{++}^q.
$$
The derivative of the matrix square-root map is obtained by inverting the derivative
of the square map through the Lyapunov equation; see
\citep[Eqs.~(12)-(16)]{malago2018wasserstein}. For $\boldsymbol A\in\mathbb S_{++}^q$,
$$
\mathrm D\,\operatorname{tr}(\boldsymbol A^{1/2})[\boldsymbol V]
=
\frac12\operatorname{tr}(\boldsymbol A^{-1/2}\boldsymbol V).
$$
Set
$\boldsymbol A(\boldsymbol\Theta)
=\boldsymbol{\Sigma}_W^{1/2}\boldsymbol\Theta\boldsymbol{\Sigma}_W^{1/2}$.
For a symmetric perturbation $\boldsymbol H$,
$$
\begin{aligned}
\mathrm D\Phi(\boldsymbol\Theta)[\boldsymbol H]
&=
\operatorname{tr}(\boldsymbol H)
-2\cdot
\frac12
\operatorname{tr}\left\{
\boldsymbol A(\boldsymbol\Theta)^{-1/2}
\boldsymbol{\Sigma}_W^{1/2}\boldsymbol H\boldsymbol{\Sigma}_W^{1/2}
\right\}\\
&=
\operatorname{tr}\left[
\left\{
I-
\boldsymbol{\Sigma}_W^{1/2}
\boldsymbol A(\boldsymbol\Theta)^{-1/2}
\boldsymbol{\Sigma}_W^{1/2}
\right\}
\boldsymbol H
\right].
\end{aligned}
$$
Evaluating at
$\boldsymbol\Theta=\boldsymbol{\Sigma}_{\boldsymbol{\delta},W}$ gives
$\boldsymbol L_{\boldsymbol{\delta}}$ in
\eqref{eq:L-gelbrich-cov-gradient}. Combining
\eqref{eq:D-mubar} and \eqref{eq:D-Sigmabar} with the chain rule gives
\eqref{eq:DG-induced-gelbrich}. Smoothness follows because matrix inversion
and the principal square root are smooth on $\mathbb S_{++}^q$.
\end{proof}

\begin{lemma}[Generic regularity of Gelbrich level sets]
\label{lem:generic-regularity-induced}
Under Assumptions {\rm(B1)} and {\rm(B2)}, define the critical set
$$
\mathcal C:=\{\boldsymbol{\delta}\in\mathcal D:\nabla G(\boldsymbol{\delta})=\boldsymbol0\},
\qquad
\mathcal V:=G(\mathcal C)\subset[0,\infty).
$$
Then $\mathcal V$ has Lebesgue measure zero. Consequently, for every
$c>0$ such that $c^2\notin\mathcal V$, the nonempty level set
$$
\mathcal M_c=\{\boldsymbol{\delta}\in\mathcal D:\,G(\boldsymbol{\delta})=c^2\}
$$
is a $C^\infty$ embedded hypersurface of $\mathbb R^q$, and
$$
T_{\boldsymbol{\delta}}\mathcal M_c
=
\{\boldsymbol \xi\in\mathbb R^q:\nabla G(\boldsymbol{\delta})^\top
\boldsymbol \xi=0\}.
$$
\end{lemma}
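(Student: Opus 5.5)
The plan is to obtain the measure-zero statement from Sard's theorem and then use the regular value theorem (preimage theorem) for the level sets. By Lemma~\ref{lem:DG-induced-gelbrich}, under (B1)--(B2) the map $G:\mathcal D\to\mathbb R$ is $C^\infty$ on the open set $\mathcal D\subseteq\mathbb R^q$. Sard's theorem requires smoothness of class $C^k$ with $k\ge\max\{1,q-1+1\}=q$ for maps $\mathbb R^q\to\mathbb R$. It therefore applies directly and says that the set of critical values of $G$ has Lebesgue measure zero in $\mathbb R$.

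The critical points of $G$ are exactly the points where $\mathrm DG(\boldsymbol{\delta})$ fails to be surjective onto $\mathbb R$. This means $\mathrm DG(\boldsymbol{\delta})=0$, or equivalently $\nabla G(\boldsymbol{\delta})=\boldsymbol0$, so the critical set is precisely $\mathcal C$ and the set of critical values is $\mathcal V=G(\mathcal C)$. We must also check that $\mathcal V\subset[0,\infty)$, which holds because $G\ge 0$. The covariance part of $G$ is the Bures--Wasserstein squared distance between $\boldsymbol\Sigma_W$ and $\boldsymbol\Sigma_{\boldsymbol{\delta},W}$, and this is nonnegative, for example because it equals $d_G^2$ between two Gaussians. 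This settles the first claim.

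For the second claim, fix $c>0$ with $c^2\notin\mathcal V$ and $\mathcal M_c\neq\emptyset$. Then every $\boldsymbol{\delta}\in\mathcal M_c$ satisfies $\nabla G(\boldsymbol{\delta})\ne\boldsymbol0$. By the implicit function theorem, near each such point the level set is the graph of a $C^\infty$ function of $q-1$ coordinates. More precisely, choose a coordinate $i$ with $\partial_i G(\boldsymbol{\delta})\neq0$ and solve for $\delta_i$. This gives local slice charts, so $\mathcal M_c$ is a $C^\infty$ embedded hypersurface of $\mathcal D$, hence of $\mathbb R^q$ since $\mathcal D$ is open.

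The tangent space is handled in two steps. First, for any smooth curve $\gamma$ in $\mathcal M_c$ through $\boldsymbol{\delta}$, differentiating $G(\gamma(t))=c^2$ gives $\nabla G(\boldsymbol{\delta})^\top\gamma'(0)=0$. This shows $T_{\boldsymbol{\delta}}\mathcal M_c\subseteq\operatorname{Null}\{\nabla G(\boldsymbol{\delta})^\top\}$. Second, both spaces have dimension $q-1$, so equality follows.

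I expect no real obstacle. The only points that need care are confirming that the smoothness order meets Sard's requirement, which it does since $G$ is $C^\infty$, and correctly identifying critical points of a real-valued map with zeros of the gradient.
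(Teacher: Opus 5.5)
Your proposal is correct and follows essentially the same route as the paper. It applies Sard's theorem to the $C^\infty$ scalar map $G$ (smoothness from Lemma~\ref{lem:DG-induced-gelbrich}) to get that $\mathcal V$ is null, uses the regular level-set (implicit function) theorem for the embedded hypersurface, and differentiates $G(\gamma(t))=c^2$ to get the tangent-space characterization. Your dimension count for the reverse inclusion and your explicit check that $G\ge 0$ (so $\mathcal V\subset[0,\infty)$) only fill in details the paper leaves to the regular level-set theorem.
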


\begin{proof}
By Lemma~\ref{lem:DG-induced-gelbrich}, $G$ is $C^\infty$ on the open set
$\mathcal D$. Sard's theorem applied to the scalar map
$G:\mathcal D\to\mathbb R$ implies that the set of critical values
$\mathcal V$ has Lebesgue measure zero. If $c^2\notin\mathcal V$, then every
$\boldsymbol{\delta}\in\mathcal M_c$ satisfies
$\nabla G(\boldsymbol{\delta})\neq\boldsymbol0$. The regular level-set theorem
therefore implies that $\mathcal M_c$ is a $C^\infty$ embedded hypersurface.
For a smooth curve $\gamma(t)\in\mathcal M_c$ with
$\gamma(0)=\boldsymbol{\delta}$, differentiating
$G(\gamma(t))=c^2$ at $t=0$ gives
$\nabla G(\boldsymbol{\delta})^\top\gamma'(0)=0$. Conversely, the regular
level-set theorem identifies the kernel of $\mathrm DG(\boldsymbol{\delta})$ with
the tangent space, giving the displayed characterization.
\end{proof}

\begin{lemma}[Local quadratic control near the origin]
\label{lem:local-quadratic-induced}
Assume {\rm(B1)} and {\rm(B2)}, and suppose
$$
\boldsymbol A_0
:=
\mathbb{E}\{\operatorname{Var}(\boldsymbol W\mid \boldsymbol X)\}
\in\mathbb S_{++}^q.
$$
Then there exist constants $\rho>0$ and $0<\lambda_-\le\lambda_+<\infty$
such that
\begin{equation}
\lambda_- \|\boldsymbol{\delta}\|^2
\le
G(\boldsymbol{\delta})
\le
\lambda_+ \|\boldsymbol{\delta}\|^2,
\qquad
\|\boldsymbol{\delta}\|\le\rho.
\label{eq:local-quadratic-induced}
\end{equation}
\end{lemma}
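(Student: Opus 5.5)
We would prove the two-sided bound by a second-order Taylor expansion of $G$ at $\boldsymbol{\delta}=\boldsymbol 0$, using Lemma~\ref{lem:DG-induced-gelbrich} for smoothness. At the origin the tilt is trivial, so $\boldsymbol{\mu}_{\boldsymbol 0,W}=\boldsymbol{\mu}_W$ and $\boldsymbol{\Sigma}_{\boldsymbol 0,W}=\boldsymbol{\Sigma}_W$. Hence $G(\boldsymbol 0)=0$, and since the Gelbrich functional is nonnegative (it is a squared distance between Gaussians with the given moments), $\boldsymbol 0$ is a global minimizer and $\nabla G(\boldsymbol 0)=\boldsymbol 0$. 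It follows that $G(\boldsymbol{\delta})=\tfrac12\boldsymbol{\delta}^\top \boldsymbol H\boldsymbol{\delta}+O(\|\boldsymbol{\delta}\|^3)$, where $\boldsymbol H=\nabla^2G(\boldsymbol 0)$. The cubic remainder is uniform on a closed ball inside $\mathcal D$ because $G$ is $C^\infty$. The claim then follows with $\lambda_\pm$ close to $\tfrac12\lambda_{\min/\max}(\boldsymbol H)$, provided we show that $\boldsymbol H$ is positive definite.

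To compute $\boldsymbol H$, we split $G$ into the mean term and the covariance (Bures) term. The mean term contributes $\|\mathrm D\boldsymbol{\mu}_{\boldsymbol 0,W}[\boldsymbol h]\|^2$ to the quadratic form, with a factor of $2$ from the Hessian of the squared norm. By \eqref{eq:D-mubar} at $\boldsymbol{\delta}=\boldsymbol 0$, $\mathrm D\boldsymbol{\mu}_{\boldsymbol 0,W}[\boldsymbol h]=\mathbb E\{\operatorname{Var}(\boldsymbol W\mid\boldsymbol X)\}\boldsymbol h=\boldsymbol A_0\boldsymbol h$. The Bures term $\Phi(\boldsymbol\Theta)$ also vanishes to first order at $\boldsymbol\Theta=\boldsymbol\Sigma_W$, since $\boldsymbol L_{\boldsymbol 0}=I$ in \eqref{eq:DG-induced-gelbrich}. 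Its second-order contribution is therefore $\mathrm D^2\Phi(\boldsymbol\Sigma_W)[\boldsymbol E,\boldsymbol E]\ge0$, where $\boldsymbol E=\mathrm D\boldsymbol\Sigma_{\boldsymbol 0,W}[\boldsymbol h]$ is given by \eqref{eq:D-Sigmabar}. This term involves third cumulants, but it does not enter with a curvature term from the second derivative of $\boldsymbol\Sigma_{\boldsymbol\delta,W}$, because that derivative is multiplied by $I-\boldsymbol L_{\boldsymbol 0}=0$. Combining the two contributions,
$$
\boldsymbol h^\top\boldsymbol H\boldsymbol h
=2\|\boldsymbol A_0\boldsymbol h\|^2+\mathrm D^2\Phi(\boldsymbol\Sigma_W)[\boldsymbol E,\boldsymbol E],
$$
where the second summand is nonnegative by convexity of $\Phi$ near its minimizer. (Concretely, $\Phi$ is a squared Bures distance to $\boldsymbol\Sigma_W$ and so has a positive semidefinite Hessian there.)

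Positive definiteness follows from this display. The lower bound $\boldsymbol h^\top\boldsymbol H\boldsymbol h\ge 2\lambda_{\min}(\boldsymbol A_0)^2\|\boldsymbol h\|^2$ holds because $\boldsymbol A_0\in\mathbb S^q_{++}$. The upper bound holds because $\boldsymbol H$ is a finite matrix, its entries being finite by the envelopes in (B1). We then choose $\rho$ small enough that the cubic remainder is at most half of $\lambda_{\min}(\boldsymbol A_0)^2\|\boldsymbol{\delta}\|^2$. This gives \eqref{eq:local-quadratic-induced} with $\lambda_-=\tfrac12\lambda_{\min}(\boldsymbol A_0)^2$ and $\lambda_+=\lambda_{\max}(\boldsymbol H)$ plus the same margin.

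The main obstacle is justifying that the covariance part contributes no negative quadratic term. The first-order vanishing of $\Phi$ at $\boldsymbol\Sigma_W$ is immediate from $\boldsymbol L_{\boldsymbol 0}=I$. Nonnegativity of its second-order term needs either convexity of the Bures distance in a neighbourhood, or the simpler observation that $\Phi(\boldsymbol\Theta)\ge0$ with equality at $\boldsymbol\Theta=\boldsymbol\Sigma_W$. The latter forces the Hessian to be positive semidefinite there, and that is the route we would take.
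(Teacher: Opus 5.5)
Your proof is correct, but it is organized differently from the paper's. You run a single second-order Taylor expansion of $G$ at $\boldsymbol 0$ and show that the full Hessian $\boldsymbol H=\nabla^2G(\boldsymbol 0)$ is positive definite. This uses three ingredients:
the chain-rule identity $\boldsymbol h^\top\boldsymbol H\boldsymbol h=2\|\boldsymbol A_0\boldsymbol h\|^2+\mathrm D^2\Phi(\boldsymbol\Sigma_W)[\boldsymbol E,\boldsymbol E]$, in which the curvature of $\boldsymbol\delta\mapsto\boldsymbol\Sigma_{\boldsymbol\delta,W}$ drops out because $\mathrm D\Phi(\boldsymbol\Sigma_W)=0$ (i.e.\ $\boldsymbol L_{\boldsymbol 0}=I$);
the second-order necessary condition for $\Phi$ at its minimizer $\boldsymbol\Sigma_W$;
and $C^3$ control of $G$ on a closed ball, which makes the cubic remainder uniform.
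All of these steps check out.

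The paper never forms $\boldsymbol H$. For the lower bound it discards the covariance term outright, $G(\boldsymbol\delta)\ge\|\boldsymbol\mu_{\boldsymbol\delta,W}-\boldsymbol\mu_W\|^2$. It then uses only the first-order expansion $\boldsymbol\mu_{\boldsymbol\delta,W}-\boldsymbol\mu_W=\boldsymbol A_0\boldsymbol\delta+O(\|\boldsymbol\delta\|^2)$ to get $\lambda_-=\tfrac14\lambda_{\min}(\boldsymbol A_0)^2$. For the upper bound it treats the two terms separately. The mean term is $O(\|\boldsymbol\delta\|^2)$ directly. The covariance term is $O(\|\boldsymbol\Sigma_{\boldsymbol\delta,W}-\boldsymbol\Sigma_W\|_F^2)=O(\|\boldsymbol\delta\|^2)$ by a second-order Taylor expansion of $\Phi$ about $\boldsymbol\Sigma_W$. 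That expansion relies on the same facts you use, $\Phi(\boldsymbol\Sigma_W)=0$ and $\mathrm D\Phi(\boldsymbol\Sigma_W)=0$.

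The trade-off is as follows. The ``main obstacle'' you flag, excluding a negative quadratic contribution from the Bures term, is handled in the paper at the level of function values ($\Phi\ge0$) rather than Hessians. As a result the paper's argument is shorter and needs only $C^2$ regularity of the mean map and of $\Phi$. Your route costs a Hessian computation but yields the exact leading quadratic form $G(\boldsymbol\delta)=\|\boldsymbol A_0\boldsymbol\delta\|^2+\tfrac12\mathrm D^2\Phi(\boldsymbol\Sigma_W)[\boldsymbol E_{\boldsymbol\delta},\boldsymbol E_{\boldsymbol\delta}]+O(\|\boldsymbol\delta\|^3)$, with $\boldsymbol E_{\boldsymbol\delta}=\mathrm D\boldsymbol\Sigma_{\boldsymbol 0,W}[\boldsymbol\delta]$. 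This buys two things. First, the constants become sharp as $\rho\to0$, with $\lambda_\pm\to\tfrac12\lambda_{\min/\max}(\boldsymbol H)$; in particular the lower constant can exceed $\lambda_{\min}(\boldsymbol A_0)^2$, which the paper's approach cannot reach because it throws away $\Phi$. Second, it gives an explicit account of how the conditional third cumulants enter $G$ near the origin. Your stated $\lambda_+=\lambda_{\max}(\boldsymbol H)$ plus margin is valid, just conservative by a factor of two.
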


\begin{proof}
At $\boldsymbol{\delta}=\boldsymbol0$,
$$
\boldsymbol{\mu}^W_{\boldsymbol0}(\boldsymbol X)
=\mathbb{E}[\boldsymbol W\mid \boldsymbol X],
\qquad
\boldsymbol{\Sigma}^W_{\boldsymbol0}(\boldsymbol X)
=\operatorname{Var}(\boldsymbol W\mid \boldsymbol X).
$$
Thus
$\boldsymbol{\mu}_{\boldsymbol0,W}=\boldsymbol{\mu}_W$ and, by the law of
total covariance,
$\boldsymbol{\Sigma}_{\boldsymbol0,W}=\boldsymbol{\Sigma}_W$. From
\eqref{eq:D-mubar},
$$
\mathrm D\boldsymbol{\mu}_{\boldsymbol0,W}[\boldsymbol h]
=
\mathbb{E}\{\boldsymbol{\Sigma}^W_{\boldsymbol0}(\boldsymbol X)\boldsymbol h\}
=
\boldsymbol A_0\boldsymbol h.
$$
Since $\boldsymbol{\mu}_{\boldsymbol{\delta},W}$ is $C^2$ in a
neighborhood of $\boldsymbol0$,
$$
\boldsymbol{\mu}_{\boldsymbol{\delta},W}-\boldsymbol{\mu}_W
=
\boldsymbol A_0\boldsymbol{\delta}+O(\|\boldsymbol{\delta}\|^2).
$$
Let $\lambda_{\min}(\boldsymbol A_0)>0$. There is $\rho_1>0$ such that, for
$\|\boldsymbol{\delta}\|\le\rho_1$,
$$
\|\boldsymbol{\mu}_{\boldsymbol{\delta},W}-\boldsymbol{\mu}_W\|
\ge
\frac12\lambda_{\min}(\boldsymbol A_0)\|\boldsymbol{\delta}\|.
$$
The covariance term of $G$ is nonnegative. Hence
$$
G(\boldsymbol{\delta})
\ge
\|\boldsymbol{\mu}_{\boldsymbol{\delta},W}-\boldsymbol{\mu}_W\|^2
\ge
\frac14\lambda_{\min}(\boldsymbol A_0)^2\|\boldsymbol{\delta}\|^2
$$
for $\|\boldsymbol{\delta}\|\le\rho_1$.

For the upper bound, Lemma~\ref{lem:conditional-induced-derivatives} gives
$$
\|\boldsymbol{\mu}_{\boldsymbol{\delta},W}-\boldsymbol{\mu}_W\|
=O(\|\boldsymbol{\delta}\|),
\qquad
\|\boldsymbol{\Sigma}_{\boldsymbol{\delta},W}-\boldsymbol{\Sigma}_W\|_F
=O(\|\boldsymbol{\delta}\|)
$$
as $\boldsymbol{\delta}\to\boldsymbol0$. The covariance map
$\Phi(\boldsymbol\Theta)$ in the proof of Lemma~\ref{lem:DG-induced-gelbrich}
satisfies $\Phi(\boldsymbol{\Sigma}_W)=0$ and
$\mathrm D\Phi(\boldsymbol{\Sigma}_W)=0$, because
$\boldsymbol L_{\boldsymbol0}=I$. Since $\Phi$ is $C^2$ in a neighborhood
of $\boldsymbol{\Sigma}_W$, Taylor's theorem gives
$$
\Phi(\boldsymbol{\Sigma}_{\boldsymbol{\delta},W})
=
O(\|\boldsymbol{\Sigma}_{\boldsymbol{\delta},W}-\boldsymbol{\Sigma}_W\|_F^2)
=
O(\|\boldsymbol{\delta}\|^2).
$$
Combining the mean and covariance bounds yields the upper inequality in
\eqref{eq:local-quadratic-induced} after possibly reducing $\rho_1$. The asserted inequalities hold after setting $\lambda_-$, $\lambda_+$, and $\rho$ accordingly.
\end{proof}

\subsection{Retraction, projection, and vector transport.}
Let $c^2$ be a regular value of $G$, and let $\mathcal M_c$ be nonempty. For
$\boldsymbol{\delta}\in\mathcal M_c$, define the unit normal vector
$$
\boldsymbol n(\boldsymbol{\delta})
:=
\frac{\nabla G(\boldsymbol{\delta})}{\|\nabla G(\boldsymbol{\delta})\|},
$$
and the orthogonal projection onto the tangent space by
\begin{equation}
\mathsf{P}_{\boldsymbol{\delta}}
:=
I-\boldsymbol n(\boldsymbol{\delta})\boldsymbol n(\boldsymbol{\delta})^\top.
\label{eq:tangent-projection-induced}
\end{equation}
The Riemannian gradient of the objective
$\psi$ on $\mathcal M_c$, under the metric inherited from the Euclidean inner
product on $\mathbb R^q$, is
$$
\operatorname{grad}\psi(\boldsymbol{\delta})
=
\mathsf{P}_{\boldsymbol{\delta}}\nabla\psi(\boldsymbol{\delta}).
$$
\begin{lemma}[Projection retraction and isometric transport]
\label{lem:projection-retraction-induced}
Assume {\rm(B1)}-{\rm(B4)}. Then $\mathcal M_c$ admits a smooth projection
retraction in a tubular neighborhood. There exists an open
neighborhood $\mathcal N$ of $\mathcal M_c$ and a smooth nearest-point
projection $\Pi:\mathcal N\to\mathcal M_c$ such that
$$
R_{\boldsymbol{\delta}}(\boldsymbol{\xi})
:=
\Pi(\boldsymbol{\delta}+\boldsymbol{\xi}),
\qquad
\boldsymbol{\xi}\in T_{\boldsymbol{\delta}}\mathcal M_c
$$
is a retraction for all sufficiently small $\boldsymbol{\xi}$. Moreover, parallel
transport along the retraction curve
$$
t\mapsto R_{\boldsymbol{\delta}}(t\boldsymbol{\xi}),\qquad 0\le t\le1,
$$
defines an isometric vector transport
$$
\mathcal T_{\boldsymbol{\delta},\boldsymbol{\xi}}
:T_{\boldsymbol{\delta}}\mathcal M_c
\to
T_{R_{\boldsymbol{\delta}}(\boldsymbol{\xi})}\mathcal M_c
$$
that is continuous in all arguments on its domain.
\end{lemma}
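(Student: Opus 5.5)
We first establish that $\mathcal M_c$ is a compact $C^\infty$ embedded hypersurface, and then obtain both the retraction and the transport from standard differential geometry of compact submanifolds of $\mathbb R^q$.

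\emph{Step 1: the manifold is compact and smooth.} By Lemma~\ref{lem:DG-induced-gelbrich}, $G$ is $C^\infty$ on the open set $\mathcal D$. By (B4) and the regular level-set theorem (as in Lemma~\ref{lem:generic-regularity-induced}), $\mathcal M_c$ is a $C^\infty$ embedded hypersurface with normal field $\boldsymbol n(\boldsymbol\delta)$. The unit normal $\boldsymbol n$ is smooth on $\mathcal M_c$ because $\nabla G$ is smooth and nonvanishing there. $\mathcal M_c$ is closed in $\mathcal D$ because $G$ is continuous. By (B3), $\mathcal M_c\subset K_c\Subset\mathcal D$, so $\mathcal M_c$ is closed in $\mathbb R^q$ and bounded, hence compact.

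\emph{Step 2: tubular neighborhood and projection.} Define the normal map $E:\mathcal M_c\times\mathbb R\to\mathbb R^q$ by $E(\boldsymbol\delta,s)=\boldsymbol\delta+s\,\boldsymbol n(\boldsymbol\delta)$. Its differential at $(\boldsymbol\delta,0)$ is the isomorphism $T_{\boldsymbol\delta}\mathcal M_c\oplus\mathbb R\boldsymbol n(\boldsymbol\delta)\to\mathbb R^q$, so the inverse function theorem makes $E$ a local diffeomorphism near $\mathcal M_c\times\{0\}$. Compactness of $\mathcal M_c$ then upgrades this to a diffeomorphism of $\mathcal M_c\times(-\epsilon,\epsilon)$ onto an open set $\mathcal N$, using the usual Lebesgue-number argument for injectivity. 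Shrinking $\epsilon$ below the reach of $\mathcal M_c$ ensures the nearest point is unique and equals the first component of $E^{-1}$. We then set $\Pi=\mathrm{pr}_1\circ E^{-1}$, which is smooth. This step is the main obstacle: uniform injectivity of $E$ and identifying $\Pi$ with the nearest-point map. We handle it as follows. If $E(\boldsymbol\delta_k,s_k)=E(\boldsymbol\delta_k',s_k')$ with $s_k,s_k'\to0$ and $(\boldsymbol\delta_k,s_k)\ne(\boldsymbol\delta_k',s_k')$, extract convergent subsequences using compactness. The limits coincide, which contradicts local injectivity at the limit point. For the nearest-point claim, a minimizer $\boldsymbol\delta^*$ of $\|\boldsymbol y-\boldsymbol\delta\|$ over compact $\mathcal M_c$ exists and satisfies the first-order condition $\boldsymbol y-\boldsymbol\delta^*\parallel\boldsymbol n(\boldsymbol\delta^*)$. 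Hence $\boldsymbol y=E(\boldsymbol\delta^*,s)$ with $|s|=\mathrm{dist}(\boldsymbol y,\mathcal M_c)<\epsilon$, and injectivity forces uniqueness.

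\emph{Step 3: retraction properties.} For $\boldsymbol\delta\in\mathcal M_c$ and $\|\boldsymbol\xi\|<\epsilon$, the point $\boldsymbol\delta+\boldsymbol\xi$ lies in $\mathcal N$, so $R_{\boldsymbol\delta}(\boldsymbol\xi)$ is defined and smooth in $(\boldsymbol\delta,\boldsymbol\xi)$ on the tangent bundle. Clearly $R_{\boldsymbol\delta}(\boldsymbol 0)=\Pi(\boldsymbol\delta)=\boldsymbol\delta$. For local rigidity, $\Pi$ restricted to $\mathcal M_c$ is the identity, and $\Pi$ is constant along normal segments. Hence $\mathrm D\Pi(\boldsymbol\delta)=\mathsf P_{\boldsymbol\delta}$, and therefore $\mathrm DR_{\boldsymbol\delta}(\boldsymbol 0)[\boldsymbol\xi]=\mathsf P_{\boldsymbol\delta}\boldsymbol\xi=\boldsymbol\xi$ for $\boldsymbol\xi\in T_{\boldsymbol\delta}\mathcal M_c$.

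\emph{Step 4: parallel transport.} Equip $\mathcal M_c$ with the induced metric and its Levi-Civita connection $\nabla^{\mathcal M}\boldsymbol V=\mathsf P(\dot{\boldsymbol V})$. Along the smooth curve $\gamma(t)=R_{\boldsymbol\delta}(t\boldsymbol\xi)$, the parallel-transport equation $\dot{\boldsymbol V}=-\boldsymbol n\,(\dot{\boldsymbol n}^\top\boldsymbol V)$ is a linear ODE with coefficients that are smooth in $(t,\boldsymbol\delta,\boldsymbol\xi)$. It therefore has a unique solution on $[0,1]$ that depends smoothly, in particular continuously, on the initial data and parameters. Tangency is preserved because $\frac{d}{dt}(\boldsymbol n^\top\boldsymbol V)=\dot{\boldsymbol n}^\top\boldsymbol V-\dot{\boldsymbol n}^\top\boldsymbol V=0$. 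The norm is preserved because $\frac{d}{dt}\|\boldsymbol V\|^2=-2(\boldsymbol V^\top\boldsymbol n)(\cdot)=0$. Thus $\mathcal T_{\boldsymbol\delta,\boldsymbol\xi}$ is a linear isometry. It is a vector transport associated with $R$: it is linear, it covers $R$, and $\mathcal T_{\boldsymbol\delta,\boldsymbol 0}=\mathrm{id}$.
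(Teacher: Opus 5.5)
Your proposal is correct and follows the same route as the paper. Compactness of $\mathcal M_c$ comes from (B3) plus closedness of the level set. A smooth nearest-point projection on a tubular neighborhood, with $\mathrm D\Pi(\boldsymbol\delta)=\mathsf P_{\boldsymbol\delta}$, gives the retraction property. Levi-Civita parallel transport along $t\mapsto R_{\boldsymbol\delta}(t\boldsymbol\xi)$ gives an isometric, continuously varying transport. The only difference is that you prove the tubular-neighborhood theorem and the tangency and norm preservation of the transport ODE explicitly, where the paper cites these standard facts. Your nearest-point argument also shows that $\mathcal N$ is exactly the open $\epsilon$-neighborhood of $\mathcal M_c$, which is what justifies $\boldsymbol\delta+\boldsymbol\xi\in\mathcal N$ for $\|\boldsymbol\xi\|<\epsilon$.
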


\begin{proof}
By Lemma~\ref{lem:generic-regularity-induced} and Assumption (B4),
$\mathcal M_c$ is a smooth embedded hypersurface. Assumption (B3) implies that it
is contained in a compact subset of $\mathcal D$. Since $\mathcal M_c$ is the
inverse image of the closed set $\{c^2\}$ under the continuous map $G$, it is
closed in that compact set and hence compact. The tubular neighborhood theorem
therefore gives an open neighborhood $\mathcal N$ of $\mathcal M_c$ on which
the nearest-point projection
$\Pi:\mathcal N\to\mathcal M_c$ is smooth.

For $\boldsymbol{\delta}\in\mathcal M_c$,
$$
R_{\boldsymbol{\delta}}(\boldsymbol0)=\Pi(\boldsymbol{\delta})=\boldsymbol{\delta}.
$$
The derivative of the nearest-point projection at a point of the embedded
submanifold is the orthogonal projection onto the tangent space. Therefore
$$
\mathrm D R_{\boldsymbol{\delta}}(\boldsymbol0)[\boldsymbol\xi]
=
\mathsf{P}_{\boldsymbol{\delta}}\boldsymbol\xi
=
\boldsymbol\xi,
\qquad
\boldsymbol\xi\in T_{\boldsymbol{\delta}}\mathcal M_c,
$$
which verifies the retraction property.

The curve $t\mapsto R_{\boldsymbol{\delta}}(t\boldsymbol\xi)$ is smooth for
$\boldsymbol\xi$ sufficiently small. Parallel transport with respect to the
Levi-Civita connection induced by the Euclidean metric maps tangent vectors at
$\boldsymbol{\delta}$ linearly to tangent vectors at
$R_{\boldsymbol{\delta}}(\boldsymbol\xi)$ and preserves the Riemannian norm.
Smooth dependence of solutions to the parallel-transport ordinary differential
equation on the initial condition and on the curve gives continuity in all
arguments. This proves the claim.
\end{proof}

\subsection{Convergence proof}

Let
$$
m_{\boldsymbol{\delta}}(\boldsymbol x)
:=
\mathbb{E}_{g_{\boldsymbol{\delta}}}
[\mu(\boldsymbol X,\boldsymbol W)\mid \boldsymbol X=\boldsymbol x]
=
\frac{
\eta_{\boldsymbol{\delta}}(\boldsymbol x)}
{\nu_{\boldsymbol{\delta}}(\boldsymbol x)},
$$
where
$$
\eta_{\boldsymbol{\delta}}(\boldsymbol x)
:=
\mathbb{E}[
\mu(\boldsymbol x,\boldsymbol W)\exp(\boldsymbol{\delta}^{\top}\boldsymbol W)
\mid \boldsymbol X=\boldsymbol x].
$$
Then the target objective is
$$
\psi(\boldsymbol{\delta})
=
\mathbb{E}\{m_{\boldsymbol{\delta}}(\boldsymbol X)\}.
$$
\begin{lemma}[Smoothness of the objective]
\label{lem:objective-regularity-induced}
Let $K\Subset\mathcal D$ be compact. Suppose that for some constants
$C_\mu,p<\infty$,
$$
|\mu(\boldsymbol x,\boldsymbol w)|
\le
C_\mu(1+\|\boldsymbol w\|^p),
$$
and that Assumption {\rm(B1)} holds for the moment orders required by
the first and second derivatives below. Then $\psi$ is $C^2$ on an open neighborhood of $K$, and
$\nabla\psi$ is Lipschitz on $K$.
\end{lemma}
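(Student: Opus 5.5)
We will show that $m_{\boldsymbol{\delta}}(\boldsymbol x)=\eta_{\boldsymbol{\delta}}(\boldsymbol x)/\nu_{\boldsymbol{\delta}}(\boldsymbol x)$ is $C^3$ in $\boldsymbol{\delta}$ for almost every $\boldsymbol x$. We then produce an integrable envelope for its first three $\boldsymbol{\delta}$-derivatives, uniformly over a slightly enlarged compact set, and conclude by dominated differentiation under $\mathbb{E}$. This is the same strategy as in the proof of Lemma~\ref{lem:conditional-induced-derivatives}.

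\textbf{Step 1: enlarge the compact set.} Since $\mathcal D$ is open and $K\Subset\mathcal D$, choose $\epsilon>0$ such that the closed $\epsilon$-neighborhood $K^\epsilon$ of $K$ is still a compact subset of $\mathcal D$. The open $\epsilon$-neighborhood of $K$ is the neighborhood on which $C^2$ will be proved.

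\textbf{Step 2: pointwise analyticity of $\nu$ and $\eta$.} Fix $\boldsymbol x\in\mathcal X_0$. We repeat the power-series argument from Lemma~\ref{lem:conditional-induced-derivatives} with the extra weight $\mu(\boldsymbol x,\boldsymbol W)$. By the growth bound $|\mu|\le C_\mu(1+\|\boldsymbol W\|^p)$, the dominating function $\sum_{\boldsymbol s}\exp\{(\boldsymbol{\delta}_0+\rho\boldsymbol s)^\top\boldsymbol W\}(1+\|\boldsymbol W\|^p)$ has finite conditional expectation by (B1) with $r=p$. Hence both $\nu_{\boldsymbol{\delta}}(\boldsymbol x)$ and $\eta_{\boldsymbol{\delta}}(\boldsymbol x)$ are real-analytic on $\mathcal D$. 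Since $\nu>0$, the ratio $m_{\boldsymbol{\delta}}(\boldsymbol x)$ is real-analytic as well.

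\textbf{Step 3: explicit derivatives.} Differentiating under the conditional expectation gives cumulant-type formulas:
\begin{align*}
\mathrm D m_{\boldsymbol{\delta}}(\boldsymbol x)[\boldsymbol h]&=\operatorname{Cov}_{g_{\boldsymbol{\delta}}}\{\mu(\boldsymbol x,\boldsymbol W),\boldsymbol h^\top\boldsymbol W\mid\boldsymbol x\},\\
\mathrm D^2 m_{\boldsymbol{\delta}}(\boldsymbol x)[\boldsymbol h_1,\boldsymbol h_2]&=\mathbb{E}_{g_{\boldsymbol{\delta}}}\bigl[\{\mu-m_{\boldsymbol{\delta}}\}\{\boldsymbol h_1^\top(\boldsymbol W-\boldsymbol\mu^W_{\boldsymbol{\delta}})\}\{\boldsymbol h_2^\top(\boldsymbol W-\boldsymbol\mu^W_{\boldsymbol{\delta}})\}\mid\boldsymbol x\bigr].
\end{align*}
The third derivative is a joint cumulant of the same form, of order four.

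Each tilted moment $\mathbb{E}_{g_{\boldsymbol{\delta}}}[|\mu|^a\|\boldsymbol W\|^b\mid\boldsymbol x]$ is at most $C\,A_{K^\epsilon,r}(\boldsymbol x)/D_{K^\epsilon}(\boldsymbol x)$ for $r=ap+b$. Expanding products of centered terms therefore bounds every derivative of order at most three by a finite polynomial in $A_{K^\epsilon,r}(\boldsymbol X)/D_{K^\epsilon}(\boldsymbol X)$ for finitely many $r$. The bound is uniform over $\boldsymbol{\delta}\in K^\epsilon$.

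\textbf{Step 4: integrate.} By the strengthened moment condition in (B1), every such polynomial envelope $E(\boldsymbol X)$ is integrable. Dominated differentiation then gives that $\psi(\boldsymbol{\delta})=\mathbb{E}\{m_{\boldsymbol{\delta}}(\boldsymbol X)\}$ is $C^2$ on the open $\epsilon$-neighborhood of $K$. Continuity of the second derivative follows from dominated convergence applied to the continuous integrands.

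\textbf{Step 5: Lipschitz gradient.} For $\boldsymbol{\delta},\boldsymbol{\delta}'\in K$ with $\|\boldsymbol{\delta}-\boldsymbol{\delta}'\|<\epsilon$, the segment between them lies in $K^\epsilon$. The mean value theorem then gives $\|\nabla\psi(\boldsymbol{\delta})-\nabla\psi(\boldsymbol{\delta}')\|\le \sup_{K^\epsilon}\|\nabla^2\psi\|\,\|\boldsymbol{\delta}-\boldsymbol{\delta}'\|$, and this supremum is finite by Step 4. For pairs with $\|\boldsymbol{\delta}-\boldsymbol{\delta}'\|\ge\epsilon$, use $2\sup_K\|\nabla\psi\|/\epsilon$ as the constant. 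Only the second-order envelope is needed here, so the third derivative in Step 3 can be dropped if desired.

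\textbf{Main obstacle.} The hardest step is the uniform envelope in Step 3. Quotient-rule derivatives of $\eta/\nu$ involve powers of $1/\nu$ as well as products of tilted moments. These are controlled by the ratio $A/D$ rather than by $A$ alone, which is exactly why (B1) is stated with moments of $A_{K,r}/D_K$ of all orders $m$. I would organize the bookkeeping by writing each derivative as a sum of products of at most three tilted moments. Each tilted moment is bounded by $A/D$, so by Hölder's inequality it suffices to have $\mathbb{E}[(A/D)^3]<\infty$.
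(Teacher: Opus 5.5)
Your proposal is correct and follows essentially the same route as the paper: you differentiate $m_{\boldsymbol\delta}=\eta_{\boldsymbol\delta}/\nu_{\boldsymbol\delta}$ under the conditional expectation, dominate its first and second derivatives uniformly on a compact enlargement of $K$ by polynomials in $A_{K,r}/D_K$ (using the growth bound on $\mu$ and (B1)), differentiate $\psi$ under $\mathbb{E}$, and finish with the mean value theorem. Your cumulant form of the derivatives is just the quotient rule written compactly, and your near/far split in Step~5 is a welcome tightening: the paper applies the mean value theorem directly on $K$, which tacitly requires the segments between points of $K$ to stay in the region where $\nabla^2\psi$ is bounded.
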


\begin{proof}
There exists an open set $U$ such that $K\subset U\Subset\mathcal D$.
It suffices to establish the derivative bounds on an arbitrary compact subset of
$U$. For $\boldsymbol{\delta}$ in such a compact set and direction
$\boldsymbol h$,
conditional dominated convergence gives
$$
\mathrm D\eta_{\boldsymbol{\delta}}(\boldsymbol x)[\boldsymbol h]
=
\mathbb{E}[
\mu(\boldsymbol x,\boldsymbol W)
\boldsymbol h^\top\boldsymbol W
\exp(\boldsymbol{\delta}^{\top}\boldsymbol W)
\mid \boldsymbol X=\boldsymbol x],
$$
and for directions $\boldsymbol h_1,\boldsymbol h_2$,
$$
\mathrm D^2\eta_{\boldsymbol{\delta}}(\boldsymbol x)
[\boldsymbol h_1,\boldsymbol h_2]
=
\mathbb{E}[
\mu(\boldsymbol x,\boldsymbol W)
(\boldsymbol h_1^\top\boldsymbol W)
(\boldsymbol h_2^\top\boldsymbol W)
\exp(\boldsymbol{\delta}^{\top}\boldsymbol W)
\mid \boldsymbol X=\boldsymbol x].
$$
The corresponding derivatives of $\nu_{\boldsymbol{\delta}}$ are given in
Lemma~\ref{lem:conditional-induced-derivatives}. Since
$\nu_{\boldsymbol{\delta}}(\boldsymbol x)>0$, the quotient rule gives first and
second derivatives of
$m_{\boldsymbol{\delta}}(\boldsymbol x)
=\eta_{\boldsymbol{\delta}}(\boldsymbol x)/\nu_{\boldsymbol{\delta}}(\boldsymbol x)$.
The polynomial growth of $\mu$, combined with the strengthened exponential-moment
and denominator envelope in (B1), provides an integrable bound for
$$
\sup_{\boldsymbol{\delta}\in K}|m_{\boldsymbol{\delta}}(\boldsymbol X)|,
\qquad
\sup_{\boldsymbol{\delta}\in K}\|\nabla_{\boldsymbol{\delta}}
m_{\boldsymbol{\delta}}(\boldsymbol X)\|,
\qquad
\sup_{\boldsymbol{\delta}\in K}\|\nabla^2_{\boldsymbol{\delta}}
m_{\boldsymbol{\delta}}(\boldsymbol X)\|.
$$
Dominated differentiation yields
$$
\nabla\psi(\boldsymbol{\delta})
=
\mathbb{E}[\nabla_{\boldsymbol{\delta}}
m_{\boldsymbol{\delta}}(\boldsymbol X)],
\qquad
\nabla^2\psi(\boldsymbol{\delta})
=
\mathbb{E}[\nabla^2_{\boldsymbol{\delta}}
m_{\boldsymbol{\delta}}(\boldsymbol X)].
$$
The displayed envelope implies
$\sup_{\boldsymbol{\delta}\in K}\|\nabla^2\psi(\boldsymbol{\delta})\|<\infty$.
The mean-value theorem then gives Lipschitz continuity of $\nabla\psi$ on $K$.
\end{proof}

\begin{corollary}[Compactness and line-search regularity]
\label{cor:compact-linesearch-induced}
Assume {\rm(B1)}-{\rm(B4)} and the conditions of
Lemma~\ref{lem:objective-regularity-induced} on a compact set
$K_c\Subset\mathcal D$ containing $\mathcal M_c$. Then:
\begin{enumerate}
\item[\textnormal{(i)}] $\mathcal M_c$ is compact.
\item[\textnormal{(ii)}] The Riemannian gradient
$\operatorname{grad}\psi(\boldsymbol{\delta})
=\mathsf{P}_{\boldsymbol{\delta}}\nabla\psi(\boldsymbol{\delta})$
is Lipschitz on $\mathcal M_c$.
\item[\textnormal{(iii)}] For every descent direction
$\boldsymbol\eta\in T_{\boldsymbol{\delta}}\mathcal M_c$, if the retraction
curve $\gamma(t)=R_{\boldsymbol{\delta}}(t\boldsymbol\eta)$ satisfies the
one-dimensional domain and boundedness conditions in
\citep[Proposition~1]{ring2012optimization}, then a positive step size satisfying the Wolfe conditions exists
along that curve.
\end{enumerate}
\end{corollary}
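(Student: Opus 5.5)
The three parts of Corollary~\ref{cor:compact-linesearch-induced} are consequences of the geometric lemmas already established, and I would prove them in order.

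For (i), I would repeat the argument used in the proof of Lemma~\ref{lem:projection-retraction-induced}. By Lemma~\ref{lem:DG-induced-gelbrich}, $G$ is continuous on $\mathcal D$, so $\mathcal M_c=G^{-1}(\{c^2\})$ is relatively closed in $\mathcal D$. By (B3), $\mathcal M_c\subset K_c\Subset\mathcal D$. Hence $\mathcal M_c=K_c\cap G^{-1}(\{c^2\})$ is a closed subset of the compact set $K_c$, and is therefore compact.

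For (ii), I would write
\[
\operatorname{grad}\psi(\boldsymbol{\delta})=\nabla\psi(\boldsymbol{\delta})-\boldsymbol n(\boldsymbol{\delta})\boldsymbol n(\boldsymbol{\delta})^\top\nabla\psi(\boldsymbol{\delta})
\]
and show that each factor is Lipschitz and bounded on $\mathcal M_c$.
\begin{enumerate}
\item By Lemma~\ref{lem:objective-regularity-induced} applied on $K_c$, $\nabla\psi$ is Lipschitz on $K_c$, and it is bounded there because it is continuous on a compact set.
\item By Lemma~\ref{lem:DG-induced-gelbrich}, $\nabla G$ is $C^\infty$, hence Lipschitz on $K_c$.
\item By (B4) and compactness from (i), $\inf_{\mathcal M_c}\|\nabla G\|=:\kappa>0$. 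The map $\boldsymbol v\mapsto \boldsymbol v/\|\boldsymbol v\|$ is Lipschitz on $\{\|\boldsymbol v\|\ge\kappa\}$ with constant $2/\kappa$, so $\boldsymbol n$ is Lipschitz on $\mathcal M_c$ with $\|\boldsymbol n\|=1$.
\end{enumerate}
Products and sums of bounded Lipschitz maps are Lipschitz, which gives (ii). The only subtle point is that Lipschitz continuity is taken with respect to the ambient Euclidean distance on $\mathcal M_c$. For a compact embedded submanifold this distance is equivalent to the intrinsic distance, so the choice does not affect the conclusion.

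For (iii), I would reduce to the one-dimensional function $\phi(t)=\psi(R_{\boldsymbol{\delta}}(t\boldsymbol\eta))$, which is $C^1$ where defined by Lemmas~\ref{lem:objective-regularity-induced} and \ref{lem:projection-retraction-induced}. Its derivative at zero is
\[
\phi'(0)=\langle\nabla\psi(\boldsymbol{\delta}),\mathrm DR_{\boldsymbol{\delta}}(\boldsymbol0)\boldsymbol\eta\rangle=\langle\operatorname{grad}\psi(\boldsymbol{\delta}),\boldsymbol\eta\rangle<0,
\]
using $\mathrm DR_{\boldsymbol{\delta}}(\boldsymbol0)=\mathsf P_{\boldsymbol{\delta}}$ on the tangent space and the descent property of $\boldsymbol\eta$. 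Boundedness below of $\phi$ follows from continuity of $\psi$ on the compact $\mathcal M_c$. Together with the assumed domain conditions, this places us in the hypotheses of \citep[Proposition~1]{ring2012optimization}, which yields a Wolfe step via the standard intermediate-value argument.

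The main obstacle I expect is the bookkeeping of domains in (iii). The retraction is only defined on the tubular neighborhood $\mathcal N$, so I would verify that the admissible $t$-interval is an interval containing $0$ on which $\phi$ is continuous. The statement explicitly assumes the one-dimensional domain conditions of \citep[Proposition~1]{ring2012optimization}, so I would only need to confirm that these conditions are compatible with compactness of $\mathcal M_c$; the rest is direct citation.
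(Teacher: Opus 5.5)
Your proposal is correct and follows the paper's proof essentially step for step. For (i) you use closedness of $\mathcal M_c$ inside the compact set $K_c$; for (ii) you get Lipschitz continuity of the unit normal from $\|\nabla G\|$ being bounded away from zero on $\mathcal M_c$ and combine it with a bounded, Lipschitz $\nabla\psi$; for (iii) you reduce to $h(t)=\psi(R_{\boldsymbol\delta}(t\boldsymbol\eta))$ with $h'(0)=\langle\operatorname{grad}\psi(\boldsymbol\delta),\boldsymbol\eta\rangle<0$ and cite the Wolfe-existence proposition. The only difference is that you obtain the lower bound on $h$ from continuity of $\psi$ on the compact set $\mathcal M_c$, which the retraction curve never leaves, whereas the paper invokes the assumed boundedness condition; this shows that part of the hypothesis holds automatically.
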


\begin{proof}
Part (i) was established in the proof of
Lemma~\ref{lem:projection-retraction-induced}: $\mathcal M_c$ is closed in the
compact set $K_c$.

For (ii), the map
$\boldsymbol{\delta}\mapsto\nabla G(\boldsymbol{\delta})$ is $C^1$, and
$\nabla G$ is bounded away from zero on the compact set $\mathcal M_c$ by
(B4). Hence
$\boldsymbol{\delta}\mapsto\boldsymbol n(\boldsymbol{\delta})$ and
$\boldsymbol{\delta}\mapsto \mathsf{P}_{\boldsymbol{\delta}}$ are Lipschitz on
$\mathcal M_c$. By Lemma~\ref{lem:objective-regularity-induced},
$\nabla\psi$ is Lipschitz on $K_c$ and bounded on $K_c$. For
$\boldsymbol{\delta},\boldsymbol{\delta}'\in\mathcal M_c$,
$$
\begin{aligned}
\|\operatorname{grad}\psi(\boldsymbol{\delta})
-\operatorname{grad}\psi(\boldsymbol{\delta}')\|
&\le
\|\mathsf{P}_{\boldsymbol{\delta}}
\{\nabla\psi(\boldsymbol{\delta})-\nabla\psi(\boldsymbol{\delta}')\}\|\\
&\quad+
\|(\mathsf{P}_{\boldsymbol{\delta}}-\mathsf{P}_{\boldsymbol{\delta}'})
\nabla\psi(\boldsymbol{\delta}')\|\\
&\le L\|\boldsymbol{\delta}-\boldsymbol{\delta}'\|
\end{aligned}
$$
for a finite constant $L$.

For (iii), let
$h(t)=\psi(R_{\boldsymbol{\delta}}(t\boldsymbol\eta))$. The stated domain
condition ensures that $h$ is defined and continuously differentiable on the
one-dimensional interval required by \citep[Proposition~1]{ring2012optimization}.
If $\boldsymbol\eta$ is a descent direction, then
$h'(0)=\langle\operatorname{grad}\psi(\boldsymbol{\delta}),
\boldsymbol\eta\rangle<0$. The stated boundedness condition gives the lower
bound required by that proposition. Hence a positive step size satisfying the Wolfe
conditions exists.
\end{proof}

\begin{lemma}[Nonemptiness and containment]
\label{lem:feasible-containment-induced}
Assume the conditions of Lemma~\ref{lem:local-quadratic-induced}. Let
$r_0\in(0,\rho]$. Then there exists $c_0>0$ such that, for any
$c\in(0,c_0)$, there exists
$\boldsymbol{\delta}_0\in\mathcal M_c\cap\overline B(\boldsymbol0,r_0)$.
If, in addition, Assumptions {\rm(B1)}-{\rm(B4)} hold, then every sequence
$\{\boldsymbol{\delta}_k\}_{k\ge0}\subset\mathcal M_c$ with initial point
$\boldsymbol{\delta}_0$ and
$\psi(\boldsymbol{\delta}_{k+1})\le\psi(\boldsymbol{\delta}_k)$ for all $k$ remains
in the compact sublevel
$$
\mathcal L
:=
\{\boldsymbol{\delta}\in\mathcal M_c:\,
\psi(\boldsymbol{\delta})\le \psi(\boldsymbol{\delta}_0)\}.
$$
\end{lemma}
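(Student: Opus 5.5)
The argument has two parts: existence of a feasible point near the origin via an intermediate-value argument on rays, and containment of monotone sequences via compactness of the sublevel set.

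\emph{Existence.} Fix $r_0\in(0,\rho]$ and any unit vector $\boldsymbol u\in\mathbb R^q$. Consider the scalar function $t\mapsto G(t\boldsymbol u)$ on $[0,r_0]$. By Lemma~\ref{lem:DG-induced-gelbrich}, $G$ is $C^\infty$ on $\mathcal D$, so this function is continuous; the closed ball $\overline B(\boldsymbol0,r_0)$ lies inside $\mathcal D$ after possibly shrinking $\rho$, since $\mathcal D$ is open and contains $\boldsymbol 0$. At $t=0$ we have $\boldsymbol{\mu}_{\boldsymbol0,W}=\boldsymbol{\mu}_W$ and $\boldsymbol{\Sigma}_{\boldsymbol0,W}=\boldsymbol{\Sigma}_W$, hence $G(\boldsymbol0)=0$. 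At $t=r_0$, Lemma~\ref{lem:local-quadratic-induced} gives $G(r_0\boldsymbol u)\ge\lambda_- r_0^2$. Set $c_0:=\sqrt{\lambda_-}\,r_0$. For $c\in(0,c_0)$ we have $0<c^2<\lambda_- r_0^2$, so the intermediate value theorem produces $t^*\in(0,r_0)$ with $G(t^*\boldsymbol u)=c^2$. Then $\boldsymbol{\delta}_0:=t^*\boldsymbol u$ lies in $\mathcal M_c\cap\overline B(\boldsymbol0,r_0)$. This step is routine.

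\emph{Containment.} Now assume (B1)--(B4). For any sequence in $\mathcal M_c$ with nonincreasing $\psi$-values starting at $\boldsymbol{\delta}_0$, induction gives $\psi(\boldsymbol{\delta}_k)\le\psi(\boldsymbol{\delta}_0)$ for every $k$. Hence every $\boldsymbol{\delta}_k$ lies in $\mathcal L$ by definition, and the sequence stays in $\mathcal L$.

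The remaining content is that $\mathcal L$ is compact. By Corollary~\ref{cor:compact-linesearch-induced}(i), $\mathcal M_c$ is compact, because it is closed in the compact set $K_c$ given by (B3). By Lemma~\ref{lem:objective-regularity-induced}, applied on $K_c$, $\psi$ is continuous on a neighborhood of $K_c$. Therefore $\mathcal L=\mathcal M_c\cap\psi^{-1}((-\infty,\psi(\boldsymbol{\delta}_0)])$ is a closed subset of a compact set, hence compact. It is also nonempty, since it contains $\boldsymbol{\delta}_0$.

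\emph{Main obstacle.} The delicate point is bookkeeping rather than analysis. First, the ray segment $\{t\boldsymbol u:0\le t\le r_0\}$ must lie in $\mathcal D$, so that the intermediate value argument uses the same $G$. Second, the $\boldsymbol{\delta}_0$ produced in the first part must actually belong to the $\mathcal M_c$ covered by (B3), which is automatic because $\mathcal M_c$ is defined as a subset of $\mathcal D$. Third, the conditions of Lemma~\ref{lem:objective-regularity-induced} need to be in force on $K_c$; I will state this explicitly as inherited from the running assumptions.
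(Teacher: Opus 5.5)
Your proposal is correct and follows the paper's proof essentially step for step: you apply the intermediate value theorem to $t\mapsto G(t\boldsymbol u)$ on $[0,r_0]$ with $c_0=\sqrt{\lambda_-}\,r_0$, then obtain compactness of $\mathcal L$ as a closed subset of the compact set $\mathcal M_c$ (Corollary~\ref{cor:compact-linesearch-induced}(i)) and conclude by the induction $\psi(\boldsymbol\delta_k)\le\psi(\boldsymbol\delta_0)$. One small correction: do not ``shrink $\rho$'', which could exclude the given $r_0\in(0,\rho]$; this is also unnecessary, because the bounds of Lemma~\ref{lem:local-quadratic-induced} on $\|\boldsymbol\delta\|\le\rho$ already presuppose $\overline B(\boldsymbol 0,\rho)\subset\mathcal D$.
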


\begin{proof}
By Lemma~\ref{lem:local-quadratic-induced}, for
$\|\boldsymbol{\delta}\|\le\rho$,
$$
\lambda_-\|\boldsymbol{\delta}\|^2
\le
G(\boldsymbol{\delta})
\le
\lambda_+\|\boldsymbol{\delta}\|^2.
$$
Fix any unit vector $\boldsymbol v$ and define $q(t)=G(t\boldsymbol v)$ on
$[0,r_0]$. Then $q$ is continuous, $q(0)=0$, and
$$
q(r_0)\ge\lambda_- r_0^2.
$$
Let $c_0=\sqrt{\lambda_-}\,r_0$. If $0<c<c_0$, then
$0<c^2<q(r_0)$, and the intermediate value theorem gives
$t^\ast\in(0,r_0)$ such that $G(t^\ast\boldsymbol v)=c^2$. Thus
$\boldsymbol{\delta}_0=t^\ast\boldsymbol v\in
\mathcal M_c\cap\overline B(\boldsymbol0,r_0)$.

For the containment claim, Corollary~\ref{cor:compact-linesearch-induced} gives compactness
of $\mathcal M_c$. Since $\psi$ is continuous on $\mathcal M_c$, the set
$\mathcal L$ is closed in a compact set and hence compact. The monotonicity condition implies
$\psi(\boldsymbol{\delta}_k)\le\psi(\boldsymbol{\delta}_0)$ for every $k$; hence
$\boldsymbol{\delta}_k\in\mathcal L$.
\end{proof}

For the Gelbrich level-set problem equipped with the projection retraction and
isometric vector transport in Lemma~\ref{lem:projection-retraction-induced}, we
use the cautious Riemannian BFGS framework described in
\citep{huang2018riemannian}: the quasi-Newton update is applied only when the
curvature condition is satisfied, and the line search is restricted to accepted
steps for which the projection retraction is defined. This yields the following
stationarity result.

\begin{theorem}[Cautious RBFGS stationarity]
\label{thm:rbfgs-stationarity-induced}
Suppose Assumptions {\rm(B1)}-{\rm(B4)} and the conditions of
Lemmas~\ref{lem:objective-regularity-induced} and
\ref{lem:feasible-containment-induced} hold. Let $\{\boldsymbol{\delta}_k\}$ denote
the sequence of iterates. Then
$$
\liminf_{k\to\infty}
\|\operatorname{grad}\psi(\boldsymbol{\delta}_k)\|
=0.
$$
Moreover, every accumulation point of a subsequence along which the Riemannian
gradient norm tends to zero is a Riemannian stationary point of $\psi$ on
$\mathcal M_c$.
\end{theorem}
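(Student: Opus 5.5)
The plan is to reduce the statement to the general convergence theorem for cautious Riemannian BFGS in \citep{huang2018riemannian}. That theorem gives $\liminf_k\|\operatorname{grad}\psi(\boldsymbol{\delta}_k)\|=0$ when four ingredients hold: (a) the iterates remain in a compact sublevel set on which the objective is bounded below; (b) the Riemannian gradient is Lipschitz there, measured through the retraction; (c) the retraction and an isometric vector transport are well defined and continuous; and (d) each accepted step satisfies the weak Wolfe conditions, with the update applied only when $\boldsymbol b_k^\top\boldsymbol a_k>0$. The work is to check each ingredient from the lemmas already in place.

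First, Lemma~\ref{lem:feasible-containment-induced} gives a feasible starting point $\boldsymbol{\delta}_0\in\mathcal M_c$. Wolfe steps are descent steps, so $\psi(\boldsymbol{\delta}_{k+1})\le\psi(\boldsymbol{\delta}_k)$, and the iterates therefore stay in the compact sublevel set $\mathcal L$. Since $\psi$ is continuous on the compact set $\mathcal M_c$, it is bounded below there. This settles (a).

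Second, Corollary~\ref{cor:compact-linesearch-induced}(ii) gives Lipschitz continuity of $\operatorname{grad}\psi$ in the ambient Euclidean sense. To convert this into the retraction-based Lipschitz condition used by Huang et al., I would use smoothness of $\Pi$ on the tubular neighbourhood from Lemma~\ref{lem:projection-retraction-induced}. Compactness of $\mathcal M_c$ then gives a uniform radius on which $R_{\boldsymbol{\delta}}$ is defined, with constants $a_0,a_1$ such that $\|R_{\boldsymbol{\delta}}(\boldsymbol\xi)-\boldsymbol{\delta}\|\le a_1\|\boldsymbol\xi\|$. The same lemma supplies the continuous isometric transport, which settles (c). Existence of Wolfe steps along retraction curves comes from Corollary~\ref{cor:compact-linesearch-induced}(iii); the line search is restricted to increments inside the uniform radius, which settles (d).

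Third, with these in place I would reproduce the standard Zoutendijk argument. Summing the sufficient decrease condition and combining it with the curvature condition and the Lipschitz bound gives
\[
\sum_k \cos^2\theta_k\,\|\operatorname{grad}\psi(\boldsymbol{\delta}_k)\|^2<\infty .
\]
Now suppose the conclusion fails, so $\|\operatorname{grad}\psi(\boldsymbol{\delta}_k)\|\ge\epsilon>0$ for all large $k$. The cautious update rule then yields $\|\boldsymbol b_k\|\le L\|\boldsymbol a_k\|$ together with a lower bound $\boldsymbol b_k^\top\boldsymbol a_k\ge \vartheta(\epsilon)\|\boldsymbol a_k\|^2$ on the updated steps. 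Byrd--Nocedal trace/determinant bounds on $\boldsymbol B_k$ then keep $\cos\theta_k$ bounded below on a positive fraction of the iterations. This contradicts the Zoutendijk sum, so $\liminf_k\|\operatorname{grad}\psi(\boldsymbol{\delta}_k)\|=0$.

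Finally, the accumulation-point claim is straightforward. Take a subsequence with gradient norm tending to zero. By compactness of $\mathcal L$ it has a limit $\bar{\boldsymbol\delta}\in\mathcal M_c$. Continuity of $\boldsymbol{\delta}\mapsto\mathsf P_{\boldsymbol{\delta}}\nabla\psi(\boldsymbol{\delta})$, which follows from (B4) and Lemma~\ref{lem:objective-regularity-induced}, then gives $\operatorname{grad}\psi(\bar{\boldsymbol\delta})=\boldsymbol 0$, so $\bar{\boldsymbol\delta}$ is a Riemannian stationary point.

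I expect the main obstacle to be the second step: the retraction is only locally defined, and Huang et al.'s hypotheses on transport along the retraction must hold uniformly. Compactness of $\mathcal M_c$ and the uniform tubular radius are the key tools for that. If a full verification proves delicate, I would cite their theorem with these checked hypotheses rather than redo the Byrd--Nocedal bookkeeping.
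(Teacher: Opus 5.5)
Your plan has the same structure as the paper's proof. You keep the iterates in the compact sublevel set $\mathcal L$ (Lemma~\ref{lem:feasible-containment-induced}), check the Lipschitz hypothesis of \citep{huang2018riemannian}, invoke their Theorem~4.2, and get stationarity of accumulation points from compactness of $\mathcal L$ and continuity of $\mathsf P_{\boldsymbol\delta}\nabla\psi(\boldsymbol\delta)$.

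The gap is in step (b). The hypothesis the paper verifies (Huang et al.'s Assumption~4.2) is Lipschitz differentiability \emph{with respect to the vector transport}:
\[
\|\mathcal T_{\boldsymbol\delta,\boldsymbol\xi}\operatorname{grad}\psi(\boldsymbol\delta)-\operatorname{grad}\psi(R_{\boldsymbol\delta}(\boldsymbol\xi))\|\le L_T\|\boldsymbol\xi\|,
\]
where $\mathcal T$ is the parallel transport of Lemma~\ref{lem:projection-retraction-induced}. Ambient Lipschitz continuity of $\operatorname{grad}\psi$ together with $\|R_{\boldsymbol\delta}(\boldsymbol\xi)-\boldsymbol\delta\|\le a_1\|\boldsymbol\xi\|$ only controls $\|\operatorname{grad}\psi(R_{\boldsymbol\delta}(\boldsymbol\xi))-\operatorname{grad}\psi(\boldsymbol\delta)\|$. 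Continuity and isometry of $\mathcal T$ (your item (c)) say nothing about how far $\mathcal T$ moves $\operatorname{grad}\psi(\boldsymbol\delta)$. You therefore also need $\|\mathcal T_{\boldsymbol\delta,\boldsymbol\xi}\boldsymbol v-\boldsymbol v\|\le C\|\boldsymbol\xi\|\|\boldsymbol v\|$ in ambient coordinates. This does hold here, for three reasons. A parallel field $V$ along a curve $\gamma$ on the hypersurface satisfies $\dot V=-\langle V,\dot{\boldsymbol n}\rangle\boldsymbol n$, so $\|\dot V\|\le\|\mathrm D\boldsymbol n\|\,\|\dot\gamma\|\,\|V\|$. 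The map $\boldsymbol n=\nabla G/\|\nabla G\|$ is $C^1$ near the compact $\mathcal M_c$ by (B4), so $\mathrm D\boldsymbol n$ is bounded there. The retraction curve has length $O(\|\boldsymbol\xi\|)$. Combining this estimate with boundedness of $\nabla\psi$ on $\mathcal M_c$ closes (b). The paper reaches the same conclusion intrinsically: it writes the transported difference as $-\int_0^1\mathcal T^{1\leftarrow t}\nabla_{\dot\gamma(t)}\operatorname{grad}\psi(\gamma(t))\,dt$, bounds the covariant derivative on a compact neighbourhood, and uses that parallel transport preserves norms.

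A second, smaller issue concerns the update rule. In (d) you apply the update whenever $\boldsymbol b_k^\top\boldsymbol a_k>0$. Your third step, however, uses $\boldsymbol b_k^\top\boldsymbol a_k\ge\vartheta(\epsilon)\|\boldsymbol a_k\|^2$. That bound only comes from the genuine cautious test $\boldsymbol b_k^\top\boldsymbol a_k/\|\boldsymbol a_k\|^2\ge\vartheta(\|\operatorname{grad}\psi(\boldsymbol\delta_k)\|)$ used by Huang et al. Mere positivity does not yield it, and plain BFGS with Wolfe steps is not guaranteed to converge on nonconvex objectives. State the algorithm with the $\vartheta$-test, which is what the appeal to the cautious framework requires, even though the paper's algorithm listing states only positivity. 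Once (a)--(d) are verified, you can drop the Zoutendijk/Byrd--Nocedal re-derivation and cite their theorem, as the paper does.
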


\begin{proof}
By Lemma~\ref{lem:feasible-containment-induced}, the iterates remain in the
compact initial sublevel set $\mathcal L$, so the compactness requirement in
\citep[Assumption~4.1]{huang2018riemannian} is satisfied. The projection
retraction is local, and the line search is restricted to accepted steps whose
retraction curves remain in its domain.

It remains to verify Lipschitz continuous differentiability with respect to the
specified vector transport. For an accepted tangent increment $\boldsymbol\xi$,
let $\gamma(t)=R_{\boldsymbol{\delta}}(t\boldsymbol\xi)$. Parallel transport gives
$$
\mathcal T_{\boldsymbol{\delta},\boldsymbol\xi}
\operatorname{grad}\psi(\boldsymbol{\delta})
-
\operatorname{grad}\psi(\gamma(1))
=
-\int_0^1
\mathcal T_{\gamma(t),\boldsymbol\xi}^{1\leftarrow t}
\nabla_{\dot\gamma(t)}
\operatorname{grad}\psi(\gamma(t))\,dt.
$$
On the compact tubular neighborhood containing the accepted retraction curves,
the covariant derivative of $\operatorname{grad}\psi$ is bounded and
$\|\dot\gamma(t)\|\le C\|\boldsymbol\xi\|$. Since parallel transport preserves
norms, there exists $L_T<\infty$ such that
$$
\left\|
\mathcal T_{\boldsymbol{\delta},\boldsymbol\xi}
\operatorname{grad}\psi(\boldsymbol{\delta})
-
\operatorname{grad}\psi(R_{\boldsymbol{\delta}}(\boldsymbol\xi))
\right\|
\le
L_T\|\boldsymbol\xi\|.
$$
Thus the condition in \citep[Assumption~4.2]{huang2018riemannian} holds.

Applying \citep[Theorem~4.2]{huang2018riemannian} gives
$$
\liminf_{k\to\infty}
\|\operatorname{grad}\psi(\boldsymbol{\delta}_k)\|
=0.
$$
Since $\mathcal L$ is compact, the sequence has accumulation points. If
$\boldsymbol{\delta}_{k_j}\to\boldsymbol{\delta}_\star$ and
$\|\operatorname{grad}\psi(\boldsymbol{\delta}_{k_j})\|\to0$, continuity of the
Riemannian gradient implies
$\operatorname{grad}\psi(\boldsymbol{\delta}_\star)=\boldsymbol0$. Thus
$\boldsymbol{\delta}_\star$ is Riemannian stationary.
\end{proof}

\section{Efficiency and Minimax Lower Bounds}
\label{sec:appendix-efficiency-minimax}

\subsection*{Assumptions and notation}

Let $\boldsymbol{Z}=(\boldsymbol{X},\boldsymbol{W},Y)$ with $\boldsymbol{X}\in\mathbb R^p$, $\boldsymbol{W}\in\mathbb R^q$, and $Y\in\mathbb R$.  Write $f(\boldsymbol{w}\mid \boldsymbol{x})$ for the exposure density, and define the exponential tilt
\begin{align*}
g_{\boldsymbol{\delta}}(\boldsymbol{w}\mid \boldsymbol{x})
&:=\frac{\exp\{\boldsymbol{\delta}^\top \boldsymbol{w}\}\,f(\boldsymbol{w}\mid \boldsymbol{x})}{\nu_{\boldsymbol{\delta}}(\boldsymbol{x})},\\
\nu_{\boldsymbol{\delta}}(\boldsymbol{x})
&:=\mathbb{E}_f\big[\exp\{\boldsymbol{\delta}^\top\boldsymbol{W}\}\mid \boldsymbol{X}=\boldsymbol{x}\big],
\end{align*}
and let
$$
r_{\boldsymbol{\delta}}(\boldsymbol{w},\boldsymbol{x})
:=\frac{g_{\boldsymbol{\delta}}(\boldsymbol{w}\mid \boldsymbol{x})}{f(\boldsymbol{w}\mid \boldsymbol{x})}.
$$
Let $\mu(\boldsymbol{x},\boldsymbol{w}):=\mathbb{E}[Y\mid \boldsymbol{X}=\boldsymbol{x},\boldsymbol{W}=\boldsymbol{w}]$.

Assumptions:
\begin{description}
\item[(C1) Bounded outcomes] There exists $M<\infty$ such that $|Y|\le M$ a.s. Hence $|\mu(\boldsymbol{x},\boldsymbol{w})|\le M$.
\item[(C2) Nondegenerate noise] $0<\underline{\sigma}^2\le \Var(Y\mid \boldsymbol{X},\boldsymbol{W})$ a.s.
\item[(C3) Bounded exposures] $\|\boldsymbol{W}\|\le M_s<\infty$ a.s.
\item[(C4) Bounded tilt region] Let $C_\delta>0$ be arbitrarily large but fixed, and consider tilt directions satisfying $\|\boldsymbol{\delta}\|M_s\le C_\delta$.
\item[(C5) Model richness] The class of distributions $\mathcal{P}$ is sufficiently rich. There exist constants $M_\phi,\varepsilon_0>0$ such that for any $P_0\in\mathcal P$, if $\phi\in L_\infty(P_0)$ with $\mathbb{E}_{P_0}[\phi(\boldsymbol Z)]=0$ and $\|\phi\|_\infty\le M_\phi$, then for all $|\varepsilon|\le\varepsilon_0$, the density $p_0(1+\varepsilon\phi)$ corresponds to a distribution $P_1\in\mathcal P$.
\end{description}

Identification and target.
Under the identification conditions stated in the main text,
\begin{align*}
\psi_P(\boldsymbol{\delta})
&=\mathbb{E}_P\Big[\int \mu_P(\boldsymbol{X},\boldsymbol{w})g_{\boldsymbol{\delta},P}(\boldsymbol{w}\mid \boldsymbol{X})\,d\boldsymbol{w}\Big],\\
\theta_P(\boldsymbol{\delta})
&:=\psi_P(\boldsymbol{\delta})-\psi_P(\boldsymbol{0}).
\end{align*}

\subsection{Minimax lower bound}

Define
$$
\boldsymbol{\Sigma}_{W\mid X}
:=
\mathbb{E}\{\Var(\boldsymbol{W}\mid \boldsymbol{X})\}.
$$

\begin{theorem}[Minimax lower bound]
\label{thm:minimax-projected}
Suppose \textnormal{(C1)}--\textnormal{(C5)} hold. If $\boldsymbol{\Sigma}_{W\mid X}$ is positive definite, then there exists a constant $C>0$, not depending on $n$ or $\boldsymbol{\delta}$, such that, for every nonzero $\boldsymbol{\delta}$ with $\|\boldsymbol{\delta}\|\le C_\delta/M_s$,
$$
\inf_{\hat\theta}\ \sup_{P\in\mathcal P}
\mathbb{E}_P\big[(\hat\theta-\theta_P(\boldsymbol{\delta}))^2\big]
\ge
C\,\frac{
\boldsymbol{\delta}^\top\boldsymbol{\Sigma}_{W\mid X}\boldsymbol{\delta}
}{n}.
$$
\end{theorem}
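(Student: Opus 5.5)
The approach is a two-point (Le Cam) argument along a one-dimensional parametric submodel through a base distribution $P_0\in\mathcal P$. We perturb in the direction of the efficient influence function of $\theta(\boldsymbol{\delta})$, truncated to be bounded. This direction makes the change in $\theta$ as large as possible relative to the Hellinger or KL separation of the two laws.

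Concretely, fix $P_0\in\mathcal P$ satisfying (C1)--(C4) with $\boldsymbol{\Sigma}_{W\mid X}\succ0$. Take the perturbation $P_1$ with density $p_0(1+\varepsilon\phi)$, where
$$
\phi(\boldsymbol Z)=\frac{\varphi_{\theta(\boldsymbol{\delta})}(\boldsymbol Z)}{\|\varphi_{\theta(\boldsymbol{\delta})}\|_\infty}\,M_\phi ,
$$
and set $\varepsilon=\kappa\,\|\varphi_{\theta(\boldsymbol\delta)}\|_\infty/(\sqrt n\,\|\varphi_{\theta(\boldsymbol\delta)}\|_{2})$, with $\kappa$ small.

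Boundedness of $\varphi_{\theta(\boldsymbol\delta)}=\varphi_{\psi(\boldsymbol\delta)}-\varphi_{\psi(\boldsymbol 0)}$ follows from (C1), (C3) and (C4). These give $e^{-2C_\delta}\le r_{\boldsymbol\delta}\le e^{2C_\delta}$ and $|\mu|\le M$, so $\|\varphi_\theta\|_\infty\le C'\|\boldsymbol\delta\|M_s$. This also uses $|r_{\boldsymbol\delta}-1|\lesssim\|\boldsymbol\delta\|M_s$, which holds because $r_{\boldsymbol\delta}$ is smooth in $\boldsymbol\delta$ and equals $1$ at $\boldsymbol 0$. By (C5), $P_1\in\mathcal P$ once $\varepsilon\le\varepsilon_0$, which we arrange for $n$ large (small $n$ is absorbed into $C$, since the risk is then bounded below trivially).

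The remaining steps are as follows.

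First, $\chi^2(P_1,P_0)=\varepsilon^2\|\phi\|_2^2$. Hence $\mathrm{KL}(P_1^n\|P_0^n)\le n\varepsilon^2\|\phi\|_2^2=\kappa^2M_\phi^2$, which is bounded, so the product measures are not distinguishable.

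Second, expand $\theta_{P_1}-\theta_{P_0}=\varepsilon\,\mathbb E_0[\varphi_\theta\phi]+R(\varepsilon)$. The functional is a smooth ratio of integrals of $p$, so the remainder satisfies $|R|\lesssim\varepsilon^2\|\phi\|_\infty\|\varphi_\theta\|_\infty$, using bounded nuisances and $\nu$ bounded away from $0$. Since $\mathbb E_0[\varphi_\theta\phi]=M_\phi\|\varphi_\theta\|_2^2/\|\varphi_\theta\|_\infty$, the first term equals $\kappa M_\phi\|\varphi_\theta\|_2/\sqrt n$. The remainder is of order $\kappa^2\|\varphi_\theta\|_\infty/ n$, up to constants, which is negligible for large $n$ relative to $\|\varphi_\theta\|_2/\sqrt n$, provided $\|\varphi_\theta\|_\infty/\|\varphi_\theta\|_2$ is bounded. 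That ratio is bounded by the upper bound $\|\varphi_\theta\|_\infty\lesssim\|\boldsymbol\delta\|$ together with the lower bound below.

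Third, Le Cam's lemma gives a minimax risk of at least $c\,(\theta_{P_1}-\theta_{P_0})^2\ge c'\,\|\varphi_\theta\|_2^2/n$. Theorem~\ref{thm:var-iff} then supplies $\|\varphi_\theta\|_2^2\ge c_{\mathrm{low}}\,\boldsymbol\delta^\top\boldsymbol\Sigma_{W\mid X}\boldsymbol\delta$, which yields the claim.

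The main obstacle is the lower bound $\operatorname{Var}(\varphi_\theta)\gtrsim\boldsymbol\delta^\top\boldsymbol\Sigma_{W\mid X}\boldsymbol\delta$, if Theorem~\ref{thm:var-iff} is not taken as given. I would prove it directly by isolating the $D_Y$ part, which is orthogonal to all functions of $(\boldsymbol X,\boldsymbol W)$:
$$
\operatorname{Var}(\varphi_\theta)\ge\mathbb E\big[(r_{\boldsymbol\delta}-1)^2\Var(Y\mid\boldsymbol X,\boldsymbol W)\big]\ge\underline\sigma^2\,\mathbb E[(r_{\boldsymbol\delta}-1)^2].
$$
Here (C2) enters. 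Next, $r_{\boldsymbol\delta}-1=e^{\boldsymbol\delta^\top\boldsymbol W}/\nu_{\boldsymbol\delta}-1$, and by the mean value theorem together with (C3)--(C4) its absolute value is comparable, with constants depending only on $C_\delta$, to $|\boldsymbol\delta^\top\boldsymbol W-\ell_{\boldsymbol\delta}(\boldsymbol X)|$. The conditional second moment of the latter is at least $\Var(\boldsymbol\delta^\top\boldsymbol W\mid\boldsymbol X)$. Taking expectations gives $\mathbb E[(r_{\boldsymbol\delta}-1)^2]\ge c\,\boldsymbol\delta^\top\boldsymbol\Sigma_{W\mid X}\boldsymbol\delta$, uniformly over the tilt region. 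This uniformity in $\boldsymbol\delta$ is exactly what makes $C$ independent of $\boldsymbol\delta$.
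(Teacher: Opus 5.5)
Your argument is correct, but it takes a different route from the paper. The paper perturbs only the conditional law of $Y$ given $(\boldsymbol X,\boldsymbol W)$. Its direction is $D_{\boldsymbol\delta}=(r_{\boldsymbol\delta}-1)(Y-\mu)$, normalized by $\sqrt{V_{\boldsymbol\delta}}$, where $V_{\boldsymbol\delta}=\mathbb E[(r_{\boldsymbol\delta}-1)^2\Var(Y\mid\boldsymbol X,\boldsymbol W)]$. Since $\mathbb E_{P_0}[\phi\mid\boldsymbol X,\boldsymbol W]=0$, the exposure law and hence $r_{\boldsymbol\delta}$ do not move, and $\theta$ is linear in $\mu$ along the path. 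The separation $\theta_{P_1}-\theta_{P_0}=\varepsilon_n a_*\sqrt{V_{\boldsymbol\delta}}$ is therefore exact, and there is no remainder to control.

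Your full-EIF direction also moves the $(\boldsymbol X,\boldsymbol W)$-law, which forces a second-order expansion. In return you obtain the intermediate bound $\gtrsim\Var\{\varphi_{\theta(\boldsymbol\delta)}\}/n$, i.e.\ a local minimax risk of the order of the efficiency bound. This is slightly more informative than the paper's $\gtrsim V_{\boldsymbol\delta}/n$, although both reach the same final statement once (C2) is invoked. Your bound $\mathbb E[(r_{\boldsymbol\delta}-1)^2\mid\boldsymbol X]\ge e^{-4C_\delta}\Var(\boldsymbol\delta^\top\boldsymbol W\mid\boldsymbol X)$, obtained via $|e^u-1|\ge e^{-2C_\delta}|u|$, is a valid substitute for the paper's independent-copy argument, with the same constant.

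The remainder is the step you should justify more carefully, because the key point is that it carries a factor $\|\boldsymbol\delta\|$. A generic Taylor bound for a smooth ratio functional gives only $O(\varepsilon^2M_\phi^2)=O(\kappa^2/n)$. That does not beat the main term $\asymp\kappa\|\boldsymbol\delta\|/\sqrt n$ when $\|\boldsymbol\delta\|\lesssim n^{-1/2}$, and ``bounded nuisances and $\nu$ bounded away from zero'' does not by itself supply the factor.

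The factor does hold. Running the algebra of Lemma~\ref{lem:R2-finite} with $(P_0,P_1)$ in place of $(\widehat P,P)$ gives exactly $\theta_{P_1}-\theta_{P_0}=\varepsilon\,\mathbb E_{P_0}[\varphi_{\theta(\boldsymbol\delta)}\phi]+\mathbb E_{P_1}[(r_{\boldsymbol\delta,P_0}-r_{\boldsymbol\delta,P_1})(m_{\boldsymbol\delta,P_0}-m_{\boldsymbol\delta,P_1})]$; the part $\psi(\boldsymbol 0)=\mathbb E[Y]$ is linear and contributes no remainder. Moreover, $\nu_{\boldsymbol\delta,P_1}-\nu_{\boldsymbol\delta,P_0}=\mathbb E_{P_0}[(e^{\boldsymbol\delta^\top\boldsymbol W}-1)(L-1)\mid\boldsymbol X]$, where $L$ is the conditional likelihood ratio of $\boldsymbol W$ given $\boldsymbol X$ and $|L-1|\lesssim\varepsilon M_\phi$. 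Hence $|r_{\boldsymbol\delta,P_0}-r_{\boldsymbol\delta,P_1}|\lesssim\varepsilon M_\phi\|\boldsymbol\delta\|$, while $|m_{\boldsymbol\delta,P_0}-m_{\boldsymbol\delta,P_1}|\lesssim\varepsilon M_\phi M$, so the remainder is $O(\varepsilon^2M_\phi^2\|\boldsymbol\delta\|)$.

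Finally, small $n$ is not handled ``trivially''. Instead, because $\|\varphi_{\theta(\boldsymbol\delta)}\|_\infty/\|\varphi_{\theta(\boldsymbol\delta)}\|_2$ is bounded uniformly in $\boldsymbol\delta$, you can choose $\kappa$ small enough that, for every $n\ge1$, $\varepsilon\le\varepsilon_0$, $\varepsilon M_\phi\le 1/2$, and the remainder is at most half the main term.
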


\begin{proof}[Proof of Theorem~\ref{thm:minimax-projected}]
Fix $P_0\in\mathcal P$ and a nonzero $\boldsymbol{\delta}$ with $\|\boldsymbol{\delta}\|\le C_\delta/M_s$. All vector norms are Euclidean. All population quantities are evaluated under the baseline law $P_0$ unless explicitly indicated. Write
\begin{align*}
D_{\boldsymbol{\delta}}(\boldsymbol{Z})
&:=\{r_{\boldsymbol{\delta}}(\boldsymbol{W},\boldsymbol{X})-1\}
\{Y-\mu(\boldsymbol{X},\boldsymbol{W})\},\\
V_{\boldsymbol{\delta}}
&:=\mathbb{E}_{P_0}\{D_{\boldsymbol{\delta}}(\boldsymbol{Z})^2\}\\
&=\mathbb{E}_{P_0}\Big[
\{r_{\boldsymbol{\delta}}(\boldsymbol{W},\boldsymbol{X})-1\}^2
\Var(Y\mid \boldsymbol{X},\boldsymbol{W})
\Big],
\end{align*}
and
$$
Q_{\boldsymbol{\delta}}
:=
\boldsymbol{\delta}^\top\boldsymbol{\Sigma}_{W\mid X}\boldsymbol{\delta}.
$$
By positive definiteness, $Q_{\boldsymbol{\delta}}>0$.

\medskip\noindent
\textit{Ratio bounds and covariance scale.}
Let $T=\boldsymbol{\delta}^\top\boldsymbol{W}$ and $\tau=\|\boldsymbol{\delta}\|M_s\le C_\delta$.  Then
$$
e^{-\tau}\le e^T\le e^\tau,
\qquad
e^{-\tau}\le \nu_{\boldsymbol{\delta}}(\boldsymbol{X})\le e^\tau,
\qquad
e^{-2\tau}\le r_{\boldsymbol{\delta}}(\boldsymbol{W},\boldsymbol{X})\le e^{2\tau}.
$$
Thus
$$
\|r_{\boldsymbol{\delta}}-1\|_\infty
\le e^{2\tau}-1
\le 2e^{2C_\delta}M_s\|\boldsymbol{\delta}\|.
$$
Since $|Y-\mu(\boldsymbol{X},\boldsymbol{W})|\le 2M$ by \textnormal{(C1)},
$$
\|D_{\boldsymbol{\delta}}\|_\infty
\le 4Me^{2C_\delta}M_s\|\boldsymbol{\delta}\|.
$$

Let $T'$ be an independent copy of $T$ conditional on $\boldsymbol{X}$.  Conditional on $\boldsymbol X$,
\begin{align*}
\mathbb{E}_{P_0}\big[\{r_{\boldsymbol{\delta}}(\boldsymbol{W},\boldsymbol{X})-1\}^2\mid \boldsymbol{X}\big]
&=\Var_{P_0}\{r_{\boldsymbol{\delta}}(\boldsymbol{W},\boldsymbol{X})\mid \boldsymbol{X}\}\\
&=\frac{\Var_{P_0}(e^T\mid \boldsymbol{X})}{\nu_{\boldsymbol{\delta}}(\boldsymbol{X})^2}\\
&=\frac{\mathbb{E}_{P_0}\{(e^T-e^{T'})^2\mid \boldsymbol{X}\}}{2\nu_{\boldsymbol{\delta}}(\boldsymbol{X})^2}\\
&\ge e^{-4C_\delta}\Var_{P_0}(T\mid \boldsymbol{X}).
\end{align*}
Indeed, $|T|,|T'|\le C_\delta$ implies $|e^T-e^{T'}|\ge e^{-C_\delta}|T-T'|$, while $\nu_{\boldsymbol{\delta}}(\boldsymbol{X})\le e^{C_\delta}$.  Therefore, by \textnormal{(C2)},
\begin{align*}
V_{\boldsymbol{\delta}}
&\ge \underline{\sigma}^2\,
\mathbb{E}_{P_0}\big[\{r_{\boldsymbol{\delta}}(\boldsymbol{W},\boldsymbol{X})-1\}^2\big]\\
&\ge \underline{\sigma}^2 e^{-4C_\delta}\,
\mathbb{E}_{P_0}\{\Var_{P_0}(\boldsymbol{\delta}^\top\boldsymbol{W}\mid \boldsymbol{X})\}\\
&=\underline{\sigma}^2 e^{-4C_\delta}Q_{\boldsymbol{\delta}}.
\end{align*}
In particular, let $\lambda_{\min}:=\lambda_{\min}(\boldsymbol{\Sigma}_{W\mid X})$. Then
$$
\sqrt{V_{\boldsymbol{\delta}}}
\ge
\underline{\sigma}e^{-2C_\delta}\sqrt{\lambda_{\min}}\,
\|\boldsymbol{\delta}\|.
$$
Combining this lower bound with the preceding sup-norm bound gives
$$
\left\|\frac{D_{\boldsymbol{\delta}}}{\sqrt{V_{\boldsymbol{\delta}}}}\right\|_\infty
\le
\frac{4Me^{4C_\delta}M_s}{\underline{\sigma}\sqrt{\lambda_{\min}}}
=:B_*<\infty,
$$
where $B_*$ does not depend on $\boldsymbol{\delta}$.

\medskip\noindent
\textit{Two-point construction.}
Set
$$
a_*:=\min\{1,M_\phi/B_*\},
\qquad
\phi_{\boldsymbol{\delta}}
:=a_*\frac{D_{\boldsymbol{\delta}}}{\sqrt{V_{\boldsymbol{\delta}}}}.
$$
Then $\mathbb{E}_{P_0}[\phi_{\boldsymbol{\delta}}]=0$ and $\|\phi_{\boldsymbol{\delta}}\|_\infty\le M_\phi$.  For
$$
\kappa_0:=\min\Big\{\varepsilon_0,\frac{1}{2M_\phi},\frac12\Big\},
\qquad
\varepsilon_n:=\frac{\kappa_0}{\sqrt n},
$$
let $P_1$ have density $p_1=p_0(1+\varepsilon_n\phi_{\boldsymbol{\delta}})$.  By \textnormal{(C5)}, $P_1\in\mathcal P$.  Also $\int p_1=1$, and $p_1\ge p_0/2$.

Because $\mathbb{E}_{P_0}[\phi_{\boldsymbol{\delta}}\mid \boldsymbol{X},\boldsymbol{W}]=0$,
$$
p_1(\boldsymbol{x},\boldsymbol{w})
=p_0(\boldsymbol{x},\boldsymbol{w})\,
\mathbb{E}_{P_0}[1+\varepsilon_n\phi_{\boldsymbol{\delta}}(\boldsymbol Z)\mid \boldsymbol{X}=\boldsymbol{x},\boldsymbol{W}=\boldsymbol{w}]
=p_0(\boldsymbol{x},\boldsymbol{w}).
$$
Thus the joint law of $(\boldsymbol{X},\boldsymbol{W})$ is the same under $P_0$ and $P_1$.  Hence the exposure density and $r_{\boldsymbol{\delta}}$ are unchanged along this path.  Moreover,
\begin{align*}
\mu_{P_1}(\boldsymbol{x},\boldsymbol{w})-\mu_{P_0}(\boldsymbol{x},\boldsymbol{w})
&=\varepsilon_n\,\mathbb{E}_{P_0}\big[
Y\phi_{\boldsymbol{\delta}}(\boldsymbol Z)
\mid \boldsymbol{X}=\boldsymbol{x},\boldsymbol{W}=\boldsymbol{w}
\big]\\
&=\varepsilon_n a_*
\frac{r_{\boldsymbol{\delta}}(\boldsymbol{w},\boldsymbol{x})-1}{\sqrt{V_{\boldsymbol{\delta}}}}
\Var_{P_0}(Y\mid \boldsymbol{X}=\boldsymbol{x},\boldsymbol{W}=\boldsymbol{w}).
\end{align*}
Since
$$
\theta_P(\boldsymbol{\delta})
=
\mathbb{E}_{P}\big[\mu_P(\boldsymbol{X},\boldsymbol{W})
\{r_{\boldsymbol{\delta},P}(\boldsymbol{W},\boldsymbol{X})-1\}\big]
$$
whenever $r_{\boldsymbol{\delta},P}$ is formed from the exposure law of $P$. Since $r_{\boldsymbol{\delta},P_1}=r_{\boldsymbol{\delta},P_0}=r_{\boldsymbol{\delta}}$, it follows that
\begin{align*}
\theta_{P_1}(\boldsymbol{\delta})-\theta_{P_0}(\boldsymbol{\delta})
&=\varepsilon_n a_*\frac{1}{\sqrt{V_{\boldsymbol{\delta}}}}
\mathbb{E}_{P_0}\Big[
\{r_{\boldsymbol{\delta}}(\boldsymbol{W},\boldsymbol{X})-1\}^2
\Var(Y\mid \boldsymbol{X},\boldsymbol{W})
\Big]\\
&=\varepsilon_n a_*\sqrt{V_{\boldsymbol{\delta}}}.
\end{align*}
Finally,
$$
\chi^2(P_1,P_0)
=\mathbb{E}_{P_0}\{(\varepsilon_n\phi_{\boldsymbol{\delta}})^2\}
=a_*^2\varepsilon_n^2,
$$
and hence
$$
D_{\mathrm{KL}}(P_1\|P_0)
\le \chi^2(P_1,P_0)
=a_*^2\varepsilon_n^2.
$$

\medskip\noindent
\textit{Le Cam bound.}
The product measures satisfy
$$
D_{\mathrm{KL}}(P_1^{\otimes n}\|P_0^{\otimes n})
=nD_{\mathrm{KL}}(P_1\|P_0)
\le a_*^2\kappa_0^2
\le \kappa_0^2.
$$
Pinsker's inequality gives
$$
1-\mathrm{TV}(P_0^{\otimes n},P_1^{\otimes n})
\ge 1-\frac{\kappa_0}{\sqrt2}
=:c_{\mathrm{TV}}>0.
$$
For any estimator $\hat\theta$, Le Cam's two-point inequality yields
\begin{align*}
\sup_{P\in\{P_0,P_1\}}
\mathbb{E}_{P}\big[(\hat\theta-\theta_P(\boldsymbol{\delta}))^2\big]
&\ge
\frac18
\big\{\theta_{P_1}(\boldsymbol{\delta})-\theta_{P_0}(\boldsymbol{\delta})\big\}^2
\big\{1-\mathrm{TV}(P_0^{\otimes n},P_1^{\otimes n})\big\}\\
&\ge
\frac{c_{\mathrm{TV}}}{8}
\frac{\kappa_0^2a_*^2}{n}
V_{\boldsymbol{\delta}}\\
&\ge
\frac{c_{\mathrm{TV}}}{8}
\frac{\kappa_0^2a_*^2\underline{\sigma}^2e^{-4C_\delta}}{n}
Q_{\boldsymbol{\delta}}.
\end{align*}
Since $\{P_0,P_1\}\subset\mathcal P$, taking $\inf_{\hat\theta}$ and $\sup_{P\in\mathcal P}$ yields the same lower bound, proving the theorem with $C=(c_{\mathrm{TV}}/8)\kappa_0^2a_*^2\underline{\sigma}^2e^{-4C_\delta}$.
\end{proof}

\subsection{Variance bound for the efficient influence function}

Recall
$$
\boldsymbol{\Sigma}_{W\mid X}
:=
\mathbb{E}\{\Var(\boldsymbol{W}\mid \boldsymbol{X})\}.
$$
For the tilt proportional to $\exp(\boldsymbol{\delta}^\top\boldsymbol{W})$, the efficient influence function for $\psi(\boldsymbol{\delta})$ is
$$
\varphi_{\psi(\boldsymbol{\delta})}(\boldsymbol{Z})
=
r_{\boldsymbol{\delta}}(\boldsymbol{W},\boldsymbol{X})
\{Y-m_{\boldsymbol{\delta}}(\boldsymbol{X})\}
+m_{\boldsymbol{\delta}}(\boldsymbol{X})-\psi(\boldsymbol{\delta}),
$$
where
$$
m_{\boldsymbol{\delta}}(\boldsymbol{X})
:=
\mathbb{E}_{g_{\boldsymbol{\delta}}}\{\mu(\boldsymbol{X},\boldsymbol{W})\mid \boldsymbol{X}\}.
$$
Thus
$$
\varphi_{\theta(\boldsymbol{\delta})}
:=
\varphi_{\psi(\boldsymbol{\delta})}
-\varphi_{\psi(\boldsymbol{0})}.
$$

\begin{theorem}[Variance bound of the efficient influence function]
\label{thm:var-iff-appendix}
Suppose \textnormal{(C1)}--\textnormal{(C4)} hold. Then, for all nonzero $\boldsymbol{\delta}$ satisfying $\|\boldsymbol{\delta}\|\le C_\delta/M_s$, there exist constants $0<c_{\mathrm{low}}\le c_{\mathrm{up}}<\infty$, not depending on $\boldsymbol{\delta}$, such that
$$
c_{\mathrm{low}}\,
\boldsymbol{\delta}^\top\boldsymbol{\Sigma}_{W\mid X}\boldsymbol{\delta}
\le
\Var\!\big\{\varphi_{\theta(\boldsymbol{\delta})}(\boldsymbol{Z})\big\}
\le
c_{\mathrm{up}}\,
\boldsymbol{\delta}^\top\boldsymbol{\Sigma}_{W\mid X}\boldsymbol{\delta}.
$$
\end{theorem}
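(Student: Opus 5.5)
Since $r_{\boldsymbol 0}\equiv 1$ and $m_{\boldsymbol 0}(\boldsymbol X)=\mu$ averaged under $f$, I would first write the contrast explicitly:
\[
\varphi_{\theta(\boldsymbol\delta)}(\boldsymbol Z)=\{r_{\boldsymbol\delta}(\boldsymbol W,\boldsymbol X)-1\}\{Y-\mu(\boldsymbol X,\boldsymbol W)\}+A_{\boldsymbol\delta}(\boldsymbol X,\boldsymbol W)+B_{\boldsymbol\delta}(\boldsymbol X)-\theta(\boldsymbol\delta).
\]
Here $A_{\boldsymbol\delta}=r_{\boldsymbol\delta}(\mu-m_{\boldsymbol\delta})-(\mu-m_{\boldsymbol 0})$ and $B_{\boldsymbol\delta}=m_{\boldsymbol\delta}-m_{\boldsymbol 0}$. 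The first term is exactly $D_{\boldsymbol\delta}$ from the proof of Theorem~\ref{thm:minimax-projected}. It is conditionally mean zero given $(\boldsymbol X,\boldsymbol W)$, so it is orthogonal in $L_2$ to the remaining terms, which are all functions of $(\boldsymbol X,\boldsymbol W)$. Hence
\[
\Var\{\varphi_\theta\}=V_{\boldsymbol\delta}+\Var\{A_{\boldsymbol\delta}+B_{\boldsymbol\delta}\}.
\]

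\emph{Lower bound.} This follows at once: $\Var\{\varphi_\theta\}\ge V_{\boldsymbol\delta}\ge \underline\sigma^2e^{-4C_\delta}Q_{\boldsymbol\delta}$, which is the bound already established in the proof of Theorem~\ref{thm:minimax-projected} using (C2)--(C4).

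\emph{Upper bound.} I would bound each piece by a constant times $\mathbb E[\Var(\boldsymbol\delta^\top\boldsymbol W\mid\boldsymbol X)]$. The key pointwise fact is
\[
\mathbb E[(r_{\boldsymbol\delta}-1)^2\mid\boldsymbol X]=\Var(e^T\mid\boldsymbol X)/\nu_{\boldsymbol\delta}^2\le e^{4C_\delta}\Var(T\mid\boldsymbol X),
\]
with $T=\boldsymbol\delta^\top\boldsymbol W$. This holds by the mirror image of the earlier argument, using $|e^T-e^{T'}|\le e^{C_\delta}|T-T'|$ and $\nu_{\boldsymbol\delta}\ge e^{-C_\delta}$. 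Applying it to the noise term with (C1) gives
\[
V_{\boldsymbol\delta}\le 4M^2\,\mathbb E[(r_{\boldsymbol\delta}-1)^2]\le 4M^2e^{4C_\delta}Q_{\boldsymbol\delta}.
\]
For $B_{\boldsymbol\delta}$, write $m_{\boldsymbol\delta}-m_{\boldsymbol0}=\mathbb E[(r_{\boldsymbol\delta}-1)\mu\mid\boldsymbol X]$. Cauchy--Schwarz then gives $B_{\boldsymbol\delta}^2\le M^2\,\mathbb E[(r_{\boldsymbol\delta}-1)^2\mid\boldsymbol X]$, and the same bound follows after taking expectations; the centering constant $\theta$ only decreases the variance. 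For $A_{\boldsymbol\delta}$, rewrite it as
\[
A_{\boldsymbol\delta}=(r_{\boldsymbol\delta}-1)(\mu-m_{\boldsymbol\delta})-(m_{\boldsymbol\delta}-m_{\boldsymbol 0}).
\]
Since $|\mu-m_{\boldsymbol\delta}|\le 2M$, the first part is bounded by $2M|r_{\boldsymbol\delta}-1|$, and the second part is $B_{\boldsymbol\delta}$ again. Using $(a+b)^2\le 2a^2+2b^2$, everything is at most a constant times $\mathbb E[(r_{\boldsymbol\delta}-1)^2]\le e^{4C_\delta}Q_{\boldsymbol\delta}$. This gives $c_{\mathrm{up}}$ of the form $(\text{const})\,M^2e^{4C_\delta}$, uniform in $\boldsymbol\delta$.

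\emph{Main obstacle.} The delicate point is the clean reduction of $\Var(r_{\boldsymbol\delta}\mid\boldsymbol X)$ to $\Var(\boldsymbol\delta^\top\boldsymbol W\mid\boldsymbol X)$ in both directions, with constants free of $\boldsymbol\delta$. The symmetrization trick with an independent conditional copy $T'$, combined with the Lipschitz and inverse-Lipschitz bounds of $\exp$ on $[-C_\delta,C_\delta]$, handles this. The only other care needed is that the cross term between the noise part and the $(\boldsymbol X,\boldsymbol W)$-measurable parts vanishes, which uses only the tower property.
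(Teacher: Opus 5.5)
Your proof is correct and follows essentially the paper's argument: the orthogonal split of the noise term $(r_{\boldsymbol\delta}-1)(Y-\mu)$, the conditional-copy symmetrization with the two-sided Lipschitz bounds of $\exp$ on $[-C_\delta,C_\delta]$, (C2) for the lower bound and (C1) for the upper bound. The only difference is cosmetic. Because $\varphi_{\psi(\boldsymbol 0)}=Y-\psi(\boldsymbol 0)$, your $A_{\boldsymbol\delta}+B_{\boldsymbol\delta}$ collapses to $(r_{\boldsymbol\delta}-1)(\mu-m_{\boldsymbol\delta})$. The paper bounds that product directly by $2M|r_{\boldsymbol\delta}-1|$ and obtains $c_{\mathrm{up}}=8M^2e^{4C_\delta}$, whereas treating $A_{\boldsymbol\delta}$ and $B_{\boldsymbol\delta}$ separately only inflates the constant.
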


\begin{proof}
Since $\varphi_{\psi(\boldsymbol{0})}(\boldsymbol{Z})=Y-\psi(\boldsymbol{0})$,
\begin{align*}
\varphi_{\theta(\boldsymbol{\delta})}(\boldsymbol{Z})
&=
\{r_{\boldsymbol{\delta}}(\boldsymbol{W},\boldsymbol{X})-1\}
\{Y-\mu(\boldsymbol{X},\boldsymbol{W})\}\\
&\quad+
\{r_{\boldsymbol{\delta}}(\boldsymbol{W},\boldsymbol{X})-1\}
\{\mu(\boldsymbol{X},\boldsymbol{W})-m_{\boldsymbol{\delta}}(\boldsymbol{X})\}
-\theta(\boldsymbol{\delta}).
\end{align*}
The two random terms before centering are uncorrelated because
$$
\mathbb{E}\{Y-\mu(\boldsymbol{X},\boldsymbol{W})\mid \boldsymbol{X},\boldsymbol{W}\}=0.
$$
Therefore,
\begin{align*}
\Var\!\big\{\varphi_{\theta(\boldsymbol{\delta})}(\boldsymbol{Z})\big\}
&=
\mathbb{E}\Big[
\{r_{\boldsymbol{\delta}}(\boldsymbol{W},\boldsymbol{X})-1\}^2
\Var(Y\mid \boldsymbol{X},\boldsymbol{W})
\Big]\\
&\quad+
\Var\!\Big(
\{r_{\boldsymbol{\delta}}(\boldsymbol{W},\boldsymbol{X})-1\}
\{\mu(\boldsymbol{X},\boldsymbol{W})-m_{\boldsymbol{\delta}}(\boldsymbol{X})\}
\Big).
\end{align*}

Let $\boldsymbol{W}'$ be an independent copy of $\boldsymbol{W}$ conditional on $\boldsymbol{X}$.
Since $\mathbb{E}\{r_{\boldsymbol{\delta}}(\boldsymbol{W},\boldsymbol{X})\mid \boldsymbol{X}\}=1$,
\begin{align*}
\mathbb{E}\big[\{r_{\boldsymbol{\delta}}(\boldsymbol{W},\boldsymbol{X})-1\}^2\mid \boldsymbol{X}\big]
&=
\Var\{r_{\boldsymbol{\delta}}(\boldsymbol{W},\boldsymbol{X})\mid \boldsymbol{X}\}\\
&=
\frac{
\Var\{\exp(\boldsymbol{\delta}^\top\boldsymbol{W})\mid \boldsymbol{X}\}
}{
\nu_{\boldsymbol{\delta}}(\boldsymbol{X})^2
}\\
&=
\frac{
\mathbb{E}\big[
\{\exp(\boldsymbol{\delta}^\top\boldsymbol{W})-\exp(\boldsymbol{\delta}^\top\boldsymbol{W}')\}^2
\mid \boldsymbol{X}
\big]
}{
2\nu_{\boldsymbol{\delta}}(\boldsymbol{X})^2
}.
\end{align*}
By \textnormal{(C3)} and \textnormal{(C4)}, $|\boldsymbol{\delta}^\top\boldsymbol{W}|\le C_\delta$ and
$e^{-C_\delta}\le\nu_{\boldsymbol{\delta}}(\boldsymbol{X})\le e^{C_\delta}$.
The mean-value theorem gives
\begin{align*}
e^{-4C_\delta}
\Var(\boldsymbol{\delta}^\top\boldsymbol{W}\mid \boldsymbol{X})
&\le
\mathbb{E}\big[\{r_{\boldsymbol{\delta}}(\boldsymbol{W},\boldsymbol{X})-1\}^2\mid \boldsymbol{X}\big]\\
&\le
e^{4C_\delta}
\Var(\boldsymbol{\delta}^\top\boldsymbol{W}\mid \boldsymbol{X}).
\end{align*}
Taking expectations yields
$$
e^{-4C_\delta}\,
\boldsymbol{\delta}^\top\boldsymbol{\Sigma}_{W\mid X}\boldsymbol{\delta}
\le
\mathbb{E}\big[\{r_{\boldsymbol{\delta}}(\boldsymbol{W},\boldsymbol{X})-1\}^2\big]
\le
e^{4C_\delta}\,
\boldsymbol{\delta}^\top\boldsymbol{\Sigma}_{W\mid X}\boldsymbol{\delta}.
$$

The lower bound follows from \textnormal{(C2)}:
\begin{align*}
\Var\!\big\{\varphi_{\theta(\boldsymbol{\delta})}(\boldsymbol{Z})\big\}
&\ge
\mathbb{E}\Big[
\{r_{\boldsymbol{\delta}}(\boldsymbol{W},\boldsymbol{X})-1\}^2
\Var(Y\mid \boldsymbol{X},\boldsymbol{W})
\Big]\\
&\ge
\underline{\sigma}^2 e^{-4C_\delta}\,
\boldsymbol{\delta}^\top\boldsymbol{\Sigma}_{W\mid X}\boldsymbol{\delta}.
\end{align*}
For the upper bound, \textnormal{(C1)} implies
$\Var(Y\mid \boldsymbol{X},\boldsymbol{W})\le 4M^2$ and
$|\mu(\boldsymbol{X},\boldsymbol{W})-m_{\boldsymbol{\delta}}(\boldsymbol{X})|\le 2M$.
Hence
\begin{align*}
\Var\!\big\{\varphi_{\theta(\boldsymbol{\delta})}(\boldsymbol{Z})\big\}
&\le
4M^2\,\mathbb{E}\big[\{r_{\boldsymbol{\delta}}(\boldsymbol{W},\boldsymbol{X})-1\}^2\big]
+4M^2\,\mathbb{E}\big[\{r_{\boldsymbol{\delta}}(\boldsymbol{W},\boldsymbol{X})-1\}^2\big]\\
&\le
8M^2 e^{4C_\delta}\,
\boldsymbol{\delta}^\top\boldsymbol{\Sigma}_{W\mid X}\boldsymbol{\delta}.
\end{align*}
Thus the result holds with
$c_{\mathrm{low}}=\underline{\sigma}^2e^{-4C_\delta}$ and
$c_{\mathrm{up}}=8M^2e^{4C_\delta}$.
\end{proof}

\section{Convergence and normality}
\label{sec:appendix-convergence-normality}

This auxiliary result uses the assumptions and notation from Section~\ref{sec:appendix-efficiency-minimax} and translates rate conditions on the tilted nuisances $(m_{\boldsymbol{\delta}}, r_{\boldsymbol{\delta}})$ into standard $L_2$ rates for the outcome regression $\mu$ and the exposure density $f$.

\subsection{Tilted Nuisance Stability}

For a fixed finite tilt, the asymptotic linearity of the one-step estimator is governed by the tilt-specific nuisance functions
$$
m_{\boldsymbol{\delta}}(\boldsymbol{x}) \;=\; \mathbb{E}_{g_{\boldsymbol{\delta}}}\!\big[\mu(\boldsymbol{X},\boldsymbol{W}) \mid \boldsymbol{X}=\boldsymbol{x}\big],
\qquad
r_{\boldsymbol{\delta}}(\boldsymbol{w},\boldsymbol{x}) \;=\; \frac{g_{\boldsymbol{\delta}}(\boldsymbol{w} \mid \boldsymbol{x})}{f(\boldsymbol{w} \mid \boldsymbol{x})},
$$
The following reduction expresses the required $L_2$ control of these tilt-specific objects in terms of the observed-data nuisance components.

For any fixed finite tilt $\boldsymbol{\delta}$, it suffices to estimate the outcome regression and exposure density,
$$
\mu(\boldsymbol{x},\boldsymbol{w}) \;=\; \mathbb{E}[Y \mid \boldsymbol{X}=\boldsymbol{x}, \boldsymbol{W}=\boldsymbol{w}],
\qquad
f(\boldsymbol{w} \mid \boldsymbol{x}),
$$
which induce $m_{\boldsymbol{\delta}}$ and $r_{\boldsymbol{\delta}}$ through the formulas in Lemma~\ref{lem:reduction-mu-pi}.

Under bounded support for $\boldsymbol{W}$ and the finite-tilt condition $\|\boldsymbol{\delta}\| \le \Delta$, the maps $(\mu,f) \mapsto r_{\boldsymbol{\delta}}$ and $(\mu,f) \mapsto m_{\boldsymbol{\delta}}$ are Lipschitz in $L_2(P)$: there exist fixed finite constants $C_1(\Delta), C_2(\Delta)$ such that, for any estimators $(\widehat{\mu}, \widehat{f})$,
$$
\|\widehat{r}_{\boldsymbol{\delta}} - r_{\boldsymbol{\delta}}\|_2
\;\le\; C_1(\Delta)\,\|\widehat{f} - f\|_2,
\qquad
\|\widehat{m}_{\boldsymbol{\delta}} - m_{\boldsymbol{\delta}}\|_2
\;\le\; C_2(\Delta)\big(\|\widehat{\mu} - \mu\|_2 + \|\widehat{f} - f\|_2\big),
$$
where $(\widehat{m}_{\boldsymbol{\delta}}, \widehat{r}_{\boldsymbol{\delta}})$ are obtained from $(\widehat{\mu}, \widehat{f})$ by the same formulas as above; see Lemma~\ref{lem:reduction-mu-pi}. Consequently, the product condition
$$
\|\widehat{r}_{\boldsymbol{\delta}} - r_{\boldsymbol{\delta}}\|_2\,\|\widehat{m}_{\boldsymbol{\delta}} - m_{\boldsymbol{\delta}}\|_2 = o_P(n^{-1/2}),
$$
which ensures that the second-order remainder is negligible, is implied by the more interpretable requirement that both $\widehat{\mu}$ and $\widehat{f}$ converge at rate $n^{-1/4}$ in $L_2(P)$; see Corollary~\ref{cor:rates-mu-pi}. These $L_2$ rates for $(\widehat{\mu}, \widehat{f})$, and hence for $(\widehat{m}_{\boldsymbol{\delta}}, \widehat{r}_{\boldsymbol{\delta}})$, can be guaranteed under standard smoothness and complexity conditions by the highly adaptive lasso \citep{van2017generally}, which attains near-optimal convergence rates for a broad nonparametric class.

\begin{lemma}[Lipschitz stability of tilted nuisance functions]
\label{lem:reduction-mu-pi}
Suppose \textnormal{(C1)} and \textnormal{(C3)} hold. Let $f(\boldsymbol{w} \mid \boldsymbol{x})$ denote the conditional density of $\boldsymbol{W}$ given $\boldsymbol{X}=\boldsymbol{x}$, with support $\mathcal{W} \subset \mathbb{R}^q$ of finite Lebesgue measure $|\mathcal{W}|<\infty$. Suppose there exist constants $0 < f_{\min} \le f_{\max} < \infty$ such that
$$
f_{\min} \;\le\; f(\boldsymbol{w} \mid \boldsymbol{x}) \;\le\; f_{\max}
\qquad\text{for all $(\boldsymbol{x}, \boldsymbol{w})$ in the support of $(\boldsymbol{X}, \boldsymbol{W})$.}
$$
Let $\mu(\boldsymbol{x}, \boldsymbol{w}) = \mathbb{E}[Y \mid \boldsymbol{X}=\boldsymbol{x}, \boldsymbol{W}=\boldsymbol{w}]$, and fix $\Delta < \infty$. For any $\boldsymbol{\delta}$ with $\|\boldsymbol{\delta}\| \le \Delta$ define
\begin{align*}
\nu_{\boldsymbol{\delta}}(\boldsymbol{x})
&:=\int_{\mathcal{W}} \exp\{\boldsymbol{\delta}^\top \boldsymbol{w}\}\,f(\boldsymbol{w} \mid \boldsymbol{x})\,d\boldsymbol{w},\\
\eta_{\boldsymbol{\delta}}(\boldsymbol{x})
&:=\int_{\mathcal{W}} \exp\{\boldsymbol{\delta}^\top \boldsymbol{w}\}\,\mu(\boldsymbol{x}, \boldsymbol{w})\,f(\boldsymbol{w} \mid \boldsymbol{x})\,d\boldsymbol{w},
\end{align*}
and
\begin{align*}
m_{\boldsymbol{\delta}}(\boldsymbol{x})
&:= \frac{\eta_{\boldsymbol{\delta}}(\boldsymbol{x})}{\nu_{\boldsymbol{\delta}}(\boldsymbol{x})},\\
r_{\boldsymbol{\delta}}(\boldsymbol{w}, \boldsymbol{x})
&:= \frac{\exp\{\boldsymbol{\delta}^\top \boldsymbol{w}\}}{\nu_{\boldsymbol{\delta}}(\boldsymbol{x})}.
\end{align*}

Let $\widehat{\mu}, \widehat{f}$ be estimators of $\mu, f$, with $\widehat f$
nonnegative and bounded above with probability tending to one, and construct
\begin{align*}
\widehat{\nu}_{\boldsymbol{\delta}}(\boldsymbol{x})
&:= \int_{\mathcal{W}} \exp\{\boldsymbol{\delta}^\top \boldsymbol{w}\}\,\widehat{f}(\boldsymbol{w} \mid \boldsymbol{x})\,d\boldsymbol{w},\\
\widehat{\eta}_{\boldsymbol{\delta}}(\boldsymbol{x})
&:= \int_{\mathcal{W}} \exp\{\boldsymbol{\delta}^\top \boldsymbol{w}\}\,\widehat{\mu}(\boldsymbol{x}, \boldsymbol{w})\,\widehat{f}(\boldsymbol{w} \mid \boldsymbol{x})\,d\boldsymbol{w},
\end{align*}

Define the truncated normalizing estimator and the induced tilted nuisance estimators by
\begin{align*}
\widehat{\nu}_{\boldsymbol{\delta}}^{\dagger}(\boldsymbol{x})
&:=\min\Big\{\max\big(\widehat{\nu}_{\boldsymbol{\delta}}(\boldsymbol{x}),\ e^{-\tau_\Delta}/2\big),\ 2e^{\tau_\Delta}\Big\},\\
\widehat{m}_{\boldsymbol{\delta}}(\boldsymbol{x})
&:= \frac{\widehat{\eta}_{\boldsymbol{\delta}}(\boldsymbol{x})}{\widehat{\nu}_{\boldsymbol{\delta}}^{\dagger}(\boldsymbol{x})},\\
\widehat{r}_{\boldsymbol{\delta}}(\boldsymbol{w}, \boldsymbol{x})
&:= \frac{\exp\{\boldsymbol{\delta}^\top \boldsymbol{w}\}}{\widehat{\nu}_{\boldsymbol{\delta}}^{\dagger}(\boldsymbol{x})}.
\end{align*}
Then, with probability tending to one, there exist finite constants
$C_1(\Delta), C_2(\Delta) < \infty$, depending only on
$(\Delta, f_{\min}, f_{\max}, |\mathcal{W}|, M, M_s)$ and the high-probability upper
bound on $\widehat f$, such that
\begin{align}
\|\widehat{r}_{\boldsymbol{\delta}} - r_{\boldsymbol{\delta}}\|_2
&\le C_1(\Delta)\,\|\widehat{f} - f\|_2,
\label{eq:r-lipschitz}\\[0.3em]
\|\widehat{m}_{\boldsymbol{\delta}} - m_{\boldsymbol{\delta}}\|_2
&\le C_2(\Delta)\,\big(\|\widehat{\mu} - \mu\|_2 + \|\widehat{f} - f\|_2\big),
\label{eq:m-lipschitz}
\end{align}
where $\|\cdot\|_2$ denotes the $L_2(P)$-norm with respect to the law of $(\boldsymbol{X}, \boldsymbol{W})$.
\end{lemma}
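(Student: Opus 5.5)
The plan is to reduce everything to pointwise-in-$\boldsymbol{x}$ bounds on $|\widehat\nu_{\boldsymbol{\delta}}-\nu_{\boldsymbol{\delta}}|$ and $|\widehat\eta_{\boldsymbol{\delta}}-\eta_{\boldsymbol{\delta}}|$, and then integrate over the law of $\boldsymbol{X}$. Throughout, set $\tau_\Delta:=\Delta M_s$ (I read the truncation constant in the statement this way). By (C3) and Cauchy--Schwarz, $|\boldsymbol{\delta}^\top\boldsymbol{w}|\le\tau_\Delta$ on $\mathcal W$, so $e^{-\tau_\Delta}\le\nu_{\boldsymbol{\delta}}(\boldsymbol{x})\le e^{\tau_\Delta}$. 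By (C1), $|\eta_{\boldsymbol{\delta}}(\boldsymbol{x})|\le Me^{\tau_\Delta}$. Let $B$ be the high-probability upper bound on $\widehat f$; we work on the event where $\widehat f\le B$.

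\emph{Step 1 (the key conversion).} For any function $h$ we need to control $\int_{\mathcal W}|h(\boldsymbol{x},\boldsymbol{w})|\,d\boldsymbol{w}$ by an $L_2(P)$ quantity. Cauchy--Schwarz together with $f\ge f_{\min}$ gives
\[
\int_{\mathcal W}|h|\,d\boldsymbol{w}\le |\mathcal W|^{1/2}\Big(\int_{\mathcal W}h^2\,d\boldsymbol{w}\Big)^{1/2}\le \Big(\frac{|\mathcal W|}{f_{\min}}\Big)^{1/2}\big(\mathbb{E}[h(\boldsymbol{X},\boldsymbol{W})^2\mid\boldsymbol{X}=\boldsymbol{x}]\big)^{1/2}.
\]
Squaring and taking the expectation over $\boldsymbol{X}$ yields
\[
\mathbb{E}\Big[\Big(\int_{\mathcal W}|h|\,d\boldsymbol{w}\Big)^2\Big]\le \frac{|\mathcal W|}{f_{\min}}\,\|h\|_2^2 .
\]
This is the step I expect to be the real obstacle. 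It is exactly where $f_{\min}>0$ and $|\mathcal W|<\infty$ are needed, since they let us pass from Lebesgue integrals in $\boldsymbol{w}$ to the $L_2(P)$ norm over $(\boldsymbol{X},\boldsymbol{W})$. I would also take care to treat all integrals over $\mathcal W$ only, so that any mass $\widehat f$ places off $\mathcal W$ is ignored by construction.

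\emph{Step 2 (the ratio bound \eqref{eq:r-lipschitz}).} First, $|\widehat\nu_{\boldsymbol{\delta}}-\nu_{\boldsymbol{\delta}}|(\boldsymbol{x})\le e^{\tau_\Delta}\int_{\mathcal W}|\widehat f-f|\,d\boldsymbol{w}$. Second, the clipping map $t\mapsto\min\{\max(t,e^{-\tau_\Delta}/2),2e^{\tau_\Delta}\}$ is $1$-Lipschitz and fixes $\nu_{\boldsymbol{\delta}}(\boldsymbol{x})$, because the true value lies in $[e^{-\tau_\Delta},e^{\tau_\Delta}]$. Hence $|\widehat\nu^\dagger_{\boldsymbol{\delta}}-\nu_{\boldsymbol{\delta}}|\le|\widehat\nu_{\boldsymbol{\delta}}-\nu_{\boldsymbol{\delta}}|$. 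Writing
\[
\widehat r_{\boldsymbol{\delta}}-r_{\boldsymbol{\delta}}=\frac{e^{\boldsymbol{\delta}^\top\boldsymbol{w}}(\nu_{\boldsymbol{\delta}}-\widehat\nu^\dagger_{\boldsymbol{\delta}})}{\widehat\nu^\dagger_{\boldsymbol{\delta}}\,\nu_{\boldsymbol{\delta}}},
\]
and using $\widehat\nu^\dagger_{\boldsymbol{\delta}}\nu_{\boldsymbol{\delta}}\ge e^{-2\tau_\Delta}/2$, we get $|\widehat r_{\boldsymbol{\delta}}-r_{\boldsymbol{\delta}}|\le 2e^{3\tau_\Delta}|\widehat\nu_{\boldsymbol{\delta}}-\nu_{\boldsymbol{\delta}}|(\boldsymbol{x})$. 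This bound does not depend on $\boldsymbol{w}$. Therefore Step 1 with $h=\widehat f-f$ gives \eqref{eq:r-lipschitz} with
\[
C_1(\Delta)=2e^{4\tau_\Delta}\big(|\mathcal W|/f_{\min}\big)^{1/2}.
\]

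\emph{Step 3 (the outcome bound \eqref{eq:m-lipschitz}).} Decompose
\[
\widehat\eta_{\boldsymbol{\delta}}-\eta_{\boldsymbol{\delta}}=\int_{\mathcal W}e^{\boldsymbol{\delta}^\top\boldsymbol{w}}(\widehat\mu-\mu)\widehat f\,d\boldsymbol{w}+\int_{\mathcal W}e^{\boldsymbol{\delta}^\top\boldsymbol{w}}\mu(\widehat f-f)\,d\boldsymbol{w}.
\]
The first term is bounded by $e^{\tau_\Delta}B\int_{\mathcal W}|\widehat\mu-\mu|\,d\boldsymbol{w}$, and the second by $e^{\tau_\Delta}M\int_{\mathcal W}|\widehat f-f|\,d\boldsymbol{w}$. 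Next,
\[
\widehat m_{\boldsymbol{\delta}}-m_{\boldsymbol{\delta}}=\frac{\widehat\eta_{\boldsymbol{\delta}}-\eta_{\boldsymbol{\delta}}}{\widehat\nu^\dagger_{\boldsymbol{\delta}}}+\frac{\eta_{\boldsymbol{\delta}}\,(\nu_{\boldsymbol{\delta}}-\widehat\nu^\dagger_{\boldsymbol{\delta}})}{\widehat\nu^\dagger_{\boldsymbol{\delta}}\,\nu_{\boldsymbol{\delta}}}.
\]
The denominators are bounded below as in Step 2, and $|\eta_{\boldsymbol{\delta}}|\le Me^{\tau_\Delta}$. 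Combining these with Step 1 (applied to $h=\widehat\mu-\mu$ and $h=\widehat f-f$) and Minkowski's inequality gives \eqref{eq:m-lipschitz}. The constant $C_2(\Delta)$ is an explicit function of $e^{\tau_\Delta}$, $B$, $M$, $|\mathcal W|$ and $f_{\min}$; $f_{\max}$ enters only to guarantee that the true quantities are well defined. All bounds hold on the event $\{\widehat f\le B\}$, which has probability tending to one, so the conclusions hold with probability tending to one.
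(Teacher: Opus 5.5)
Your proposal is correct and follows essentially the same route as the paper's proof. Both arguments bound $|\widehat\nu_{\boldsymbol{\delta}}-\nu_{\boldsymbol{\delta}}|$ and $|\widehat\eta_{\boldsymbol{\delta}}-\eta_{\boldsymbol{\delta}}|$ pointwise (via the split $(\widehat\mu-\mu)\widehat f+\mu(\widehat f-f)$), convert to $L_2(P)$ using Cauchy--Schwarz with $|\mathcal W|^{1/2}$ and the change of measure from $f\ge f_{\min}$, and then use the $1$-Lipschitz clipping that fixes $\nu_{\boldsymbol{\delta}}$ together with the quotient identities for $\widehat r_{\boldsymbol{\delta}}$ and $\widehat m_{\boldsymbol{\delta}}$; your explicit constants, such as $C_1(\Delta)=2e^{4\tau_\Delta}(|\mathcal W|/f_{\min})^{1/2}$, are in fact slightly more careful than the paper's, whose stated $C_2(\Delta)$ omits the $f_{\min}^{-1/2}$ factor from the $\widehat\nu_{\boldsymbol{\delta}}$ bound.
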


\begin{proof}
Since $\|\boldsymbol{\delta}\| \le \Delta$ and $\boldsymbol{W}$ is bounded, there exists $\tau_\Delta < \infty$ such that
$|\exp\{\boldsymbol{\delta}^\top \boldsymbol{W}\}| \le e^{\tau_\Delta}$ almost surely. Hence, for all such $\boldsymbol{\delta}$,
$$
e^{-\tau_\Delta} \;\le\; \nu_{\boldsymbol{\delta}}(\boldsymbol{x}) \;\le\; e^{\tau_\Delta}
\qquad\text{for all $\boldsymbol{x}$.}
$$
In particular, $\nu_{\boldsymbol{\delta}}$ is bounded away from $0$ and $\infty$; the same holds for $\widehat{\nu}_{\boldsymbol{\delta}}^{\dagger}$ by construction.

\medskip\noindent
\textit{Control of $\widehat{\nu}_{\boldsymbol{\delta}} - \nu_{\boldsymbol{\delta}}$.}
By definition,
$$
\widehat{\nu}_{\boldsymbol{\delta}}(\boldsymbol{x}) - \nu_{\boldsymbol{\delta}}(\boldsymbol{x})
= \int_{\mathcal{W}} \exp\{\boldsymbol{\delta}^\top \boldsymbol{w}\}\big\{\widehat{f}(\boldsymbol{w} \mid \boldsymbol{x}) - f(\boldsymbol{w} \mid \boldsymbol{x})\big\}\,d\boldsymbol{w}.
$$
Taking absolute values and using the bound on the exponential tilt gives
$$
|\widehat{\nu}_{\boldsymbol{\delta}}(\boldsymbol{x}) - \nu_{\boldsymbol{\delta}}(\boldsymbol{x})|
\le e^{\tau_\Delta} \int_{\mathcal{W}} \big|\widehat{f}(\boldsymbol{w} \mid \boldsymbol{x}) - f(\boldsymbol{w} \mid \boldsymbol{x})\big|\,d\boldsymbol{w}.
$$
By Cauchy--Schwarz and finiteness of $|\mathcal{W}|$,
$$
\int_{\mathcal{W}} \big|\widehat{f} - f\big|\,d\boldsymbol{w}
\le |\mathcal{W}|^{1/2} \Big(\int_{\mathcal{W}} \big|\widehat{f} - f\big|^2\,d\boldsymbol{w}\Big)^{1/2}.
$$
Squaring both sides and integrating over $\boldsymbol{x}$ with respect to $P_{\boldsymbol{X}}$ gives
\begin{align*}
\int\{\widehat{\nu}_{\boldsymbol{\delta}}(\boldsymbol{x})-\nu_{\boldsymbol{\delta}}(\boldsymbol{x})\}^2\,dP_{\boldsymbol{X}}(\boldsymbol{x})
&\le e^{2\tau_\Delta}|\mathcal{W}|
   \int\!\!\int_{\mathcal{W}}\big|\widehat{f}(\boldsymbol{w}\mid \boldsymbol{x})-f(\boldsymbol{w}\mid \boldsymbol{x})\big|^2\,d\boldsymbol{w}\,dP_{\boldsymbol{X}}(\boldsymbol{x}).
\end{align*}
By definition,
$$
\|\widehat{f}-f\|_2^2
= \int\!\!\int_{\mathcal{W}}\big|\widehat{f}(\boldsymbol{w}\mid \boldsymbol{x})-f(\boldsymbol{w}\mid \boldsymbol{x})\big|^2
  f(\boldsymbol{w}\mid \boldsymbol{x})\,d\boldsymbol{w}\,dP_{\boldsymbol{X}}(\boldsymbol{x}).
$$
Since $f(\boldsymbol{w}\mid \boldsymbol{x})\ge f_{\min}$ for all $(\boldsymbol{x},\boldsymbol{w})$, any non-negative integrand $h(\boldsymbol{x},\boldsymbol{w})$ satisfies
\begin{align*}
\int\!\!\int_{\mathcal{W}} h(\boldsymbol{x},\boldsymbol{w})\,d\boldsymbol{w}\,dP_{\boldsymbol{X}}(\boldsymbol{x})
&= \int\!\!\int_{\mathcal{W}} \frac{h(\boldsymbol{x},\boldsymbol{w})}{f(\boldsymbol{w}\mid \boldsymbol{x})}\,
   f(\boldsymbol{w}\mid \boldsymbol{x})\,d\boldsymbol{w}\,dP_{\boldsymbol{X}}(\boldsymbol{x})\\
&\le f_{\min}^{-1}\int\!\!\int_{\mathcal{W}} h(\boldsymbol{x},\boldsymbol{w})\,f(\boldsymbol{w}\mid \boldsymbol{x})\,d\boldsymbol{w}\,dP_{\boldsymbol{X}}(\boldsymbol{x}).
\end{align*}
Applying this with $h(\boldsymbol{x},\boldsymbol{w})=\big|\widehat{f}(\boldsymbol{w}\mid \boldsymbol{x})-f(\boldsymbol{w}\mid \boldsymbol{x})\big|^2$ yields
$$
\int\!\!\int_{\mathcal{W}}\big|\widehat{f}(\boldsymbol{w}\mid \boldsymbol{x})-f(\boldsymbol{w}\mid \boldsymbol{x})\big|^2\,d\boldsymbol{w}\,dP_{\boldsymbol{X}}(\boldsymbol{x})
\le f_{\min}^{-1}\,\|\widehat{f}-f\|_2^2.
$$
Substituting this bound and using, for any function depending only on $\boldsymbol{X}$,
$$
\|\widehat{\nu}_{\boldsymbol{\delta}}-\nu_{\boldsymbol{\delta}}\|_2^2
= \int\{\widehat{\nu}_{\boldsymbol{\delta}}(\boldsymbol{x})-\nu_{\boldsymbol{\delta}}(\boldsymbol{x})\}^2\,dP_{\boldsymbol{X}}(\boldsymbol{x}),
$$
gives
$$
\|\widehat{\nu}_{\boldsymbol{\delta}}-\nu_{\boldsymbol{\delta}}\|_2^2
\le e^{2\tau_\Delta}|\mathcal{W}|\,f_{\min}^{-1}\,\|\widehat{f}-f\|_2^2,
$$
and hence
$$
\|\widehat{\nu}_{\boldsymbol{\delta}}-\nu_{\boldsymbol{\delta}}\|_2
\le e^{\tau_\Delta}|\mathcal{W}|^{1/2}f_{\min}^{-1/2}\,\|\widehat{f}-f\|_2.
$$
Moreover, since $\nu_{\boldsymbol{\delta}}(\boldsymbol{x})\in[e^{-\tau_\Delta},e^{\tau_\Delta}]\subset[e^{-\tau_\Delta}/2,2e^{\tau_\Delta}]$ for all $\boldsymbol{x}$ and the truncation map is $1$--Lipschitz,
$$
\|\widehat{\nu}_{\boldsymbol{\delta}}^{\dagger}-\nu_{\boldsymbol{\delta}}\|_2
\le \|\widehat{\nu}_{\boldsymbol{\delta}}-\nu_{\boldsymbol{\delta}}\|_2.
$$
\medskip\noindent
\textit{Control of $\widehat{r}_{\boldsymbol{\delta}} - r_{\boldsymbol{\delta}}$.}
The identity
\begin{align*}
\widehat{r}_{\boldsymbol{\delta}}(\boldsymbol{w}, \boldsymbol{x}) - r_{\boldsymbol{\delta}}(\boldsymbol{w}, \boldsymbol{x})
&= \exp\{\boldsymbol{\delta}^\top \boldsymbol{w}\}\Big\{\frac{1}{\widehat{\nu}_{\boldsymbol{\delta}}^{\dagger}(\boldsymbol{x})} - \frac{1}{\nu_{\boldsymbol{\delta}}(\boldsymbol{x})}\Big\}\\
&= \exp\{\boldsymbol{\delta}^\top \boldsymbol{w}\}\frac{\nu_{\boldsymbol{\delta}}(\boldsymbol{x}) - \widehat{\nu}_{\boldsymbol{\delta}}^{\dagger}(\boldsymbol{x})}{\widehat{\nu}_{\boldsymbol{\delta}}^{\dagger}(\boldsymbol{x})\nu_{\boldsymbol{\delta}}(\boldsymbol{x})}.
\end{align*}
Using the finite-$\Delta$ bounds on the numerator and denominators, there exists a constant $C_r(\Delta)$ such that
$$
|\widehat{r}_{\boldsymbol{\delta}} - r_{\boldsymbol{\delta}}|
\le C_r(\Delta)\,|\widehat{\nu}_{\boldsymbol{\delta}}^{\dagger} - \nu_{\boldsymbol{\delta}}|,
$$
so that
\begin{align*}
\|\widehat{r}_{\boldsymbol{\delta}} - r_{\boldsymbol{\delta}}\|_2
&\le C_r(\Delta)\,\|\widehat{\nu}_{\boldsymbol{\delta}}^{\dagger} - \nu_{\boldsymbol{\delta}}\|_2
\le C_r(\Delta)\,\|\widehat{\nu}_{\boldsymbol{\delta}} - \nu_{\boldsymbol{\delta}}\|_2\\
&\le C_1(\Delta)\,\|\widehat{f} - f\|_2,
\end{align*}
with $C_1(\Delta) := C_r(\Delta)e^{\tau_\Delta}|\mathcal{W}|^{1/2}f_{\min}^{-1/2}$, which proves \eqref{eq:r-lipschitz}.

\medskip\noindent
\textit{Control of $\widehat{\eta}_{\boldsymbol{\delta}} - \eta_{\boldsymbol{\delta}}$.}
The product expansion $\widehat{\mu}\widehat{f} - \mu f = (\widehat{\mu} - \mu)\widehat{f} + \mu(\widehat{f} - f)$ gives
\begin{align*}
\widehat{\eta}_{\boldsymbol{\delta}}(\boldsymbol{x}) - \eta_{\boldsymbol{\delta}}(\boldsymbol{x})
&= \int_{\mathcal{W}} \exp\{\boldsymbol{\delta}^\top \boldsymbol{w}\}
\big[(\widehat{\mu} - \mu)(\boldsymbol{x}, \boldsymbol{w})\,\widehat{f}(\boldsymbol{w} \mid \boldsymbol{x})
+ \mu(\boldsymbol{x}, \boldsymbol{w})\big\{\widehat{f}(\boldsymbol{w} \mid \boldsymbol{x}) - f(\boldsymbol{w} \mid \boldsymbol{x})\big\}\big]\,d\boldsymbol{w}.
\end{align*}
Assumption \textnormal{(C1)} implies $|\mu(\boldsymbol{x}, \boldsymbol{w})| \le M < \infty$,
and, with probability tending to one, $0\le \widehat f\le \widehat M_f$ for some
finite constant $\widehat M_f$. Hence
\begin{align*}
|\widehat{\eta}_{\boldsymbol{\delta}}(\boldsymbol{x}) - \eta_{\boldsymbol{\delta}}(\boldsymbol{x})|
&\le e^{\tau_\Delta}\Big[\int_{\mathcal{W}} |(\widehat{\mu}-\mu)(\boldsymbol{x},\boldsymbol{w})|\,\widehat{f}(\boldsymbol{w}\mid \boldsymbol{x})\,d\boldsymbol{w}
+ M\int_{\mathcal{W}}|\widehat{f}(\boldsymbol{w}\mid \boldsymbol{x})-f(\boldsymbol{w}\mid \boldsymbol{x})|\,d\boldsymbol{w}\Big].
\end{align*}
Applying Cauchy--Schwarz yields
$$
\|\widehat{\eta}_{\boldsymbol{\delta}} - \eta_{\boldsymbol{\delta}}\|_2
\le C_\eta(\Delta)\big(\|\widehat{\mu} - \mu\|_2 + \|\widehat{f} - f\|_2\big)
$$
for some finite constant $C_\eta(\Delta)$.

\medskip\noindent
\textit{Control of $\widehat{m}_{\boldsymbol{\delta}} - m_{\boldsymbol{\delta}}$.}
By the algebraic identity $a/b - c/d = (a - c)/b + c(1/b - 1/d)$,
\begin{align*}
\widehat{m}_{\boldsymbol{\delta}}(\boldsymbol{x}) - m_{\boldsymbol{\delta}}(\boldsymbol{x})
&= \frac{\widehat{\eta}_{\boldsymbol{\delta}}(\boldsymbol{x}) - \eta_{\boldsymbol{\delta}}(\boldsymbol{x})}{\widehat{\nu}_{\boldsymbol{\delta}}^{\dagger}(\boldsymbol{x})}
+ \eta_{\boldsymbol{\delta}}(\boldsymbol{x})\Big\{\frac{1}{\widehat{\nu}_{\boldsymbol{\delta}}^{\dagger}(\boldsymbol{x})} - \frac{1}{\nu_{\boldsymbol{\delta}}(\boldsymbol{x})}\Big\}.
\end{align*}
Using the uniform bounds on $\eta_{\boldsymbol{\delta}}, \nu_{\boldsymbol{\delta}}, \widehat{\nu}_{\boldsymbol{\delta}}^{\dagger}$ implied by \textnormal{(C1)}, \textnormal{(C3)}, and $\|\boldsymbol{\delta}\|\le\Delta$ gives
$$
|\widehat{m}_{\boldsymbol{\delta}} - m_{\boldsymbol{\delta}}|
\le C_m(\Delta)\Big(|\widehat{\eta}_{\boldsymbol{\delta}} - \eta_{\boldsymbol{\delta}}|
+ |\widehat{\nu}_{\boldsymbol{\delta}}^{\dagger} - \nu_{\boldsymbol{\delta}}|\Big)
$$
for some finite constant $C_m(\Delta)$. Taking $L_2(P)$-norms and combining the preceding bounds gives \eqref{eq:m-lipschitz} with
$$
C_2(\Delta) := C_m(\Delta)\big(C_\eta(\Delta) + e^{\tau_\Delta}|\mathcal{W}|^{1/2}\big).
$$
\end{proof}

\begin{corollary}[Rates for tilted nuisance functions]
\label{cor:rates-mu-pi}
Suppose the conditions of Lemma~\ref{lem:reduction-mu-pi} hold. If
$$
\|\widehat{\mu} - \mu\|_2 = o_P(n^{-1/4})
\qquad\text{and}\qquad
\|\widehat{f} - f\|_2 = o_P(n^{-1/4}),
$$
then, for every fixed $\boldsymbol{\delta}$ with $\|\boldsymbol{\delta}\| \le \Delta$,
$$
\|\widehat{r}_{\boldsymbol{\delta}} - r_{\boldsymbol{\delta}}\|_2 = o_P(n^{-1/4}),
\qquad
\|\widehat{m}_{\boldsymbol{\delta}} - m_{\boldsymbol{\delta}}\|_2 = o_P(n^{-1/4}),
$$
and hence the product condition
$\|\widehat{r}_{\boldsymbol{\delta}} - r_{\boldsymbol{\delta}}\|_2 \, \|\widehat{m}_{\boldsymbol{\delta}} - m_{\boldsymbol{\delta}}\|_2 = o_P(n^{-1/2})$
required in Theorem~\ref{thm:finite-delta-CLT} holds.
\end{corollary}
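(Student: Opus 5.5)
This is a direct consequence of Lemma~\ref{lem:reduction-mu-pi}: we only need to chain its Lipschitz bounds with the assumed rates. Fix $\boldsymbol{\delta}$ with $\|\boldsymbol{\delta}\|\le\Delta$.

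First, Lemma~\ref{lem:reduction-mu-pi} gives an event $\mathcal E_n$ with $P(\mathcal E_n)\to1$. On $\mathcal E_n$, $\widehat f$ is nonnegative and bounded by its high-probability upper bound. Also on $\mathcal E_n$, inequalities \eqref{eq:r-lipschitz} and \eqref{eq:m-lipschitz} hold with deterministic constants $C_1(\Delta),C_2(\Delta)<\infty$. These constants do not depend on $n$; this is the only point that needs care, and it is guaranteed because they depend only on $(\Delta,f_{\min},f_{\max},|\mathcal W|,M,M_s)$ and the fixed high-probability bound on $\widehat f$.

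Now fix $\epsilon>0$. We bound $P\bigl(n^{1/4}\|\widehat r_{\boldsymbol{\delta}}-r_{\boldsymbol{\delta}}\|_2>\epsilon\bigr)$ by
\[
P(\mathcal E_n^c)+P\bigl(C_1(\Delta)\,n^{1/4}\|\widehat f-f\|_2>\epsilon\bigr).
\]
Both terms tend to zero: the first by construction of $\mathcal E_n$, the second by the assumed rate $\|\widehat f-f\|_2=o_P(n^{-1/4})$. This gives $\|\widehat r_{\boldsymbol{\delta}}-r_{\boldsymbol{\delta}}\|_2=o_P(n^{-1/4})$.

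The same argument applies to $m$, using \eqref{eq:m-lipschitz}. Since a sum of two $o_P(n^{-1/4})$ terms is $o_P(n^{-1/4})$, we get
\[
C_2(\Delta)\bigl(\|\widehat\mu-\mu\|_2+\|\widehat f-f\|_2\bigr)=o_P(n^{-1/4}),
\]
and hence $\|\widehat m_{\boldsymbol{\delta}}-m_{\boldsymbol{\delta}}\|_2=o_P(n^{-1/4})$.

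Finally, the product of two $o_P(n^{-1/4})$ sequences is $o_P(n^{-1/2})$. This is the product condition used in the finite-$\boldsymbol{\delta}$ CLT.
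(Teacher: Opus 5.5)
Your proposal is correct and matches the paper's argument. The paper gives no separate proof; it presents the corollary as an immediate consequence of the Lipschitz bounds in Lemma~\ref{lem:reduction-mu-pi} together with the assumed $n^{-1/4}$ rates and the fact that a product of two $o_P(n^{-1/4})$ terms is $o_P(n^{-1/2})$. Your explicit handling of the high-probability event, on which the $n$-independent constants $C_1(\Delta), C_2(\Delta)$ are valid, fills in exactly the detail the paper leaves unstated.
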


\medskip

The finite-$\boldsymbol{\delta}$ central limit theorem (Theorem~3) rests on a von Mises expansion and a second-order remainder bound for the tilted nuisances $(m_{\boldsymbol{\delta}}, r_{\boldsymbol{\delta}})$. Together with Corollary~\ref{cor:rates-mu-pi}, these results imply Theorem~3.

The estimator in \eqref{eq:psi-hat-finite} is implemented using $K$-fold cross-fitting. The data are randomly partitioned into $K$ disjoint folds of approximately equal size. For each fold $k\in\{1,\ldots,K\}$, nuisance functions are estimated using the remaining $K-1$ folds and evaluated only on observations in fold $k$. Thus, for observation $i$ in fold $k(i)$, the quantities $\widehat r_{\boldsymbol{\delta}}(\boldsymbol W_i,\boldsymbol X_i)$ and $\widehat m_{\boldsymbol{\delta}}(\boldsymbol X_i)$ in \eqref{eq:psi-hat-finite} are computed using nuisance estimates trained without observation $i$; the fold index is suppressed in the notation. For procedures that first estimate $\mu$ and $f$, those nuisance functions are trained on the corresponding training folds before constructing $\widehat r_{\boldsymbol{\delta}}$ and $\widehat m_{\boldsymbol{\delta}}$. Aggregating the fold-specific EIF contributions gives the final estimator and reduces overfitting bias from flexible nuisance estimation \citep{zheng2011cross, chernozhukov2018double}.

The cross-fitted one-step estimator has the density-ratio representation
\begin{equation}\label{eq:psi-hat-finite}
\widehat{\psi}(\boldsymbol{\delta})
= P_n\!\big[\widehat{r}_{\boldsymbol{\delta}}(\boldsymbol{W}, \boldsymbol{X})\{Y - \widehat{m}_{\boldsymbol{\delta}}(\boldsymbol{X})\}\big]
\;+\; P_n\!\big[\widehat{m}_{\boldsymbol{\delta}}(\boldsymbol{X})\big],
\qquad
\widehat{\theta}(\boldsymbol{\delta}) := \widehat{\psi}(\boldsymbol{\delta}) - \widehat{\psi}(\boldsymbol{0}),
\end{equation}
where, for a fixed tilt $\boldsymbol{\delta}$ with $\|\boldsymbol{\delta}\| \le \Delta$,
\begin{align*}
m_{\boldsymbol{\delta}}(\boldsymbol{x})
&:= \mathbb{E}_{g_{\boldsymbol{\delta}}}\!\left[\mu(\boldsymbol{X}, \boldsymbol{W})\mid \boldsymbol{X}=\boldsymbol{x}\right]\\
&= \frac{\mathbb{E}\!\left[\exp\{\boldsymbol{\delta}^\top \boldsymbol{W}\}\mu(\boldsymbol{X}, \boldsymbol{W})\mid \boldsymbol{X}=\boldsymbol{x}\right]}
       {\mathbb{E}\!\left[\exp\{\boldsymbol{\delta}^\top \boldsymbol{W}\}\mid \boldsymbol{X}=\boldsymbol{x}\right]}\\
&= \frac{\eta_{\boldsymbol{\delta}}(\boldsymbol{x})}{\nu_{\boldsymbol{\delta}}(\boldsymbol{x})},
\end{align*}
and
\begin{align*}
\nu_{\boldsymbol{\delta}}(\boldsymbol{x})
&:= \mathbb{E}\!\left[\exp\{\boldsymbol{\delta}^\top \boldsymbol{W}\}\mid \boldsymbol{X}=\boldsymbol{x}\right],\\
\eta_{\boldsymbol{\delta}}(\boldsymbol{x})
&:= \mathbb{E}\!\left[\exp\{\boldsymbol{\delta}^\top \boldsymbol{W}\}\mu(\boldsymbol{X}, \boldsymbol{W})\mid \boldsymbol{X}=\boldsymbol{x}\right],
\end{align*}
\begin{align*}
\widehat{\nu}_{\boldsymbol{\delta}}^{\dagger}(\boldsymbol{x})
&:=\min\Big\{\max\big(\widehat{\nu}_{\boldsymbol{\delta}}(\boldsymbol{x}),\ e^{-\tau_\Delta}/2\big),\ 2e^{\tau_\Delta}\Big\},\\
r_{\boldsymbol{\delta}}(\boldsymbol{W}, \boldsymbol{X})
&:= \frac{\exp\{\boldsymbol{\delta}^\top \boldsymbol{W}\}}{\nu_{\boldsymbol{\delta}}(\boldsymbol{X})},\\
\widehat{r}_{\boldsymbol{\delta}}(\boldsymbol{W}, \boldsymbol{X})
&:= \frac{\exp\{\boldsymbol{\delta}^\top \boldsymbol{W}\}}{\widehat{\nu}_{\boldsymbol{\delta}}^{\dagger}(\boldsymbol{X})},\\
\widehat{m}_{\boldsymbol{\delta}}(\boldsymbol{X})
&:= \frac{\widehat{\eta}_{\boldsymbol{\delta}}(\boldsymbol{X})}{\widehat{\nu}_{\boldsymbol{\delta}}^{\dagger}(\boldsymbol{X})}.
\end{align*}
The estimators $\widehat{\nu}_{\boldsymbol{\delta}}$ and $\widehat{\eta}_{\boldsymbol{\delta}}$ are obtained from cross-fitted regressions of the transformed outcomes $\exp\{\boldsymbol{\delta}^\top \boldsymbol{W}\}$ and $\exp\{\boldsymbol{\delta}^\top \boldsymbol{W}\}\mu(\boldsymbol{X}, \boldsymbol{W})$ on $\boldsymbol{X}$, respectively.

The von Mises decomposition (see, e.g., \citep{kennedy2024semiparametric}) gives
\begin{equation}\label{eq:VM-finite}
\widehat{\psi}(\boldsymbol{\delta}) - \psi(\boldsymbol{\delta})
= (P_n - P)\{\varphi_{\psi(\boldsymbol{\delta})}(\boldsymbol{Z})\}
+ (P_n - P)\!\big\{\widehat{\varphi}_{\psi(\boldsymbol{\delta})}(\boldsymbol{Z}) - \varphi_{\psi(\boldsymbol{\delta})}(\boldsymbol{Z})\big\}
+ R_2(\widehat{P}, P; \boldsymbol{\delta}),
\end{equation}
where
\begin{align*}
\varphi_{\psi(\boldsymbol{\delta})}(\boldsymbol{Z})
&:= r_{\boldsymbol{\delta}}(\boldsymbol{W}, \boldsymbol{X})\{Y - m_{\boldsymbol{\delta}}(\boldsymbol{X})\}
+ m_{\boldsymbol{\delta}}(\boldsymbol{X}) - \psi(\boldsymbol{\delta}),\\
\widehat{\varphi}_{\psi(\boldsymbol{\delta})}(\boldsymbol{Z})
&:= \widehat{r}_{\boldsymbol{\delta}}(\boldsymbol{W}, \boldsymbol{X})\{Y - \widehat{m}_{\boldsymbol{\delta}}(\boldsymbol{X})\}
+ \widehat{m}_{\boldsymbol{\delta}}(\boldsymbol{X}) - \widehat{\psi}(\boldsymbol{\delta}),
\end{align*}
and the second-order remainder is
$$
R_2(\widehat{P}, P; \boldsymbol{\delta})
:= \widehat{\psi}(\boldsymbol{\delta}) - \psi(\boldsymbol{\delta}) + P\!\left[\widehat{\varphi}_{\psi(\boldsymbol{\delta})}\right].
$$
\begin{lemma}[Second-order remainder]\label{lem:R2-finite}
For any fixed $\boldsymbol{\delta}$ with $\|\boldsymbol{\delta}\| \le \Delta$,
$$
\big|R_2(\widehat{P}, P; \boldsymbol{\delta})\big|
\ \le\ \|\widehat{r}_{\boldsymbol{\delta}} - r_{\boldsymbol{\delta}}\|_2\,\|\widehat{m}_{\boldsymbol{\delta}} - m_{\boldsymbol{\delta}}\|_2.
$$
\end{lemma}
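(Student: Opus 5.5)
The plan is to compute $R_2$ explicitly from its definition, reduce it to a single product term, and then apply Cauchy--Schwarz. I will work conditionally on the training folds, so that $(\widehat r_{\boldsymbol{\delta}},\widehat m_{\boldsymbol{\delta}})$ are fixed functions and $P$ integrates only over a fresh $\boldsymbol Z$.

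\textbf{Step 1: express $R_2$ in terms of the nuisances.} Substituting $\widehat\varphi_{\psi(\boldsymbol{\delta})}$ into the definition, the $\widehat\psi(\boldsymbol{\delta})$ terms cancel. Iterating expectations over $Y$ given $(\boldsymbol X,\boldsymbol W)$ then gives
$$
R_2=P\big[\widehat r_{\boldsymbol{\delta}}\{\mu-\widehat m_{\boldsymbol{\delta}}\}\big]+P[\widehat m_{\boldsymbol{\delta}}]-\psi(\boldsymbol{\delta}).
$$
I will use $\psi(\boldsymbol{\delta})=P[m_{\boldsymbol{\delta}}]$ and $\mathbb E[r_{\boldsymbol{\delta}}\mid\boldsymbol X]=1$ to rewrite $P[\widehat m_{\boldsymbol{\delta}}-m_{\boldsymbol{\delta}}]=P[r_{\boldsymbol{\delta}}(\widehat m_{\boldsymbol{\delta}}-m_{\boldsymbol{\delta}})]$. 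Splitting $\mu-\widehat m_{\boldsymbol{\delta}}=(\mu-m_{\boldsymbol{\delta}})+(m_{\boldsymbol{\delta}}-\widehat m_{\boldsymbol{\delta}})$ then yields
$$
R_2=P\big[\widehat r_{\boldsymbol{\delta}}(\mu-m_{\boldsymbol{\delta}})\big]+P\big[(\widehat r_{\boldsymbol{\delta}}-r_{\boldsymbol{\delta}})(m_{\boldsymbol{\delta}}-\widehat m_{\boldsymbol{\delta}})\big].
$$

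\textbf{Step 2: show the first term vanishes.} This is the key step, and I expect it to be the main obstacle. A generic estimate $\widehat r$ would leave a first-order term $P[(\widehat r-r)(\mu-m)]$, consistent with the non-orthogonality computed in Appendix A. Here, however, $\widehat r_{\boldsymbol{\delta}}=\exp(\boldsymbol{\delta}^\top\boldsymbol W)/\widehat\nu^{\dagger}_{\boldsymbol{\delta}}(\boldsymbol X)$ keeps the exact exponential-tilt form, and only the $\boldsymbol X$-dependent normalizer is estimated. Conditioning on $\boldsymbol X$ gives
$$
\mathbb E\big[\widehat r_{\boldsymbol{\delta}}(\mu-m_{\boldsymbol{\delta}})\mid\boldsymbol X\big]=\frac{\eta_{\boldsymbol{\delta}}(\boldsymbol X)-m_{\boldsymbol{\delta}}(\boldsymbol X)\nu_{\boldsymbol{\delta}}(\boldsymbol X)}{\widehat\nu^{\dagger}_{\boldsymbol{\delta}}(\boldsymbol X)}.
$$
This equals $0$ because $m_{\boldsymbol{\delta}}=\eta_{\boldsymbol{\delta}}/\nu_{\boldsymbol{\delta}}$. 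The truncation in $\widehat\nu^{\dagger}_{\boldsymbol{\delta}}$ keeps the denominator positive, so the conditional expectation is well defined.

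\textbf{Step 3: bound the remaining product term.} After Step 2, $R_2=P[(\widehat r_{\boldsymbol{\delta}}-r_{\boldsymbol{\delta}})(m_{\boldsymbol{\delta}}-\widehat m_{\boldsymbol{\delta}})]$. Cauchy--Schwarz in $L_2(P)$ then gives the stated bound $|R_2|\le\|\widehat r_{\boldsymbol{\delta}}-r_{\boldsymbol{\delta}}\|_2\,\|\widehat m_{\boldsymbol{\delta}}-m_{\boldsymbol{\delta}}\|_2$. Both factors are finite because (C1), (C3) and $\|\boldsymbol{\delta}\|\le\Delta$ make all the functions involved bounded.
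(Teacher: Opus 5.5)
Your proposal is correct and follows essentially the same route as the paper. You reduce $R_2$ to $-P[(\widehat r_{\boldsymbol{\delta}}-r_{\boldsymbol{\delta}})(\widehat m_{\boldsymbol{\delta}}-m_{\boldsymbol{\delta}})]$ plus a first-order term, and that term vanishes because $\widehat r_{\boldsymbol{\delta}}=\exp(\boldsymbol{\delta}^\top\boldsymbol W)/\widehat\nu^{\dagger}_{\boldsymbol{\delta}}(\boldsymbol X)$ keeps the exact tilt form, so $\mathbb{E}[\widehat r_{\boldsymbol{\delta}}(\mu-m_{\boldsymbol{\delta}})\mid\boldsymbol X]=0$; Cauchy--Schwarz then finishes. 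The only difference is bookkeeping: you eliminate $P[\widehat r_{\boldsymbol{\delta}}(\mu-m_{\boldsymbol{\delta}})]$ directly and state the use of $\mathbb{E}[r_{\boldsymbol{\delta}}\mid\boldsymbol X]=1$ explicitly, whereas the paper eliminates $P[(\widehat r_{\boldsymbol{\delta}}-r_{\boldsymbol{\delta}})(Y-m_{\boldsymbol{\delta}})]$ and leaves that step implicit.
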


\begin{proof}
The estimator representation \eqref{eq:psi-hat-finite} and the definition of $R_2$ give
\begin{align*}
R_2
&= \mathbb{E}\big[\widehat{r}_{\boldsymbol{\delta}}(\boldsymbol{W}, \boldsymbol{X})\{Y - \widehat{m}_{\boldsymbol{\delta}}(\boldsymbol{X})\} + \widehat{m}_{\boldsymbol{\delta}}(\boldsymbol{X})\big]\\
&\quad - \mathbb{E}\big[r_{\boldsymbol{\delta}}(\boldsymbol{W}, \boldsymbol{X})\{Y - m_{\boldsymbol{\delta}}(\boldsymbol{X})\} + m_{\boldsymbol{\delta}}(\boldsymbol{X})\big]\\
&= \mathbb{E}\big[(\widehat{r}_{\boldsymbol{\delta}} - r_{\boldsymbol{\delta}})\{Y - m_{\boldsymbol{\delta}}(\boldsymbol{X})\}\big]
   - \mathbb{E}\big[(\widehat{r}_{\boldsymbol{\delta}} - r_{\boldsymbol{\delta}})\{\widehat{m}_{\boldsymbol{\delta}}(\boldsymbol{X}) - m_{\boldsymbol{\delta}}(\boldsymbol{X})\}\big].
\end{align*}
The first term vanishes by iterated expectation:
$$
\mathbb{E}\!\big[(\widehat{r}_{\boldsymbol{\delta}} - r_{\boldsymbol{\delta}})\{Y - \mu(\boldsymbol{X}, \boldsymbol{W})\}\big]
= \mathbb{E}\!\Big[\,\mathbb{E}\big[(\widehat{r}_{\boldsymbol{\delta}} - r_{\boldsymbol{\delta}})\{Y - \mu(\boldsymbol{X}, \boldsymbol{W})\}\mid \boldsymbol{X}, \boldsymbol{W}\big]\Big]
= 0,
$$
and
\begin{align*}
\mathbb{E}\big[(\widehat{r}_{\boldsymbol{\delta}} - r_{\boldsymbol{\delta}})\{\mu(\boldsymbol{X}, \boldsymbol{W}) - m_{\boldsymbol{\delta}}(\boldsymbol{X})\}\mid \boldsymbol{X}\big]
&= \mathbb{E}[\widehat{r}_{\boldsymbol{\delta}}\mu \mid \boldsymbol{X}] - m_{\boldsymbol{\delta}}(\boldsymbol{X})\mathbb{E}[\widehat{r}_{\boldsymbol{\delta}} \mid \boldsymbol{X}]\\
&\quad - \mathbb{E}[r_{\boldsymbol{\delta}}\mu \mid \boldsymbol{X}] + m_{\boldsymbol{\delta}}(\boldsymbol{X})\mathbb{E}[r_{\boldsymbol{\delta}} \mid \boldsymbol{X}] \\
&= \frac{\eta_{\boldsymbol{\delta}}(\boldsymbol{X})}{\widehat{\nu}_{\boldsymbol{\delta}}^{\dagger}(\boldsymbol{X})}
   - \frac{\eta_{\boldsymbol{\delta}}(\boldsymbol{X})}{\nu_{\boldsymbol{\delta}}(\boldsymbol{X})}\\
&\quad - \frac{\eta_{\boldsymbol{\delta}}(\boldsymbol{X})}{\nu_{\boldsymbol{\delta}}(\boldsymbol{X})}
     \Big(\frac{\nu_{\boldsymbol{\delta}}(\boldsymbol{X})}{\widehat{\nu}_{\boldsymbol{\delta}}^{\dagger}(\boldsymbol{X})} - 1\Big) \\
&= 0.
\end{align*}
Hence $R_2 = -\mathbb{E}[(\widehat{r}_{\boldsymbol{\delta}} - r_{\boldsymbol{\delta}})(\widehat{m}_{\boldsymbol{\delta}} - m_{\boldsymbol{\delta}})]$, and the Cauchy--Schwarz inequality yields
$$
\big|R_2(\widehat{P}, P; \boldsymbol{\delta})\big|
\le \|\widehat{r}_{\boldsymbol{\delta}} - r_{\boldsymbol{\delta}}\|_2\,\|\widehat{m}_{\boldsymbol{\delta}} - m_{\boldsymbol{\delta}}\|_2.
$$
\end{proof}

\begin{theorem}[Asymptotic normality]\label{thm:finite-delta-CLT}
Suppose \textnormal{(C1)} and \textnormal{(C3)} hold. Let $\boldsymbol{Z} := (\boldsymbol{X}, \boldsymbol{W}, Y)$ and suppose $\{\boldsymbol{Z}_i\}_{i=1}^n$ are i.i.d.\ draws from $P$. Fix $\boldsymbol{\delta}$ with $\|\boldsymbol{\delta}\| \le \Delta < \infty$. Let $\widehat{\psi}(\boldsymbol{\delta})$ be the estimator in \eqref{eq:psi-hat-finite}, with nuisance estimators $(\widehat{r}_{\boldsymbol{\delta}}, \widehat{m}_{\boldsymbol{\delta}})$ satisfying
$$
\|\widehat{r}_{\boldsymbol{\delta}} - r_{\boldsymbol{\delta}}\|_2 = o_P(1),\qquad
\|\widehat{m}_{\boldsymbol{\delta}} - m_{\boldsymbol{\delta}}\|_2 = o_P(1),
$$
and the product condition
$$
\|\widehat{r}_{\boldsymbol{\delta}} - r_{\boldsymbol{\delta}}\|_2\,\|\widehat{m}_{\boldsymbol{\delta}} - m_{\boldsymbol{\delta}}\|_2 = o_P(n^{-1/2}).
$$
A sufficient rate condition is
$$
\|\widehat{r}_{\boldsymbol{\delta}} - r_{\boldsymbol{\delta}}\|_2 = o_P(n^{-1/4}),\qquad
\|\widehat{m}_{\boldsymbol{\delta}} - m_{\boldsymbol{\delta}}\|_2 = o_P(n^{-1/4}).
$$
Then
$$
\sqrt{n}\,\big\{\widehat{\psi}(\boldsymbol{\delta}) - \psi(\boldsymbol{\delta})\big\}
\ \rightsquigarrow\ \mathcal{N}\!\big(0,\ \mathrm{Var}\{\varphi_{\psi(\boldsymbol{\delta})}(\boldsymbol{Z})\}\big).
$$
Consequently, for the incremental effect $\widehat{\theta}(\boldsymbol{\delta}) = \widehat{\psi}(\boldsymbol{\delta}) - \widehat{\psi}(\boldsymbol{0})$,
$$
\sqrt{n}\,\big\{\widehat{\theta}(\boldsymbol{\delta}) - \theta(\boldsymbol{\delta})\big\}
\ \rightsquigarrow\ \mathcal{N}\!\big(0,\ \mathrm{Var}\{\varphi_{\theta(\boldsymbol{\delta})}(\boldsymbol{Z})\}\big),
\qquad
\varphi_{\theta(\boldsymbol{\delta})} := \varphi_{\psi(\boldsymbol{\delta})} - \varphi_{\psi(\boldsymbol{0})}.
$$
\end{theorem}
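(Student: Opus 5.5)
The argument will start from the von Mises decomposition \eqref{eq:VM-finite}. The estimator error then splits into three terms: the centered empirical average $(P_n-P)\varphi_{\psi(\boldsymbol{\delta})}$, the empirical-process term $(P_n-P)\{\widehat\varphi_{\psi(\boldsymbol{\delta})}-\varphi_{\psi(\boldsymbol{\delta})}\}$, and the remainder $R_2$. The goal is to show that the last two are $o_P(n^{-1/2})$. The first term will then deliver the limit through the classical Lindeberg--L\'evy CLT. Its finite variance is immediate: under (C1) and (C3), $|Y|\le M$, $|m_{\boldsymbol{\delta}}|\le M$, and $r_{\boldsymbol{\delta}}\le e^{2\tau_\Delta}$ because $\nu_{\boldsymbol{\delta}}\in[e^{-\tau_\Delta},e^{\tau_\Delta}]$. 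So $\varphi_{\psi(\boldsymbol{\delta})}$ is bounded.

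The remainder is already handled. By Lemma~\ref{lem:R2-finite}, $|R_2|\le\|\widehat r_{\boldsymbol{\delta}}-r_{\boldsymbol{\delta}}\|_2\|\widehat m_{\boldsymbol{\delta}}-m_{\boldsymbol{\delta}}\|_2$, which is $o_P(n^{-1/2})$ by the product condition. (The sufficient $n^{-1/4}$ rates imply that product condition trivially.)

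For the empirical-process term I will use cross-fitting and argue fold by fold. On fold $k$, I condition on the training sample. The nuisances $(\widehat r_{\boldsymbol{\delta}},\widehat m_{\boldsymbol{\delta}})$ are then fixed functions, and the fold observations are i.i.d. and independent of them. By Chebyshev's inequality, the conditional mean of $\sqrt{n_k}\,(P_{n,k}-P)\{\widehat\varphi-\varphi\}$ is zero and its conditional variance is at most $\|\widehat\varphi-\varphi\|_2^2$. I will bound this quantity pointwise as
\[
|\widehat\varphi-\varphi|\le|\widehat r_{\boldsymbol{\delta}}-r_{\boldsymbol{\delta}}|\,|Y-m_{\boldsymbol{\delta}}|+|\widehat r_{\boldsymbol{\delta}}|\,|\widehat m_{\boldsymbol{\delta}}-m_{\boldsymbol{\delta}}|+|\widehat m_{\boldsymbol{\delta}}-m_{\boldsymbol{\delta}}|+|\widehat\psi-\psi|.
\]
The truncation built into $\widehat\nu^\dagger_{\boldsymbol{\delta}}$ keeps $\widehat r_{\boldsymbol{\delta}}\le 2e^{2\tau_\Delta}$ deterministically. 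Bounded $Y$ and $m_{\boldsymbol{\delta}}$ then give $\|\widehat\varphi-\varphi\|_2\lesssim\|\widehat r_{\boldsymbol{\delta}}-r_{\boldsymbol{\delta}}\|_2+\|\widehat m_{\boldsymbol{\delta}}-m_{\boldsymbol{\delta}}\|_2=o_P(1)$. A standard conditioning argument (Kennedy's Lemma 2) then converts this into $o_P(n^{-1/2})$ unconditionally. Summing over the finitely many folds preserves the rate.

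The main delicate point is that $\widehat\psi$ appears inside $\widehat\varphi$. Since $(P_n-P)$ annihilates constants, this constant drops out of the empirical-process term, and I will note this explicitly. Combining the three pieces gives $\sqrt n\{\widehat\psi(\boldsymbol{\delta})-\psi(\boldsymbol{\delta})\}=\sqrt n(P_n-P)\varphi_{\psi(\boldsymbol{\delta})}+o_P(1)$, which yields the first display.

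For $\widehat\theta$, I will apply the same asymptotic linearity at $\boldsymbol{\delta}=\boldsymbol 0$. There $r_{\boldsymbol 0}=\widehat r_{\boldsymbol 0}=1$ after truncation only if $\widehat\nu^\dagger_{\boldsymbol 0}=1$; otherwise I invoke the same rate assumptions at $\boldsymbol 0$, which lies in the ball $\|\boldsymbol{\delta}\|\le\Delta$. Subtracting the two expansions gives $\sqrt n(P_n-P)(\varphi_{\psi(\boldsymbol{\delta})}-\varphi_{\psi(\boldsymbol 0)})+o_P(1)$, and the CLT again gives the stated limit with variance $\mathrm{Var}\{\varphi_{\theta(\boldsymbol{\delta})}\}$.
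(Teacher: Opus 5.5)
Your proposal is correct and follows the paper's proof essentially step for step: the von Mises decomposition, a CLT for the bounded leading term, Lemma~\ref{lem:R2-finite} plus the product condition for $R_2$, an $O_P(n^{-1/2}(\|\widehat r_{\boldsymbol{\delta}}-r_{\boldsymbol{\delta}}\|_2+\|\widehat m_{\boldsymbol{\delta}}-m_{\boldsymbol{\delta}}\|_2))$ bound on the empirical-process term after the constant $\widehat\psi-\psi$ cancels, and a repetition at $\boldsymbol{\delta}=\boldsymbol{0}$ for $\widehat\theta$. Your fold-wise conditional Chebyshev argument, which uses the truncation bound $\widehat r_{\boldsymbol{\delta}}\le 2e^{2\tau_\Delta}$, justifies the empirical-process step more carefully than the paper's appeal to Cauchy--Schwarz, and applying the CLT directly to $\varphi_{\psi(\boldsymbol{\delta})}-\varphi_{\psi(\boldsymbol{0})}$ is equivalent to the paper's joint CLT plus continuous mapping.
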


\begin{proof}
\textit{Control of the leading empirical process term.}
For any $\|\boldsymbol{\delta}\| \le \Delta$, boundedness of $\boldsymbol{W}$ implies there exists $\tau_\Delta < \infty$ such that
$$
e^{-\tau_\Delta} \le \exp\{\boldsymbol{\delta}^\top \boldsymbol{W}\} \le e^{\tau_\Delta},\qquad
e^{-\tau_\Delta} \le \nu_{\boldsymbol{\delta}}(\boldsymbol{X}) \le e^{\tau_\Delta},
$$
so that $e^{-2\tau_\Delta} \le r_{\boldsymbol{\delta}}(\boldsymbol{W}, \boldsymbol{X}) \le e^{2\tau_\Delta}$. Under \textnormal{(C1)}, $|Y| \le M$ almost surely, and hence $|m_{\boldsymbol{\delta}}(\boldsymbol{X})| \le M$ as well. Therefore
\begin{align*}
|\varphi_{\psi(\boldsymbol{\delta})}(\boldsymbol{Z})|
&\le e^{2\tau_\Delta}\,|Y - m_{\boldsymbol{\delta}}(\boldsymbol{X})|
   + |m_{\boldsymbol{\delta}}(\boldsymbol{X}) - \psi(\boldsymbol{\delta})|\\
&\le 2Me^{2\tau_\Delta} + 2M < \infty.
\end{align*}
Since $\{\boldsymbol{Z}_i\}_{i=1}^n$ are i.i.d., the classical Lindeberg--Feller CLT applies and yields
$$
\frac{1}{\sqrt n}\sum_{i=1}^n\varphi_{\psi(\boldsymbol{\delta})}(\boldsymbol{Z}_i)
\ \rightsquigarrow\ \mathcal{N}\!\big(0,\ \mathrm{Var}\{\varphi_{\psi(\boldsymbol{\delta})}(\boldsymbol{Z})\}\big).
$$
\medskip\noindent
\textit{Control of the estimated influence term.}
From the definitions,
\begin{align*}
\widehat{\varphi}_{\psi(\boldsymbol{\delta})}(\boldsymbol{Z})
&= \widehat{r}_{\boldsymbol{\delta}}\{Y - \widehat{m}_{\boldsymbol{\delta}}\}
+ \widehat{m}_{\boldsymbol{\delta}} - \widehat{\psi}(\boldsymbol{\delta}),\\
\varphi_{\psi(\boldsymbol{\delta})}(\boldsymbol{Z})
&= r_{\boldsymbol{\delta}}\{Y - m_{\boldsymbol{\delta}}\} + m_{\boldsymbol{\delta}} - \psi(\boldsymbol{\delta}),
\end{align*}
so that
\begin{align*}
\widehat{\varphi}_{\psi(\boldsymbol{\delta})} - \varphi_{\psi(\boldsymbol{\delta})}
&= (\widehat{r}_{\boldsymbol{\delta}} - r_{\boldsymbol{\delta}})\{Y - m_{\boldsymbol{\delta}}\}
+ \big(1 - \widehat{r}_{\boldsymbol{\delta}}\big)\{\widehat{m}_{\boldsymbol{\delta}} - m_{\boldsymbol{\delta}}\}
- \{\widehat{\psi}(\boldsymbol{\delta}) - \psi(\boldsymbol{\delta})\}.
\end{align*}
Applying $P_n - P$ to both sides and using $(P_n - P)\{\widehat{\psi}(\boldsymbol{\delta}) - \psi(\boldsymbol{\delta})\} = 0$ gives
\begin{align*}
(P_n - P)\big\{\widehat{\varphi}_{\psi(\boldsymbol{\delta})} - \varphi_{\psi(\boldsymbol{\delta})}\big\}
&= (P_n - P)\!\left[(\widehat{r}_{\boldsymbol{\delta}} - r_{\boldsymbol{\delta}})\{Y - m_{\boldsymbol{\delta}}\}
+ \big(1 - \widehat{r}_{\boldsymbol{\delta}}\big)\{\widehat{m}_{\boldsymbol{\delta}} - m_{\boldsymbol{\delta}}\}\right].
\end{align*}
The bound $|Y| \le M$, the finite-$\Delta$ bounds on $r_{\boldsymbol{\delta}}$ and $\widehat{r}_{\boldsymbol{\delta}}$, and the Cauchy--Schwarz inequality imply
\begin{align*}
(P_n - P)\big\{\widehat{\varphi}_{\psi(\boldsymbol{\delta})} - \varphi_{\psi(\boldsymbol{\delta})}\big\}
&= O_P\!\left(\frac{\|\widehat{r}_{\boldsymbol{\delta}} - r_{\boldsymbol{\delta}}\|_2
+ \|\widehat{m}_{\boldsymbol{\delta}} - m_{\boldsymbol{\delta}}\|_2}{\sqrt n}\right)
= o_P(1),
\end{align*}
under the assumed $L_2$ rates. In particular,
\begin{align*}
\sqrt{n}\,(P_n - P)\big\{\widehat{\varphi}_{\psi(\boldsymbol{\delta})} - \varphi_{\psi(\boldsymbol{\delta})}\big\}
&= O_P\!\Big(\|\widehat{r}_{\boldsymbol{\delta}} - r_{\boldsymbol{\delta}}\|_2+\|\widehat{m}_{\boldsymbol{\delta}} - m_{\boldsymbol{\delta}}\|_2\Big)
= o_P(1).
\end{align*}

\medskip\noindent
\textit{Control of the remainder.}
By Lemma~\ref{lem:R2-finite},
$$
\big|R_2(\widehat{P}, P; \boldsymbol{\delta})\big|
\le \|\widehat{r}_{\boldsymbol{\delta}} - r_{\boldsymbol{\delta}}\|_2\,\|\widehat{m}_{\boldsymbol{\delta}} - m_{\boldsymbol{\delta}}\|_2,
$$
so the product condition yields $\sqrt n\,R_2(\widehat{P}, P; \boldsymbol{\delta}) = o_P(1)$.

\medskip\noindent
\textit{Asymptotic conclusion.}
Combining \eqref{eq:VM-finite} with the empirical-process and remainder bounds, then applying Slutsky's theorem, yields
$$
\sqrt n\{\widehat{\psi}(\boldsymbol{\delta}) - \psi(\boldsymbol{\delta})\}
\ \rightsquigarrow\ \mathcal{N}\!\big(0,\ \mathrm{Var}\{\varphi_{\psi(\boldsymbol{\delta})}(\boldsymbol{Z})\}\big).
$$
Moreover, applying the same argument at $\boldsymbol{\delta}=\boldsymbol{0}$ and using the multivariate Lindeberg--Feller CLT gives the joint convergence
$$
\sqrt{n}\Big(
\widehat{\psi}(\boldsymbol{\delta})-\psi(\boldsymbol{\delta}),
\widehat{\psi}(\boldsymbol{0})-\psi(\boldsymbol{0})
\Big)
\rightsquigarrow
\mathcal{N}\!\Big(\boldsymbol{0},\ \Cov\big(\varphi_{\psi(\boldsymbol{\delta})}(\boldsymbol{Z}),\ \varphi_{\psi(\boldsymbol{0})}(\boldsymbol{Z})\big)\Big).
$$
Therefore, by the continuous mapping theorem,
\begin{align*}
\sqrt{n}\big\{\widehat{\theta}(\boldsymbol{\delta})-\theta(\boldsymbol{\delta})\big\}
&=\sqrt{n}\Big(\big\{\widehat{\psi}(\boldsymbol{\delta})-\psi(\boldsymbol{\delta})\big\}-\big\{\widehat{\psi}(\boldsymbol{0})-\psi(\boldsymbol{0})\big\}\Big)\\
&\rightsquigarrow
\mathcal{N}\!\big(0,\ \mathrm{Var}\{\varphi_{\theta(\boldsymbol{\delta})}(\boldsymbol{Z})\}\big).
\end{align*}
\end{proof}

Combining Theorem~\ref{thm:finite-delta-CLT} with Corollary~\ref{cor:rates-mu-pi} yields the finite-$\boldsymbol{\delta}$ CLT stated in Theorem~3.

\subsection{Influence-Function Variance Estimation}

Theorem~3 justifies influence function variance estimation. For $\boldsymbol{Z}_i=(\boldsymbol{X}_i,\boldsymbol{W}_i,Y_i)$, define
$$
\widehat{\varphi}_{\psi(\boldsymbol{\delta})}(\boldsymbol{Z}_i)
:= \widehat{r}_{\boldsymbol{\delta}}(\boldsymbol{W}_i, \boldsymbol{X}_i)\{Y_i - \widehat{m}_{\boldsymbol{\delta}}(\boldsymbol{X}_i)\}
+ \widehat{m}_{\boldsymbol{\delta}}(\boldsymbol{X}_i) - \widehat{\psi}(\boldsymbol{\delta}),
$$
where $\widehat{r}_{\boldsymbol{\delta}}$ and $\widehat{m}_{\boldsymbol{\delta}}$ are constructed from the fitted estimators $(\widehat{\mu}, \widehat{f})$ as above. A consistent estimator of the asymptotic variance is the sample second moment
$$
\widehat{\sigma}_{\psi(\boldsymbol{\delta})}^{\,2}
:= \frac{1}{n}\sum_{i=1}^n \widehat{\varphi}_{\psi(\boldsymbol{\delta})}(\boldsymbol{Z}_i)^2,
$$
For the incremental effect, define
$$
\widehat{\varphi}_{\theta(\boldsymbol{\delta})}(\boldsymbol{Z}_i)
:= \widehat{\varphi}_{\psi(\boldsymbol{\delta})}(\boldsymbol{Z}_i) - \widehat{\varphi}_{\psi(\boldsymbol{0})}(\boldsymbol{Z}_i),
\qquad
\widehat{\sigma}_{\theta(\boldsymbol{\delta})}^{\,2}
:= \frac{1}{n}\sum_{i=1}^n \widehat{\varphi}_{\theta(\boldsymbol{\delta})}(\boldsymbol{Z}_i)^2.
$$
The corresponding Wald interval is
$$
\widehat{\theta}(\boldsymbol{\delta})\ \pm\ z_{1-\alpha/2}\,
\frac{\widehat{\sigma}_{\theta(\boldsymbol{\delta})}}{\sqrt{n}},
$$
For simultaneous inference over tilts $\boldsymbol{\delta}_1, \ldots, \boldsymbol{\delta}_J$, the joint covariance estimator is the empirical covariance matrix of
$$
\big(\widehat{\varphi}_{\theta(\boldsymbol{\delta}_1)}(\boldsymbol{Z}_i), \ldots, \widehat{\varphi}_{\theta(\boldsymbol{\delta}_J)}(\boldsymbol{Z}_i)\big)^\top,\qquad i=1,\ldots,n,
$$
which enables simultaneous confidence intervals or Wald tests via a multivariate normal approximation.

\section{Simulation Details}
\label{sec:appendix-simulation-details}

\subsection{Data-Generating Mechanisms}
\label{sec:appendix-simulation-dgp}

All simulations use sample size $n=5,000$, covariate dimension $p=10$, exposure dimension $q=6$, and 500 independent Monte Carlo repetitions. The covariate vector $\boldsymbol{X}=(X_1,\ldots,X_p)^\top$ follows $\mathcal{N}(0,\boldsymbol{\Sigma}_X)$ with autoregressive covariance $\Sigma_{X,jk}=0.5^{|j-k|}$.

\begin{align*}
    \boldsymbol{W} = \boldsymbol{B}^\top \boldsymbol{X} + \boldsymbol{\beta}_0 + \boldsymbol{\mathcal{E}}.
\end{align*}

Here, $\boldsymbol{\beta}_0$ is the intercept vector, and $\boldsymbol{B}\in\mathbb{R}^{p\times q}$ is a sparse coefficient matrix with sparsity level $0.4$ whose nonzero entries are drawn from $\mathcal{N}(0,0.6^2)$. The error term $\boldsymbol{\mathcal{E}}$ determines the shape of the conditional exposure distribution, for which we consider three scenarios.

\begin{enumerate}
    \item Gaussian errors: $\boldsymbol{\mathcal{E}}\sim\mathcal{N}(0,\boldsymbol{\Sigma}_W)$, where $\boldsymbol{\Sigma}_W$ has an AR(1) structure with correlation parameter $\rho=0.6$.
    \item Skewed errors: $\boldsymbol{\mathcal{E}}\sim\mathcal{SN}(0,\boldsymbol{\Sigma}_W,\boldsymbol{\alpha})$, with $\boldsymbol{\alpha}=(4,\ldots,4)^\top$. This scenario introduces substantial asymmetry and non-Gaussianity.
    \item Truncated contaminated normal errors: $\boldsymbol{\mathcal{E}}$ follows $(1-\pi)\mathcal{N}(\boldsymbol{0},\boldsymbol{\Sigma}_W)+\pi\mathcal{N}(\boldsymbol{0},\omega^2\boldsymbol{\Sigma}_W)$ with $\pi=0.2$ and $\omega=1.5$, truncated at $6$ standard deviations of the baseline component to preserve the exponential tilting moments needed for the target estimands. This setting introduces heavier tails relative to the Gaussian design while maintaining the required integrability conditions.
\end{enumerate}

The outcome is generated under two complexity regimes. The linear regime provides a simple baseline for evaluating the effect of exposure-density misspecification, whereas the nonlinear regime adds terms motivated by air pollution mixture studies, including meteorological confounding, synergistic toxicity, and effect modification.
\begin{align*}
        Y = \alpha_0 + \boldsymbol{\alpha}^\top \boldsymbol{X} + \boldsymbol{\beta}^\top \boldsymbol{W} + \varepsilon, \quad \varepsilon \sim \mathcal{N}(0, 1).
\end{align*}
$$
        Y = \alpha_0 + \boldsymbol{\alpha}^\top \boldsymbol{X} + c_1 X_2^2 + \boldsymbol{\beta}^\top \boldsymbol{W} + c_2 W_1 W_2 + c_3 X_1 W_1 + \varepsilon, \quad \varepsilon \sim \mathcal{N}(0, 1).
$$
For the linear outcome, entries of $\boldsymbol{\alpha}$ are drawn from $\mathcal{N}(0.5,1)$ and entries of $\boldsymbol{\beta}$ are drawn from $\mathcal{N}(2,1)$, giving distinct covariate and exposure signals. The nonlinear outcome uses the same baseline coefficient distributions and adds $c_1X_2^2$ with $c_1=0.5$, $c_2W_1W_2$ with $c_2=1.0$, and $c_3X_1W_1$ with $c_3=0.8$. These terms represent a quadratic covariate effect, an interaction between two exposures, and a covariate-exposure interaction, respectively, reflecting the meteorological confounding, mixture interaction, and effect-modification features described in the main simulation design.

\subsection{Estimator Implementation}
\label{sec:appendix-simulation-estimators}

The simulation comparison targets $\psi(\boldsymbol{\delta})$ at the fixed tilt $\boldsymbol{\delta}=(0.3,\ldots,0.3)^\top\in\mathbb{R}^6$ and compares seven estimation procedures. In all procedures, the outcome regression $\mu(\boldsymbol{X},\boldsymbol{W})=\mathbb{E}[Y\mid\boldsymbol{X},\boldsymbol{W}]$ is estimated by XGBoost with five-fold cross-fitting.

The semiparametric exposure procedures use the location-shift representation $W_j=\mu_j(\boldsymbol{X})+\sigma_j\varepsilon_j$. The conditional mean $\mu_j(\boldsymbol{X})$ is estimated by XGBoost, and $\sigma_j$ is computed as the residual standard deviation within each training fold. The standardized residual vector is modeled with a Gaussian copula and one of three marginal residual families: Gaussian, scaled Student's $t$ with degrees of freedom estimated by maximum likelihood, or an empirical distribution obtained by log-spline density estimation. The first six procedures combine the three residual families with either plug-in density ratio estimation based on the fitted exposure density or SoftBART estimation of the conditional normalizing constant $\nu_{\boldsymbol{\delta}}(\boldsymbol{x})$ \citep{linero2018bayesian}. The seventh procedure estimates both tilted nuisance functions, $\nu_{\boldsymbol{\delta}}(\boldsymbol{x})$ and $\eta_{\boldsymbol{\delta}}(\boldsymbol{x})$, using SoftBART with five-fold cross-fitting, without separately estimating the exposure density.

\subsection{Additional Simulation Results}
\label{sec:appendix-simulation-results-table}

Table~\ref{tab:sim_summary} complements the boxplots presented in Section~\ref{subsec:sim_results} with numerical summaries aggregated over the six data-generating designs. We report the average signed bias, average absolute bias, and average root mean-squared error (RMSE).

\begin{table}[htbp]
\centering
\small
\caption{Average performance over the six simulation designs.}
\label{tab:sim_summary}
\begin{tabular}{p{8.6cm}ccc}
\hline
 & Mean bias & Mean absolute bias & Mean RMSE \\
\hline
Fully direct SoftBART regression & 0.17 & 1.09 & 1.66 \\
\hline
Gaussian residual model & 0.85 & 1.00 & 1.24 \\
\hline
Gaussian residual model with SoftBART estimation of $r_{\boldsymbol{\delta}}(\boldsymbol{w}, \boldsymbol{x})$ & 0.20 & 0.58 & 0.76 \\
\hline
$t$ residual model & 0.80 & 0.96 & 1.19 \\
\hline
$t$ residual model with SoftBART estimation of $r_{\boldsymbol{\delta}}(\boldsymbol{w}, \boldsymbol{x})$ & 0.20 & 0.58 & 0.76 \\
\hline
Empirical residual model & 0.29 & 0.60 & 0.78 \\
\hline
Empirical residual model with SoftBART estimation of $r_{\boldsymbol{\delta}}(\boldsymbol{w}, \boldsymbol{x})$ & 0.12 & 0.55 & 0.74 \\
\hline
\end{tabular}
\end{table}

The hybrid empirical-residual pipeline, which combines direct SoftBART estimation of the conditional normalizing constant with Monte Carlo evaluation of the tilted conditional mean, attains the smallest average absolute bias and the smallest average RMSE in Table~\ref{tab:sim_summary}. Its full-Monte-Carlo empirical-residual counterpart is also consistently competitive, indicating that robustness stems largely from utilizing a flexible residual family rather than imposing a tight parametric approximation on the exposure density.

The numerical summaries clarify the role of the denominator regression step. For both Gaussian and $t$ residual families, substituting the conditional normalizing constant with a direct SoftBART estimate reduces the average RMSE from over 1.00 to 0.76. This improvement is particularly visible in the skewed exposure and highly nonlinear outcome regimes. Conversely, the fully direct SoftBART estimator maintains moderate average bias but exhibits substantially larger dispersion than density-based alternatives, suggesting it serves better as a robustness check than a default implementation.

Together with the panelwise patterns in Figure~1, Table~\ref{tab:sim_summary} confirms that finite-sample variation is driven primarily by the estimation quality of the conditional normalizing constant. Once the semiparametric location-scale model is paired with a cross-fitted SoftBART estimate of $\nu_{\boldsymbol{\delta}}(\boldsymbol{x})$, the Gaussian, $t$, and empirical residual specifications perform remarkably similarly, remaining well-centered around the truth across all designs.

\section{Additional Data Analysis Results}
\label{sec:appendix-application-results}

Figure~\ref{fig:application_contour_appendix} provides an assessment of how large the sensitivity parameters would have to become for the upper bound to reach zero for each exposure (or group of exposures) examined. For this reason, we refer to this as the least favorable point, because it is the level of confounding required for the upper bound to reach zero at any Gelbrich distance, not all distances simultaneously. The NO$_3$+SO$_4$+NH$_4$, NO$_3$, and OM contours lie closest to the origin, indicating that comparatively small confounding would suffice to bring the upper bound to zero for at least one value of the Gelbrich constraint. At the other end, SO$_4$, BFGS, and the Efficient path lie farthest from the origin, so they require materially stronger confounding to overturn the estimated negative effects. BC and NH$_4$ fall between these extremes, still showing a modest degree of robustness to confounding.

\begin{figure}[htbp]
\centering
\includegraphics[width=0.82\textwidth]{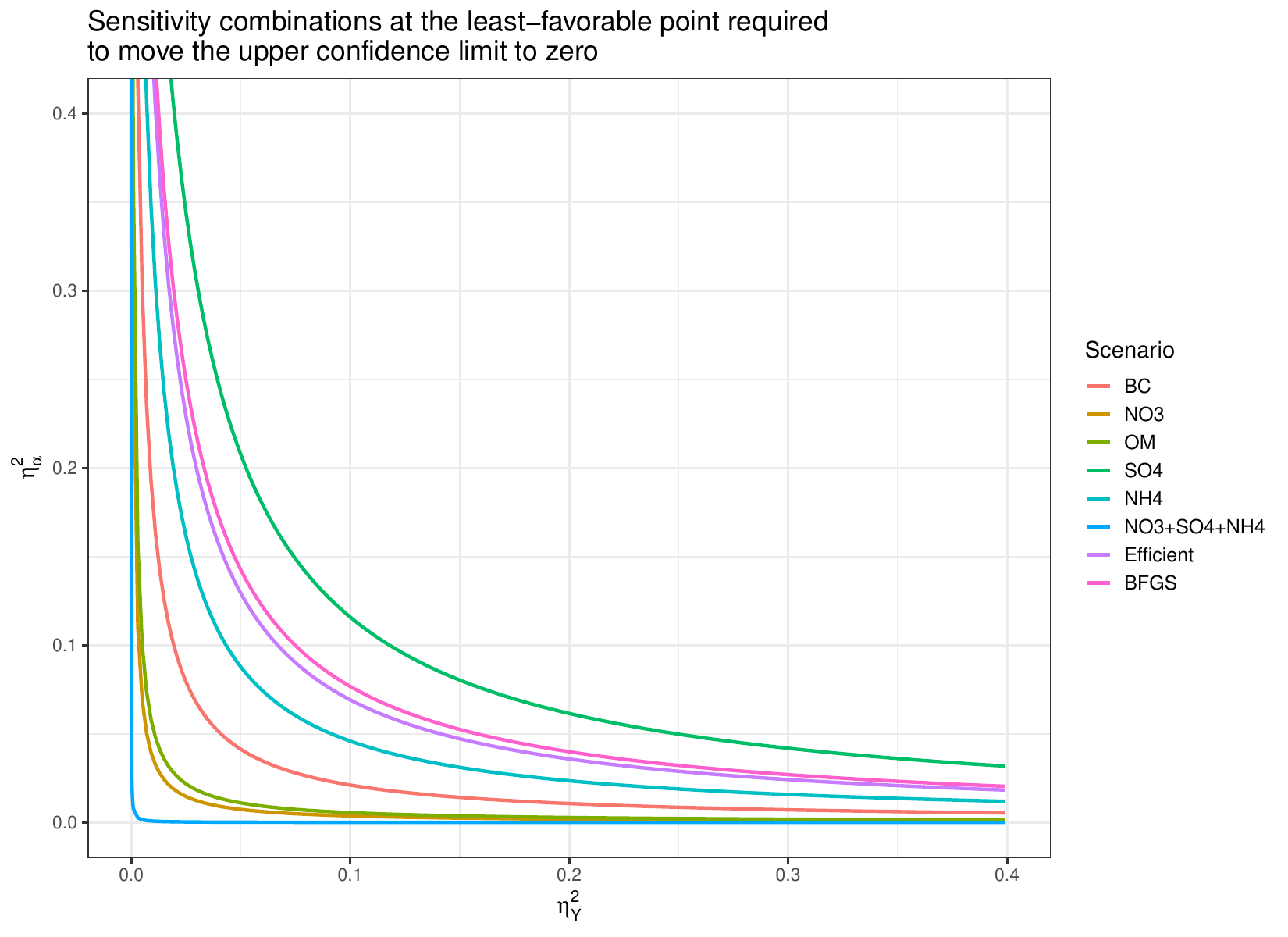}
\caption{Sensitivity contours at the least favorable Gelbrich target for each direction with a negative estimated effect. Each curve gives the combinations of $\eta_Y^2$ and $\eta_\alpha^2(\boldsymbol{\delta})$ that bring the sensitivity-adjusted upper bound to zero; contours farther from the origin indicate less sensitivity to unmeasured confounding under this diagnostic.}
\label{fig:application_contour_appendix}
\end{figure}

\section{Sensitivity analysis for unmeasured confounding}
\label{sec:sensitivity}

This section provides additional information and details about the proposed sensitivity analysis procedure, and builds on the framework used in \citep{chernozhukov2021long}. 

\subsection{Setup: long vs.\ short worlds}

Let the observed data be $\boldsymbol{Z}_i=(\boldsymbol{X}_i,\boldsymbol{W}_i,Y_i)$ drawn i.i.d.\ from $P_0$.
Assume there exists an unobserved confounder $U$ such that, with $\boldsymbol{V}:=(\boldsymbol{X},U)$,
the potential outcomes $\{Y(\boldsymbol{w}):\boldsymbol{w}\in\mathcal W\}$ satisfy
\begin{equation}
Y(\boldsymbol{w})\ \perp\!\!\!\perp\ \boldsymbol{W}\ \big|\ \boldsymbol{V},\qquad \forall \boldsymbol{w}\in\mathcal W.
\label{eq:latent_ignorability}
\end{equation}
Define the long and short outcome regressions
$$
\mu(\boldsymbol{v},\boldsymbol{w}):=\mathbb{E}[Y\mid \boldsymbol{V}=\boldsymbol{v},\boldsymbol{W}=\boldsymbol{w}],\qquad
\mu_s(\boldsymbol{x},\boldsymbol{w}):=\mathbb{E}[Y\mid \boldsymbol{X}=\boldsymbol{x},\boldsymbol{W}=\boldsymbol{w}].
$$
Let $f(\boldsymbol{w}\mid \boldsymbol{v})$ denote the long conditional density of $\boldsymbol{W}$ given $\boldsymbol{V}$,
and $f(\boldsymbol{w}\mid \boldsymbol{x})$ the short conditional density of $\boldsymbol{W}$ given $\boldsymbol{X}$
(i.e.\ the observed conditional law of the exposure mixture given $\boldsymbol{X}$).

\paragraph{Stochastic intervention fixed from observed data.}
We keep the intervention rule identical to Section~\ref{sec:Estimands}: for any fixed
$\boldsymbol\delta\in\mathbb R^q$,
\begin{equation}
g_{\boldsymbol\delta}(\boldsymbol{w}\mid \boldsymbol{x})
=\frac{\exp(\boldsymbol\delta^\top \boldsymbol{w})\,f(\boldsymbol{w}\mid \boldsymbol{x})}
{\int_{\mathcal W}\exp(\boldsymbol\delta^\top \boldsymbol{v})\,f(\boldsymbol{v}\mid \boldsymbol{x})\,d\boldsymbol{v}}.
\label{eq:policy_gdelta_short}
\end{equation}
Importantly, $g_{\boldsymbol\delta}(\cdot\mid \boldsymbol{x})$ depends only on $\boldsymbol{x}$ and is well-defined regardless of whether $U$ exists.

\paragraph{Target (long) estimand.}
Let $Y^{g_{\boldsymbol\delta}}$ denote the counterfactual outcome under the intervention that draws
$\boldsymbol{W}$ from $g_{\boldsymbol\delta}(\cdot\mid \boldsymbol{X})$ (independently of $U$ given $\boldsymbol{X}$).
Under \eqref{eq:latent_ignorability}, the causal estimand is
\begin{equation}
\psi(\boldsymbol\delta)
:=\mathbb{E}\big[Y^{g_{\boldsymbol\delta}}\big]
=\mathbb{E}\Big[\int_{\mathcal W}\mu(\boldsymbol{V},\boldsymbol{w})\,g_{\boldsymbol\delta}(\boldsymbol{w}\mid \boldsymbol{X})\,d\boldsymbol{w}\Big].
\label{eq:psi_long}
\end{equation}

\paragraph{Identified (short) estimand under ignorability given $\boldsymbol{X}$.}
If one incorrectly assumes ignorability given $\boldsymbol{X}$ only, the same formula yields the
identified estimand
\begin{equation}
\psi_s(\boldsymbol\delta)
:=\mathbb{E}\Big[\int_{\mathcal W}\mu_s(\boldsymbol{X},\boldsymbol{w})\,g_{\boldsymbol\delta}(\boldsymbol{w}\mid \boldsymbol{X})\,d\boldsymbol{w}\Big],
\label{eq:psi_short}
\end{equation}
which is the estimand targeted by our estimators. Our sensitivity analysis bounds the discrepancy $\psi_s(\boldsymbol\delta)-\psi(\boldsymbol\delta)$
as a function of interpretable sensitivity parameters.

\subsection{Linear functional form and Riesz representers}

For a fixed $\boldsymbol\delta$, define the functional on square-integrable functions
$h(\boldsymbol{v},\boldsymbol{w})\in L_2(P_{\boldsymbol{V},\boldsymbol{W}})$:
\begin{equation}
\Lambda_{\boldsymbol\delta}(h)
:=\mathbb{E}\Big[\int_{\mathcal W} h(\boldsymbol{V},\boldsymbol{w})\,g_{\boldsymbol\delta}(\boldsymbol{w}\mid \boldsymbol{X})\,d\boldsymbol{w}\Big].
\label{eq:Tdelta_operator}
\end{equation}
Then $\psi(\boldsymbol\delta)=\Lambda_{\boldsymbol\delta}(\mu)$ and
$\psi_s(\boldsymbol\delta)=\Lambda_{\boldsymbol\delta}(\mu_s)$.

\begin{proposition}
    The functional $\Lambda_{\boldsymbol\delta}(h)$ is a linear functional.
\end{proposition}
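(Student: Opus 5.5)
The plan is to check linearity directly from the definition of $\Lambda_{\boldsymbol\delta}$ in \eqref{eq:Tdelta_operator}. First confirm that $\Lambda_{\boldsymbol\delta}(h)$ is a finite real number for every $h\in L_2(P_{\boldsymbol V,\boldsymbol W})$, and then verify additivity and homogeneity by using linearity of the inner integral and of the outer expectation.

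\emph{Well-definedness.} Write the functional as an expectation under the observed joint law. Since $g_{\boldsymbol\delta}(\boldsymbol w\mid\boldsymbol X)$ does not depend on $U$, we have
\[
\Lambda_{\boldsymbol\delta}(h)=\mathbb{E}\big[h(\boldsymbol V,\boldsymbol W)\,\alpha_{\boldsymbol\delta}(\boldsymbol V,\boldsymbol W)\big],
\qquad
\alpha_{\boldsymbol\delta}=\frac{g_{\boldsymbol\delta}(\boldsymbol W\mid\boldsymbol X)}{f(\boldsymbol W\mid\boldsymbol V)}.
\]
This follows by conditioning on $\boldsymbol V$ and multiplying and dividing by $f(\boldsymbol w\mid\boldsymbol V)$, on the common support. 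Cauchy--Schwarz then gives $|\Lambda_{\boldsymbol\delta}(h)|\le\|h\|_2\,\|\alpha_{\boldsymbol\delta}\|_2$. This is finite under the square-integrability of the Riesz representer already assumed in the main text. It holds, for instance, under bounded $\boldsymbol W$, $\|\boldsymbol\delta\|\le\Delta$, and a long density bounded away from zero. The same display shows that $h\mapsto h(\boldsymbol v,\cdot)g_{\boldsymbol\delta}(\cdot\mid\boldsymbol x)$ is integrable for almost every $\boldsymbol v$, so the iterated integral is meaningful.

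\emph{Linearity.} For $h_1,h_2\in L_2$ and $a,b\in\mathbb R$, the inner integral gives
\[
\int\{ah_1+bh_2\}(\boldsymbol V,\boldsymbol w)g_{\boldsymbol\delta}\,d\boldsymbol w=a\int h_1g_{\boldsymbol\delta}\,d\boldsymbol w+b\int h_2 g_{\boldsymbol\delta}\,d\boldsymbol w
\]
almost surely, because both pieces are integrable. Taking expectations, which is again linear on integrable variables, yields $\Lambda_{\boldsymbol\delta}(ah_1+bh_2)=a\Lambda_{\boldsymbol\delta}(h_1)+b\Lambda_{\boldsymbol\delta}(h_2)$.

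The Cauchy--Schwarz bound also shows that $\Lambda_{\boldsymbol\delta}$ is bounded, hence continuous. This is what licenses the Riesz representation used subsequently.

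The only real obstacle is integrability, that is, ensuring the iterated integrals are finite so that splitting them is legitimate. I handle it by the change of measure to $\alpha_{\boldsymbol\delta}$ and a single Cauchy--Schwarz step, so everything reduces to $\alpha_{\boldsymbol\delta}\in L_2$.
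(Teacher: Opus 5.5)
Your proof is correct, and its core is exactly the paper's argument: additivity and homogeneity follow from linearity of the inner integral against $g_{\boldsymbol\delta}(\cdot\mid\boldsymbol X)$ and of the outer expectation. Your well-definedness and boundedness step (rewriting $\Lambda_{\boldsymbol\delta}(h)=\mathbb{E}[h\,\alpha_{\boldsymbol\delta}]$ and applying Cauchy--Schwarz) is something the paper skips, since it simply postulates continuity under weak overlap, and it is a worthwhile addition. Note, though, that this change of measure needs $g_{\boldsymbol\delta}(\cdot\mid\boldsymbol X)\ll f(\cdot\mid\boldsymbol V)$ almost surely. That is a positivity-given-$\boldsymbol V$ condition which your phrase ``on the common support'' uses implicitly, and the $L_2$ condition on $\alpha_{\boldsymbol\delta}$ alone does not supply it.
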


\begin{proof}
    To establish linearity, we must verify that $\Lambda_{\boldsymbol\delta}$ satisfies both additivity and homogeneity. Let $h_1, h_2 \in L_2(P_{\boldsymbol{V},\boldsymbol{W}})$ be arbitrary functions and let $c \in \mathbb{R}$ be a scalar constant.

    \paragraph{1. Additivity.}
    By the linearity of the Lebesgue integral and the linearity of the expectation operator, we have:
    \begin{align*}
        \Lambda_{\boldsymbol\delta}(h_1 + h_2) 
        &= \mathbb{E}\left[ \int_{\mathcal{W}} (h_1 + h_2)(\boldsymbol{V}, \boldsymbol{w}) \, g_{\boldsymbol{\delta}}(\boldsymbol{w} \mid \boldsymbol{X}) \, d\boldsymbol{w} \right] \\
        &= \mathbb{E}\left[ \int_{\mathcal{W}} \left( h_1(\boldsymbol{V}, \boldsymbol{w}) g_{\boldsymbol{\delta}}(\boldsymbol{w} \mid \boldsymbol{X}) + h_2(\boldsymbol{V}, \boldsymbol{w}) g_{\boldsymbol{\delta}}(\boldsymbol{w} \mid \boldsymbol{X}) \right) \, d\boldsymbol{w} \right] \\
        &= \mathbb{E}\left[ \int_{\mathcal{W}} h_1(\boldsymbol{V}, \boldsymbol{w}) g_{\boldsymbol{\delta}}(\boldsymbol{w} \mid \boldsymbol{X}) \, d\boldsymbol{w} + \int_{\mathcal{W}} h_2(\boldsymbol{V}, \boldsymbol{w}) g_{\boldsymbol{\delta}}(\boldsymbol{w} \mid \boldsymbol{X}) \, d\boldsymbol{w} \right] \\
        &= \mathbb{E}\left[ \int_{\mathcal{W}} h_1(\boldsymbol{V}, \boldsymbol{w}) g_{\boldsymbol{\delta}}(\boldsymbol{w} \mid \boldsymbol{X}) \, d\boldsymbol{w} \right] + \mathbb{E}\left[ \int_{\mathcal{W}} h_2(\boldsymbol{V}, \boldsymbol{w}) g_{\boldsymbol{\delta}}(\boldsymbol{w} \mid \boldsymbol{X}) \, d\boldsymbol{w} \right] \\
        &= \Lambda_{\boldsymbol\delta}(h_1) + \Lambda_{\boldsymbol\delta}(h_2).
    \end{align*}

    \paragraph{2. Homogeneity.}
    \begin{align*}
        \Lambda_{\boldsymbol\delta}(c \cdot h) 
        &= \mathbb{E}\left[ \int_{\mathcal{W}} c \cdot h(\boldsymbol{V}, \boldsymbol{w}) \, g_{\boldsymbol{\delta}}(\boldsymbol{w} \mid \boldsymbol{X}) \, d\boldsymbol{w} \right] \\
        &= \mathbb{E}\left[ c \int_{\mathcal{W}} h(\boldsymbol{V}, \boldsymbol{w}) \, g_{\boldsymbol{\delta}}(\boldsymbol{w} \mid \boldsymbol{X}) \, d\boldsymbol{w} \right] \\
        &= c \, \mathbb{E}\left[ \int_{\mathcal{W}} h(\boldsymbol{V}, \boldsymbol{w}) \, g_{\boldsymbol{\delta}}(\boldsymbol{w} \mid \boldsymbol{X}) \, d\boldsymbol{w} \right] \\
        &= c \cdot \Lambda_{\boldsymbol\delta}(h).
    \end{align*}
    
    Since both conditions are satisfied, $\Lambda_{\boldsymbol\delta}$ is a linear functional.
\end{proof}

\paragraph{Weak overlap / continuity condition.}
Assume $\Lambda_{\boldsymbol\delta}$ is continuous on $L_2(P_{\boldsymbol{V},\boldsymbol{W}})$; a sufficient
condition required here is the ``weak overlap'' requirement
\begin{equation}
\mathbb{E}\!\left[\left(\frac{g_{\boldsymbol\delta}(\boldsymbol{W}\mid \boldsymbol{X})}{f(\boldsymbol{W}\mid \boldsymbol{V})}\right)^2\right]<\infty.
\label{eq:weak_overlap_long}
\end{equation}
This condition is untestable without $U$, but it is the standard integrability requirement needed for
the Riesz representation.

\paragraph{Riesz representation.}
Under \eqref{eq:weak_overlap_long}, by the Riesz--Fr\'echet representation theorem there exists a unique
(long) Riesz representer $\alpha_{\boldsymbol\delta}(\boldsymbol{V},\boldsymbol{W})\in L_2(P_{\boldsymbol{V},\boldsymbol{W}})$ such that
$$
\Lambda_{\boldsymbol\delta}(h) = \mathbb{E}\big[h(\boldsymbol{V},\boldsymbol{W})\,\alpha_{\boldsymbol\delta}(\boldsymbol{V},\boldsymbol{W})\big],
\quad\forall h\in L_2(P_{\boldsymbol{V},\boldsymbol{W}}).
$$
Moreover, the representer has the Radon--Nikodym form
\begin{equation}
\alpha_{\boldsymbol\delta}(\boldsymbol{V},\boldsymbol{W})
=\frac{g_{\boldsymbol\delta}(\boldsymbol{W}\mid \boldsymbol{X})}{f(\boldsymbol{W}\mid \boldsymbol{V})}.
\label{eq:RR_long}
\end{equation}
Likewise, the short Riesz representer for $\Lambda_{\boldsymbol\delta}$ over $L_2(P_{\boldsymbol{X},\boldsymbol{W}})$ is
\begin{equation}
\alpha_{s,\boldsymbol\delta}(\boldsymbol{X},\boldsymbol{W})
=\frac{g_{\boldsymbol\delta}(\boldsymbol{W}\mid \boldsymbol{X})}{f(\boldsymbol{W}\mid \boldsymbol{X})}
=\mathbb{E}\big[\alpha_{\boldsymbol\delta}(\boldsymbol{V},\boldsymbol{W})\mid \boldsymbol{X},\boldsymbol{W}\big].
\label{eq:RR_short}
\end{equation}
Under the exponential tilt \eqref{eq:policy_gdelta_short}, $\alpha_{s,\boldsymbol\delta}$ simplifies to
\begin{equation}
\alpha_{s,\boldsymbol\delta}(\boldsymbol{X},\boldsymbol{W})
=\frac{\exp(\boldsymbol\delta^\top \boldsymbol{W})}
{\nu_{\boldsymbol\delta}(\boldsymbol{X})},\qquad
\nu_{\boldsymbol\delta}(\boldsymbol{x}):=\mathbb{E}\big[\exp(\boldsymbol\delta^\top \boldsymbol{W})\mid \boldsymbol{X}=\boldsymbol{x}\big].
\label{eq:RR_short_closed_form}
\end{equation}
Thus our policy estimand is a continuous linear functional of the long regression $\mu$ with a
well-defined RR.

\subsection{Exact OVB identity and sharp bound}

Define the outcome regression error and the RR error:
$$
\Delta_\mu(\boldsymbol{V},\boldsymbol{W}):=\mu(\boldsymbol{V},\boldsymbol{W})-\mu_s(\boldsymbol{X},\boldsymbol{W}),\qquad
\Delta_\alpha(\boldsymbol{V},\boldsymbol{W}):=\alpha_{\boldsymbol\delta}(\boldsymbol{V},\boldsymbol{W})-\alpha_{s,\boldsymbol\delta}(\boldsymbol{X},\boldsymbol{W}).
$$
Then the omitted-variable bias satisfies the exact identity
\begin{equation}
\psi(\boldsymbol\delta)-\psi_s(\boldsymbol\delta)
=\mathbb{E}\big[\Delta_\mu(\boldsymbol{V},\boldsymbol{W})\,\Delta_\alpha(\boldsymbol{V},\boldsymbol{W})\big].
\label{eq:ovb_identity}
\end{equation}
By Cauchy--Schwarz,
\begin{equation}
\big|\psi_s(\boldsymbol\delta)-\psi(\boldsymbol\delta)\big|
\le B(\boldsymbol\delta)
:=\sqrt{\mathbb{E}[\Delta_\mu^2]}\;\sqrt{\mathbb{E}[\Delta_\alpha^2]}.
\label{eq:ovb_bound_basic}
\end{equation}

It is often useful to isolate the ``degree of adversity''
$$
\varrho(\boldsymbol\delta)
:=\Corr\!\big(\Delta_\mu(\boldsymbol{V},\boldsymbol{W}),\Delta_\alpha(\boldsymbol{V},\boldsymbol{W})\big)\in[-1,1],
$$
so that \eqref{eq:ovb_identity} yields
$$
|\psi_s-\psi|^2 = \varrho(\boldsymbol\delta)^2\,B(\boldsymbol\delta)^2\le B(\boldsymbol\delta)^2.
$$
In our primary analysis and reported bounds, we focus on the worst-case scenario by considering adversarial confounding, which implicitly sets $|\varrho(\boldsymbol{\delta})| = 1$. This allows us to establish a conservative bound and subsequently omit the correlation term from our final formulas.

\subsection{Reparameterization by interpretable partial \texorpdfstring{$R^2$}{R2}}

The bound can be rewritten as a product of an identifiable scale term
and two sensitivity parameters with partial-$R^2$ interpretations. For the
incremental effect contrast
$\theta(\boldsymbol\delta):=\psi(\boldsymbol\delta)-\psi(\boldsymbol{0})$, use the RR contrasts
$$
\alpha_{\theta,\boldsymbol\delta}(\boldsymbol{V},\boldsymbol{W})
:=\alpha_{\boldsymbol\delta}(\boldsymbol{V},\boldsymbol{W})-\alpha_{\boldsymbol{0}}(\boldsymbol{V},\boldsymbol{W}),\qquad
\alpha_{s,\theta,\boldsymbol\delta}(\boldsymbol{X},\boldsymbol{W})
:=\alpha_{s,\boldsymbol\delta}(\boldsymbol{X},\boldsymbol{W})-\alpha_{s,\boldsymbol{0}}(\boldsymbol{X},\boldsymbol{W}).
$$
\paragraph{Identifiable scale.}
Let $\sigma_s^2:=\mathbb{E}\big[(Y-\mu_s(\boldsymbol{X},\boldsymbol{W}))^2\big]$ and
$A_{s,\theta}^2(\boldsymbol\delta):=\mathbb{E}\big[\alpha_{s,\theta,\boldsymbol\delta}(\boldsymbol{X},\boldsymbol{W})^2\big]$.
Define
\begin{equation}
S(\boldsymbol\delta)^2:=\sigma_s^2\;A_{s,\theta}^2(\boldsymbol\delta),
\label{eq:S_def}
\end{equation}
which depends only on the observed-data law of $(Y,\boldsymbol{W},\boldsymbol{X})$ and is therefore
estimable.

\paragraph{Outcome-side sensitivity (partial $R^2$).}
Define
\begin{equation}
C_Y^2
:=\frac{\mathbb{E}[\Delta_\mu(\boldsymbol{V},\boldsymbol{W})^2]}{\mathbb{E}[(Y-\mu_s(\boldsymbol{X},\boldsymbol{W}))^2]}
=\frac{\Var\!\big(\mathbb{E}[Y\mid \boldsymbol{X},\boldsymbol{W},U]\big)-\Var\!\big(\mathbb{E}[Y\mid \boldsymbol{X},\boldsymbol{W}]\big)}
{\Var(Y)-\Var\!\big(\mathbb{E}[Y\mid \boldsymbol{X},\boldsymbol{W}]\big)}\in[0,1],
\label{eq:CY_def}
\end{equation}
i.e.\ the nonparametric partial $R^2$ of $U$ with $Y$ given $(\boldsymbol{X},\boldsymbol{W})$.

\paragraph{Exposure/RR-side sensitivity.}
Because $\alpha_{s,\theta,\boldsymbol\delta}$ is the $L_2$ projection of
$\alpha_{\theta,\boldsymbol\delta}$ onto $(\boldsymbol{X},\boldsymbol{W})$, we have
$$
\mathbb{E}\big[(\alpha_{\theta,\boldsymbol\delta}-\alpha_{s,\theta,\boldsymbol\delta})^2\big]
=\mathbb{E}[\alpha_{\theta,\boldsymbol\delta}^2]-\mathbb{E}[\alpha_{s,\theta,\boldsymbol\delta}^2].
$$
Define the (RR-space) $R^2$:
$$
R_\alpha^2(\boldsymbol\delta)
:=\Corr^2\!\big(\alpha_{\theta,\boldsymbol\delta}(\boldsymbol{V},\boldsymbol{W}),\alpha_{s,\theta,\boldsymbol\delta}(\boldsymbol{X},\boldsymbol{W})\big)
=\frac{\mathbb{E}[\alpha_{s,\theta,\boldsymbol\delta}^2]}{\mathbb{E}[\alpha_{\theta,\boldsymbol\delta}^2]}\in(0,1],
$$
and let
\begin{equation}
C_D^2(\boldsymbol\delta)
:=\frac{\mathbb{E}[\alpha_{\theta,\boldsymbol\delta}^2]-\mathbb{E}[\alpha_{s,\theta,\boldsymbol\delta}^2]}{\mathbb{E}[\alpha_{s,\theta,\boldsymbol\delta}^2]}
=\frac{1-R_\alpha^2(\boldsymbol\delta)}{R_\alpha^2(\boldsymbol\delta)}.
\label{eq:CD_def}
\end{equation}
Thus $1-R_\alpha^2(\boldsymbol\delta)$ is the fraction of RR-contrast variation generated by the latent confounder.

\paragraph{Final bound in $R^2$ form.}
Combining \eqref{eq:ovb_bound_basic}--\eqref{eq:CD_def} yields
\begin{equation}
B(\boldsymbol\delta)^2
=S(\boldsymbol\delta)^2\,C_Y^2\,C_D^2(\boldsymbol\delta),
\qquad
\big|\theta_s(\boldsymbol\delta)-\theta(\boldsymbol\delta)\big|
\le \,S(\boldsymbol\delta)\,C_Y\,C_D(\boldsymbol\delta).
\label{eq:bias_bound_final}
\end{equation}
Accordingly, for any posited $(C_Y^2,R_\alpha^2(\boldsymbol\delta))$, we obtain the sensitivity interval for the incremental effect
\begin{equation}
\theta(\boldsymbol\delta)\in
\Big[\;\theta_s(\boldsymbol\delta)-S(\boldsymbol\delta)C_Y C_D(\boldsymbol\delta),\;
\theta_s(\boldsymbol\delta)+S(\boldsymbol\delta)C_Y C_D(\boldsymbol\delta)\;\Big],
\label{eq:sensitivity_interval}
\end{equation}
where $\theta_s(\boldsymbol\delta):=\psi_s(\boldsymbol\delta)-\psi_s(\boldsymbol{0})$.

\subsection{Implementation}

For each fixed $\boldsymbol\delta$ considered in the paper,
the sensitivity analysis requires three estimated quantities:
\begin{enumerate}
\item $\widehat\theta(\boldsymbol\delta)$: our EIF-based (cross-fitted) estimator of the incremental effect from Section~\ref{sec:estimation}.
\item $\widehat\sigma_s^2 := n^{-1}\sum_{i=1}^n \{Y_i-\widehat\mu_s(\boldsymbol{X}_i,\boldsymbol{W}_i)\}^2$ using the same
cross-fitted $\widehat\mu_s$ as in the main estimator.
\item $\widehat A_{s,\theta}^2(\boldsymbol\delta):=n^{-1}\sum_{i=1}^n \widehat\alpha_{s,\theta,\boldsymbol\delta}(\boldsymbol{X}_i,\boldsymbol{W}_i)^2$,
where $\widehat\alpha_{s,\theta,\boldsymbol\delta}=\widehat\alpha_{s,\boldsymbol\delta}-\widehat\alpha_{s,\boldsymbol{0}}
=\widehat\alpha_{s,\boldsymbol\delta}-1$, and $\widehat\alpha_{s,\boldsymbol\delta}$ is obtained either as the estimated density ratio
$\widehat g_{\boldsymbol\delta}/\widehat f$ or via the closed form \eqref{eq:RR_short_closed_form} by estimating
$\nu_{\boldsymbol\delta}(\boldsymbol{X})=\mathbb{E}[\exp(\boldsymbol\delta^\top\boldsymbol{W})\mid \boldsymbol{X}]$ with regression.
\end{enumerate}
Then $\widehat S(\boldsymbol\delta):=\sqrt{\widehat\sigma_s^2\,\widehat A_{s,\theta}^2(\boldsymbol\delta)}$.

Finally, for any posited sensitivity parameters
$$
\eta_Y^2:=C_Y^2\in[0,1],\qquad \eta_\alpha^2(\boldsymbol\delta):=1-R_\alpha^2(\boldsymbol\delta)\in[0,1),
$$
the bias bound becomes
$$
\widehat B(\boldsymbol\delta;\eta_Y^2,\eta_\alpha^2)
=\widehat S(\boldsymbol\delta)\,
\sqrt{\eta_Y^2}\,
\sqrt{\frac{\eta_\alpha^2(\boldsymbol\delta)}{1-\eta_\alpha^2(\boldsymbol\delta)}}.
$$
Accordingly, the estimated point identified set for the incremental effect is
$$
\theta(\boldsymbol\delta)\in
\Big[\,
\widehat\theta(\boldsymbol\delta)-\widehat B(\boldsymbol\delta;\eta_Y^2,\eta_\alpha^2),
\;
\widehat\theta(\boldsymbol\delta)+\widehat B(\boldsymbol\delta;\eta_Y^2,\eta_\alpha^2)
\Big].
$$
When $\widehat\theta(\boldsymbol\delta)\neq 0$, the ratio
$$
\frac{\widehat B(\boldsymbol\delta;\eta_Y^2,\eta_\alpha^2)}
{|\widehat\theta(\boldsymbol\delta)|}
$$
compares the benchmarked bias half-width with the magnitude of the estimated incremental effect. Table~\ref{tab:benchmark_bias_ratio} reports this ratio under the benchmark specification ($k_Y=1$ and $k_D=1$) established in the main manuscript over the 10 positive Gelbrich targets on each of the 9 intervention paths. Values below one indicate that the benchmarked point bound is narrower than the estimated effect in magnitude, whereas values above one indicate that the benchmarked bias half-width exceeds the estimated effect itself.

\begin{table}[p]
\centering
\caption{Ratio $\widehat B(\boldsymbol\delta)/|\widehat\theta(\boldsymbol\delta)|$ under the formal benchmark with $k_Y=1$ and $k_D=1$. Columns correspond to the positive Gelbrich targets used in the application.}
\label{tab:benchmark_bias_ratio}
\scriptsize
\setlength{\tabcolsep}{4pt}
\resizebox{\textwidth}{!}{%
\begin{tabular}{lrrrrrrrrrr}
\toprule
Shift size & 0.05 & 0.10 & 0.15 & 0.20 & 0.25 & 0.30 & 0.35 & 0.40 & 0.45 & 0.50 \\
\midrule
BC & 0.11 & 0.14 & 0.17 & 0.19 & 0.22 & 0.24 & 0.27 & 0.30 & 0.33 & 0.37 \\
NO$_3$ & 0.24 & 0.36 & 0.48 & 0.59 & 0.69 & 0.77 & 0.84 & 0.90 & 0.96 & 1.02 \\
OM & 0.13 & 0.19 & 0.25 & 0.31 & 0.37 & 0.42 & 0.47 & 0.52 & 0.57 & 0.62 \\
SO$_4$ & 0.07 & 0.08 & 0.08 & 0.09 & 0.09 & 0.10 & 0.11 & 0.11 & 0.12 & 0.12 \\
NH$_4$ & 0.07 & 0.09 & 0.10 & 0.12 & 0.13 & 0.14 & 0.15 & 0.16 & 0.17 & 0.17 \\
BC+OM & 1.64 & 2.35 & 0.83 & 1.56 & 0.73 & 3.19 & 2.68 & 1.79 & 18.25 & 7.85 \\
NO$_3$+SO$_4$+NH$_4$ & 0.16 & 0.16 & 0.27 & 0.22 & 0.27 & 0.35 & 0.31 & 0.35 & 0.41 & 0.36 \\
Efficient & 0.10 & 0.11 & 0.09 & 0.10 & 0.10 & 0.10 & 0.11 & 0.11 & 0.11 & 0.11 \\
BFGS & 0.03 & 0.32 & 0.24 & 0.21 & 0.20 & 0.20 & 0.17 & 0.16 & 0.15 & 0.15 \\
\bottomrule
\end{tabular}}%
\end{table}

\paragraph{Plug-in bounds for the estimated endpoints.}
The empirical results in Section~\ref{sec:application} report pointwise Wald intervals for $\theta(\boldsymbol\delta)$ after fixing the sensitivity parameters. Let $\widehat\varphi_{\theta(\boldsymbol\delta)}(\boldsymbol{Z}_i)$ denote the cross-fitted EIF contribution of $\widehat\theta(\boldsymbol\delta)$. Define the centered plug-in signals
\begin{align*}
\widehat\varphi_{\sigma,i}
&:=
\{Y_i-\widehat\mu_s(\boldsymbol{X}_i,\boldsymbol{W}_i)\}^2-\widehat\sigma_s^2, \\
\widehat\varphi_{A,i}(\boldsymbol\delta)
&:=
\widehat\alpha_{s,\theta,\boldsymbol\delta}(\boldsymbol{X}_i,\boldsymbol{W}_i)^2-\widehat A_{s,\theta}^2(\boldsymbol\delta),
\end{align*}
and write
$$
\lambda(\boldsymbol\delta)
:=
\,\sqrt{\eta_Y^2}\,
\sqrt{\frac{\eta_\alpha^2(\boldsymbol\delta)}{1-\eta_\alpha^2(\boldsymbol\delta)}}.
$$
The delta method gives
\begin{align*}
\widehat\varphi_{S,i}(\boldsymbol\delta)
&:=
\frac{
\widehat A_{s,\theta}^2(\boldsymbol\delta)\,\widehat\varphi_{\sigma,i}
+
\widehat\sigma_s^2\,\widehat\varphi_{A,i}(\boldsymbol\delta)
}{
2\,\widehat S(\boldsymbol\delta)
}, \\
\widehat\varphi_{-,i}(\boldsymbol\delta)
&:=
\widehat\varphi_{\theta(\boldsymbol\delta)}(\boldsymbol{Z}_i)
-\lambda(\boldsymbol\delta)\widehat\varphi_{S,i}(\boldsymbol\delta), \\
\widehat\varphi_{+,i}(\boldsymbol\delta)
&:=
\widehat\varphi_{\theta(\boldsymbol\delta)}(\boldsymbol{Z}_i)
+\lambda(\boldsymbol\delta)\widehat\varphi_{S,i}(\boldsymbol\delta).
\end{align*}
Writing
\begin{align*}
\widehat\theta_-(\boldsymbol\delta)
&:=
\widehat\theta(\boldsymbol\delta)-\widehat B(\boldsymbol\delta;\eta_Y^2,\eta_\alpha^2), \\
\widehat\theta_+(\boldsymbol\delta)
&:=
\widehat\theta(\boldsymbol\delta)+\widehat B(\boldsymbol\delta;\eta_Y^2,\eta_\alpha^2),
\end{align*}
the corresponding standard errors are
\begin{align*}
\widehat{\mathrm{se}}_-(\boldsymbol\delta)
&:=
\sqrt{\frac{1}{n^2}\sum_{i=1}^n \widehat\varphi_{-,i}(\boldsymbol\delta)^2}, \\
\widehat{\mathrm{se}}_+(\boldsymbol\delta)
&:=
\sqrt{\frac{1}{n^2}\sum_{i=1}^n \widehat\varphi_{+,i}(\boldsymbol\delta)^2}.
\end{align*}
The corresponding pointwise Wald interval is obtained by combining the lower and upper endpoint bounds:
$$
\left[
\widehat\theta_-(\boldsymbol\delta)-z_{0.95}\widehat{\mathrm{se}}_-(\boldsymbol\delta),
\;
\widehat\theta_+(\boldsymbol\delta)+z_{0.95}\widehat{\mathrm{se}}_+(\boldsymbol\delta)
\right].
$$
These intervals are used as pointwise plug-in uncertainty summaries; a formal coverage theorem would require a separate asymptotic analysis of the plug-in bias bound $\widehat B(\boldsymbol\delta;\eta_Y^2,\eta_\alpha^2)$, which we leave to future work.

\paragraph{Selection of sensitivity parameters.}
The parameters $\eta_Y^2$ and $\eta_\alpha^2(\boldsymbol\delta)$ have direct interpretations:
$\eta_Y^2$ is the maximal fraction of residual outcome variance explainable by $U$ given $(\boldsymbol{X},\boldsymbol{W})$,
and $\eta_\alpha^2(\boldsymbol\delta)$ is the maximal fraction of RR variation explainable by $U$ for the policy
indexed by $\boldsymbol\delta$. These can be calibrated by subject-matter knowledge and by benchmarking against
observed covariates, and then used in the bound and the endpoint bounds above.

\subsection{Formal benchmarking and calibration}
\label{sec:benchmark_formal}

Following the approach of \citep{chernozhukov2021long}, we calibrate the sensitivity parameters on the $f^2$ scale induced by nested linear projections. This construction translates the observed contribution of a single covariate $X_j$ into a benchmark that is commensurate with the omitted-variable calibration in the bias bound. We carry out this comparison for each of the 22 observed covariates.

\paragraph{Outcome-side benchmark.}
For each observed covariate $X_j$, let
\begin{align*}
\widehat\sigma_s^2
&:=
\min_{a,\boldsymbol{b},\boldsymbol{c}}
\frac{1}{n}\sum_{i=1}^n
\Big\{
Y_i-a-\boldsymbol{b}^\top \boldsymbol{W}_i-\boldsymbol{c}^\top \boldsymbol{X}_i
\Big\}^2, \\
\widehat\sigma_{s,-j}^2
&:=
\min_{a,\boldsymbol{b},\boldsymbol{c}}
\frac{1}{n}\sum_{i=1}^n
\Big\{
Y_i-a-\boldsymbol{b}^\top \boldsymbol{W}_i-\boldsymbol{c}^\top \boldsymbol{X}_{i,-j}
\Big\}^2,
\end{align*}
where $\boldsymbol{X}_{i,-j}$ denotes the observed covariate vector with $X_{ij}$ removed. The associated benchmark statistics are
\begin{align*}
\widehat\eta_{Y,j}^2
&:=
\frac{\widehat\sigma_{s,-j}^2-\widehat\sigma_s^2}{\widehat\sigma_{s,-j}^2}, \\
\widehat f_{Y,j}^2
&:=
\frac{\widehat\eta_{Y,j}^2}{1-\widehat\eta_{Y,j}^2}
=
\frac{\widehat\sigma_{s,-j}^2-\widehat\sigma_s^2}{\widehat\sigma_s^2}.
\end{align*}
Thus $\widehat\eta_{Y,j}^2$ is the observed partial $R^2$ for $X_j$ after adjusting for $(\boldsymbol{W},\boldsymbol{X}_{-j})$, and $\widehat f_{Y,j}^2$ expresses the same gain relative to the residual variation in the full projection. Table~\ref{tab:benchmark_outcome} reports these quantities for all 22 covariates. The largest outcome-side benchmark is \texttt{White}, with $\widehat\eta_{Y,j}^2=0.0642$ and $\widehat f_{Y,j}^2=0.0686$.

\paragraph{RR-side benchmark.}
For the RR side, we first evaluate the fitted short RR $\widehat\alpha_{s,\boldsymbol\delta}(\boldsymbol{X}_i,\boldsymbol{W}_i)$ at every intervention target retained in the application analysis. For each $X_j$ and each target $\boldsymbol\delta$, we then compute the nested linear projections
\begin{align*}
\widehat r_{\alpha,\mathrm{full},j}^2(\boldsymbol\delta)
&:=
\min_{a,\boldsymbol{b}}
\frac{1}{n}\sum_{i=1}^n
\Big\{
\widehat\alpha_{s,\boldsymbol\delta}(\boldsymbol{X}_i,\boldsymbol{W}_i)-a-\boldsymbol{b}^\top \boldsymbol{X}_i
\Big\}^2, \\
\widehat r_{\alpha,\mathrm{red},j}^2(\boldsymbol\delta)
&:=
\min_{a,\boldsymbol{b}}
\frac{1}{n}\sum_{i=1}^n
\Big\{
\widehat\alpha_{s,\boldsymbol\delta}(\boldsymbol{X}_i,\boldsymbol{W}_i)-a-\boldsymbol{b}^\top \boldsymbol{X}_{i,-j}
\Big\}^2.
\end{align*}
The corresponding pointwise benchmark statistics are
\begin{align*}
\widehat\eta_{\alpha,j}^2(\boldsymbol\delta)
&:=
\frac{
\widehat r_{\alpha,\mathrm{red},j}^2(\boldsymbol\delta)
-
\widehat r_{\alpha,\mathrm{full},j}^2(\boldsymbol\delta)
}{
\widehat r_{\alpha,\mathrm{red},j}^2(\boldsymbol\delta)
}, \\
\widehat f_{\alpha,j}^2(\boldsymbol\delta)
&:=
\frac{\widehat\eta_{\alpha,j}^2(\boldsymbol\delta)}{1-\widehat\eta_{\alpha,j}^2(\boldsymbol\delta)}
=
\frac{
\widehat r_{\alpha,\mathrm{red},j}^2(\boldsymbol\delta)
-
\widehat r_{\alpha,\mathrm{full},j}^2(\boldsymbol\delta)
}{
\widehat r_{\alpha,\mathrm{full},j}^2(\boldsymbol\delta)
}.
\end{align*}
This is the direct analogue of the outcome-side construction, with the fitted short RR taking the place of the observed outcome.

The benchmark reported in Section~\ref{sec:application} is attached to an intervention path rather than to a single target. For each scenario, we therefore rank the 22 covariates by the average of $\widehat f_{\alpha,j}^2(\boldsymbol\delta)$ over the positive Gelbrich targets on that path. This average is the RR-side score used in the formal benchmark. It summarizes how much adding $X_j$ improves the linear approximation to the fitted short RR along the displayed path, and it keeps the benchmark tied to an intervention curve that is actually reported in the application. Table~\ref{tab:benchmark_rr} reports these scenario-level mean $\widehat f_{\alpha,j}^2$ values for all 22 covariates.

With this aggregation, the selected RR-side benchmark covariates are \texttt{Housing Renter} for BC; \texttt{Housing More People Units} for NO$_3$ and OM; \texttt{Households Smartphone} for SO$_4$ and Efficient; \texttt{Unemployed} for NH$_4$; \texttt{Housing No Vehicle} for BC+OM and NO$_3$+SO$_4$+NH$_4$; and \texttt{Housing Vacant} for BFGS.

\paragraph{Benchmark calibration.}
For the selected outcome-side covariate,
$$
\eta_Y^2
=
\frac{k_Y\,\widehat f_{Y,j}^2}{1+k_Y\,\widehat f_{Y,j}^2}.
$$
For the selected RR-side covariate in a given scenario,
$$
\eta_\alpha^2(\boldsymbol\delta)
=
\frac{k_D\,\widehat f_{\alpha,j}^2(\boldsymbol\delta)}{1+k_D\,\widehat f_{\alpha,j}^2(\boldsymbol\delta)}.
$$
These calibrated values are inserted directly into the point bounds and the endpoint bounds above. The main application figure uses $k_Y=1$ and $k_D=1$, so the omitted confounder is calibrated to the observed benchmark strength on both the outcome side and the RR side.

\begin{table}[p]
\centering
\caption{Outcome-side benchmarking statistics for the 22 observed covariates. The table reports the observed partial $R^2$ and its $f^2$ transform from the nested linear projections of $Y$ on $(\boldsymbol{W},\boldsymbol{X})$ and $(\boldsymbol{W},\boldsymbol{X}_{-j})$.}
\label{tab:benchmark_outcome}
\begin{tabular}{lrr}
\toprule
Covariate & $\widehat\eta_{Y,j}^2$ & $\widehat f_{Y,j}^2$ \\
\midrule
White & 0.0642 & 0.0686 \\
Poverty & 0.0341 & 0.0353 \\
Physical Activity & 0.0322 & 0.0333 \\
Housing More People Units & 0.0151 & 0.0154 \\
Binge Drinking & 0.0132 & 0.0134 \\
Households No Internet & 0.0123 & 0.0124 \\
Housing No Vehicle & 0.0112 & 0.0113 \\
Housing 10 Units & 0.0087 & 0.0087 \\
Median Income & 0.0082 & 0.0083 \\
Cigarette Smoking & 0.0052 & 0.0052 \\
Low Education Computer No Internet & 0.0042 & 0.0042 \\
Percentage No Insurance & 0.0037 & 0.0037 \\
Male & 0.0032 & 0.0032 \\
Households Smartphone & 0.0023 & 0.0023 \\
Housing Renter & 0.0011 & 0.0011 \\
Housing Vacant & 0.0010 & 0.0010 \\
HS Higher & 0.0006 & 0.0006 \\
Housing Mobile & 0.0001 & 0.0001 \\
Obesity & 0.0001 & 0.0001 \\
Unemployed & 0.0001 & 0.0001 \\
Households Low Income No Internet & 0.0001 & 0.0001 \\
Households Only Smartphone & 0.0000 & 0.0000 \\
\bottomrule
\end{tabular}
\end{table}

\begin{table}[p]
\centering
\caption{RR-side benchmarking statistics for the 22 observed covariates. Entries are the scenario-level means of $\widehat f_{\alpha,j}^2(\boldsymbol\delta)$ over the positive Gelbrich grid within each scenario, computed from nested linear projections of $\widehat\alpha_{s,\boldsymbol\delta}(\boldsymbol{X}_i,\boldsymbol{W}_i)$ on $\boldsymbol{X}$ and $\boldsymbol{X}_{-j}$. The benchmark established in the main manuscript with $k_D=1$ selects the largest entry within each scenario.}
\label{tab:benchmark_rr}
\scriptsize
\setlength{\tabcolsep}{4pt}
\resizebox{\textwidth}{!}{%
\begin{tabular}{lrrrrrrrrr}
\toprule
Covariate & BC & NO$_3$ & OM & SO$_4$ & NH$_4$ & BC+OM & NO$_3$+SO$_4$+NH$_4$ & Efficient & BFGS \\
\midrule
White & 0.0000 & 0.0010 & 0.0046 & 0.0016 & 0.0001 & 0.0002 & 0.0001 & 0.0001 & 0.0027 \\
Poverty & 0.0010 & 0.0000 & 0.0006 & 0.0016 & 0.0005 & 0.0003 & 0.0007 & 0.0003 & 0.0047 \\
Physical Activity & 0.0002 & 0.0005 & 0.0014 & 0.0001 & 0.0008 & 0.0001 & 0.0003 & 0.0000 & 0.0004 \\
Housing More People Units & 0.0012 & 0.0083 & 0.0052 & 0.0001 & 0.0010 & 0.0019 & 0.0002 & 0.0007 & 0.0028 \\
Binge Drinking & 0.0002 & 0.0011 & 0.0001 & 0.0001 & 0.0011 & 0.0003 & 0.0001 & 0.0001 & 0.0033 \\
Households No Internet & 0.0000 & 0.0003 & 0.0002 & 0.0009 & 0.0003 & 0.0005 & 0.0002 & 0.0003 & 0.0002 \\
Housing No Vehicle & 0.0006 & 0.0012 & 0.0024 & 0.0000 & 0.0001 & 0.0020 & 0.0061 & 0.0006 & 0.0006 \\
Housing 10 Units & 0.0004 & 0.0002 & 0.0000 & 0.0001 & 0.0001 & 0.0005 & 0.0015 & 0.0002 & 0.0001 \\
Median Income & 0.0000 & 0.0003 & 0.0002 & 0.0001 & 0.0001 & 0.0002 & 0.0003 & 0.0002 & 0.0033 \\
Cigarette Smoking & 0.0015 & 0.0015 & 0.0031 & 0.0005 & 0.0009 & 0.0007 & 0.0000 & 0.0004 & 0.0036 \\
Low Education Computer No Internet & 0.0002 & 0.0004 & 0.0000 & 0.0015 & 0.0005 & 0.0001 & 0.0001 & 0.0004 & 0.0003 \\
Percentage No Insurance & 0.0015 & 0.0003 & 0.0000 & 0.0000 & 0.0003 & 0.0009 & 0.0001 & 0.0002 & 0.0002 \\
Male & 0.0001 & 0.0000 & 0.0024 & 0.0009 & 0.0004 & 0.0000 & 0.0022 & 0.0000 & 0.0015 \\
Households Smartphone & 0.0007 & 0.0001 & 0.0005 & 0.0028 & 0.0016 & 0.0002 & 0.0001 & 0.0023 & 0.0004 \\
Housing Renter & 0.0051 & 0.0017 & 0.0008 & 0.0005 & 0.0008 & 0.0019 & 0.0008 & 0.0012 & 0.0001 \\
Housing Vacant & 0.0002 & 0.0001 & 0.0001 & 0.0008 & 0.0013 & 0.0002 & 0.0011 & 0.0003 & 0.0110 \\
HS Higher & 0.0018 & 0.0001 & 0.0005 & 0.0003 & 0.0005 & 0.0004 & 0.0006 & 0.0001 & 0.0004 \\
Housing Mobile & 0.0002 & 0.0004 & 0.0005 & 0.0004 & 0.0006 & 0.0001 & 0.0000 & 0.0002 & 0.0032 \\
Obesity & 0.0039 & 0.0000 & 0.0012 & 0.0004 & 0.0000 & 0.0018 & 0.0001 & 0.0002 & 0.0003 \\
Unemployed & 0.0022 & 0.0007 & 0.0038 & 0.0015 & 0.0020 & 0.0006 & 0.0013 & 0.0015 & 0.0010 \\
Households Low Income No Internet & 0.0001 & 0.0006 & 0.0003 & 0.0005 & 0.0004 & 0.0001 & 0.0001 & 0.0001 & 0.0007 \\
Households Only Smartphone & 0.0003 & 0.0004 & 0.0000 & 0.0000 & 0.0000 & 0.0001 & 0.0001 & 0.0002 & 0.0003 \\
\bottomrule
\end{tabular}}%
\end{table}

\end{document}